\newcommand{\D}{\displaystyle}
\newcommand{\supn}{^{(n)}}
\newcommand{\tod}{\Rightarrow}
\newcommand{\algostate}{\State\hspace{-.3cm}}
\newcommand{\algoindent}{\hspace{.3cm}}
\newcommand{\supth}{^{\textrm{th}}}
\providecommand{\tabularnewline}{\\}
\providecommand{\algorithmname}{Algorithm}
\newcommand{\EQN}[1]{\begin{equation}#1\end{equation}}
\newcommand{\E}[1]{\mathbb{E}\left [#1\right]}
\newcommand{\set}[1]{\left\{#1\right\}}
\title{Modeling and Optimization of Latency in Erasure-coded Storage Systems}
\author[1]{Aggarwal, Vaneet}
\author[2]{Lan, Tian}
\affil[1]{Purdue University}
\affil[2]{George Washington University}
\begin{document}

	\makeabstracttitle
	
	\begin{abstract}
		As consumers are increasingly engaged in social networking and E-commerce activities, businesses grow to rely on Big Data analytics for intelligence, and traditional IT infrastructures continue to migrate to the cloud and edge, these trends cause distributed data storage demand to rise at an unprecedented speed. Erasure coding has seen itself quickly emerged as a promising technique to reduce storage cost while providing similar reliability as replicated systems, widely adopted by companies like Facebook, Microsoft and Google. However, it also brings new challenges in characterizing and optimizing the access latency when erasure codes are used in distributed storage. The aim of this monograph is to provide a review of recent progress (both theoretical and practical) on systems that employ erasure codes for distributed storage.
	
	In this monograph, we will first identify the key challenges and taxonomy of the research problems and then give an overview of different approaches that have been developed to quantify and model latency of erasure-coded storage. This includes recent work leveraging MDS-Reservation, Fork-Join, Probabilistic, and Delayed-Relaunch scheduling policies, as well as their applications to characterize access latency (e.g., mean, tail, asymptotic latency) of erasure-coded distributed storage systems. We will also extend the problem to the case when users are streaming videos from erasure-coded distributed storage systems. Next, we bridge the gap between theory and practice, and discuss lessons learned from prototype implementation. In particular, we will discuss exemplary implementations of erasure-coded storage, illuminate key design degrees of freedom and tradeoffs, and summarize remaining challenges in real-world storage systems such as in content delivery and caching. Open problems for future research are discussed at the end of each chapter.

\end{abstract}

\chapter{Introduction}\label{chp:1}

In this Chapter, we will introduce the problem in Section \ref{model}. This will be followed by the key challenges in the problem in Section \ref{sec:chall}. Section \ref{sec:taxo} explains the different approaches for the problem considerd in this monograph. Section \ref{mono_out} gives the outline for the remaining chapters, and Chapter \ref{sec:intro_notes} provides additional notes. 

\section{Erasure Coding in Distributed Storage}\label{model}

Distributed systems such as Hadoop, AT$\&$T Cloud Storage, Google File System and Windows Azure have evolved to support different types of erasure codes, in order to achieve the benefits of improved storage efficiency while providing the same reliability as replication-based schemes \citep{balaji2018erasure}. Various erasure code plug-ins and libraries have been developed in storage systems like Ceph \citep{Ceph,Yu-TON16}, Tahoe \citep{Yu_TON},  Quantcast (QFS) \citep{ovsiannikov2013quantcast}, and Hadoop (HDFS) \citep{B182}. 

We consider a data center consisting of $m$ heterogeneous servers\footnote{We will use storage nodes and storage servers interchangeably throughout this monograph.}, denoted by $\mathcal{M}=\{1,2,\ldots,m\}$, called storage nodes. To distributively store a set of $r$ files, indexed by $i=1,\ldots,r$, we partition each file $i$ into $k_i$ fixed-size chunks\footnote{While we make the assumption of fixed chunk size here to simplify the problem formulation, the results can be easily extended to variable chunk sizes. Nevertheless, fixed chunk sizes are indeed used by many existing storage systems \citep{DPR04,AJX05,LC02}.} and then encode it using an $(n_i,k_i)$ MDS erasure code to generate $n_i$ distinct chunks of the same size for file $i$. The encoded chunks are assigned to and stored on $n_i$ distinct storage nodes %
  to store file $i$. Therefore, each chunk is placed on a different node to provide high reliability in the event of node or network failures. %

The use of $(n_i,k_i)$ MDS erasure code allows the file to be reconstructed from any subset of $k_i$-out-of-$n_i$ chunks, whereas it also introduces a redundancy factor of $n_i/k_i$. %
For known erasure coding and chunk placement, we shall now describe a queueing model of the distributed storage system. We assume that the arrival of client requests for each file $i$ form an independent Poisson process with a known rate $\lambda_i$. We consider chunk service time $\mathbb{X}_{j}$ of node $j$ with {\em arbitrary distributions}, whose statistics can be obtained inferred from existing work on network delay \citep{AY11,WK} and file-size distribution \citep{D11,PT12}. We note that even though exponential service time distribution is common, realistic implementation in storage systems show that this is not a practical assumption \citep{chen2014queueing,Yu_TON}, where Amazon S3 and Tahoe storage systems are considered. Both these works points towards shifted exponential service times being a better approximation (an example service time distribution from realistic system is depicted in Fig. \ref{fig:service_dist}), while other distributions may be used for better approximation.  Let $\mathbb{Z}_{j}(\tau)=\mathbb{E}\left[e^{\tau\mathbb{X}_{j}}\right]$
be the moment generating function of $\mathbb{X}_{j}$. Under MDS codes, each file $i$ can be retrieved from any $k_i$ distinct nodes that store the file chunks. We model this by treating each file request as a {\em batch} of $k_i$ chunk requests, so that a file request is served when all $k_i$ chunk requests in the batch are processed by distinct storage nodes. All requests are buffered in a common queue of infinite capacity.

We now introduced the definition of MDS queues according to the system model. 
\begin{definition}
	An MDS queue is associated to four sets of parameters $\{m,r\}$, $\{(n_i, k_i): \  i=1,2,\ldots,r\}$, $\{ \lambda_i: \ i=1,2,\ldots,r\}$, and $\{ \mu_j: \  j=1,2,\ldots,m\}$ satisfying i) There are $m$ servers and $r$ files; ii) File-$i$ requests arrive in batches of $k_i$ chunk requests each; iii) Each batch of $k_i$ chunk requests can be processed by any subset of $k_i$ out of $n_i$ distinct servers; iv) These batches arrive as a Poisson process with a rate of $\lambda_i$; and v) The service time for a chunk request at any server is random and follows some known distribution with mean $\mu_j$, and is independent of the arrival and service times of all other requests. %
\end{definition}

\section{Key Challenges in Latency Characterization}\label{sec:chall}

An exact analysis of this MDS queue model is known to be an open problem. The main challenge comes from the fact that since each file request needs to be served by $k$ distinct servers, a Markov-Chain representation of the MDS queue must encapsulate not only the number of file and chunk requests waiting in the shared buffer, but also the processing history of each active file requests to meet such requirement in future schedules. Let $b$ be the number of current file requests in the system and can take values in $\{0, 1,2, \ldots \}$. The Markov-Chain representation could have $\Omega{b^k}$ states, which becomes infinity in at least $k$ dimensions \citep{MDS_queue}. This is extremely difficult to analyze as the transitions along different dimensions are tightly coupled in MDS queues.

The challenge can be illustrated by an abstracted example shown in Fig.~\ref{fig:sysmodel}. We consider two files, each partitioned into $k=2$ blocks of equal size and encoded using maximum distance separable (MDS) codes. Under an $(n,k)$ MDS code, a file is encoded and stored in $n$ storage nodes such that the chunks stored in any $k$ of these $n$ nodes suffice to recover the entire file. There is a centralized scheduler that buffers and schedules all incoming requests. For instance, a request to retrieve file $A$ can be completed after it is successfully processed by 2 distinct nodes chosen from $\{1,2,3,4\}$ where desired chunks of $A$ are available. Due to shared storage nodes and joint request scheduling, delay performances of the files are highly correlated and are collectively determined by control variables of both files over three dimensions: (i) the scheduling policy that decides what request in the buffer to process when a node becomes available, (ii) the placement of file chunks over distributed storage nodes, and (iii) erasure coding parameters that decide how many chunks are created. The latency performances of different files are tightly entangled. While increasing erasure code length of file B allows it to be placed on more storage nodes, potentially leading to smaller latency (because of improved load-balancing) at the price of higher storage cost, it inevitably affects service latency of file A due to resulting contention and interference on more shared nodes.

\begin{figure}[!thbp]
\begin{center}
\scalebox{0.59}{\includegraphics[draft=false]{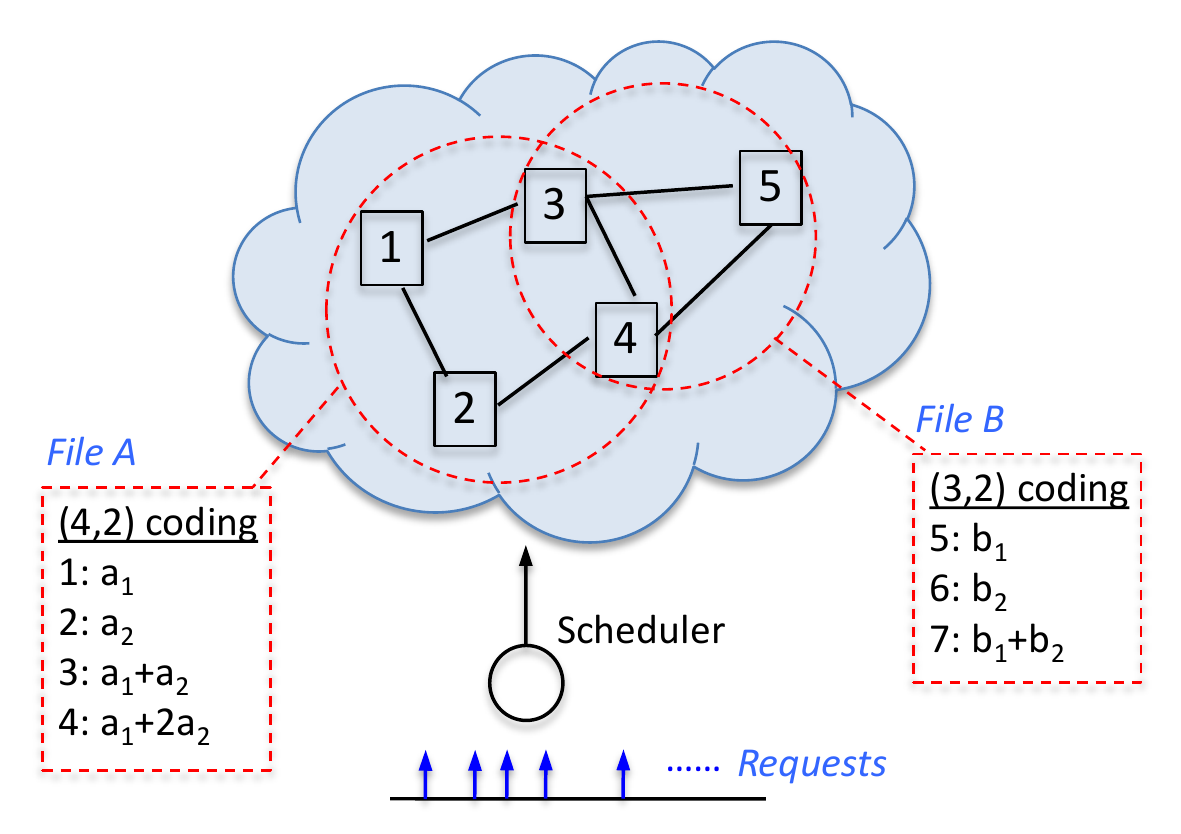}}
\vspace{-3mm}
\caption{An erasure-coded storage of 2 files, which partitioned into 2 blocks and encoded using $(4,2)$ and $(3,2)$ MDS codes, respectively. Resulting file chunks are spread over 5 storage nodes. Any file request must be processed by 2 distinct nodes that have the desired chunks. Nodes $3,4$ are shared and can process requests for both files.}
\label{fig:sysmodel}
\end{center}
\vspace{-.35in}
\end{figure}

In Figure \ref{fig:sysmodel}, files $A$ and $B$ are encoded using $(4,2)$ and $(3,2)$ MDS codes, respectively, file $A$ will have chunks as $A_1$, $A_2$, $A_3$ and $A_4$, and file $B$ will have chunks $B_1$, $B_2$ and $B_3$. As depicted in Fig.~\ref{fig:sysmodelintro}, each file request comes in as a batch of $k_i=2$ chunk requests, e.g., $(R_1^{A,1},R_1^{A,2})$, $(R_2^{A,1},R_2^{A,2})$, and $(R_1^{B,1},R_1^{B,2})$, where $R_{i}^{A,j}$, denotes the $i$th request of file $A$, $j=1, 2$ denotes the first or second chunk request of this file request. Denote the five nodes (from left to right) as servers 1, 2, 3, 4, and 5, and we initialize 4 file requests for file $A$ and 3 file requests for file $B$, i.e., requests for the different files have different arrival rates.  The two chunks of one file request can be any two different chunks from $A_1$, $A_2$, $A_3$ and $A_4$ for file $A$ and  $B_1$, $B_2$ and $B_3$ for file $B$. Due to chunk placement in the example, any 2 chunk requests in file A's batch must be processed by 2 distinct nodes from $\{1,2,3,4\}$, while 2 chunk requests in file B's batch must be served by 2 distinct nodes from $\{3,4,5\}$. Suppose that the system is now in a state depicted by Fig.~\ref{fig:sysmodelintro}, wherein the chunk requests $R_1^{A,1}$, $R_2^{A,1}$, $R_1^{A,2}$, $R_1^{B,1}$, and $R_2^{B,2}$ are served by the 5 storage nodes, and there are 9 more chunk requests buffered in the queue. Suppose that node 2 completes serving chunk request $R_2^{A,1}$ and is now free to serve another request waiting in the queue. Since node 2 has already served a chunk request of batch $(R_2^{A,1},R_2^{A,2})$ and node 2 does not host any chunk for file B, it is not allowed to serve either $R_2^{A,2}$ or $R_2^{B,j},R_3^{B,j}$ where $j=1,2$ in the queue. One of the valid requests, $R_3^{A,j}$ and $R_4^{A,j}$, will be selected by a scheduling algorithm and assigned to node 2. We denote the scheduling policy that minimizes average expected latency in such a queuing model as {\em optimal scheduling}.

\vspace{-3mm}
\begin{figure}[!thbp]
	\begin{center}
		\scalebox{0.32}{\includegraphics[trim=0in .3in 5in 0in, clip, draft=false]{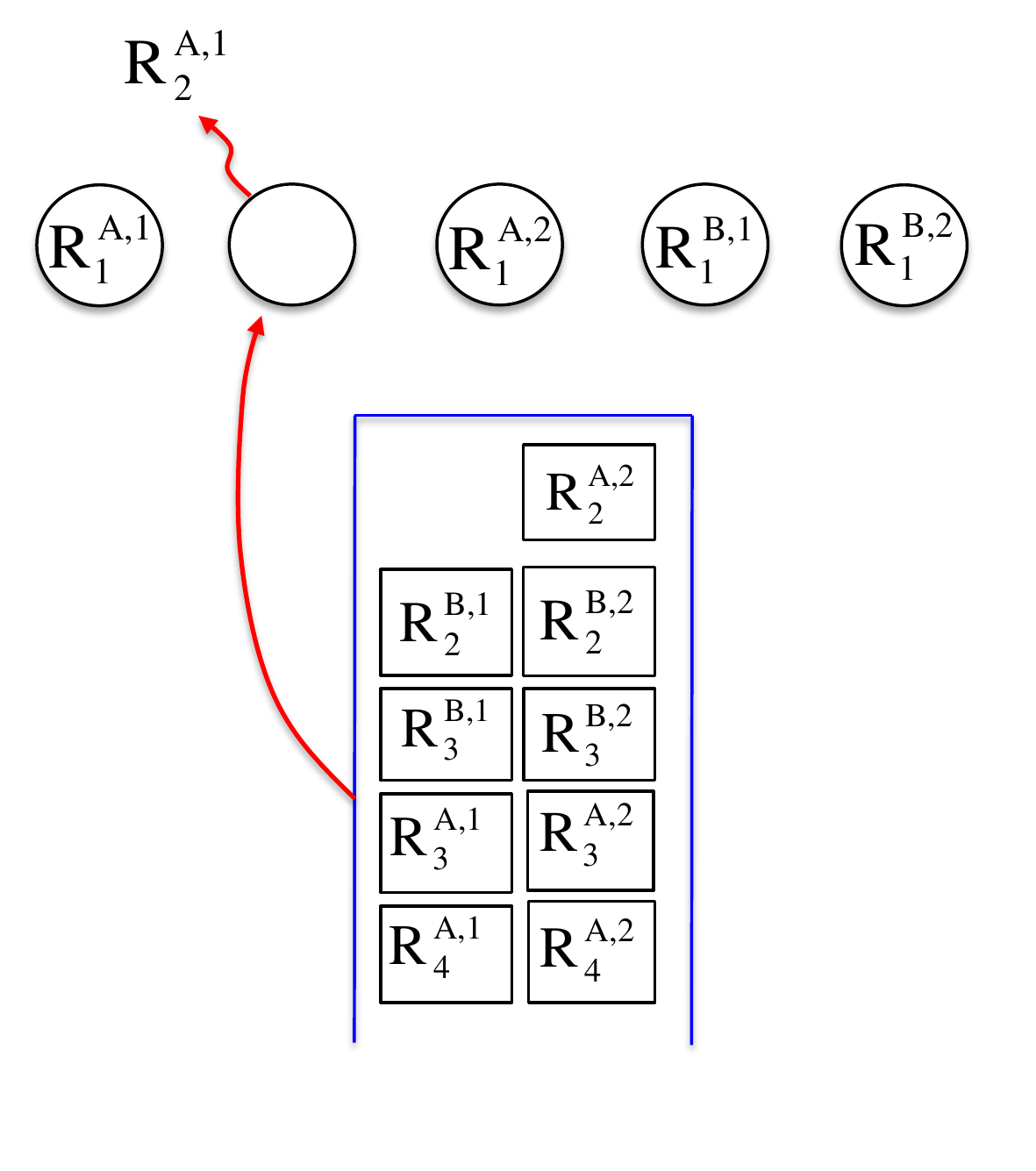}}
		\caption{Functioning of an optimal scheduling policy.}
		\label{fig:sysmodelintro}
	\end{center}
	\vspace{-.2in}
\end{figure}

\begin{definition} {\em (Optimal scheduling)} An optimal scheduling policy (i) buffers all requests in a queue of infinite capacity; (ii) assigns at most 1 chunk request from a batch to each appropriate node, and (iii) schedules requests to minimize average latency if multiple choices are available.
\end{definition}

An exact analysis of optimal scheduling is extremely difficult. Even for given erasure codes and chunk placement, it is unclear what scheduling policy leads to minimum average latency of multiple heterogeneous files. For example, when a shared storage node becomes free, one could schedule either the earliest valid request in the queue or the request with scarcest availability, leading to different implications on average latency.

\if 0
As a special case, consider the case when the $m$ servers are homogenous, and the files are also homogenous with same parameters. Each file request must be served by $k$ distinct servers. Thus, a Markov-chain representation of
this queue is required to have each state encapsulating not only the number of file requests in the queue, but
also the configuration of each request in the queue, i.e., the number of tasks of each request currently being processed,
the number completed processing, and the number still waiting. Thus, when there are $b$ file requests in the system, the system can have $\Omega(b^k)$ possible configurations.  Since the number of non-completed file requests  $b$ in the system can take
any value in $\{0,1,2,\cdots\}$, this leads to a Markov chain which has a state space that has infinite states in at least $k$ dimensions. Furthermore, the transitions along different dimensions are tightly coupled. This makes the Markov
chain hard to analyse, which is why we will consider some policies to evaluate the achievable latency for the distributed storage system. 
\fi

\section{Problem Taxonomy}\label{sec:taxo}

Given that the optimal scheduling is hard to evaluate, many scheduling strategies that aim to provide latency bounds have been explored. This monograph aims to provide a detailed description of these strategies, and the latency analysis for these strategies. The different scheduling strategies considered in this monograph are:
\begin{enumerate}
	\item MDS-Reservation Scheduling: Consider a single file request. In this approach, the first $t$ requests, the request is added in buffer. Each server on finishing the current task, goes through the buffer in order to find a task it can process (batch from which it has not processed a task). The file requests after $t+1$ can move ahead in the buffer only when $k$ of its chunk requests can move forward together, and the request is assigned to those servers where it can move forward together, for a given parameter $t$. 
	\item Fork-Join Scheduling: In this approach, the file request for file $i$  is sent to all $n_i$ servers, and the request is complete when $k_i$ coded chunks are received. The remaining $n_i-k_i$ requests are then cancelled. 
	\item Probabilistic Scheduling: In this approach, each file request for file $i$ is randomly dispatched to $k_i$ out of $n_i$ storage nodes that store the coded chunks of the file. 
	\item Delayed Relaunch Scheduling: In this approach, the request for file $i$ is first sent to $n_{i,0}$ servers using probabilistic scheduling, and when the request is completed from $\ell_{i,0}$ servers, the request is sent to the remaining $n_i-n_{i,0}$ servers. After $k_i$ chunks are received, the request is cancelled from remaining $n_i-k_i$ servers. 
\end{enumerate}

Even though the implementation of these scheduling strategies in simulators and practical systems may be straightforward, the analysis and optimization of latency are quite the opposite. In many cases, only bounds and asymptotic guarantees can be provided using these queue models. Further, a number of assumptions are commonly considered in the literature to ensure tractability of the analysis. These assumptions include homogeneous files (i.e., all files are of equal size and encoded using the same $(n,k)$ MDS code); homogeneous placement (i.e., there are $n$ servers and each file has exactly one chunk placed on each server); homogeneous servers (i.e., all servers have i.i.d. service time with mean $\mu$); exponential service time distribution (i.e., all server has exponentially distributed service times). We summarize the assumption made by different strategies in each chapter in~Table~\ref{tbl_intro0}.

\begin{table}[htbp]
\begin{center}
	\begin{tabular}{|c|c|c|c|c|}
		\hline
		& MDS- & Fork- & Probabilitic  & Delayed   \\
		& Reservation & Join &  Scheduling &  Relaunch  \\
		\hline
		{\small	Homogenous}	& Yes & No & No & Yes  \\
		{\small	Files}	&  &  &  &   \\
		\hline
		{\small Homogenous}		&  Yes & Yes & No & Yes \\
		{\small Placement}		&  &  &  &  \\
		\hline
		{\small Homogeneous}		& Yes & Yes & No & Yes  \\
		{\small Servers}		&  &  &  &   \\
		\hline
		{\small Exponential}		&  Yes & No & No & No\tablefootnote{Queueing analysis is not applicable to delayed relaunch.} \\
		{\small Service Time}		&  &  &  &  \\
		\hline
	\end{tabular}
	\end{center}
	\caption{Assumptions considered in the analysis of different scheduling strategies.}\label{tbl_intro0}
\end{table}

We compare the different strategies for different service rates at high arrival rates. We consider single file request, $r=1$, and thus index $i$ for files is suppressed. Also, we assume that all $m=n$ servers are homogenous. If the service times are deterministic, fork-join scheduling sends request to all $n$ servers and finish at the same time. Thus, the strategy wastes unncessary time at the $n-k$ servers leading to non-optimal stability region. In contrast, probabilistic scheduling can use the probabilities of selection of different servers as uniform, and can be shown to achieve optimal stability region. Further, delayed relaunch scheduling has probabilistic scheduling as special case with $n_{0} = \ell_{0}=k$, and thus can achieve optimal stability region. For MDS-Reservation scheduling, unless $n$ is a multiple of $k$, there would be wasted time at some servers and thus will not have optimal stability region. For exponential service times, MDS-Reservation Scheduling would hold on request scheduling at certain servers, and thus not achieve optimal stability region. The other strategies will have optimal stability region. Thus, the probabilistic scheduling and the delayed relaunch scheduling are optimal in terms of stability region for both service distributions, and can indeed be shown to achieve optimal stability region for general service distribution. 

We further note that the delayed relaunch scheduling has fork-join scheduling as a special case when $n_{i,0}=n_i$ and $\ell_{i,0}=k_i$, and has probabilistic scheduling as a special case when $n_{i,0} = \ell_{i,0}=k_i$, and thus give a more tunable approach than the two scheduling approaches.

\begin{table}[htbp]
	\resizebox{\columnwidth}{!}{%
	\begin{tabular}{|c|c|c|c|c|}
		\hline
		& MDS- & Fork- & Probabilitic  & Delayed   \\
		& Reservation & Join &  Scheduling &  Relaunch  \\
		\hline
		{\small	Optimal Homogenous}	& No & Exponential & General & General  \\
		{\small	Stability Region}	&  &  &  &   \\
		\hline
		{\small Queuing}		&  Yes & Yes & Yes & No \\
		{\small Analysis}		&  &  &  &  \\
		\hline
		{\small Analysis for }		& No & Yes & Yes & No  \\
		{\small general distribution}		&  &  &  &   \\
		\hline
		{\small Closed Form}		&  No & Yes & Yes & N/A\tablefootnote{Queueing analysis is not applicable to delayed relaunch.} \\
		{\small Expressions }		&  &  &  &  \\
		\hline
		{\small Asymptotic}		&  No & No & Yes & Yes \\
		{\small Optimality }		&  &  &  &  \\
		\hline
			{\small Tail}		&  No & No & Yes & No \\
		{\small Characterization }		&  &  &  &  \\
		\hline
	\end{tabular}
}
	\caption{The different regimes for the known results of the different scheduling algorithms}\label{tbl_intro}
\end{table}

In Table \ref{tbl_intro}, we will describe the different cases where the analysis of these algorithms have been studied. The first line is for the a single file and homogenous servers. As mentioned earlier, the MDS-Reservation scheduling does not achieve optimal stability region for both the scenarios of determistic and exponential service times. The Fork-Join scheduling has been shown to achieve the optimal stability region only for exponential service times, while uniform probabilistic scheduling achieves optimal stability region for general service distributions. The second line indicates that the queueing analysis to find upper bound on latency using the proposed algorithms have been studied for the first three schemes, while for delayed relaunch, no non-trivial queueing analysis exists, while has been studied for single file request in absence of queue. 
The next line considers whether there is latency analysis (upper bounds) for general service-time distribution, which are not available for the case of MDS-Reservation and Delayed Relaunch scheduling. The fourth line indicates whether closed-form expressions for the latency bounds exist in the queueing analysis, which is not true for MDS-Reservation scheduling. Since there is no queueing analysis for delayed relaunch, N/A is marked. In the next line, we note that there are asymptotic guarantees for probabilistic scheduling with exponential service times in two regimes. The first is in the case of homogenous servers with $m=n$ and single file, where $n$ goes large. The second is where the file sizes are heavy-tailed. Since the delayed relaunch scheduling is a generalization of probabilistic scheduling, it inherits the guarantees. The last line indicates whether the analysis exist for tail latency, which is the probability that the latency is greater than the threshold which exist for probabilistic scheduling.

As an example, we consider a shifted exponential distribution for service times, ${\rm Sexp}(\beta,\alpha)$ as defined in \eqref{eq:Sexp}. The parameters are different for different servers, which for $m=12$ servers are given in Table \ref{tab:Storage-Nodes-Parameters1}. We consider homogeneous files with $k_i=7$ and $n_i=12$. In order to do the simulations, requests arrive for $10^4$ seconds, and their service time is used to calculate the latency based on the different scheduling strategies.  For varying arrival rate, we compare the three strategies - MDS-Reservation(1000), Fork-Join scheduling, and Probabilistic scheduling. Since optimized Delayed Relaunching includes the Fork-Join and Probabilistic scheduling approaches, we do not compare this.  Since the Probabilistic scheduling have the probabilities of choosing the servers as random, we run ten choices of the probability terms using a uniform random variable between zero and 1 and normalization and choose the best one among these. %
Note that even though Fork-Join queues have not been analyzed for heterogeneous servers, the results indicate the simulated performance.  The simulation results are provided in Fig. \ref{fig:comp_intro}. We note that the MDS-Reservation and the fork-join queue does not achieve the optimal stable throughput region and it can be seen that the mean latency starts diverging faster. Further, we note that the probabilistic scheduling performs better than Fork-Join scheduling for all arrival rates in this system. %

\begin{figure}
\begin{center}
	\includegraphics[trim=1.1in .9in 1.1in .2in, clip, width=.5\textwidth, angle =90]{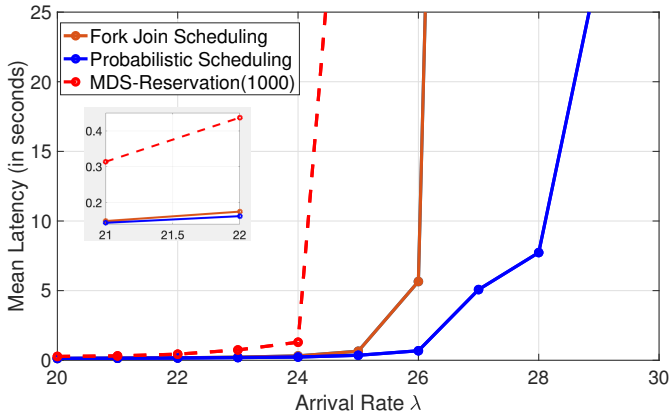}
\end{center}
\caption{Comparison of the different strategies in simulation. We note that probabilistic scheduling outperforms the other strategies with the considered parameters. }	\label{fig:comp_intro}
\end{figure}

\section{Outline of the Monograph}\label{mono_out}
In the remainder of the monograph, we will describe the four considered scheduling strategies in Chapters 2, 3, 4, and 5, respectively. The corresponding results for the latency characetrization, approximations, generalizations, and asymptotic optimality guarantees will be provided. In Chapter 6, we will demonstrate an extension of using the approach of erasure-coded file storage to erasure-coded video storage, where the metric of importance for the video is the stall duration rather than the latency. The analysis of upper bound of stall duration using probabilistic scheduling is provided. Finally, Chapter 7 demonstrates the insights of the approaches on prototype evaluation in realistic systems. We also discuss open problems at the end of each chapter to illuminate directions for future work.
\section{Notes}\label{sec:intro_notes}
In this monograph, we consider using an maximal distance separable (MDS) code for erasure-coded storage system. The problem to evaluate scheduling algorithms, and the latency analysis will be analyzed in the following chapters, with discussion on related papers. 

In addtion to latency, a key importance of the use of erasure codes is to deal with node failures. With node failures, one important aspect is the reliability of the storage system, which is measured through the mean time to data loss. Different models of failure and repair have been considered in \citep{angus1988computing,chen1994raid}. 

Another aspect of node failure is to minimize the amount of data transferred to repair a failed node. In order to consider this problem, a novel approach of regenerating codes was proposed in \citep{dimakis2010network}. Functional repair has been proposed which aims to repair a failed node with a content that satisfies similar properties as the original code. The problem has connections with network coding, and such frameworks have been used to provide efficient code designs. The regenerating codes allow for a tradeoff between the storage capacity at the node and the amount of data transferred from other $d\ge k$ nodes in an $(n,k)$ erasure-coded system. In many realistic systems, exact repair, where the failed node is repaired with the exact same content, is needed. Construction of erasure codes with exact repair guarantee has been widely studied \citep{suh2011exact,MDS_12,MDS_13,tian2015layered,MDS_14}. Regenerating codes have also been used to evaluate the mean time to data loss of the storage systems \citep{aggarwal2014distributed}. The regenerating codes includes a minimum storage regenerating (MSR) point, where $1/k$ of the file is placed at each node, and thus is best for the latency. However, any of the point of the code can be used for the analysis in this work with corresponding increased amount of data from each of the $k$ nodes based in the increased storage.

\chapter{MDS-Reservation Scheduling Approach}
\label{chp:MDS}
In this Chapter, we introduce the model of MDS-Reservation($t$) queues in Section~\ref{sec:mds_sch}, which was first proposed in \citep{MDS_queue}. It allows us to develop an upper bound on the latency of MDS queues and characterize the stability region in Section~\ref{sec:MDS:upper}. We will also develop a lower bound for MDS-Reservation($t$) queues using $M/M/n$ queues in Section~\ref{sec:MDS:lower} and investigate the impact of redundant requests in Section~\ref{sec:MDS:redundant}. Sections~\ref{sec:MDS:sim} and \ref{sec:MDS:note} contain simulation results and notes on future directions.

\section{MDS-Reservation Queue}
\label{sec:mds_sch}

For this chapter, we consider a homogeneous erasure-coded storage system, where the files have identical sizes and are encoded using the same MDS code with parameters $(n,k)$. There are $n$ identical storage serves, each storing exactly one chunk of each file. Incoming file requests follow a Poisson process and are independent of the state of the system, and the chunk service times on storage nodes have an {\em i.i.d.} exponential distribution. Under an MDS code, a file can be retrieved by downloading chunks from any $k$ of the $n$ servers. Thus, a file request is considered as served when all $k$ of its chunk request have been scheduled and completed service.
 \if 0
\begin{algorithm}[H]
	\SetAlgoLined
	FileRequestNumber =0\\
	\If{File Request Arrives}{
		FileRequestNumber ++\\
		\If{FileRequestNumber $\le t$}{
			Assign as many of $k$ tasks as possible to idle servers\\
			Append remaining tasks (if any) as new batch at the end of the buffer, with the FileRequestNumber and servers on which scheduled information
		}
		\Else{
			If there are $k$ idle servers, assign to servers. Else, append as new batch at the end of the buffer, with the FileRequestNumber
		}
	}
	\If{There is a departure from a server $s$ and Buffer non-empty}{
		\If{There is at least one batch in the buffer such that no job of this batch has been served by $s$ whose FileRequestNumber$\le t$}{
			Among all such batches, find batch that arrived earliest. Assign a job from this batch to $s$
		}
		\If{Buffer has no request with FileRequestNumber$\le t$ and $k$ servers are idle }{
			Assign jobs from first waiting batch to the $k$ idle servers. 
		}
	}
	\caption{MDS-Reservation($t$) scheduling}\label{alg:mdst}
\end{algorithm}
\fi 

\begin{algorithm}[h]
	\caption{MDS-Reservation(t) Scheduling Policy}
	\begin{algorithmic}
		\algostate \textbf{On} arrival of a batch
		\algostate \algoindent \textbf{If}  {buffer has strictly fewer than t batches}
		\algostate \algoindent \algoindent  Assign jobs of new batch to idle servers
		\algostate \algoindent Append remaining jobs of batch to end of buffer
		\algostate \textbf{On} departure of job from a server (say, server $s$)
		\algostate \algoindent {\medmuskip=0\medmuskip \thickmuskip=0\thickmuskip Find $\hat{i} =\min\{i\geq 1:$ $s$ has not served job of $i\supth$ waiting batch$\}$}
		\algostate \algoindent Let $b_{t+1}$ be the $(t+1)\supth$ waiting batch (if any)
		\algostate \algoindent \textbf{If} {$\hat{i}$ exists \& $\hat{i} \leq t$}
		\algostate \algoindent \algoindent  Assign a job of $\hat{i}\supth$ waiting batch to $s$
		\algostate \algoindent \algoindent \textbf{If} {$\hat{i}=1$ \& the first waiting batch had only one job in the buffer \& $b_{t+1}$ exists}
		\algostate \algoindent \algoindent \algoindent To every remaining idle server, assign a job from batch $b_{t+1}$
	\end{algorithmic}
	\label{alg:mdst}
\end{algorithm}

To derive a latency upper bound, a class of MDS-Reservation($t$) scheduling policies are proposed in \citep{MDS_queue}. This class of scheduling policies are indexed by an integer parameter $t$. When a file is requested, a set of $k$ tasks is created. The requests are scheduled as many of the servers are idle (up to $k$). Further, the remaining tasks are kept as a batch in the buffer. On departure of any task from a server, the buffer is searched in order for a batch from which any job has not been serverd by this server, and a task from that batch is served. For the parameter $t$, an additional restriction is imposed: any file request $i\ge t+1$ (i.e., the $i$th batch of chunk requests) can move forward in the buffer only when all $k$ of its chunk requests can move forward together (i.e., when one of the $t$ head-of-line file requests is completed). The basic pseudo-code for the MDS-reservation$(t)$ is described in Algorithm \ref{alg:mdst}. 
 It is easy to see that by blocking any file requests for $i\ge t+1$, such MDS-Reservation($t$) scheduling policies provide an upper bound on the file access latency of MDS queues, with a larger $t$ leading to a tighter bound, yet at the cost of more states to maintain and analyze.

\if 0
We now introduced the definition of MDS-reservation queues according to \citep{MDS_queue}. 
\begin{definition}
	An MDS-reservation$(t)$ system has homogeneous files, placement, and servers. The chunk requests are processed in order under i.i.d. exponential service time distributions. Any batch of requests beyond the first $i\ge t+1$ can move forward in the buffer only when all $k$ of its chunk requests can move forward together.
\end{definition}
\fi

An example of MDS-reservation$(t)$ scheduling policy (for $t=1,2$ respectively) and the corresponding queuing policies are illustrated in Figure~\ref{fig:MDS_sys} for $(n,k)=(5,2)$ codes. Since chunk request $R_{3}^{A,1}$ in batch 3 is already processed by server 2, the second chunk request $R_{3}^{A,2}$ in the same batch cannot be processed by the same server. Under MDS-reservation$(1)$, a batch not at the head-of-line can proceed only if all chunk requests in the batch can move forward together. Since the condition is not satisfied, server 2 must enter an idle state next, leading to resource under utilization and thus higher latency. On the other hand, an MDS-reservation$(2)$ policy allows any chunk requests in the first two batches to move forward individually. Chunk request $R_{4}^{A,1}$ moves into server 2 for processing. It is easy to see that as $t$ grows, MDS-reservation$(t)$ scheduling policy becomes closer to the optimal scheduling policy.

\begin{figure}[!thbp]
\begin{center}
\scalebox{1}{\includegraphics[draft=false,width=0.59\textwidth]{./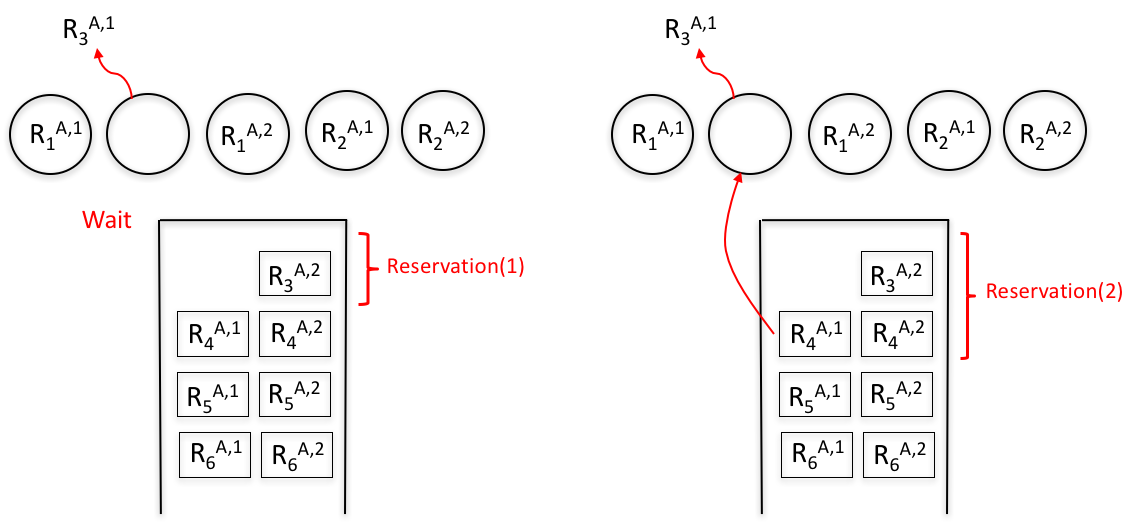}}
\vspace{-3mm}
\caption{An illustration of MDS-reservation$(1)$ (Left) and MDS-reservation$(2)$ (Right) scheduling.}
\label{fig:MDS_sys}
\end{center}
\vspace{-.35in}
\end{figure}

\section{Characterization of Latency Upper Bound via MDS-Reservation Scheduling }
\label{sec:MDS:upper}

The key idea in analyzing these MDS queues is to show that the corresponding Markov chains belong to a class of processes known as Quasi-Birth-Death (QBD) processes \citep{MDS_queue}. Thus, the steady-state distribution can be obtained by exploiting the properties of QBD processes. More precisely, a birth-death process is defined as a continuous-time Markov process on discrete states $\{0, 1,2, \ldots \}$, with transition rate $\lambda$ from state $i$ to $i+1$, transition rate $\mu$ from state $i+1$ to $i$, and rates $\mu_0,\lambda_0$ to and from the boundary state $i=0$, respectively. A QBD process is a generalization of such birth-death processes whose states $i$ are each replaced by a set of states, known as a level. Thus a QBD process could have transitions both within a level and between adjacent levels. It has a diagonal transition probability matrix:
\begin{eqnarray}
\left[ \begin{array}{ccccc} 
B_1 & B_2 & 0 & 0 & \ldots \\
B_0 & A_1 & A_2 & 0 & \ldots \\
0 & A_0 & A_1 & A_2 & \ldots \\
0 & 0 & A_0 & A_1 & \ldots \\
0 & 0 & 0 & A_0 & \ldots \\
\vdots & \vdots & \vdots & \vdots & \ddots
 \end{array} \right]
\end{eqnarray}
where matrices $B_1,B_2,B_0$ are transition probabilities within, from, and to the boundary states, and $A_1,A_2,A_0$ are transition probabilities within each level, entering the next level, and entering the previous level, respectively.

\begin{theorem}[\citep{MDS_queue}]
The Markovian representation of the MDS-Reservation($t$) queue has a state space $\{0,1,\ldots,k\}^t \times \{0,1,\ldots,\infty \}$. It is a QBD process with boundary states $\{0,1,\ldots,k\}^t \times \{0,1,\ldots,n-k+tk \}$ and levels $\{0,1,\ldots,k\}^t \times \{0,1,\ldots,n+jk \}$ for $j=\{t,t+1,\ldots,\infty\}$. 
\end{theorem}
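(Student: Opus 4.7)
The plan is to construct an explicit continuous-time Markov chain for the MDS-Reservation$(t)$ system and then display its block-tridiagonal structure. The first step is choosing a sufficient statistic. For each of the $t$ head-of-line (HoL) reservation slots I would let $c_i\in\{0,1,\ldots,k\}$ denote the number of unserved chunk requests of the $i$-th HoL batch (with $c_i=0$ if the slot is vacant); let the last coordinate $N$ record the total chunk-request workload in the system. This gives the claimed state space $\{0,1,\ldots,k\}^t\times\{0,1,\ldots,\infty\}$. I would then verify sufficiency by noting that the scheduling rule in Algorithm~\ref{alg:mdst} is deterministic: given $(c_1,\ldots,c_t)$ and $N$, the current assignment of servers to chunk requests is uniquely pinned down, since idle servers always take the earliest eligible HoL batch and any batch beyond the $t$-th is blocked until an HoL slot frees. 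Memorylessness of the Poisson arrivals and the exponential services then yields the Markov property.

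The second step is to enumerate transitions and recognise the QBD structure. There are two transition types. (i) Arrivals occur at rate $\lambda$: a new batch either fills the lowest vacant HoL slot (if any) and dispatches as many of its $k$ chunks to idle servers as possible, or else joins the overflow part of the buffer as a fresh $k$-block; in both cases $N$ increases by $k$ (or by fewer if some chunks complete almost immediately, but still by a bounded amount). (ii) A completion at a busy server occurs at rate $\mu$: it decrements the corresponding $c_i$ and $N$ by one, and if $c_i$ drops to $0$ the slot is vacated and, when an overflow batch exists, repopulated by promoting the earliest overflow batch. I would then define the level to count overflow batches (equivalently, how many additional $k$-blocks above the HoL part of $N$ are present), so that every arrival advances the level by at most one and every completion lowers it by at most one. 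The finite "phase space" $\{0,\ldots,k\}^t$, combined with this $\pm 1$ level-jump property, yields the block-tridiagonal transition matrix, and once $j\geq t$ the within- and between-level transitions no longer depend on $j$, producing the repeating blocks $A_0,A_1,A_2$.

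The third step is to verify the specific ranges. The boundary collects the states where the system has not yet spilled into overflow, so a careful count of chunks in service plus chunks in HoL slots gives a maximum workload of $tk+(n-k)=n-k+tk$, the last value of $N$ before any further arrival must become overflow, matching $\{0,1,\ldots,n-k+tk\}$. For level $j\geq t$ the analogous accounting, with $j$ overflow batches and at most $n$ chunks in service, gives the maximum workload $n+jk$, matching $\{0,1,\ldots,n+jk\}$. The main obstacle is Step~1: showing that $(c_1,\ldots,c_t,N)$ is genuinely a sufficient statistic requires justifying that no finer server-batch history matters for future transition rates. This relies on combining (a) the determinism of the reservation scheduling rule, (b) exchangeability of the identical servers, and (c) memorylessness of the exponential service times, which together collapse the a priori exponential amount of history information into the compact state proposed. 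Once sufficiency is established, the rest is bookkeeping of the transition blocks.
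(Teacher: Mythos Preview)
Your plan follows the same architecture as the paper's proof: pick a state in $\{0,\ldots,k\}^t\times\mathbb{Z}_{\ge 0}$, verify it is Markovian, and read off the QBD structure. Two substantive differences are worth flagging.

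First, the paper records $w_i$, the number of jobs of the $i$th \emph{waiting} batch still in the buffer, rather than your $c_i$, the total number of unserved chunks. With that choice the paper does not argue sufficiency abstractly; it writes down explicit formulas reconstructing the whole configuration --- the number of active HoL batches $q$, total waiting batches $b$, idle servers $z$, and the per-batch in-service counts $s_i$ --- as deterministic functions of $(w_1,\ldots,w_t,m)$. This constructive step is exactly what resolves the ``main obstacle'' you identify: rather than invoking determinism, exchangeability and memorylessness in the abstract, the paper simply exhibits the full system state as a function of the proposed coordinates. Your $c_i$ may well also be sufficient, but you would need analogous reconstruction formulas, and these are less immediate since $c_i = s_i + w_i$ conflates two quantities the scheduling rule treats differently.

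Second, the paper does not take the level to be the overflow-batch count. It uses the last coordinate $m$ directly and checks two properties: (i) every transition changes $m$ by at most $k$ (arrivals by $+k$, departures by $-1$); and (ii) once $m\ge n-k+1+tk$, the transition out of $(w_1,\ldots,w_t,m)$ depends only on $(w_1,\ldots,w_t,\, m\bmod k)$. These two facts directly give a QBD whose levels are consecutive $k$-intervals in $m$, matching the ranges in the theorem statement without a separate accounting argument. Your overflow-count level is related but does not by itself pin down those $m$-intervals.

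One small correction: your parenthetical ``or by fewer if some chunks complete almost immediately'' is wrong --- in the continuous-time chain, arrivals and service completions are distinct instantaneous events, so an arrival always increments $N$ by exactly $k$.
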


\begin{proof} We briefly summarize the proof in \citep{MDS_queue}. For any state of the system $(w_1,w_2,\ldots,w_t, m) \in \{0,1,\ldots,k\}^t \times \{0,1,\ldots,\infty \}$, define 
\begin{eqnarray}
q=\left\{ \begin{array}{cc} 
0 &  {\rm if} \ w_1=0 \\
t & {\rm else \ if} \ w_t\neq 0 \\
{\rm arg \ max} \{ \tau: w_\tau\neq 0, 1\le \tau \le t \} & {\rm otherwise.}
\end{array} \right. \label{eq:MDS_1}
\end{eqnarray}
Then we can find the number of waiting request batches ($b$), the number of idle servers in the system ($z$), the number of jobs of $i$th waiting batch in the servers ($s_i$), and the number of jobs of $i$th waiting batch in the buffer ($w_i$) as follows:
\begin{eqnarray}
b=\left\{ \begin{array}{cc} 
0 &  {\rm if} \ q=0 \\
q & {\rm else \ if} \ 0<1<t\\
t+\left\lfloor \frac{m-\sum_j w_j-n}{k} \right\rfloor  & {\rm otherwise.}
\end{array} \right.
\end{eqnarray}
\begin{eqnarray}
z=n-\left( {m-\sum_j w_j-(b-t)^+k} \right),
\end{eqnarray}
\begin{eqnarray}
s_i=\left\{ \begin{array}{cc} 
w_{i+1}-w_i &  {\rm if} \ i\in\{1,\ldots ,q-1\} \\
k-z-w_i & {\rm if} \ i=q\\
0  & {\rm if} \ i\in\{q+1,\ldots ,b\}
\end{array} \right. \ \ \ {\rm for} \ i\in\{1,\ldots , b\}
\end{eqnarray}
\begin{eqnarray}
w_i=k, \ \ {\rm for} \ i\in\{t+1,\ldots , b\}. \label{eq:QBD}
\end{eqnarray}

These equations characterize the full transition. It is easy to verify that the MDS-Reservation($t$) queue has the following two key properties: i) Any transitions
change the value of $m$ by at most $k$; and ii) For $m\ge n-k+1+tk$, the transition from any state $(w_1,m)$ to any other states $(w_1^{'}, m^{'}\ge n-k+1+tk)$ depends on $m \ mod \ k$ and not on the actual value of $m$. It is then straightforward to show that this satisfies the boundary and level conditions of a QBD process, with boundary and level transitions specified in the theorem.
\end{proof}

The proof of Theorem provides a procedure to obtain the configuration of the entire queuing system under the MDS-Reservation($t$) scheduling policies. Further, as $t$ goes to infinity, the system approaches an MDS queue, thus resulting in a tighter upper bound at the cost of more complicated queuing analysis. This is because the MDS-Reservation($t$) scheduling policy follows the MDS scheduling policy when the number of file requests in the buffer is less than or equal to $t$. Thus, it is identical to MDS queue when $t$ goes to infinity.

\begin{theorem}
The MDS-Reservation($t$) queue, when $t =\infty$, is precisely the MDS queue for homogenous files, homogenous servers, exponential service times, and $n=m$.
\end{theorem}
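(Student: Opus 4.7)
The plan is to argue by a sample-path coupling: on the same Poisson arrival stream and the same i.i.d.\ exponential chunk-service clocks, the MDS-Reservation($t$) policy and the MDS policy produce the same scheduling decision at every event epoch as long as the buffer never contains more than $t$ waiting batches. First, I would inspect Algorithm~\ref{alg:mdst} and verify that the only clauses in which the reservation rule differs from the optimal MDS scheduling are the arrival-time guard ``buffer has strictly fewer than $t$ batches'' and the departure-time guard ``$\hat{i} \le t$''; outside these guards the prescription is precisely the MDS rule of ``serve the earliest valid batch from which this server has not yet served a chunk.'' Hence the only way the two processes can diverge is through some batch occupying position $t{+}1$ or later and being forced to wait for a full set of $k$ idle servers before advancing.

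Second, I would observe that when $t = \infty$, no batch can ever sit in position $t{+}1$ or later, so the guards are vacuously satisfied at every epoch: every arrival falls into the free-assignment branch and every departure falls into the standard earliest-valid-batch branch. These are exactly the actions prescribed by the MDS policy defined in Section~\ref{model}. Consequently the two systems induce identical next-state distributions at every event epoch, and the coupling yields sample-path identity, not merely equality in distribution. Because homogeneous files, homogeneous servers, exponential service, and $n = m$ are precisely the standing assumptions of the MDS-Reservation($t$) model in Section~\ref{sec:mds_sch}, no further matching of parameters is needed.

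Finally, I would reconcile this with the Markov-chain characterization from the previous theorem: the state space $\{0,1,\ldots,k\}^t \times \{0,1,\ldots,\infty\}$ with $t = \infty$ records the number of outstanding chunks of every waiting batch together with the total job count, which is exactly the information an MDS queue must carry \citep{MDS_queue}. The QBD level structure---which arose because positions beyond $t$ were collapsed into full-batch moves of size $k$---dissolves in the limit, and the transition kernel reduces to the genuine (infinite-dimensional, non-QBD) MDS dynamics. I do not expect a substantial technical obstacle here; the only mildly subtle point is interpreting ``$t=\infty$'' as ``the reservation clause is never triggered'' rather than as a measure-theoretic limit of policies, and one can make this rigorous either by direct comparison of scheduling rules or by noting that for any finite sample path up to time $T$ there exists a finite $t_0$ large enough that MDS-Reservation($t_0$) already agrees with MDS on that path.
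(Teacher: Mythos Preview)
Your proposal is correct and takes essentially the same approach as the paper: the paper's justification is the brief observation that the MDS-Reservation($t$) policy coincides with the MDS policy whenever the buffer holds at most $t$ batches, so at $t=\infty$ the reservation clause is never triggered and the two systems are identical. Your sample-path coupling and explicit identification of the two guard clauses in Algorithm~\ref{alg:mdst} amount to a more careful unpacking of that same one-line argument, and the additional remarks on the Markov-chain state space and on interpreting $t=\infty$ as ``reservation never triggered'' are sound elaborations the paper does not spell out.
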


These results allow us to employ any standard solver to obtain the steady-state distribution of the QBD process, enabling latency analysis under MDS-Reservation($t$) scheduling policies. In particular, for $t=0$, the MDS-Reservation($0$) policy is rather simple, as the file request (consisting of a batch of chunk requests) at the head of the line may move forward and enter service only if there are at least $k$ idle servers. When $n=k$, this becomes identical to a split-merge queue \citep{MDS_22}. For $t=1$, the MDS-Reservation($1$) policy is identical to the block-one scheduling policy proposed in \citep{MG1:12}.

Using this queue model, we can also find the stability region of the MDS-Reservation($t$) scheduling policy. While an exact characterization is non-tractable in general, bounds on the maximum stability region, defined as the maximum possible number of requests that can be served by the system per unit time (without resulting in infinite queue length) is find in \citep{MDS_queue}. 

\begin{theorem}[\citep{MDS_queue}]
For any given $(n,k)$ and $t>1$, the maximum throughput $\lambda^{*}_{\rm Resv(t)}$ in the stability region satisfies the following inequalities when $k$ is treated as a constant:
\begin{eqnarray}
(1-O(n^{-2}))\frac{n}{k}\mu \le \lambda^{*}_{\rm Resv(t)} \le \frac{n}{k}\mu.
\end{eqnarray}
\end{theorem}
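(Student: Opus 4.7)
The upper bound $\lambda^{*}_{\rm Resv(t)} \le n\mu/k$ follows at once from a work-conservation argument. The $n$ servers collectively process chunks at aggregate rate at most $n\mu$, and since each file request accounts for exactly $k$ chunks, the file-request completion rate cannot exceed $n\mu/k$. For stationarity, $\lambda$ must not exceed this, so $\lambda^{*}_{\rm Resv(t)} \le n\mu/k$.

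For the lower bound I would leverage the QBD structure established in the preceding theorem. By the standard Neuts drift criterion, the MDS-Reservation($t$) chain is positive recurrent iff $\pi A_0 \mathbf{1} > \pi A_2 \mathbf{1}$, where $\pi$ is the stationary distribution of the within-level generator $A_0 + A_1 + A_2$ on the finite state space $\{0,\ldots,k\}^t$ and $\mathbf{1}$ is the all-ones column. Here $\pi A_2 \mathbf{1}$ equals the level-up rate $\lambda$, whereas $\pi A_0 \mathbf{1}$ equals the effective batch-completion rate in the heavy-traffic regime; the task thus reduces to lower bounding $\pi A_0 \mathbf{1}$ by $(1 - O(n^{-2}))\,n\mu/k$. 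The key heuristic is that in heavy traffic the first $t$ head-of-line batches, together with the promotion rule that pre-activates batch $b_{t+1}$ as soon as the first head-of-line batch nears completion, keep essentially all $n$ servers continuously busy: whenever a server becomes free it can almost always locate an unserved chunk among the first $t$ buffer batches. Consequently the aggregate chunk-completion rate approaches $n\mu$ and the batch-completion rate approaches $n\mu/k$.

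The crux of the argument is quantifying the residual inefficiency as $O(n^{-2})$, and here the assumption $t > 1$ is essential. With $t = 1$ a lone head-of-line batch stalls the system during handovers, producing a first-order loss, whereas with $t \ge 2$ the other head-of-line batches continue absorbing work during the handover, so the loss is pushed to second order. My plan is to analyze the within-level chain on $\{0,\ldots,k\}^t$ directly, expanding $\pi$ asymptotically around the heavy-traffic limit $\lambda = n\mu/k$ and showing that the stationary mass on the ``inefficient'' configurations, in which a freed server has already served each of the first $t$ head-of-line batches and must idle until the next batch shift, scales like $n^{-2}$. An alternative route would be to construct a coupling with an auxiliary $M/M/n$ queue of per-server rate $\mu$ that acts as an upper-throughput benchmark, bounding the difference between its departure process and that of the MDS-Reservation($t$) queue. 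The main technical obstacle is pinning down the precise $n^{-2}$ order without resorting to brute-force numerical enumeration of the within-level stationary distribution, since the transition rates depend on $n$ only through the arrival rate in the near-critical regime and a careful second-order Taylor expansion is required to isolate the $n^{-2}$ correction.
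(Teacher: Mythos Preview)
Your upper bound is exactly the paper's argument.

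For the lower bound, however, you miss the paper's central simplification and, as a result, carry an incorrect intuition about the role of $t$. The paper does \emph{not} analyze the within-level chain on $\{0,\ldots,k\}^t$ for general $t$. Instead it first observes that MDS-Reservation$(t)$ is monotone in $t$: for any $t\ge 2$ the policy is less restrictive than MDS-Reservation$(1)$, hence $\lambda^{*}_{\rm Resv(t)} \ge \lambda^{*}_{\rm Resv(1)}$. This reduces the entire lower bound to the case $t=1$, whose within-level state space is simply $\{0,\ldots,k\}$ (one-dimensional, size $k+1$), and the Neuts drift condition becomes a tractable computation in which $A_0,A_1,A_2$ are affine in $\lambda$ for fixed $k$. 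Plugging in the explicit $t=1$ transition matrices yields $\lambda^{*}_{\rm Resv(1)} \ge (1-O(n^{-2}))\,n\mu/k$ directly.

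This means your belief that ``with $t=1$ a lone head-of-line batch stalls the system during handovers, producing a first-order loss'' is wrong: the paper establishes the $O(n^{-2})$ bound \emph{already for $t=1$}, and the hypothesis $t>1$ in the theorem is used only to invoke the monotonicity $\lambda^{*}_{\rm Resv(t)} \ge \lambda^{*}_{\rm Resv(1)}$. Your plan to expand the stationary distribution on the full $t$-dimensional level set and isolate an $n^{-2}$ correction is therefore both unnecessary and built on a mistaken picture of where the inefficiency lies. The coupling alternative you mention is closer in spirit, but the concrete route taken in the paper is simply the explicit one-dimensional QBD computation for $t=1$.
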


\begin{proof}
First, we note that for $t\ge 2$, latency of each of the MDS-Reservation($t$) queues is upper bounded by that of MDS-Reservation($1$), since less batches of chunk requests are blocked and not allowed to move forward into the servers, as $t$ increases. 

Next, we evaluate the maximum throughput in the stability region of MDS-Reservation($1$) by exploiting properties of QBD systems. We follow the proof in~{\citep{MDS_queue}}. Using the QBD process representations in Equation~(\ref{eq:QBD}), the maximum throughput $\lambda^{*}_{\rm Resv(t)}$ of any QBD system is the value of $\lambda$ such that: $\exists v$ satisfying $v^T(A_0+A_1+A_2)=0$ and $v^TA_0{\bf 1}=v^TA_1{\bf 1}$, where ${\bf 1}$ is an all-one vector. For fixed values of $\mu$ and $k$, it is easy to verify that the matrices $A_0$, $A_1$, and $A_2$ are affine transformations of arrival rate $\lambda$. Plugging in the values of $A_0$, $A_1$ and $A_2$ in the QBD representation of MDS-Reservation($1$) queues, we can show that such $v$ vector exist if $\lambda^{*}_{\rm Resv(1)}\ge (1-O(n^{-2}))\frac{n}{k}\mu$.

It then follows that $\lambda^{*}_{\rm Resv(t)}\ge \lambda^{*}_{\rm Resv(1)}\ge (1-O(n^{-2}))\frac{n}{k}\mu$ for any $t\ge 2$. The upper bound on $\lambda^{*}_{\rm Resv(t)}$ is straightforward since each batch consists of $k$ chunk requests, the rate at which batches exit the system (for all $n$ servers combined) is at most $n\mu / k$.
\end{proof}

\section{Characterization of Latency Lower Bound}
\label{sec:MDS:lower}

To derive a lower bound on service latency for MDS queues, we leverage a class of $M^k/M/n(t)$ scheduling policies proposed in \citep{MDS_queue}, which relax the requirement that $k$ chunk requests belonging to the same file request must be processed by distinct servers after the first $t$ requests.  It applies the MDS scheduling policy whenever there are $t$ or fewer file requests (i.e., $t$ or fewer batches of chunk requests) in the system, while ignoring the requirement of distinct servers when there are more than $t$ file requests.

\begin{theorem}[\citep{MDS_queue}]
The Markovian representation of the $M^k/M/n(t)$ queue has a state space $\{0,1,\ldots,k\}^t \times \{0,1,\ldots,\infty \}$. It is a QBD process with boundary states $\{0,1,\ldots,k\}^t \times \{0,1,\ldots,n+tk \}$ and levels $\{0,1,\ldots,k\}^t \times \{n-k+1+jk,\ldots,n+jk \}$ for $j=\{t+1,\ldots,\infty\}$. 
\end{theorem}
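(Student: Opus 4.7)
The plan is to mirror the proof of the analogous theorem for the MDS-Reservation($t$) queue, adapting it to account for the relaxed ``distinct servers'' constraint that defines $M^k/M/n(t)$. First I would introduce the same state variables $(w_1,\dots,w_t, m) \in \{0,1,\dots,k\}^t \times \{0,1,\dots,\infty\}$, where $w_i$ is the number of unscheduled jobs of the $i$-th waiting batch still in the buffer and $m$ is an aggregate counter that records the total outstanding work in the system (jobs in service plus jobs buffered from batches beyond the first $t$). Using the auxiliary index $q$ from \eqref{eq:MDS_1}, I would derive the analogous bookkeeping equations for the number of waiting batches $b$, idle servers $z$, and in-service/buffered jobs $s_i$ of each of the first $t$ batches, noting the one essential modification: for batches beyond index $t$ the distinct-server requirement is dropped, so every such batch contributes exactly $k$ jobs that can be processed on \emph{any} server. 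This is what makes the accounting on $m$ clean once we are past the boundary.

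Second, I would verify the two structural properties that characterize a QBD chain. (i) Every transition (a Poisson batch arrival adds $k$ to $m$; a single service completion decreases $m$ by $1$) changes $m$ by at most $k$, so the chain only moves between adjacent ``blocks'' of size $k$ in the $m$-coordinate. (ii) Once $m$ exceeds the boundary threshold $n+tk$, all $n$ servers are busy, the first $t$ buffer slots contain complete batches ($w_i=k$ for $i\le t$), and any additional jobs are drawn from the homogeneous pool of identical (order-free) jobs beyond position $t$. Consequently, the transition rates out of $(w_1,\dots,w_t,m)$ depend only on $m \bmod k$ and on $(w_1,\dots,w_t)$, not on the actual value of $m$. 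This gives the repeating block structure of the generator matrix and hence the QBD property.

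Third, I would identify the level and boundary blocks. The level-$j$ block is $\{0,\dots,k\}^t \times \{n-k+1+jk,\dots,n+jk\}$ for $j\ge t+1$: the width $k$ per level matches the maximum increment of $m$, and the offset $n-k+1+jk$ is the first value of $m$ at which exactly $j-t$ complete $k$-tuples are waiting in the tail queue with all $n$ servers busy. The boundary region is everything below this, namely $\{0,\dots,k\}^t \times \{0,\dots,n+tk\}$, where the system still has idle servers or fewer than $t$ head-of-line batches, so the regular block repetition has not yet kicked in. This yields the stated boundary states and levels.

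The main obstacle is carefully checking that the relaxation of the distinct-server constraint for batches beyond position $t$ really does make the transitions depend only on $m \bmod k$. One has to argue that, even though an arriving batch of $k$ chunk requests joins the tail, the subsequent departures are symmetric across servers because the distinctness constraint no longer applies; thus the identity of which server finishes next is immaterial and only the residue $m \bmod k$ matters for determining whether the next departure empties a ``complete'' tail batch. Once this symmetry is rigorously established, the QBD description and the explicit state/level/boundary identification follow by the same routine verification as in the MDS-Reservation($t$) case.
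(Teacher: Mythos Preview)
Your approach is essentially the paper's: define the same state $(w_1,\dots,w_t,m)$, reuse the bookkeeping equations from the MDS-Reservation($t$) analysis, and verify the two QBD properties (bounded jump size in $m$; transition rates depending only on $m\bmod k$ once $m$ is past the boundary). The paper's proof is extremely terse and simply asserts that ``the values of $b$, $z$, $s_i$, and $w_i$ can be derived accordingly and are identical'' to the earlier section, then states the two properties; you are filling in the same outline with more words.

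One slip to fix: in your property (ii) you write that once $m$ exceeds $n+tk$, ``the first $t$ buffer slots contain complete batches ($w_i=k$ for $i\le t$).'' That is not correct and in fact contradicts your own next sentence. The coordinates $(w_1,\dots,w_t)$ are precisely the \emph{phase} of the QBD; they record how many jobs of each of the first $t$ waiting batches are still unscheduled under the MDS rule, and they range freely over $\{0,\dots,k\}^t$ even deep in the repeating region. What is true is that batches in positions $t{+}1,t{+}2,\dots$ each contribute exactly $k$ indistinguishable jobs (because the distinct-server constraint is dropped for them), so their contribution to $m$ is homogeneous; that homogeneity is why the transition rates depend only on $(w_1,\dots,w_t)$ and $m\bmod k$. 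Once you correct that sentence, your argument matches the paper's.
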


\begin{proof}
We again define $q$ for any system state $(w_1,w_2,\ldots,w_t, m) \in \{0,1,\ldots,k\}^t \times \{0,1,\ldots,\infty \}$ as in Equation (\ref{eq:MDS_1}). The values of $b$, $z$, $s_i$, and $w_i$ can be derived accordingly and are identical to those in Chapter \ref{sec:MDS:upper}. These equations capture the entire state transitions. It is then easy to see that the $M^k/M/n(t)$ queue satisfy the following two properties: i) Any transitions change the value of $m$ by at most $k$; and ii) For $m\ge n-k+1+tk$, the transition from any state $(w_1,m)$ to any other states $(w_1^{'}, m^{'}\ge n-k+1+tk)$ depends on $m \ mod \ k$ and not on the actual value of $m$. This results in a QBD process with boundary states and levels described in the theorem.
\end{proof}

Similar to the case of MDS-Reservation($t$) scheduling policies, as $t$ goes to infinity in $M^k/M/n(t)$ scheduling policies, the resulting system approaches an MDS queue, thus providing a tighter lower bound at the cost of more complicated queuing analysis. 

\begin{theorem}
The $M^k/M/n(t)$ queue, when $t =\infty$, is precisely the MDS queue for homogenous files, homogenous servers, exponential service times, and $n=m$.
\end{theorem}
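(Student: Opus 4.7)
The plan is to exhibit a sample-path coupling under which, as $t\to\infty$, the $M^k/M/n(t)$ policy and the MDS policy make identical scheduling decisions over every finite time horizon, which is enough to conclude distributional equality.

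First, I would recall the precise definition of $M^k/M/n(t)$ as given just before the theorem: the policy enforces the MDS distinct-servers constraint (i.e., no server processes two chunks of the same batch) whenever at most $t$ file-request batches are present in the system, and relaxes this constraint only once the batch count strictly exceeds $t$. Thus, the two policies can differ only at states in which more than $t$ batches are simultaneously in the system.

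Second, I would couple the MDS queue and the $M^k/M/n(t)$ queue by driving both with the \emph{same} Poisson arrival stream of batches and the \emph{same} exponential service clocks at each of the $n=m$ servers, and by resolving ties in scheduling identically. Let $N(s)$ denote the number of file-request batches present in the MDS queue at time $s$, and define $\tau_t = \inf\{s\geq 0 : N(s) > t\}$. On the event $\{s < \tau_t\}$, the two policies see identical state and therefore make identical assignments, so the coupled processes agree pathwise on $[0,\tau_t)$. For any fixed finite $T$, the number of batch arrivals in $[0,T]$ is almost surely finite, hence $N(s)\leq\#\{\text{arrivals in }[0,T]\}<\infty$ for all $s\leq T$. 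Consequently, $\tau_t \to \infty$ almost surely as $t\to\infty$, and the coupled paths coincide on $[0,T]$ for all sufficiently large $t$.

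Third, I would combine steps one and two to conclude: since the finite-dimensional distributions of $\{M^k/M/n(t)\}$ converge pathwise (a.s.) to those of the MDS queue on every finite window, the limiting process is precisely the MDS queue for homogeneous files, homogeneous servers, exponential service times, and $n=m$. Formally, one also checks that the state descriptors $(w_1,\ldots,w_t,m)$ with $t\to\infty$ encode exactly the same information (pending chunks of each active batch plus total work) as the MDS Markov chain described in Section~\ref{sec:chall}, so no additional state is ever needed in the limit.

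The only subtlety worth flagging is that the MDS queue need not be stable, so one cannot argue via stationarity; this is why I phrase the coupling on arbitrary finite intervals $[0,T]$ and appeal only to the almost-sure finiteness of arrivals on such intervals. Given this, the argument is essentially a definition-chase, and I expect no substantive obstacle beyond cleanly setting up the coupling.
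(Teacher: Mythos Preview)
Your proposal is correct and follows essentially the same idea as the paper: the $M^k/M/n(t)$ policy coincides with the MDS policy whenever at most $t$ batches are present, so letting $t\to\infty$ removes any discrepancy. The paper states this only as a one-line observation (mirroring the analogous remark for MDS-Reservation$(t)$) without a formal proof, whereas you supply a rigorous sample-path coupling and the finite-horizon argument; this extra care is sound but not required by the paper's treatment.
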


Again, these results allow us to obtain the steady-state distribution of the QBD process, which enables latency analysis under $M^k/M/n(t)$ scheduling policies.

\section{Extension to Redundant Requests}
\label{sec:MDS:redundant}

In erasure coded storage systems, access latency can be further reduced by sending redundant requests to storage servers. Consider a scheme under $(n,k)$ MDS codes, which sends (redundantly) each file request to $v> k$ servers. Clearly, upon completion of any $k$-out-of-$v$ chunk requests, the file request is considered to be served, and the remaining $v-k$ active chunk requests could be canceled and removed from the system.  It is easy to see that redundant requests allow the reduction of individual requests at the expense of an increase in overall queuing delay due to the use of additional resources on $v-k$ straggler requests. We note that when $k=1$, the redundant-request policy reduces to a replication-based scheme.

Formally, an MDS queue with redundant requests is associated to five parameters $(n, k)$, $[\lambda,\mu]$, and the redundant level $v\ge k$, satisfying the following modified assumptions: i) File requests arrive in batches of $v$ chunk requests each; ii) Each of the $v$ chunk requests in a batch can be served by an arbitrary set of $v$ distinct servers; iii) Each batch of $v$ chunk requests is served when any $k$ of the $v$ requests are served.

While empirical results in \citep{MDS_9,MDS_18,MDS_24} demonstrated that the use of redundant requests can lead to smaller latency under various settings, a theoretical analysis of the latency - and thus a quantification of the benefits - is still an open problem. Nevertheless, structural results have been obtained in \citep{MDS_9} using MDS-queue models, e.g., to show that request flooding can indeed reduce latency in certain special cases.

\begin{theorem}
Consider a homogeneous MDS$(n,k)$ queue with Poisson arrivals, exponential service time, and identical service rates. If the system is stable in the absence of redundant requests, a system with the maximum number $v=n$ of redundant requests achieves a strictly smaller average latency than any other redundant request policies, including no redundancy $v=k$ and time-varying redundancy. 
\end{theorem}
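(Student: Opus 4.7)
The plan is to prove the claim by a sample-path coupling that exploits the memoryless property of exponential service times. I couple the $v=n$ system with any competing redundancy policy $\pi$ (static with $v\in\{k,\ldots,n-1\}$, or time-varying) on a common probability space, sharing the same Poisson arrival sequence and a common pool of exponential completion clocks. The target is to establish, as a sample-path inequality, that the $i$-th departure time $D_i^{(n)}$ under $v=n$ is no larger than $D_i^{(\pi)}$, from which strict inequality in expectation follows on positive-probability events.

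I would begin by arguing on total service capacity. Under $v=n$, whenever the system is non-empty every file request is dispatched to every server, so all $n$ servers remain continuously busy (each server that finishes its current chunk immediately begins the next request's chunk in its queue), and the aggregate rate of chunk completions is $n\mu$. Under $\pi$, with at most $v\le n$ chunks dispatched per request, the number of busy servers is at most $n$ and can be strictly less. Next I would introduce the notion of a \emph{useful} completion --- one that advances the oldest uncompleted request in the system. Under $v=n$ the head-of-line request occupies all $n$ servers, so every one of its first $k$ completions is useful. Under $\pi$ only $v_\pi\le n$ server slots feed the head-of-line request, with the remainder consumed by younger requests. Memorylessness guarantees that any ``wasted'' work on a canceled chunk under $v=n$ can be replaced in distribution by a fresh $\mathrm{Exp}(\mu)$ clock, so cancellation of the $n-k$ surviving replicas after completion incurs no cost for subsequent requests.

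Combining these observations, I would run an induction on event epochs in the common pool (arrivals and exponential completion ticks), showing that the ordered departure sequence of $v=n$ is pointwise dominated by that of $\pi$. Strict inequality in expectation is then extracted by isolating an event of positive probability --- for instance, an interval containing exactly one pending request during which $v=n$ drives all $n$ servers while $\pi$ drives only $v_\pi<n$. The $k$-th order statistic of $n$ i.i.d.\ $\mathrm{Exp}(\mu)$ variables is strictly smaller in expectation than the $k$-th order statistic of fewer than $n$ such variables, so the two sample paths diverge on that event and the strict inequality is inherited by the average latency.

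The principal difficulty is handling policies $\pi$ that appear to benefit from parallelism --- for example, $v=k$ serving several requests concurrently on disjoint server subsets. The resolution is that parallelism under $\pi$ necessarily slows the head-of-line request, while $v=n$ pipelines subsequent requests through servers that finish their head-of-line chunk first; the memoryless property is precisely what permits such pipelining without any wasted effort. Making this pipelining effect precise --- and confirming that no state-dependent, time-varying choice of redundancy can exploit an intermediate $v$ to do better --- is where most of the care will be needed, since the comparison must hold uniformly in queue state and in the past history of the coupling.
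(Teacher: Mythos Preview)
Your high-level strategy—sample-path coupling sharing Poisson arrivals and exponential clocks, then induction on event epochs—is exactly what the paper does. But two concrete issues stand between your outline and a proof.

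First, a factual slip: under $v=n$ the head-of-line request does \emph{not} in general occupy all $n$ servers. Servers that have already completed its chunk have moved on to younger requests, so your ``every one of its first $k$ completions is useful'' framing is false as stated. The paper avoids this by never tracking which request a given completion serves; it compares only the total count $b_i(z)$ of batches remaining in system $S_i$ at event time $z$, which is equivalent to your ordered-departure comparison but sidesteps the bookkeeping you are attempting.

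Second, and more seriously, your induction hypothesis is too weak to survive arrival events. Suppose you know $b_1(z)\ge b_2(z)$ and a new batch arrives at $z+1$. To argue $b_1(z')\ge b_2(z')$ for later $z'$ you must reason about how the new batch interleaves with future completions in each system, and the current-state comparison alone does not control this. The paper's fix is to carry a \emph{two-part} hypothesis: (i) $b_1(z)\ge b_2(z)$, and (ii) if there are no further arrivals after $z$, then $b_1(z')\ge b_2(z')$ for all $z'>z$. Part (ii) is what lets the arrival step go through: one introduces the auxiliary counts $a_i(z')$ (batches remaining had the new batch not arrived) and $c_i(z')$ (batches remaining with certain completion events deleted), and then invokes the one property that is special to $v=n$—namely, that with all chunks at all servers and exponential service, the number of batch departures in any interval is determined purely by the $n$ independent server timers, independent of which particular events are removed. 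Your appeal to memorylessness (``canceled work can be replaced by a fresh clock'') is pointing at this property, but you have not isolated it as the statement that makes $b_2(z')=c_2(z')$ and thereby closes the induction. Without that, the ``principal difficulty'' you flag—parallelism under $\pi$—remains unresolved.

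Your strict-inequality argument (a single pending request, $n$ versus fewer than $n$ servers racing) is fine and matches the paper's.
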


\begin{proof}
Consider two systems, system $S_1$ with redundant level $v<n$ and system $S_2$ with redundant level $n$. We need to prove that under the same sequence of events (i.e., the arrivals and server completions), the number of batches remaining in system $S_1$ is at least as much as that in $S_2$ at any given time. To this end, we use the notion ``time $z$" to denote the time immediately following the $z$th arrival/departure event. Assume both systems, $S_1$ and $S_2$, are empty at time $Z=0$. Let $b_i(z)$ denote the number of batches remaining in system $S_i$ at time $z$. We use induction to complete this proof. 

At any time $z$, we consider induction hypothesis: i) $b_1(z)\ge b_2(z)$, and ii) if there are no further arrivals from time $z$ onwards, then at any time $z^{'}> z$, $b_1(z^{'})\ge b_2(z^{'})$. It is easy to see that both conditions hold at $z=0$. Next, we show that they also hold for $z+1$.

First suppose that the event at $z+q$ is the completion of a chunk request at one of the $n$ servers. Then the hypothesis to time $z$ implies the satisfaction of all the hypotheses at time $z+1$, due to the second hypothesis, as well as the fact of no further arrivals from $z$ to $z+1$.

Now suppose the event at $z+1$ is the arrival of a new batch. Let $a_1(z^{'})$ and $a_2(z^{'})$ be the number of batches remaining in the two systems at time $z^{'}$ if the new batch had not arrived. From the second hypothesis, we have $a_1(z^{'})\ge a_2(z^{'})$. Since the MDS scheduling policy processes all batches sequentially, under any sequence of departures, we should have $b_1(z^{'})=a_1(z^{'})$ if the the new batch has been served in $S_i$, and $b_1(z^{'})=a_1(z^{'})+1$ otherwise. When $b_1(z^{'})=a_1(z^{'})+1$, it is easy to see that $b_1(z^{'})=a_1(z^{'})+1\ge a_2(z^{'})+1\ge b_2(z^{'})$. Thus, we only need to consider the case $b_1(z^{'})=a_1(z^{'})$, which implies that the new batch has been served in $S_1$ at or before time $z^{'}$.

Let $z_1, \ldots, z_k$ be the events when the $k$ chunk requests of the new batch have been served in $S_1$. At these times, the corresponding servers must have been idle in $S_1$ if the new batch had not arrived, implying $a_1(z^{'}) = c_1(z^{'})$, where $c_1(z^{'})$ is the number of batches remaining at time $z^{'}$  excluding the events at $z_1, \ldots, z_k$. From the second hypothesis, we also have $c_1(z^{'})\ge c_2(z^{'})$. Then, it is sufficient to prove $b_2(z^{'})=c_2(z^{'})$ next. 

Note that if at time $z$, all $b_2(z)$ batches present in the system had all its $k$ chunk requests remaining to be served, and there were no further arrivals, then the
total number of batches served between times $z$ and $z^{'}>z$ can be counted by assigning $n$ independent exponential timers to the servers. Since
the events $z_1, \ldots, z_k$ must correspond to the completion of exponential timers at $k$ distinct servers, it must be that $b_2(z^{'})=c_2(z^{'})$. Putting the pieces together, it means $b_1(z^{'})\ge b_2(z^{'})$, which completes the induction.

Finally, when when $S_1$ employs redundant level $v<n$, we need to show that for a fraction of time that is strictly bounded away from zero, the number of batches remaining in $S_2$ is strictly smaller than that in $S_1$. We start with the state where both systems $S_1$ and $S_2$ are empty. Consider the arrival of a single batch followed by a sequence of completion of the exponential timers at different servers. It is easy to see that the systems can be taken to states $b_1(z^{'})=1$ and $b_2(z^{'})=0$ respectively, with a probability strictly bounded away from zero, when $v<n$ (since the batch in $S_2$ is served when any $k$ out of $n$ timer expire, rather than $k$ out of $v$ in $S_1$). Thus, the fraction of time in which $b_1(z^{'})-b_2(z^{'})=1$ is strictly bounded away from zero. This completes the proof.
\end{proof}

\section{Simulations}
\label{sec:MDS:sim}

\begin{figure}[!thbp]
\begin{center}
\scalebox{1}{\includegraphics[draft=false,width=0.59\textwidth]{./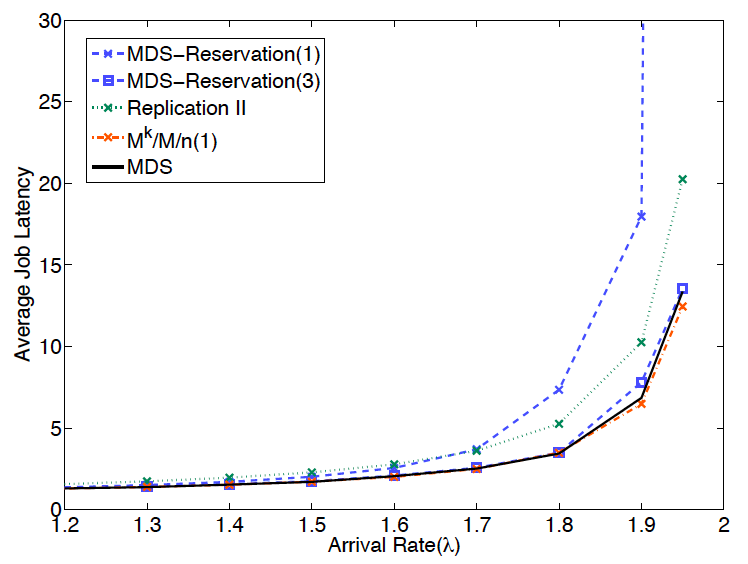}}
\vspace{-3mm}
\caption{A comparison of average file access latency and its upper/lower bounds through MDS-Reservation($t$) and $M^k/M/n(t)$ scheduling policies.}
\label{fig:MDS_1}
\end{center}
\vspace{-.35in}
\end{figure}

The proposed latency bounds - using MDS-Reservation($t$) and $M^k/M/n(t)$ scheduling policies respectively - have been compared in \citep{MDS_queue} through numerical examples. For an MDS system with $(n,k)=(10,5)$ and $\mu=1$, Figure~\ref{fig:MDS_1} plots the average file access latency for various scheduling policies. Here, average latency bounds under MDS-Reservation($t$) and $M^k/M/n(t)$ scheduling policies are computed using Little's Law to the stationary distribution. A Monte-Carlo simulation is employed to numerically find the exact latency of MDS queues. The results are also compared to a simple policy, ``Replication-II", in which the $n$ servers are partitioned into a $k$ sets of $n/k$ servers each, and each of the $k$ chunk requests is served by a separate set of servers. We note that tail latency performance cannot be analyzed through MDS-Reservation($t$) and $M^k/M/n(t)$ scheduling policies.

\begin{figure}[!thbp]
\begin{center}
\scalebox{1}{\includegraphics[draft=false,width=0.59\textwidth]{./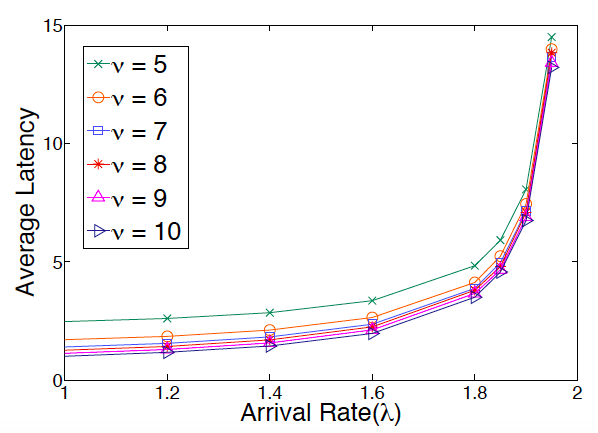}}
\vspace{-3mm}
\caption{Simulation results showing the reduction of average latency with an increase in the redundant level $v$ for an MDS(10,5) queue.}
\label{fig:MDS_2}
\end{center}
\vspace{-.35in}
\end{figure}

For a homogeneous MDS$(n,k)$ queue with redundant requests, Figure~\ref{fig:MDS_2} shows the simulation of file access latency for varying redundant levels $v$. It corroborates the analysis that when the service times are $i.i.d.$ exponential, the
average latency is minimized by $v=n$ requests. Further, it also seems that average latency strictly reduces with an increase in redundant level $v$. But it is unclear if such property carry over to general service time distributions.

\section{Notes and Open Problems}
\label{sec:MDS:note}

The study of latency using MDS queues was initiated by \citep{MG1:12}, which consider a special case of ``block-one-scheduling" policy to obtain a an upper bound on service latency. For arbitrary service time distributions, an analysis of the blocking probability was presented in \citep{MDS_10} in the absence of a shared request buffer. Later, these results were extended in \citep{MDS_queue} to general MDS-Reservation(t) queues and a tighter upper bound on request latency was provided. %
There are a number of open problems that can be considered in future work.
\begin{enumerate}
	\item {\bf General service time}: The proposed analysis on MDS queues assumes an exponential service time distribution. More accurate modeling of service time distribution based on the underlying properties of the storage devices and corresponding latency analysis are open problems.
	\item {\bf Heterogeneous queues}: The analysis in this chapter is limited to servers with equal service time and files of equal chunk sizes, which are not practical. Further, when files are stored on different sets of $n$ servers, it is unclear if the analysis in this chapter using MDS queues can be extended.
	\item {\bf Redundant requests}: The quantification of service-time distributions under redundant requests is an open problem. Even in the case of exponentially distributed service times, the precise amount of latency improvement due to allowing redundant requests is unknown.
\end{enumerate}

\chapter{Fork-Join Scheduling Approach}

In this Chapter, we introduce the model of Fork-Join Scheduling in Section~\ref{sec:fj_sch}, which was first proposed in \citep{Joshi:13}. We will first consider homogenous files and exponential service times to derive upper and lower bounds to the latency in Section \ref{sec:lat_fj}. Further, approximate characterization of latency expressions will also be provided. Section \ref{sec:lat_fj_st} extends the upper and lower bounds on latency to general service time distributions. Section \ref{sec:lat_fj_hs} extends the results in Section \ref{sec:lat_fj} for heterogenous files, where the parameter $k$ can be different for different files, while each file is placed on all the $n$ servers.  Sections~\ref{sec:lat_fj_sims} and \ref{sec:fj_notes} contain simulation results and notes on future directions, respectively.

\section{Fork-Join Scheduling}\label{sec:fj_sch}

We consider a data center consisting of $n$ homogeneous servers, denoted by $\mathcal{M}=\{1,2,\ldots,n\}$, called storage nodes. We consider one file request, where the file arrival is Poisson distributed with rate $\lambda$. We partition the file into $k$ fixed-size chunks, and then encode it using an $(n,k)$ MDS erasure code to generate $n$ distinct chunks of the same size. The encoded chunks are assigned to and stored on $n$ distinct storage nodes. Therefore, each chunk is placed on a different node to provide high reliability in the event of node or network failures. We assume that service time distribution for each server is exponentially distributed with rate $\mu$. Even though we assume one file, multiple homogeneous files (with the same size) can be easily incorporated.

We first introduce the Fork-Join system in the following definition. 

\begin{definition}
	An $(n, k)$ fork-join system consists of $n$ nodes. Every
	arriving job is divided into $n$ tasks which enter first-come first-serve queues at each of the $n$
	nodes. The job departs the system when any $k$ out of $n$ tasks are served by their respective	nodes. The remaining $n - k$ tasks abandon their queues and exit the system before completion
	of service.
	\end{definition}

The $(n, n)$ fork-join system, known in literature as fork-join queue, has been extensively
studied in, e.g., \citep{kim1989analysis,nelson1988approximate,varki2008m}. The $(n,k)$ generalization was first studied in \citep{Joshi:13}, and has been followed by multiple works in the distributed storage literature \citep{gardner2015reducing,fidler2016non,kumar2014latency,parag2017latency,badita2019latency}. 

It can be shown that for the  $(n, k)$
fork-join system to be stable, the rate of Poisson arrivals $k \lambda$ should be less than $n\mu$. Thus, $\lambda<n\mu/k$ is the stable region. We will now use the Fork-Join system as a scheduling strategy, where the $k$ tasks are encoded to $n$ tasks, and the scheduling starts the job on all $n$ servers. The job departs the system when any $k$ out of $n$ tasks are served by their respective servers. The remaining $n - k$ tasks abandon their queues and exit the system before completion of service. In the following section, we will provide bounds on the latency with Fork-Join scheduling.

An example of fork-join queue for $(n,k)=(5,2)$ is illustrated in Figure~\ref{fig:fj_sys}. Each batch of chunk requests are mapped to all 5 servers. They depart the system together as soon as any 2 chunk requests are served and the remaining 3 requests abandon processing. No other batches in the queue can move forward before all servers become available. This leads to underutilized of server capacity and thus provides an upper bound on the optimal scheduling policy.

\begin{figure}[!thbp]
\begin{center}
\scalebox{1}{\includegraphics[draft=false,width=0.59\textwidth]{./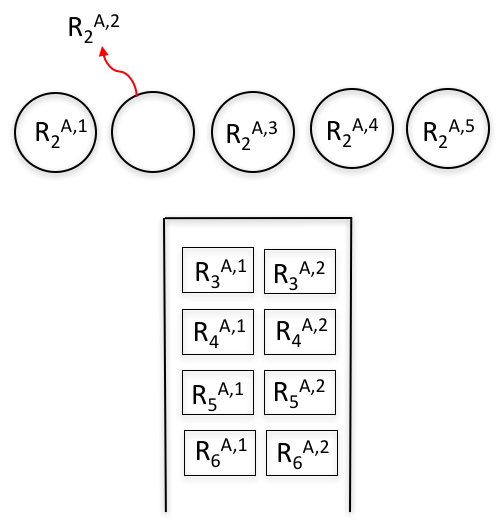}}
\vspace{-3mm}
\caption{An illustration of MDS-reservation$(1)$ (Left) and MDS-reservation$(2)$ (Right) scheduling.}
\label{fig:fj_sys}
\end{center}
\vspace{-.35in}
\end{figure}

\section{Characterization of  Latency}\label{sec:lat_fj}
We now provide bounds on the expected file latency, which is the mean response time $T_{(n,k)}$ of the $(n,k)$ fork-join system. It is the expected time that a job spends in the system, from its arrival until $k$ out of $n$ of its tasks are served by their respective nodes.

Since the $n$ tasks are served by independent M/M/1 queues, intuition suggests that $T_{(n,k)}$ is the $k^{th}$ order statistic of $n$ exponential service times. However this is not true, which makes the analysis of $T_{(n,k)}$ challenging. The reason why the order statistics approach does not work is the cancellation of jobs in the queue, where their abandonment has to be taken into account. 

Let $H_{x,y}^z$ be a generalized harmonic number of order $z$ defined by
\begin{equation}
H_{x,y}^z = \sum\limits_{j=x+1}^y \frac{1}{j^z},
\label{eq:harmMean}
\end{equation}
for some positive integers $x, y$ and $z$. The following result provides an upper bound on the expected download time.

\begin{theorem}[\citep{Joshi:13}]
\label{thm:upper_bnd}
The expected file latency, $T_{(n,k)}$,  satisfies
\begin{equation}
\label{eqn:upper_bnd}
T_{(n,k)} \leq   \, \frac{H_{n-k,n}^1}{\mu}\,  +  \frac{\lambda\bigl [H_{n-k,n}^2  + (H_{n-k,n}^1)^2\bigr]}{2 \mu^2 \bigl[ 1- \rho H_{n-k,n}^1\bigr]},
\end{equation}
where $\lambda$ is the request arrival rate, $\mu$ is the service rate at each queue, $\rho = \lambda/\mu$ is the load factor. We note that the bound is valid only when $\rho H_{n-k,n}^1 < 1$.
\end{theorem}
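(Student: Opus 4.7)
The plan is to upper-bound $T_{(n,k)}$ by the mean response time of a \emph{split-merge} companion system in which the full array of $n$ servers is committed to a single batch at a time: upon admission, all $n$ servers concurrently serve the batch's $n$ tasks, and as soon as $k$ of them complete the remaining $n-k$ are cancelled and all $n$ servers simultaneously become idle to await the next batch. Because batches are processed strictly in series, the split-merge companion is exactly an $M/G/1$ queue whose service time $S$ is the $k$th order statistic of $n$ i.i.d.\ $\mathrm{Exp}(\mu)$ random variables. Three ingredients remain: (i) a sample-path argument that the split-merge's response time dominates $T_{(n,k)}$; (ii) computation of the first two moments of $S$; (iii) invocation of the Pollaczek--Khinchine mean-value formula.

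For step (i), which is the main obstacle, I would couple the two systems by feeding them the same Poisson arrival stream and the same realization of per-server exponential clocks, and induct on the batch index to show that the departure time of the $j$th batch in fork-join is no later than in split-merge. The intuition is that split-merge forces every server that has already completed its current batch's task to sit idle until the batch fully departs (i.e., until $k$ completions across the array), whereas fork-join lets each such server immediately take up the next waiting task from its own FIFO queue; hence each server's cumulative service work in fork-join dominates its work in split-merge, pulling batch departures no later in time. Making this bookkeeping rigorous requires carefully tracking, for each server and each batch, when the server is occupied with which batch and when a cancellation signal arrives from another server, since the two systems fall out of synchrony after the first busy period.

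For step (ii), I would invoke the R\'enyi/spacing representation of exponential order statistics: if $E_1,\dots,E_n$ are i.i.d.\ $\mathrm{Exp}(\mu)$, then their $k$th order statistic equals in distribution $\sum_{i=1}^{k} E_i/(n-i+1)$, where the summands are independent exponentials with rates $(n-i+1)\mu$. Summing means and variances immediately gives $E[S]=H_{n-k,n}^1/\mu$ and $\mathrm{Var}(S)=H_{n-k,n}^2/\mu^2$, and hence $E[S^2]=\bigl[(H_{n-k,n}^1)^2+H_{n-k,n}^2\bigr]/\mu^2$. Step (iii) then plugs these moments into the standard $M/G/1$ expression $E[T_{M/G/1}]=E[S]+\lambda E[S^2]/\bigl(2(1-\lambda E[S])\bigr)$, reproducing exactly the right-hand side of \eqref{eqn:upper_bnd}; the $M/G/1$ stability requirement $\lambda E[S]<1$ coincides with the hypothesis $\rho H_{n-k,n}^1<1$ under which the bound is declared valid. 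Note that the true stability region of the fork-join system is the strictly wider set $\rho<n/k$, so $\rho H_{n-k,n}^1<1$ should be read only as a sufficient condition for the \emph{bound} to be finite, not for the system to be stable.
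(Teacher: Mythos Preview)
Your proposal is correct and follows essentially the same route as the paper: bound the fork-join system by the split-merge companion (an $M/G/1$ queue whose service time is the $k$th order statistic of $n$ i.i.d.\ exponentials), compute the first two moments of that order statistic, and apply Pollaczek--Khinchine. The paper simply asserts the dominance of split-merge over fork-join from the blocking description rather than spelling out a coupling, and it states the order-statistic moments directly rather than deriving them via the R\'enyi spacing representation, but these are elaborations of the same argument rather than a different one.
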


\begin{proof}
To find this upper bound, we use a model called the split-merge system, which is similar but easier to analyze than the fork-join system. In the $(n,k)$ fork-join queueing model, after a node serves a task, it can start serving the next task in its queue. On the contrary, in the split-merge model, the $n$ nodes are blocked until $k$ of them finish service. Thus, the job departs all the queues at the same time. Due to this blocking of nodes, the mean response time of the $(n,k)$ split-merge model is an upper bound on (and a pessimistic estimate of) $T_{(n,k)}$ for the $(n,k)$ fork-join system. 

The $(n,k)$ split-merge system is equivalent to an M/G/1 queue where arrivals are Poisson with rate $\lambda$ and service time is a random variable $S$ distributed according to the $k^{th}$ order statistic of the exponential distribution.

The mean and variance of $S$ are given as
\begin{equation}
{\mathbb{E}}[S] = \frac{H_{n-k.n}^1}{\mu}  ~~ \text{and} ~~  \text{Var}[S] = \frac{H_{n-k,n}^2}{\mu^2}.
\label{eqn:mvk_exp}
\end{equation}
The Pollaczek-Khinchin formula \citep{zwart2000sojourn1}  gives the mean response time $T$ of an M/G/1 queue in terms of the mean and variance of $S$ as,
\begin{equation}
T = {\mathbb{E}}[S] + \frac{ \lambda ( {\mathbb{E}}[S]^2  + \text{Var}[S] } {2( 1-\lambda {\mathbb{E}}[S])}. \label{eqn:pollac_khin}
\end{equation}
Substituting the values of ${\mathbb{E}}[S]$ and $\text{Var}[S]$ given by (\ref{eqn:mvk_exp}), we get the upper bound (\ref{eqn:upper_bnd}). Note that the Pollaczek-Khinchin formula is valid only when $ \frac{1}{\lambda} > {\mathbb{E}}[S]$, the stability condition of the M/G/1 queue. Since ${\mathbb{E}}[S]$ increases with $k$, there exists a $k_0$ such that the M/G/1 queue is unstable for all $k \geq k_0$. The inequality $\frac{1}{\lambda} > {\mathbb{E}}[S]$ can be simplified to $\rho H_{n-k,n}^1 <1$ which is the condition for validity of the upper bound given in Theorem~\ref{thm:upper_bnd}.
\end{proof}

We also note that the stability condition for the upper bound is $\rho H_{n-k,n}^1 < 1$ which is not the same as the stability condition of the Fork-Join queue $\lambda<n\mu/k$. This shows that the upper bound technique is loose, and does not result in an efficient bound in the region close to $\lambda=n\mu/k$. We now find the lower bound on the latency in the following theorem. 

\begin{theorem}[\citep{Joshi:13}]
	\label{thm:lower_bnd}
	The expected file latency, $T_{(n,k)}$,  satisfies
	\begin{equation}
	T_{(n,k)} \!\! \geq \! \sum_{j=0}^{k-1} \frac{1}{(n-j)\mu - \lambda}, \label{eqn:lower_bnd}
	\end{equation}
	where $\lambda$ is the request arrival rate and $\mu$ is the service rate at each queue.
\end{theorem}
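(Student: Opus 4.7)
My plan is to establish the lower bound via a sample-path comparison between the $(n,k)$ fork-join system and an auxiliary ``pooled'' system whose mean response time can be computed in closed form and coincides with the right-hand side of \eqref{eqn:lower_bnd}. I would decompose the tagged job's total sojourn into $k$ consecutive stages indexed by $j = 0, 1, \ldots, k-1$, where stage $j$ denotes the interval between the $j$-th and $(j+1)$-th completions of the tagged job's own tasks. In the original fork-join system during stage $j$, there are $n-j$ of the tagged job's tasks still pending, each sitting in its own M/M/1 queue of arrival rate $\lambda$ and service rate $\mu$. In the auxiliary system I would relax this by pooling the $n-j$ remaining servers into a single aggregate server of combined rate $(n-j)\mu$, while still absorbing Poisson arrivals at rate $\lambda$; such pooling can only accelerate the tagged job's progress because it eliminates idle time on servers that hold no tagged task.

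Under this relaxation, the tagged job's stage-$j$ sojourn is exactly the response time of an M/M/1 queue with arrival rate $\lambda$ and service rate $(n-j)\mu$, whose mean is $1/\bigl((n-j)\mu - \lambda\bigr)$ by the standard M/M/1 formula. Summing over $j = 0, 1, \ldots, k-1$ therefore yields the expected response time of the auxiliary system, namely $\sum_{j=0}^{k-1} 1/\bigl((n-j)\mu - \lambda\bigr)$. The comparison step is a coupling driven by a common Poisson arrival process and a common set of service-clock ticks, showing that any completion of a tagged-job task in the pooled system occurs no later than the corresponding completion in the fork-join system, which transfers the equality above into a lower bound on $T_{(n,k)}$.

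The main obstacle I anticipate is making the coupling rigorous in the presence of background jobs and straggler cancellations. The $n$ fork-join queues are not independent, since they share the external arrival process, and the cancellation of $n-k$ straggler tasks at each job departure induces correlations that must be shown to be stochastically unfavorable relative to the pooled system. Careful bookkeeping will be needed to match which service-clock ticks in the coupling correspond to tagged-job progress versus background work, and to verify that the aggregated service capacity $(n-j)\mu$ is genuinely realizable in the auxiliary dynamics on every sample path; this is where I expect the proof to demand the most care.
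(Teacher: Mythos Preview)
Your approach is essentially the paper's: decompose the tagged job's sojourn into $k$ stages and lower-bound each stage-$j$ duration by the response time $1/((n-j)\mu-\lambda)$ of an M/M/1 queue with arrival rate $\lambda$ and service rate $(n-j)\mu$. One correction to your stated intuition: in the fork-join system every one of the $n-j$ relevant servers in stage $j$ \emph{does} hold a pending tagged task, so there is no ``idle time on servers that hold no tagged task'' to eliminate. The actual reason the auxiliary system is faster---and the paper's justification is exactly this---is that in the real system the tagged task at some server may sit \emph{behind} an earlier job's task in that server's FIFO queue, so that server's $\mu$ of capacity is temporarily unavailable to the tagged job; hence the stage-$j$ completion rate is at most $(n-j)\mu$. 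The paper uses this rate argument directly rather than constructing an explicit sample-path coupling, which sidesteps the bookkeeping you flag as the main obstacle.
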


\begin{proof}
	The lower bound in (\ref{eqn:lower_bnd}) is a generalization of the bound for the $(n,n)$ fork-join system derived in \citep{varki2008m}. The bound for the $(n,n)$ system is derived by considering that a job goes through $n$ stages of processing. A job is said to be in the $j^{th}$ stage if $j$ out of $n$ tasks have been served by their respective nodes for $0 \leq j \leq n-1$. The job waits for the remaining $n-j$ tasks to be served, after which it departs the system. For the $(n,k)$ fork-join system, since we only need $k$ tasks to finish service, each job now goes through $k$ stages of processing. In the $j^{th}$ stage, where $0 \leq j \leq k-1$, $j$ tasks have been served and the job will depart when $k-j$ more tasks to finish service.
	
	We now show that the service rate of a job in the $j^{th}$ stage of processing is \emph{at most} $(n-j) \mu$. Consider two jobs $B_1$ and $B_2$ in the $i^{th}$ and $j^{th}$ stages of processing respectively. Let $i>j$, that is, $B_1$ has completed more tasks than $B_2$. Job $B_2$ moves to the $(j+1)^{th}$ stage when one of its $n-j$ remaining tasks complete. If all these tasks are at the heads of their respective queues, the service rate for job $B_2$ is exactly $(n-j) \mu$. However since $i > j$, $B_1$'s task could be ahead of $B_2$'s in one of the $n-j$ pending queues, due to which that task of $B_2$ cannot be immediately served. Hence, we have shown that the service rate of in the $j^{th}$ stage of processing is at most $(n-j)\mu$.

	Thus, the time for a job to move from the $j^{th}$ to $(j+1)^{th}$ stage is lower bounded by $1/( (n-j)\mu - \lambda)$, the mean response time of an M/M/1 queue with arrival rate $\lambda$ and service rate $(n-j) \mu$. The total mean response time is the sum of the mean response times of each of the $k$ stages of processing and is bounded below as in the statement of the theorem.
\end{proof}
We note that  the lower bound does not achieve the optimal stability region, giving the threshold as $\lambda<(n-k+1)\mu$. 

An approximate characterization of latency has also been studied \citep{badita2019latency}. The approach follows the structure of the lower bound mentioned above, which goes in stages. A job is said to be in the $j^{th}$ stage if $j$ out of $n$ tasks have been served by their respective nodes for $0 \leq j \leq k-1$. Since the job goes from stage 0 to stage 1, all the way to stage $k-1$ and then get served when $k$ chunks have been serviced, the procedure is akin to a tandem queue where a service from stage $j$ leads to stage $j+1$. Thus, we consider $k-1$ tandem queues for the approximation, which are assumed to be uncoupled, labeled as queue $j\in \{0, \cdots, k-1\}$. The arrival rate at the tandem queue $0$ is the external arrivals which is Poisson at rate $\lambda$. Since it is tandem queue and service time is assumed to be exponential, the arrival rate at each queue will be $\lambda$ \citep{Ross2019}. In the case of the lower bound, the service rate for tandem queue $j$ was taken as $(n-j) \mu$, while this is where a better approximation will be used. We let $\gamma_j$ be the approximate service rate of queue $j$ and $\pi_j(r)$ be the probability that the queue length of tandem queue $j$ is $r$.

The service rate of queue $k-1$ is $\gamma_{k-1} = (n-k+1)\mu$ as in the lower bound. For the other queues,  service rate includes $\mu$ and the additional resources from the later queues, which for the lower bound became $(n-j) \mu$. However, the later queues are not always empty and the resources cannot be used to serve the earlier queues. In the approximation, we let the resources of the later queues help the earlier queues only when they are empty.  Using additional resources of tandem queue $k-1$ to serve requests at queue $k-2$ when queue $k-1$ is empty gives $\gamma_{k-2} = \mu + \gamma_{k-1} \pi_{k-1}(0)$. Proceeding back with the same method, we have the recursion on $\gamma_i$ as:
\begin{equation}
\gamma_j = \begin{cases}
(n-k+1)\mu, & i=k-1\\
\mu + \gamma_{j+1}\pi_{j+1}(0) & j\in \{0, 1, \cdots, k-2\}
\end{cases}. 
\end{equation}

Since each tandem queue is M/M/1, $\pi_j(0) = 1-\frac{\lambda}{\gamma_j}$ \citep{Ross2019}. Thus, we can compute $\gamma_j$ as 
\begin{equation}
\gamma_j = (n-j)\mu - (k-j-1)\lambda, j\in \{0, 1, \cdots, k-1\}. 
\end{equation} 
The overall approximate latency is given as

	\begin{equation}
T_{(n,k)} \!\!\approx \! \sum_{j=0}^{k-1} \frac{1}{\gamma_j - \lambda} = \sum_{j=0}^{k-1} \frac{1}{(n-j)\mu - (k-j)\lambda}.
\end{equation}

This is summarized in the following lemma. 

\begin{lemma}[\citep{badita2019latency}]
	The expected file latency, $T_{(n,k)}$ can be approximated as 
		\begin{equation}
	T_{(n,k)} \!\!\approx \! \sum_{j=0}^{k-1} \frac{1}{(n-j)\mu - (k-j)\lambda} \label{eqn:approx_bnd},
	\end{equation}
	where $\lambda$ is the request arrival rate and $\mu$ is the service rate at each queue.
\end{lemma}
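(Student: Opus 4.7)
The plan is to mirror the stage decomposition used in the proof of the lower bound (Theorem~\ref{thm:lower_bnd}) but to refine the per-stage service rate. A job passes sequentially through $k$ stages, with stage $j$ denoting the period during which exactly $j$ of the $n$ encoded tasks have already completed, for $j \in \{0,1,\ldots,k-1\}$; the job leaves the system upon exiting stage $k-1$. The first move is to model this progression as a network of $k$ decoupled M/M/1 queues. External arrivals are Poisson at rate $\lambda$; assuming exponential service, Burke's theorem gives that the departure process from each stage is again Poisson at rate $\lambda$, so the arrival rate to every stage is $\lambda$, and each stage can be analyzed in isolation once its effective service rate $\gamma_j$ is specified.

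Next, I would derive $\gamma_j$ by a backward recursion that improves on the crude $(n-j)\mu$ bound from the lower bound proof. In the final stage $j=k-1$ only one more task is needed, and $n-k+1$ servers can contribute, giving $\gamma_{k-1}=(n-k+1)\mu$. For $j<k-1$, stage $j$ is served at rate $\mu$ by its own server, plus borrowed capacity from stage $j+1$ that is only available when that later stage is empty. Under the independence approximation, this yields
\begin{equation}
\gamma_j \;=\; \mu + \gamma_{j+1}\,\pi_{j+1}(0), \qquad j \in \{0,\ldots,k-2\},
\end{equation}
where $\pi_{j+1}(0) = 1 - \lambda/\gamma_{j+1}$ is the empty-queue probability for the M/M/1 queue at stage $j+1$. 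Substituting $\gamma_{j+1}\pi_{j+1}(0) = \gamma_{j+1} - \lambda$, the recursion collapses to $\gamma_j = \mu + \gamma_{j+1} - \lambda$, and iterating downward from the base case $\gamma_{k-1}=(n-k+1)\mu$ produces the closed form $\gamma_j = (n-j)\mu - (k-j-1)\lambda$.

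To finish, the mean sojourn time in the $j$-th M/M/1 stage is $1/(\gamma_j-\lambda)$, and summing over the $k$ stages yields
\begin{equation}
T_{(n,k)} \;\approx\; \sum_{j=0}^{k-1} \frac{1}{\gamma_j - \lambda} \;=\; \sum_{j=0}^{k-1} \frac{1}{(n-j)\mu - (k-j)\lambda},
\end{equation}
as claimed, with the approximation being valid whenever every denominator is positive, i.e., within the optimal stability region $\lambda < n\mu/k$.

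The main obstacle is conceptual rather than algebraic: the derivation hinges on three independence-type simplifications, namely (i) treating the $k$ tandem stages as mutually uncoupled, (ii) using the marginal empty-queue probability $1-\lambda/\gamma_{j+1}$ to model the availability of borrowed service capacity from the next stage, and (iii) invoking Burke's theorem to equate the arrival rate at every stage with $\lambda$. None of these is exact for the true $(n,k)$ fork-join dynamics, where task progress across servers is correlated through shared queues. A fully rigorous bound would require controlling the error introduced by these approximations; the proposal here deliberately does not attempt that, and the statement is presented as an approximation ($\approx$) rather than an inequality precisely for this reason.
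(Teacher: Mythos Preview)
Your proposal is correct and follows essentially the same approach as the paper: the paper likewise decomposes the job's progress into $k$ decoupled tandem M/M/1 stages with common arrival rate $\lambda$, sets $\gamma_{k-1}=(n-k+1)\mu$, uses the same recursion $\gamma_j=\mu+\gamma_{j+1}\pi_{j+1}(0)$ with $\pi_{j+1}(0)=1-\lambda/\gamma_{j+1}$, solves it to $\gamma_j=(n-j)\mu-(k-j-1)\lambda$, and sums the per-stage sojourn times. Your added remarks on Burke's theorem and the stability region $k\lambda<n\mu$ are also consistent with the paper's discussion.
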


We further note that the stability condition for the approximation is $k\lambda<n\mu$, and thus it satisfies the optimal stability condition. 
\section{Extension to General Service Time Distributions}\label{sec:lat_fj_st}
The above results for exponential service distribution can be extended to the case where each homogeneous server has a general service time distribution.  Let $X_1, X_2, \ldots, X_n$ be the i.i.d random variables representing the service times of the $n$ nodes, with expectation ${\mathbb{E}}[X_i] = \frac{1}{\mu}$ and variance $\text{Var}[X_i] = \sigma^2$ for all $i$. 

\begin{theorem}[\citep{Joshi:13}]
	The mean response time $T_{(n,k)}$ of an $(n, k)$ fork-join system with general service time $X$ such that ${\mathbb{E}}[X] = \frac{1}{\mu}$ and $\text{Var}[X] = \sigma^2$ satisfies
	\begin{align}
	T_{(n,k)} \leq \frac{1}{\mu} &+ \sigma \sqrt{\frac{k-1}{n-k+1}} \nonumber \\
	&+ \frac{\lambda \left[ \left ( \frac{1}{\mu} + \sigma \sqrt{\frac{k-1}{n-k+1}} \right)^2 + \sigma^2 C(n,k) \right]}{2 \left[ 1- \lambda \left( \frac{1}{\mu} + \sigma \sqrt{\frac{k-1}{n-k+1}} \right ) \right ]}, \label{eqn:upper_bnd_gen}
	\end{align}
	where $C(n,k)$ is a constant depending on $n$ and $k$, and is defined as \citep{Papadatos}
	\begin{equation}
	C(n,k) = \sup_{0<x<1}\{\frac{I_x(k,n+1-k)(1-I_x(k,n+1-k))}{x(1-x)}\}, 
	\end{equation}
	where $I_x(a,b)$ is the incomplete beta function. The values of constant $C(n,k)$ for certain $n$ and $k$ can be found in the table in \citep{Papadatos}, respectively.
\end{theorem}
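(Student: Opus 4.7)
The plan is to parallel the proof of the exponential upper bound (Theorem~\ref{thm:upper_bnd}), replacing the explicit exponential order-statistic moments by distribution-free bounds. The four steps are: (i) dominate the fork-join mean response time by that of the split-merge system; (ii) identify the split-merge system as an M/G/1 queue whose service time equals the $k$-th order statistic of $n$ i.i.d.\ copies of $X$; (iii) invoke distribution-free upper bounds on the mean and variance of that order statistic; and (iv) substitute into the Pollaczek-Khinchin formula.

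For step (i), the sample-path argument used in the exponential case --- blocking all $n$ servers until $k$ tasks complete can only delay subsequent jobs relative to releasing each server as soon as its current task finishes --- does not depend on the service distribution, so $T_{(n,k)}\le T_{\mathrm{SM}(n,k)}$ extends verbatim. For step (ii), since arrivals remain Poisson at rate $\lambda$ and the $n$ per-server service times are i.i.d.\ copies of $X$, the split-merge system is exactly an M/G/1 queue with service time $S=X_{(k:n)}$, and its mean response time is given by the Pollaczek-Khinchin formula~\eqref{eqn:pollac_khin}.

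For step (iii), I would invoke two classical universal inequalities on order statistics of i.i.d.\ samples. A Hartley-David / Gumbel-type bound gives ${\mathbb{E}}[X_{(k:n)}]\le \frac{1}{\mu} + \sigma\sqrt{(k-1)/(n-k+1)}$ for every law with mean $1/\mu$ and variance $\sigma^2$, while the sharp bound of \citep{Papadatos} yields $\mathrm{Var}[X_{(k:n)}]\le \sigma^2 C(n,k)$ with $C(n,k)$ defined via the incomplete beta function as in the theorem statement. Finally, for step (iv), I note that the Pollaczek-Khinchin expression $T = {\mathbb{E}}[S] + \lambda({\mathbb{E}}[S]^2+\mathrm{Var}[S])/(2(1-\lambda{\mathbb{E}}[S]))$ is monotonically non-decreasing in both ${\mathbb{E}}[S]$ and $\mathrm{Var}[S]$ on the stable region $\lambda{\mathbb{E}}[S]<1$: enlarging ${\mathbb{E}}[S]$ inflates the linear term, inflates the squared term in the numerator, and shrinks the denominator, while enlarging $\mathrm{Var}[S]$ only inflates the numerator. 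Substituting the two order-statistic bounds therefore produces the claimed inequality~\eqref{eqn:upper_bnd_gen}, subject to the stability condition $\lambda\bigl(\tfrac{1}{\mu}+\sigma\sqrt{(k-1)/(n-k+1)}\bigr)<1$ needed to keep the substituted denominator positive.

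The main obstacle is locating and correctly applying the distribution-free upper bound on ${\mathbb{E}}[X_{(k:n)}]$: unlike the Papadatos variance bound, which is a clean named result, the mean bound has appeared under several attributions, and one must briefly verify that it holds for \emph{all} laws with the prescribed first two moments rather than only for some restricted subclass (symmetric, bounded, \emph{etc.}). Once those two order-statistic bounds are in hand, the remainder is the monotonicity check and routine algebra. It is worth noting, as in the exponential case, that the induced stability threshold is strictly more conservative than the true fork-join threshold $\lambda<n\mu/k$, so the bound is expected to become loose near the edge of stability --- a limitation inherited from the split-merge relaxation rather than from the general-distribution extension.
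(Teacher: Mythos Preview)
Your proposal is correct and follows essentially the same approach as the paper: split-merge domination, identification with an M/G/1 queue, distribution-free bounds on ${\mathbb{E}}[S]$ and $\mathrm{Var}[S]$, and monotonicity of the Pollaczek-Khinchin formula to justify the substitution. The only cosmetic difference is that the paper attributes the mean bound to \citep{ArnoldGroeneveld} rather than to Hartley--David/Gumbel, but the inequality invoked is the same.
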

\begin{proof}
	The proof follows from Theorem~\ref{thm:upper_bnd} where the upper bound can be calculated using $(n, k)$ split-merge system and Pollaczek-Khinchin formula (\ref{eqn:pollac_khin}). Unlike the exponential distribution, we do not have an exact expression for $S$, i.e., the $k^{th}$ order statistic of the service times $X_1, \,\, X_2, \,\, \cdots X_n$. Instead, we use the following upper bounds on the expectation and variance of $S$ derived in \citep{ArnoldGroeneveld} and \citep{Papadatos}.
	\begin{align}
	{\mathbb{E}}[S] &\leq \frac{1}{\mu} + \sigma \sqrt{\frac{k-1}{n-k+1}}, \label{eqn:upperbound_exp} \\
	\text{Var}[S] &\leq C(n,k) \sigma^2. \label{eqn:upperbound_var}
	\end{align}
	
	The proof of (\ref{eqn:upperbound_exp}) involves Jensen's inequality and Cauchy-Schwarz inequality. For details please refer to \citep{ArnoldGroeneveld}.    The proof of (\ref{eqn:upperbound_var}) can be found in \citep{Papadatos}.
	
	Note that (\ref{eqn:pollac_khin}) strictly increases as either ${\mathbb{E}}[S]$ or $\text{Var}[S]$ increases. Thus, we can substitute the upper bounds in it to obtain the upper bound on mean response time (\ref{eqn:upper_bnd_gen}).
\end{proof}

We note that the stability condition for the upper bound of latency is $ \lambda \left( \frac{1}{\mu} + \sigma \sqrt{\frac{k-1}{n-k+1}} \right ) <1$. For deterministic service times, $\sigma=0$, and the condition becomes $\lambda<\mu$. However, this is not optimal stability condition for the best scheduling approach. For deterministic service times, Fork-Join scheduling spends an additional time for the $n-k$ tasks which could have been saved and thus leads to non-optimality of the stability region of Fork-Join queues.

Regarding the lower bound, we note that our proof in Theorem~\ref{thm:lower_bnd} cannot be extended to this general service time setting. The proof requires memoryless property of the service time, which does not necessary hold in the general service time case. However, the proof can be extended directly to the shifted exponential service time distribution easily. We let the shifted exponential distribution be ${\rm Sexp}(\beta,\alpha)$, where $\beta$ is the shift and there is an exponential with rate $\alpha$ after the shift. The probability density function of ${\rm Sexp}(\beta,\alpha)$ is given as
\begin{equation}
f_{X}(x) = \begin{cases}
\alpha e^{-\alpha(x-\beta)} & {\text{ for }} x\ge \beta\\
0 & {\text{ for }} x<\beta
\end{cases}\label{eq:Sexp}
\end{equation}

 The lower bound on latency is then given as

\begin{theorem}[\citep{joshi2017efficient}]
	\label{thm:lower_bnd_sexp}
	The expected file latency, $T_{(n,k)}$,  satisfies
	\begin{equation}
	T_{(n,k)} \!\! \geq \! \beta+\frac{1}{n\alpha} +\frac{\lambda\left(\left(\beta+ \frac{1}{n\alpha}\right)^2 + \left( \frac{1}{n\alpha}\right)^2\right)}{2\left( 1-\lambda \left( \beta+ \frac{1}{n\alpha}\right) \right)} +  \sum_{j=1}^{k-1} \frac{\left( \beta+ \frac{1}{n\alpha}\right)}{(n-j) - \lambda\left( \beta+ \frac{1}{n\alpha}\right)}, \label{eqn:lower_bnd_sexp}
	\end{equation}
	where $\lambda$ is the request arrival rate and the the service distribution at each queue is ${\rm Sexp}(\beta,\alpha)$.
\end{theorem}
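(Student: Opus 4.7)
The plan is to adapt the stage-wise tandem-queue argument of Theorem \ref{thm:lower_bnd} to the shifted exponential setting, isolating the non-memoryless shift $\beta$ from the memoryless exponential tail of rate $\alpha$. Write $\tau := \beta + 1/(n\alpha)$, and decompose the focal job's sojourn as $T_{(n,k)} = T_0 + T_1 + \cdots + T_{k-1}$, where $T_0$ is the time from arrival until the first of the job's $n$ tasks completes and $T_j$ (for $j \geq 1$) is the additional time from the $j$-th to the $(j+1)$-th completion. It then suffices to lower bound $\mathbb{E}[T_j]$ for each $j$ and sum.

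For $T_0$, I would model stage $0$ as an M/G/1 queue whose per-batch service time is the minimum of $n$ i.i.d.\ $\mathrm{Sexp}(\beta,\alpha)$ variates. This minimum is itself $\mathrm{Sexp}(\beta, n\alpha)$ with mean $\tau$ and variance $(1/(n\alpha))^2$. Substituting these moments into the Pollaczek--Khinchin formula \eqref{eqn:pollac_khin} reproduces exactly the first two terms of the stated bound; note that the stability restriction $\lambda \tau < 1$ then appears as the denominator condition.

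For $j \in \{1,\ldots,k-1\}$ I would argue that the effective batch-level completion rate during stage $j$ is at most $(n-j)/\tau$. The $(n-j)$ factor counts the tasks of the focal batch still outstanding, while the per-server rate $1/\tau$ should be read as the maximum rate at which a single server can push a batch forward once the shift $\beta$ has been amortized against the $1/(n\alpha)$ exponential component that drives the first completion. Treating the stage as an M/M/1 system with arrival rate $\lambda$ and service rate $(n-j)/\tau$, in direct analogy with the derivation of \eqref{eqn:lower_bnd}, gives $\mathbb{E}[T_j] \geq \tau / \bigl((n-j) - \lambda\tau\bigr)$, and summing over $j$ produces the final summation in the stated bound.

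The main obstacle will be rigorously defending the per-server rate $1/\tau$ in stages $j \geq 1$: the naive memoryless argument would only yield rate $\alpha$, since the remaining $n-j$ tasks have already crossed their shift. I expect to need a sample-path coupling with an auxiliary split-merge-type system in which the shift $\beta$ is re-incurred at every stage boundary, so that each stage must pay a full $\tau$ worth of minimum-completion cost; this couples the stage time to the response time of an M/M/1 queue with the desired rate $(n-j)/\tau$ rather than $(n-j)\alpha$. Once the coupling is in place, the proof reduces to standard Pollaczek--Khinchin and M/M/1 response-time computations applied to the stage-wise decomposition, and additivity of $\mathbb{E}[T_0]+\sum_{j=1}^{k-1}\mathbb{E}[T_j]$ delivers \eqref{eqn:lower_bnd_sexp}.
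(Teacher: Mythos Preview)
Your stage-0 treatment coincides with the paper's: it too invokes the Pollaczek--Khinchin formula with the $\mathrm{Sexp}(\beta,n\alpha)$ first-completion law, yielding exactly the first two terms of the bound.

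Where you diverge is in stages $j\ge 1$. The paper's proof is terse: it simply observes that once the common shift $\beta$ has been absorbed in stage~0, the residual task service times are memoryless exponentials, so the argument of Theorem~\ref{thm:lower_bnd} carries over directly. You instead propose to justify the per-server rate $1/\tau$ by coupling with an auxiliary split-merge-type system in which the shift $\beta$ is re-incurred at every stage boundary. That coupling goes in the wrong direction. Re-incurring the shift makes the auxiliary system \emph{slower}, so its stage-$j$ sojourn stochastically dominates that of the true fork-join system; this yields an \emph{upper} bound on $\mathbb{E}[T_j]$, not the lower bound $\mathbb{E}[T_j]\ge \tau/((n-j)-\lambda\tau)$ that you need. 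A coupling of this sort cannot push the inequality in the required direction.

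Your instinct that the tension between the memoryless rate $\alpha$ and the rate $1/\tau$ appearing in the stated formula needs explanation is well-founded; the paper's sketch resolves it by appealing directly to the exponential residuals rather than by any coupling to a slowed-down surrogate. If you want to close the argument, you should work from the memoryless side (as the paper indicates) rather than from an artificially decelerated comparison system.
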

\begin{proof}
	The proof is an adaptation of Theorem \ref{thm:lower_bnd}, where the latency for the first stage is found using Pollaczek-Khinchin formula with ${\rm Sexp}(\beta,n\alpha)$ (as if all tasks are at head, this will be the distribution of first job finishing). Using this as the completion latency of the first task, the remaining task completions are similar as they still follow exponential distributions. 
\end{proof}

\section{Extension to Heterogeneous Systems}\label{sec:lat_fj_hs}
We now extend the setup where there are $r$ files, where each file $i$ is encoded using $(n,k_i)$ MDS code. We assume that file $i$ is of   size $l_i$. The arrival process for file $i$ is assumed to be Poisson with rate $\lambda_i$. The service time at each server is assumed to follow an exponential distribution with service rate $\mu$ (per unit file size). The effective service rate at any server for file $i$ is $\mu_i = \frac{k_i \mu}{l_i}$ since each server stores $1/k_i$ fraction of data. Let $\rho_i = \frac{\lambda_i}{\mu_i}$ be the server utilization factor for file $i$. The following result describes the conditions for the queues to be stable using Fork-Join queueing. 

\begin{lemma}[\citep{kumar2014latency}]
For the system to be stable using Fork-Join queueing system, we require
\begin{align}
\left(\sum\limits_{i=1}^r{k_i \lambda_i}\right)\left(\sum\limits_{i=1}^r{\frac{\lambda_i l_i}{k_i}}\right) < n \mu \sum\limits_{i=1}^r{\lambda_i}.
\end{align}
\end{lemma}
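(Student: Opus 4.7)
The plan is to derive the stability condition by analyzing the effective load at an arbitrary storage node, since the $n$ servers are homogeneous and overall system stability reduces to per-server stability.

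First, I would identify the task arrival process at an arbitrary server. Under Fork-Join scheduling, each arriving file $i$ request dispatches one task to every one of the $n$ servers, so the file $i$ task arrival rate at any given server equals $\lambda_i$. By Poisson superposition, the aggregate task stream at the server is Poisson with rate $\Lambda=\sum_i\lambda_i$, and each arriving task is of type $i$ with probability $\lambda_i/\Lambda$.

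Next, I would compute the mean service requirement per arriving task. A chunk of file $i$ has size $l_i/k_i$, and with per-unit-size service rate $\mu$ the mean service time of a file $i$ task is $l_i/(k_i\mu)$. Averaging over types yields expected service time $W/(\Lambda\mu)$ per arriving task, where $W=\sum_i \lambda_i l_i/k_i$.

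The key step is accounting for fork-join cancellations. Of the $n$ tasks produced by a file $i$ request, only $k_i$ must complete before the job departs; the remaining $n-k_i$ abandon the system. By the symmetry of the homogeneous servers, the marginal probability that any given server is among the $k_i$ that serve file $i$ is $k_i/n$, so averaging across file types the fraction of incoming tasks that must actually be served at a tagged server is $\bar k/n$ where $\bar k=K/\Lambda$ and $K=\sum_i k_i\lambda_i$. Treating this as an effective thinning of the arrival stream, the per-server load becomes $\rho = \Lambda \cdot (\bar k/n) \cdot W/(\Lambda\mu) = KW/(n\mu\Lambda)$. Requiring $\rho<1$ and clearing denominators produces exactly the stated inequality $KW<n\mu\Lambda$, i.e.\ $\bigl(\sum_i k_i\lambda_i\bigr)\bigl(\sum_i \lambda_i l_i/k_i\bigr)<n\mu\sum_i \lambda_i$.

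The main obstacle is justifying the decoupling used in the cancellation step. The events ``a given task is eventually served'' and ``that task's file type'' are not independent, because the number of served tasks per file request is exactly $k_i$ and the identity of which $k_i$ servers succeed depends on the joint progress across all $n$ queues, so one cannot naively multiply the marginal cancellation probability $\bar k/n$ by the unconditional mean service time $W/(\Lambda\mu)$. The rigorous route is via rate conservation over a long time window at a tagged server (equivalently, a Foster--Lyapunov drift argument on its workload process), which shows that the aggregate work injected per arriving task averages to $(\bar k/n)(W/\Lambda)$ and that the workload drift is negative precisely when $KW<n\mu\Lambda$.
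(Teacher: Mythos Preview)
Your argument is essentially the same as the paper's: compute the effective per-server arrival rate after cancellations as $\sum_i k_i\lambda_i/n$, use the unthinned type proportions $\lambda_i/\Lambda$ to get mean service time $W/(\Lambda\mu)$, and impose $\rho<1$. The paper presents the same two ingredients in the same order and arrives at the same inequality, though it does not flag the decoupling issue you raise; your observation that separating the marginal cancellation probability from the type-conditional service time is a heuristic (and that a rate-conservation or drift argument is the rigorous route) is a valid caveat that the paper's proof simply elides.
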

\begin{proof}
	Jobs of file $i$ enter the queue with rate $\lambda_i$. Each  file $i$ is serviced by the system when $k_i$ sub-tasks of that job are completed. The remaining $n-k_r$ sub-tasks are then cleared from the system. Thus for each request of file $i$, $\frac{(n-k_i)}{n}$ fraction of the sub-tasks are deleted and hence the effective arrival rate of file $i$ at any server is $\lambda_i\left(1-\frac{n-k_i}{n}\right) = \frac{k_i \lambda_i}{n}$. Thus the overall arrival rate at any server, $\lambda_{\textnormal{eff}}$, is 
	\begin{align}
	\lambda_{\textnormal{eff}} = \sum\limits_{i=1}^r{\frac{k_i \lambda_i}{n}}.
	\label{lambda_eff}
	\end{align} 
	Let $S$ denote the service distribution for a single-server FCFS system serving $r$ files, with $p_i$ being the fraction of jobs of class $i$. Then, the mean service time at a server is 
	\begin{align}
	{\mathbb{E}}[S] = \sum\limits_{i=1}^r{p_i 	{\mathbb{E}}[S_i]} = \sum\limits_{i=1}^r{\frac{\lambda_i}{\mu_i \sum\limits_{i=1}^{r}\lambda_i}},	 
	\label{meanServiceTime}	
	\end{align}	
	where \eqref{meanServiceTime} follows the assumption that the service time for file $i$ is exponential with rate $\mu_i$. To ensure stability, the net arrival rate should be less than the average service rate at each server. Thus from \eqref{lambda_eff} and \eqref{meanServiceTime} the stability condition of each queue is 
	\begin{align}
	\sum\limits_{i=1}^r{\frac{k_i \lambda_i}{n}} < {\left(\sum\limits_{i=1}^r \frac{\lambda_i}{\mu_i \sum\limits_{i=1}^{r}\lambda_i}\right)}^{-1}, \nonumber \end{align}
	Since $\mu_i = \frac{k_i \mu}{l_i}$ and the term $\sum\limits_{i=1}^{r}\lambda_i$ is a constant, with simple algebraic manipulations we arrive at the statement of the lemma. 
\end{proof}

Let $\mathcal{S}_j = \sum\limits_{i=1}^j \rho_r H_{n-k_i,n}^1$. %
We will now provide the lower and upper bounds for the mean latency extending the results for homogenous files. The following results provides an upper bound on the latency.

\begin{theorem}[\citep{kumar2014latency}]\label{Theorem1} 
	The average latency for job requests of file $i$ using Fork-Join queueing is upper-bounded as follows:

		\begin{align}
		\label{eq:fcfs_ub}
		T^{i} \leq \underbrace{\frac{H_{n-k_i,n}^1}{\mu_i}}_\text{Service time}+ \underbrace{\frac{\sum\limits_{i=1}^r{\lambda_i[H_{n-k_i,n}^2 + {(H_{n-k_i,n}^1)}^2]/ {\mu_i}^2}}{2\left(1-\mathcal{S}_r\right)}}_\text{Waiting time}.
		\end{align}
		The bound is valid only when $\mathcal{S}_r < 1$.
\end{theorem}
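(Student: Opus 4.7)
The plan is to extend the split-merge upper bound of Theorem~\ref{thm:upper_bnd} from a single homogeneous file class to $r$ heterogeneous file classes. For each file $i$, the $(n,k_i)$ fork-join queue is dominated (in the mean-response-time sense) by its $(n,k_i)$ split-merge counterpart, since blocking all $n$ servers until $k_i$ of them have served the current batch can only delay subsequent arrivals relative to the fork-join discipline, which releases a server the moment it finishes its task. Once this substitution is made, the whole heterogeneous system reduces to a single-server FCFS queue shared by $r$ independent Poisson arrival streams, with class-$i$ service times $S_i$ equal to the $k_i$-th order statistic of $n$ i.i.d.\ $\mathrm{Exp}(\mu_i)$ random variables. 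This is precisely a multi-class M/G/1 queue.

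Using the same moment computation that produced \eqref{eqn:mvk_exp}, I have
\begin{equation}
\mathbb{E}[S_i] \;=\; \frac{H_{n-k_i,n}^1}{\mu_i}, \qquad \mathrm{Var}[S_i] \;=\; \frac{H_{n-k_i,n}^2}{\mu_i^2},
\end{equation}
so that $\mathbb{E}[S_i^2] = \bigl(H_{n-k_i,n}^2 + (H_{n-k_i,n}^1)^2\bigr)/\mu_i^2$ and the class-$i$ contribution to the server utilization is $\lambda_i \mathbb{E}[S_i] = \rho_i H_{n-k_i,n}^1$. Summing over classes recovers exactly $\mathcal{S}_r = \sum_{j=1}^r \rho_j H_{n-k_j,n}^1$, so the aggregate traffic intensity at the shared single-server split-merge queue is $\mathcal{S}_r$ and the stability condition is $\mathcal{S}_r < 1$.

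Next I would invoke the multi-class Pollaczek--Khinchine formula for an M/G/1 queue under FCFS: the mean waiting time seen by any arriving job is $W = \sum_{j=1}^{r} \lambda_j \mathbb{E}[S_j^2]\,/\,[2(1-\mathcal{S}_r)]$, which is independent of the tagged class. Adding the class-$i$ mean service time gives $T^i \le \mathbb{E}[S_i] + W$, where the inequality records the split-merge domination. Substituting the two second-moment expressions above reproduces \eqref{eq:fcfs_ub} verbatim, and the validity constraint $\mathcal{S}_r < 1$ is exactly the Pollaczek--Khinchine stability hypothesis.

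The main obstacle will be making the per-class split-merge dominance rigorous: one must show that split-merge upper bounds fork-join for the mean response time of each individual class, not merely in aggregate. I would handle this through a sample-path coupling, driving both systems with the same arrival sequence and synchronized exponential service timers per server, and showing by induction on event times that the number of class-$i$ batches remaining in the split-merge system is always at least that in the fork-join system; Little's law then transfers the dominance to mean latencies $T^i$. The remaining algebra---combining class-dependent second moments and isolating the class-$i$ mean service term---is mechanical and produces the stated bound immediately.
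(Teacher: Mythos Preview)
Your proposal is correct and follows essentially the same route as the paper: degrade each class's fork-join dynamics to split-merge, reduce the resulting system to a multi-class M/G/1 FCFS queue, compute the $k_i$-th order-statistic moments \eqref{eq:meanVarSi}, and plug into the Pollaczek--Khinchine formula \eqref{fcfs_ub1}. The paper simply asserts the split-merge domination without the per-class sample-path coupling you outline, so your treatment of that step is in fact slightly more careful than the original.
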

\begin{proof}
	The  system can be modeled as a M/G/1 queuing system with arrival rate $\lambda = \sum\limits_{i=1}^r \lambda_i$ and a general service time distribution $S$. Then the average latency for a job of class $i$  is given  as 
	\begin{align}
	T^i = {\mathbb{E}}[S_i]+ \frac{\sum\limits_{i=1}^r \lambda_i \left[\textnormal{Var}[S_i]+{\mathbb{E}}[S_i]^2\right]}{2\left(1-\sum\limits_{i=1}^r  \lambda_i {\mathbb{E}}[S_i]\right)}. \label{fcfs_ub1}
	\end{align}
	
	To obtain an upper bound on the average latency, we degrade the Fork-Join system in the following manner. For a request of file $i$, the servers that have finished processing a sub-task of that request are blocked and do not accept new jobs until $k_i$ sub-tasks of that request have been completed. Then the sub-tasks at remaining $n-k_i$ servers exit the system immediately. Now this performance-degraded system can be modeled as a M/G/1 system where the distribution of the service process, $S_i$, follows $k_i^{\textnormal{th}}$ ordered statistics. For any file $i$, the service time at each of the $n$ servers is exponential with mean $1/\mu_i$. Hence from \eqref{eqn:mvk_exp}, the mean and variance of $S_i$ are,
	\begin{align}
	{\mathbb{E}}[S_i] = \frac{H_{n-k_i,n}^1}{\mu_i},~\textnormal{V}[S_i] = \frac{H^2_{n-k_i,n}}{\mu_i^2}. \label{eq:meanVarSi}
	\end{align}

	Substituting \eqref{eq:meanVarSi} in \eqref{fcfs_ub1}, we get the following upper bound on average latency as in the statement of the theorem.	
\end{proof}

Without loss of generality, assume the files are relabeled such that $k_1 \le k_2 \le ... \le k_r$.  The next theorem provides the lower bound of the latency of file $i$. 
\begin{theorem}[\citep{kumar2014latency}]\label{Theorem2} 
	The average latency for file $i$  is lower-bounded as follows:
		\begin{equation}
		\label{eq:fcfs_lb}
		T^i \geq \sum\limits_{s=1}^{k_i} \left(\underbrace{\frac{t_{s,i}}{\lambda_i}}_\text{service time}+ \underbrace{\frac{\sum\limits_{j=c_{s,i}+1}^r \frac{t_{s,j}^2}{\lambda_j}}{1-\sum\limits_{j=c_{s,i}+1}^{r}t_{s,j}}}_\text{waiting time}\right),
		\end{equation}
		where $t_{s,i} = \frac{\lambda_i}{(n-s+1)\mu_i}$, and $c_{s,i}$ is given as
		\begin{align}
		c_{s,i} = \begin{cases} 0, &1\leq s \leq k_1\\ 1, &k_1 < s \leq k_2\\\vdots\\i-1, &k_{i-1} < s \leq k_i \end{cases}. \label{csdef}
		\end{align}
\end{theorem}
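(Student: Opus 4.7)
The plan is to extend the stage-based decomposition used in Theorem~\ref{thm:lower_bnd} to the multi-file heterogeneous setting. A request for file $i$ is said to be in stage $s$ when exactly $s$ of its $n$ sub-tasks have completed service, so the request departs upon reaching stage $k_i$. By linearity of expectation, the expected latency $T^i$ decomposes as the sum over $s=1,\ldots,k_i$ of the expected time a file-$i$ request spends progressing from stage $s-1$ to stage $s$, so it suffices to obtain a lower bound on each per-stage progression time.

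First I would identify the contributing file classes at stage $s$. Under the ordering $k_1 \le k_2 \le \cdots \le k_r$, a class-$j$ request departs after at most $k_j$ stages, so only classes $j$ with $k_j \ge s$ can ever hold tasks ahead of a stage-$s$ file-$i$ task; these are precisely the indices $j > c_{s,i}$ as defined in \eqref{csdef}. Next, I would mimic the per-stage service-rate argument of Theorem~\ref{thm:lower_bnd}: for any request of class $j$ in its stage $s$, the maximum instantaneous rate at which any one of its $n-s+1$ pending sub-tasks can complete is $(n-s+1)\mu_j$, achieved only if none of those sub-tasks are blocked behind sub-tasks of more advanced requests. A pairwise coupling argument, analogous to the homogeneous proof, then shows that the actual stage-$s$ service rate for class $j$ is bounded above by $(n-s+1)\mu_j$, so any surrogate system in which stage-$s$ tasks see exactly this rate yields a valid lower bound on $T^i$.

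With these bounds, each stage $s$ can be modeled as an independent multiclass M/M/1 queue with Poisson arrival rate $\lambda_j$ and exponential service at rate $(n-s+1)\mu_j$ for each participating class $j > c_{s,i}$. The multiclass Pollaczek--Khinchin formula then yields a mean sojourn time of
\begin{equation}
\frac{1}{(n-s+1)\mu_i} \;+\; \frac{\sum_{j>c_{s,i}} \lambda_j \mathbb{E}[S_j^2]}{2\bigl(1-\sum_{j>c_{s,i}} \lambda_j \mathbb{E}[S_j]\bigr)}.
\end{equation}
Substituting $\lambda_j \mathbb{E}[S_j] = t_{s,j}$ and $\tfrac{1}{2}\lambda_j \mathbb{E}[S_j^2] = t_{s,j}^2/\lambda_j$ for the exponentials recovers the $s$-th summand in \eqref{eq:fcfs_lb}, and summing over $s=1,\ldots,k_i$ yields the claimed bound.

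The hard part will be justifying rigorously that treating the successive stages as independent queues yields a valid lower bound rather than a heuristic approximation: in reality, arrivals into stage $s+1$ are driven by departures from stage $s$ rather than by an independent Poisson process, and the classes interact across stages through shared servers. The cleanest resolution is to exploit the memoryless property of the exponential service together with a stochastic coupling showing that upper-bounding each stage's per-class service rate by $(n-s+1)\mu_j$ can only slow the aggregate system relative to the true dynamics; once this domination is in place, the remaining bookkeeping of reading off $c_{s,i}$ from the $k_j$ ordering and plugging into the multiclass PK formula is mechanical.
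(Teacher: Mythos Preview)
Your proposal is correct and follows essentially the same route as the paper: a stage decomposition indexed by the number of completed sub-tasks, identification of the surviving classes at stage $s$ via $c_{s,i}$, the upper bound $(n-s+1)\mu_j$ on the per-class stage-$s$ service rate, and the multiclass Pollaczek--Khinchin formula applied to the resulting enhanced single-server system. Your concern about rigorously justifying the independence across stages is well placed---the paper's own argument is equally informal on this point, simply asserting that replacing the true stage-$s$ dynamics by an M/G/1 queue with the maximal service rates yields a performance-enhanced system and hence a lower bound.
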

\begin{proof}
	For the purpose of obtaining a lower bound on the average latency of file $i$, using insights from proof of Theorem \ref{thm:lower_bnd}, we map the parallel processing in the  Fork-Join system to a sequential process consisting of $k_i$ processing stages for $k_i$ sub-tasks of a request of file $i$. The transition from one stage to the next occurs when one of the remaining servers finishes a sub-task of the file $i$. Note that $c_{s,i}$ in the theorem statement denotes the number of classes of file $i$ that are finished before start of stage $s$. 	The processing in each stage $s$ corresponds to a single-server FCFS system with jobs of all but classes $1, 2, \cdots, c_{s,i}$. Then, using Pollaczek-Khinchin formula at stage $s$, the average latency for a sub-task of a job of class $i$ in stage $s$ is given by, 
	\begin{align}
	T_{\textnormal{FCFS},s}^i &= {\mathbb{E}}[S_i^s]+ \frac{\lambda {\mathbb{E}}[{(S^s)}^2]}{2(1-\lambda {\mathbb{E}}[S^s]))}, \label{tmp}
	\end{align} 
	where $S^s$ is a r.v. denoting the service time for any sub-task in stage $s$ and $S_i^s$ denotes the service time for a sub-task of class $i$ in stage $s$, which are given as
	\begin{align}
	{\mathbb{E}}[S^s] = \sum\limits_{i=c_{s,i}+1}^R{p_i {\mathbb{E}}[S_i^s]},~~{\mathbb{E}}[{(S^s)}^2] = \sum\limits_{i=c_{s,i}+1}^R{p_i {\mathbb{E}}[{(S_i^s)}^2]}, 
	\label{meanVarSs}
	\end{align}
	where $p_i = \frac{\lambda_i}{\sum_{i=1}^r \lambda_i}$. 	Substituting \eqref{meanVarSs} in \eqref{tmp}, we get
	\begin{align}
	T_{s,c_{s,i}}^i &={\mathbb{E}}[S_i^s] + \frac{\sum\limits_{j=c_{s,i}+1}^r \lambda_j {\mathbb{E}}[{(S^s_j)}^2]}{2\left(1-\sum\limits_{j=c_{s,i}+1}^r \lambda_j {\mathbb{E}}[S^s_j]\right)}.
	\end{align}
	Now we note that at any stage $s$, the maximum possible service rate for a request of file $j$ that is not finished yet is $(n-s+1)\mu_j$. This happens when all the remaining sub-tasks of request of file $j$ are at the head of their buffers. Thus, we can enhance the latency performance in each stage $s$ by approximating it with a M/G/1 system with service rate $(n-s+1)\mu_j$ for request of file $j$. Then, the average latency for sub-task of request of file $i$ in stage $s$ is lower bounded as,
	\begin{align}
	T_{s,c_{s,i}}^i  &\geq \frac{1}{(n-s+1)\mu_i} + \frac{\sum\limits_{j=c_{s,i}+1}^r{\frac{\lambda_j}{(n-s+1)\mu_j^2}}}{1- \sum\limits_{j=c_{s,i}+1}^r{\frac{\lambda_j}{(n-s+1)\mu_j}}}, \label{latEnhcd}
	\end{align} 
	Finally, the average latency for file $i$ in this enhanced system is simply $\sum\limits_{s=1}^{k_i} T_{s,c_{s,i}}^i$. This gives us the result as in the statement of the theorem. 
\end{proof}
\section{Simulations}\label{sec:lat_fj_sims}
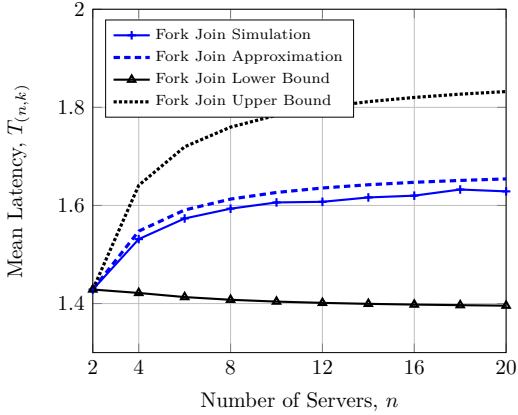
\begin{figure}[hh]
	\centering
	\scalebox{0.8}{\begin{tikzpicture}
\begin{axis}[
  font=\small,
  xlabel={Number of Servers, $n$},
  ylabel={Mean Latency, $T_{(n,k)}$},
  xmajorgrids,
  ymajorgrids,
  ymin=1.3,
  ymax=2,
  xmin=2,
  xmax=20,
  xtick = {2,4,8,12,16,20},
  legend entries={Fork Join Simulation, Fork Join Approximation, Fork Join Lower Bound, Fork Join Upper Bound},
  legend style={legend pos=north west,nodes=right,font=\scriptsize},
  ]

\addplot [color=blue, mark=+, line width=1pt]
coordinates {
(2,1.4286) (4,1.5313566666667098) (6,1.5734715734715716) (8,1.593404926738312) (10,1.6061482728148844) (12,1.6074216074215562) (14,1.6163217170116202) (16,1.6199683800316131) (18,1.6324866666667122) (20,1.6287016287015386)
};

\addplot [color=blue, mark=, densely dashed, line width=1.5pt]
coordinates {
(2,1.4286) (4,1.5476) (6,1.5907) (8,1.6130) (10,1.6265) (12,1.6356) (14,1.6422) (16,1.6472) (18,1.6510) (20,1.6541)
};

\addplot [color=black, mark=triangle,line width=1pt]
coordinates {
(2,1.4286) (4,1.4216 ) (6,1.4132) (8,1.4077) (10,1.4040) (12,1.4014 ) (14,1.3994) (16,1.3979) (18,1.3967) (20,1.3957) 
};

\addplot [color=black, mark=, densely dotted, line width=1.5pt]
coordinates {
(2,1.4286) (4,1.6410) (6,1.7196) (8,1.7597) (10,1.7839) (12,1.8001) (14,1.8116) (16,1.8202) (18,1.8269) (20,1.8322)
};

\end{axis}
\end{tikzpicture}}
	\caption{This graph displays the latency as the number of servers $n$ increases. Throughout,
		the code rate is kept constant at $k/n = 0.5$, the arrival rate is set to
		$\lambda= 0.3$, and the service rate of each server is $\mu=0.5$. The approximate result, upper bound, and lower bound in Chapter \ref{sec:lat_fj} are depicted along with the simulation results. 
	}
	\label{PS:Changen}
\end{figure}

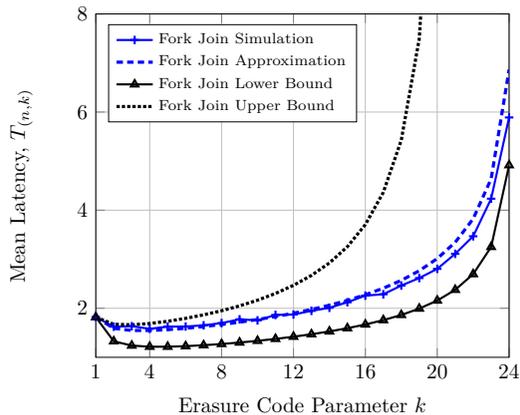
\begin{figure}[hh]
	\centering
	\scalebox{0.8}{\begin{tikzpicture}
\begin{axis}[
  font=\small,
  xlabel={Erasure Code Parameter $k$},
  ylabel={Mean Latency, $T_{(n,k)}$},
  xmajorgrids,
  ymajorgrids,
  ymin=1,
  ymax=8,
  xmin=1,
  xmax=24,
  xtick = {1,4,8,12,16,20,24},
  legend entries={Fork Join Simulation, Fork Join Approximation, Fork Join Lower Bound, Fork Join Upper Bound},
  legend style={legend pos=north west, nodes=right, font=\scriptsize},
  ]

\addplot [color=blue,mark=+,line width=1pt]
coordinates {
	(1, 1.8220) (2,1.6244) (3,1.6343) (4,1.5811) (5,1.6255) (6,1.6279) (7,1.6520) (8,1.6979) (9, 1.7742) (10,1.7534) (11,1.8639) (12,1.8726) (13,1.9474) (14,2.0099) (15,2.1224) (16, 2.2580) (17,2.2853) (18,2.4633) (19, 2.6160) (20, 2.8030) (21, 3.1098) (22, 3.4676) (23, 4.2335) (24, 5.8897)
};

\addplot [color=blue, mark=, densely dashed, line width=1.5pt]
coordinates {
 (1, 1.8182) (2,1.591) (3,1.5472) (4,1.5465) (5,1.5640) (6,1.5919) (7,1.6273) (8,1.6689) (9, 1.7166) (10,1.7704) (11,1.8308) (12,1.8986) (13,1.9750) (14,2.0616) (15,2.1605) (16, 2.2747) (17, 2.4086) (18, 2.5686) (19, 2.7644) (20, 3.0126) (21, 3.3438) (22, 3.8235) (23, 4.6383) (24,6.8654)
};

\addplot [color=black, mark=triangle, line width=1pt]
coordinates {
 (1, 1.8182) (2,1.3270) (3,1.2393) (4,1.2160) (5,1.2159) (6,1.2276) (7,1.2468) (8,1.2717) (9, 1.3015) (10, 1.3361 ) (11,1.3754) (12,1.4200) (13,1.4705) (14,1.5279 ) (15,1.5934) (16, 1.6692) (17,1.7579 ) (18,1.8637) (19, 1.9932) (20, 2.1571  ) (21, 2.3761 ) (22, 2.6955) (23, 3.2503) (24, 4.9162)
};

\addplot [color=black, mark=, densely dotted, line width=1.5pt]
coordinates {
(1, 1.8182) (2,1.67404) (3,1.6631) (4,1.6895) (5,1.73489) (6,1.794) (7,1.8654) (8,1.9497) (9, 2.0484) (10,2.1644) (11,2.301959) (12,2.4672486) (13, 2.6697) (14,2.9242475) (15,3.2552) (16, 3.7063) (17, 4.36409) (18, 5.4287) (19, 7.495565) (20, 13.50718)
};

\end{axis}
\end{tikzpicture}}
	\caption{This graph displays the latency as  $k$ increases. We let $n=24$, $\lambda = 0.45$, and $\mu = k/n$. The approximate result, upper bound, and lower bound in Chapter \ref{sec:lat_fj} are depicted along with the simulation results. 
	}
	\label{PS:Changek}
\end{figure}

We  evaluate the bounds in Chapter \ref{sec:lat_fj} for exponential service times, and compare them with the simulation results. In Figures \ref{PS:Changen} and \ref{PS:Changek}, we consider different parameter regimes to compare the bounds and the simulation results. We see that the approximation is close to the simulation results. The upper and lower bounds capture efficient bounds for the problem, while are still far from the actual latency. In Figure \ref{PS:Changen}, we change the value of the number of servers $n$, and change $k$ as $k=n/2$. The rest of the parameters are $\lambda= 0.3$ and $\mu=0.5$. In Figure \ref{PS:Changek}, we let $n=24$, $\lambda = 0.45$, and $\mu = k/n$. On increasing $k$, the latency increases as is illustrated in the figure with the bounds.

\section{Notes and Open Problems}\label{sec:fj_notes}

The $(n, k)$ fork-join system was first proposed in \citep{Joshi:13} to analyze content download latency from erasure coded distributed storage for exponential service times. They consider that a content file coded into $n$ chunks can be recovered by accessing any $k$ out of the $n$
chunks, where the service time of each chunk is exponential. Even with the exponential assumption
analyzing the $(n, k)$ fork-join system is a hard problem. It is a generalization of the $(n, n)$ fork-join
system, which was actively studied in queueing literature \citep{flatto1984two,nelson1988approximate,varki2008m}. The results were extended to general service time distributions in \citep{joshi2017efficient}. For exponential service times, approximate latency was studied in \citep{badita2019latency}. These works assume homogeneous files, in the sense that each file has the same arrival  distributions,  have the same erasure-coding parameters, and run on the servers with same service distribution. For exponential service times, the authors of \citep{kumar2014latency} studied the case when the different files have different arrival rates and erasure-code parameters. However, all these works assume that the number of servers is the same as the $n$, which is the erasure coding parameter representing the number of encoded chunks. Further, the servers are assumed to be homogeneous,  with the same service rate. Thus, the following problems are of interest. 

\begin{enumerate}
	\item {\bf Tighter Upper Bounds}: We note that even for the exponential service times, the studied upper bounds do not meet the optimal stability conditions. Thus, an efficient upper bound that satisfies the stability conditions is open.  
	\item {\bf General File Placement}: In the prior work, the number of servers are the same as the erasure-coded encoded chunks. However, in general, the number of servers may be large and each file $i$ may be placed on a subset $n_i$ of the servers. The latency results for a general parameter system has not been considered. A related concept to the placement of the files is that it is not essential,  in general, to have only one chunk per node, some nodes may have more chunks. In this case, the $n$ requests are not in parallel and the same analysis cannot be easily extended. 
	\item {\bf Heterogeneous Servers}: In the prior works, the servers serving the files are assumed to be homogeneous with the same service distribution. However, this is not the case in general, especially with fog computing. Obtaining efficient analysis for such  heterogeneous server system is an open problem. 
	\item {\bf Approximation and its Guarantees}: While an approximation of latency has been proposed for exponential service times \citep{badita2019latency}, such characterization for heterogenous files and general service times is open. Further, the guarantees on approximation, in some asymptotic regime or bounding the gap between the two by a constant (or within a multiplicative factor) has not yet been considered.
\end{enumerate}

\chapter{Probabilistic Scheduling Approach}

In this Chapter, we introduce the model of Probabilistic Scheduling in Section~\ref{sec:prob_sch}, which was first proposed in \citep{Yu_TON,Xiang:2014:Sigmetrics:2014}. We will evaluate an upper bound on the latency for heteogeneous files, heterogenous servers, general service times, and general placemeny of files in Section \ref{mean_ps}. A notion of tail latency is provided and an upper bound on tail latency is characterized in Section \ref{tail_ps}. For homogenous files, homogenous servers, Section \ref{ps_asymp} shows that the latency of uniform probabilistic scheduling is upper bounded by assuming independence across the servers. Further, asymptotic latency is considered as the number of servers increase. Another asymptotic metric for heavy tailed file sizes is provided and analyzed in Section \ref{ps_tail_asym}. Sections~\ref{sec:ps_sims} and \ref{sec:ps_notes} contain simulation results and notes on future directions, respectively.

\section{Probabilistic Scheduling}\label{sec:prob_sch}

We assume the model given in Section \ref{model}. Under $(n_i,k_i)$ MDS codes, each file $i$ can be retrieved by processing a batch of $k_i$ chunk requests at distinct nodes that store the file chunks. Recall that each encoded file $i$ is spread over $n_i$ nodes, denoted by a set $\mathcal{S}_i$. Upon the arrival of a file $i$ request, in probabilistic scheduling we randomly dispatch the batch of $k_i$ chunk requests to $k_i$ out of $n_i$ storage nodes in $\mathcal{S}_i$, denoted by a subset $\mathcal{A}_i\subseteq \mathcal{S}_i$ (satisfying  $|\mathcal{A}_i|=k_i$) with predetermined probabilities. Then, each storage node manages its local queue independently and continues processing requests in order. A file request is completed if all its chunk requests exit the system. An example of probabilistic scheduling is depicted in Fig.~\ref{fig:sysmodel2}  for the setup in Section \ref{sec:chall}, wherein 5 chunk requests are currently served by the 5 storage nodes, and there are 9 more chunk requests that are randomly dispatched to and are buffered in 5 local queues according to chunk placement, e.g., requests $B_2,B_3$ are only distributed to nodes $\{3,4,5\}$. Suppose that node 2 completes serving chunk request $A_2$. The next request in the node's local queue will move forward.

\vspace{-3mm}
\begin{figure}[!thbp]
	\begin{center}
		\scalebox{0.32}{\includegraphics[trim=4.5in .3in 0in 0in, clip, draft=false]{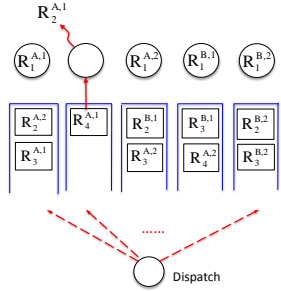}}
		\vspace{-3mm}
		\caption{Functioning of  a probabilistic scheduling policy.}
		\label{fig:sysmodel2}
	\end{center}
	\vspace{-.2in}
\end{figure}

\begin{definition} {\em (Probabilistic scheduling)} A Probabilistic scheduling policy (i) dispatches each batch of chunk requests to appropriate nodes with predetermined probabilities; (ii) each node buffers requests in a local queue and processes in order.
\end{definition}

It is easy to verify that such probabilistic scheduling ensures that at most 1 chunk request from a batch to each appropriate node. It provides an upper bound on average service latency for the optimal scheduling since rebalancing and scheduling of local queues are not permitted. Let $\mathbb{P}(\mathcal{A}_i)$ for all $\mathcal{A}_i\subseteq \mathcal{S}_i$ be the probability of selecting a set of nodes $\mathcal{A}_i$ to process the $|\mathcal{A}_i|=k_i$ distinct chunk requests\footnote{It is easy to see that $\mathbb{P}(\mathcal{A}_i)=0$ for all $\mathcal{A}_i\nsubseteq \mathcal{S}_i$ and $|\mathcal{A}_i|=k_i$ because such node selections do not recover $k_i$ distinct chunks and thus are inadequate for successful decode.}.

\begin{lemma} \label{th:lemma_bound}
For given erasure codes and chunk placement, average service latency of probabilistic scheduling with feasible probabilities $\{\mathbb{P}(\mathcal{A}_i): \ \forall i,\mathcal{A}_i\}$ upper bounds the latency of optimal scheduling.
\end{lemma}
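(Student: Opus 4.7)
The plan is to argue that probabilistic scheduling is a restricted class of feasible scheduling policies, so its latency can be no smaller than that of optimal scheduling. First I would verify that any probabilistic scheduling policy with feasible probabilities $\{\mathbb{P}(\mathcal{A}_i)\}$ satisfies the three defining conditions of optimal scheduling from the earlier definition. Condition (i), buffering all requests in a queue of infinite capacity, is met because the union of the $m$ local queues (each of unbounded length) constitutes a buffer of infinite capacity. Condition (ii), assigning at most one chunk request from a batch to each appropriate node, is ensured by the constraint that $\mathcal{A}_i \subseteq \mathcal{S}_i$ with $|\mathcal{A}_i| = k_i$: each dispatched batch sends exactly one chunk request to each of $k_i$ distinct nodes that host chunks of file $i$. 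Condition (iii) is vacuous as it leaves the tie-breaking rule unspecified; probabilistic dispatch combined with FCFS local service is one admissible resolution.

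Having established that probabilistic scheduling is a valid scheduling policy in the MDS queue model, the second step is to invoke the definition of optimal scheduling as the policy minimizing average expected latency over all admissible policies. Let $\Pi$ denote the class of admissible scheduling policies for the MDS queue, and let $\Pi_{\mathrm{prob}} \subseteq \Pi$ denote the subclass of probabilistic scheduling policies indexed by feasible probability vectors $\{\mathbb{P}(\mathcal{A}_i)\}$. For each $\pi \in \Pi$, let $\bar{T}(\pi)$ denote the resulting average service latency. Since $\pi_{\mathrm{opt}} = \arg\min_{\pi \in \Pi} \bar{T}(\pi)$, for any $\pi \in \Pi_{\mathrm{prob}}$ we have $\bar{T}(\pi_{\mathrm{opt}}) \le \bar{T}(\pi)$, which is precisely the claim.

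I expect the main subtlety, rather than a hard obstacle, to be justifying why the commitment to a node assignment at dispatch time (with no subsequent rebalancing across local queues) is genuinely allowed by the optimal-scheduling framework. This is not an obstacle because the definition of optimal scheduling does not require any specific work-conserving or rebalancing behavior; it only imposes upper limits on how requests are assigned, not lower limits. In other words, any policy that makes earlier and coarser decisions than optimal can still be compared under the same latency metric, and is in the feasible set over which optimal takes its minimum. One additional remark worth stating in the proof is that the inequality is generally strict: probabilistic scheduling sacrifices the ability to react to the instantaneous queue state at each node, and this loss of flexibility is what makes the bound useful as an analytically tractable surrogate rather than a tight characterization.
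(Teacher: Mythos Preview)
Your proposal is correct and matches the paper's approach: the paper states (in the sentence immediately preceding the lemma) that probabilistic scheduling ensures at most one chunk request per batch per node and ``provides an upper bound on average service latency for the optimal scheduling since rebalancing and scheduling of local queues are not permitted,'' which is exactly your argument that probabilistic scheduling lies in the feasible set $\Pi$ over which the optimal policy minimizes. The paper gives no further proof beyond that sentence, so your elaboration---checking each of the three defining conditions and formalizing the $\Pi_{\mathrm{prob}}\subseteq\Pi$ containment---is simply a more detailed rendering of the same idea.
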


Clearly, the tightest upper bound can be obtained by minimizing average latency of probabilistic scheduling over all feasible probabilities $\mathbb{P}(\mathcal{A}_i)$ $\forall \mathcal{A}_i\subseteq \mathcal{S}_i$ and $\forall i$, which involves $\sum_i {n_i\choose k_i}$ decision variables. We refer to this optimization as a {\em scheduling subproblem}. While it appears prohibitive computationally, we will demonstrate next that the optimization can be transformed into an equivalent form, which only requires $\sum_i n_i$ variables. The key idea is to show that it is sufficient to consider the conditional probability (denoted by $\pi_{i,j}$) of selecting a node $j$, given that a batch of $k_i$ chunk requests of file $i$ are dispatched. It is easy to see that for given $\mathbb{P}(\mathcal{A}_i)$, we can derive $\pi_{i,j}$ by
\vspace{-2mm}
\begin{eqnarray}
\pi_{i,j} = \sum_{\mathcal{A}_i:\mathcal{A}_i\subseteq \mathcal{S}_i} \mathbb{P}(\mathcal{A}_i) \cdot {\bf 1}_{\{ j\in\mathcal{A}_i\}}, \ \forall i \label{eq:pi}
\vspace{-2mm}
\end{eqnarray}
where ${\bf 1}_{\{ j\in\mathcal{A}_i\}}$ is an indicator function which equals to 1 if node $j$ is selected by $\mathcal{A}_i$ and 0 otherwise.

We first provide Farkas-Minkowski Theorem \citep{Angell:02} that will be used in this transformation. 
\begin{lemma} Farkas-Minkowski Theorem \citep{Angell:02}. Let ${\bf A}$ be an $m\times n$ matrix with real entries, and ${\bf x}\in \mathbb{R}^{n}$ and ${\bf b}\in \mathbb{R}^{m}$ be 2 vectors. A necessary and sufficient condition that ${\bf A}\cdot{\bf x}={\bf b}, \ {\bf x}\ge 0$ has a solution is that, for all ${\bf y}\in \mathbb{R}^m$ with the property that ${\bf A}^T \cdot {\bf y}\ge 0$, we have  $\langle{\bf y}, {\bf b}\rangle \ge 0$.
\end{lemma}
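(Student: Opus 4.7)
The plan is to establish the two directions of the equivalence separately. The necessity direction is essentially a one-line calculation: assuming some ${\bf x}\ge 0$ satisfies ${\bf A}{\bf x}={\bf b}$, for any ${\bf y}\in\mathbb{R}^m$ with ${\bf A}^T{\bf y}\ge 0$ I would compute $\langle {\bf y},{\bf b}\rangle=\langle {\bf y},{\bf A}{\bf x}\rangle=\langle {\bf A}^T{\bf y},{\bf x}\rangle\ge 0$, since the final inner product is between two componentwise nonnegative vectors. No delicate analysis is needed here.

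For sufficiency, I would argue the contrapositive: if no nonnegative ${\bf x}$ solves ${\bf A}{\bf x}={\bf b}$, I must construct a ${\bf y}\in\mathbb{R}^m$ with ${\bf A}^T{\bf y}\ge 0$ but $\langle {\bf y},{\bf b}\rangle<0$. Introduce the convex polyhedral cone $C=\{{\bf A}{\bf x}:{\bf x}\in\mathbb{R}^n,\ {\bf x}\ge 0\}\subseteq\mathbb{R}^m$, which is exactly the cone of nonnegative combinations of the columns of ${\bf A}$. The hypothesis reads ${\bf b}\notin C$. I would then invoke a finite-dimensional separating hyperplane theorem to produce ${\bf y}\in\mathbb{R}^m$ with $\langle {\bf y},{\bf v}\rangle\ge 0$ for every ${\bf v}\in C$ and $\langle {\bf y},{\bf b}\rangle<0$. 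Specializing the first inequality to ${\bf v}={\bf A}{\bf e}_j$, where ${\bf e}_j$ is the $j$-th standard basis vector, shows that the $j$-th coordinate of ${\bf A}^T{\bf y}$ is nonnegative, so ${\bf A}^T{\bf y}\ge 0$, and this ${\bf y}$ witnesses the failure of the stated condition.

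The main obstacle is justifying the separation step, which in its usual form requires $C$ to be closed and convex. Convexity is immediate, so the crux is the Weyl--Minkowski fact that any finitely generated convex cone in $\mathbb{R}^m$ is closed. I would prove this by induction on the number of generators, using a Carath\'eodory-type reduction: every point in $C$ is expressible as a nonnegative combination of a linearly independent subfamily of the columns of ${\bf A}$, of which there are finitely many, and each such subcone is the continuous image of a closed orthant under an injective linear map and is therefore closed; $C$ is then a finite union of closed sets, hence closed. With closedness secured, the nearest-point projection ${\bf p}$ of ${\bf b}$ onto $C$ exists, and a standard variational argument shows that ${\bf y}={\bf p}-{\bf b}$ supplies the required separating direction, completing the proof.
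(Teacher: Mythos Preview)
Your proof is correct and follows the standard separating-hyperplane approach to Farkas' Lemma. However, the paper does not actually prove this statement: it is stated as a cited result from \citep{Angell:02} and used as a black box in the subsequent proof of the theorem on probabilistic scheduling feasibility, so there is no ``paper's own proof'' to compare against.
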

The next result formally shows that the optimization can be transformed into an equivalent form, which only requires $\sum_i n_i$ variables. 
\begin{theorem} \label{th:thm_prob}
A probabilistic scheduling policy with feasible probabilities $\{\mathbb{P}(\mathcal{A}_i): \ \forall i,\mathcal{A}_i\}$ exists if and only if there exists conditional probabilities $\{\pi_{i,j}\in [0,1], \forall i,j\}$ satisfying
\vspace{-2mm}
\begin{eqnarray}
\sum_{j=1}^m \pi_{i,j} = k_i \ \forall i \ {\rm and} \ \pi_{i,j}=0 \ {\rm if} \ j\notin \mathcal{S}_i. \label{eq:thm_prob}
\vspace{-2mm}
\end{eqnarray}

\end{theorem}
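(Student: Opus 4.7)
The plan is to establish the two directions of the equivalence separately, with the forward direction being a direct consequence of the definition in \eqref{eq:pi} and the reverse direction requiring the Farkas--Minkowski Theorem. Throughout, I would note that the problem decouples across files $i$, so it suffices to fix an arbitrary $i$ and work with a single file's scheduling variables.

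\textbf{Forward direction.} Given feasible probabilities $\{\mathbb{P}(\mathcal{A}_i)\}$, define $\pi_{i,j}$ by \eqref{eq:pi}. Because $\mathbb{P}(\mathcal{A}_i)=0$ whenever $\mathcal{A}_i\not\subseteq \mathcal{S}_i$, we immediately get $\pi_{i,j}=0$ for $j\notin \mathcal{S}_i$, and clearly $\pi_{i,j}\in[0,1]$. Swapping the order of summation and using $|\mathcal{A}_i|=k_i$ for every $\mathcal{A}_i$ in the support yields
\[
\sum_{j=1}^m \pi_{i,j} \;=\; \sum_{\mathcal{A}_i\subseteq \mathcal{S}_i} \mathbb{P}(\mathcal{A}_i)\sum_{j=1}^m \mathbf{1}_{\{j\in\mathcal{A}_i\}} \;=\; \sum_{\mathcal{A}_i\subseteq \mathcal{S}_i} \mathbb{P}(\mathcal{A}_i)\cdot k_i \;=\; k_i.
\]

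\textbf{Reverse direction (the main step).} Given $\{\pi_{i,j}\}$ satisfying \eqref{eq:thm_prob}, I would, for each file $i$, set up the linear system: find a nonnegative vector $x$ indexed by $k_i$-subsets $\mathcal{A}\subseteq \mathcal{S}_i$, with $x_\mathcal{A}=\mathbb{P}(\mathcal{A}_i=\mathcal{A})$, satisfying $A x = b$, where $A$ is the $|\mathcal{S}_i|\times\binom{n_i}{k_i}$ incidence matrix with $A_{j,\mathcal{A}}=\mathbf{1}_{\{j\in\mathcal{A}\}}$ and $b=(\pi_{i,j})_{j\in \mathcal{S}_i}$. Summing the constraints over $j$ and using $\sum_j \pi_{i,j}=k_i$ forces $\sum_\mathcal{A} x_\mathcal{A}=1$ automatically, so the output is a genuine probability distribution.

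\textbf{Verifying the Farkas condition.} The hard part is checking: for every $y\in\mathbb{R}^{|\mathcal{S}_i|}$ with $A^T y\ge 0$, one has $\langle y,b\rangle\ge 0$. The condition $A^T y\ge 0$ says that $\sum_{j\in\mathcal{A}} y_j\ge 0$ for \emph{every} $k_i$-subset $\mathcal{A}\subseteq\mathcal{S}_i$; applied to the particular subset consisting of the $k_i$ smallest coordinates of $y$, it gives $y_{(1)}+\cdots+y_{(k_i)}\ge 0$, where $y_{(1)}\le\cdots\le y_{(n_i)}$ denotes the sorted entries. Now, since $0\le\pi_{i,j}\le 1$ and $\sum_j \pi_{i,j}=k_i$, the linear functional $\pi\mapsto \sum_j y_j\pi_{i,j}$ is minimized over this polytope by placing mass $1$ on the $k_i$ coordinates with smallest $y_j$ (a standard rearrangement argument on the box-constrained simplex). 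Hence
\[
\sum_j y_j\pi_{i,j}\;\ge\; \sum_{\ell=1}^{k_i} y_{(\ell)} \;\ge\; 0,
\]
which is exactly $\langle y,b\rangle\ge 0$. Invoking the Farkas--Minkowski Lemma then yields a nonnegative solution $x$, and defining $\mathbb{P}(\mathcal{A}_i)=x_{\mathcal{A}_i}$ (and $0$ for $\mathcal{A}_i\not\subseteq\mathcal{S}_i$ or $|\mathcal{A}_i|\ne k_i$) completes the construction.

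The principal obstacle is precisely this Farkas verification: one must recognize that the marginal constraints $\pi_{i,j}\in[0,1]$ together with $\sum_j \pi_{i,j}=k_i$ are not just convenient bookkeeping but are exactly the hypotheses that align the minimum of $\sum_j y_j\pi_{i,j}$ over the feasible marginal polytope with the worst-case $k_i$-subset sum controlled by $A^T y\ge 0$. Without the upper bound $\pi_{i,j}\le 1$, one could concentrate all weight on a single index and the argument collapses; without $\sum_j\pi_{i,j}=k_i$, the match with $k_i$-subsets would fail. Once this combinatorial observation is in place, the rest is mechanical.
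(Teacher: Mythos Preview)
Your proof is correct, and the forward direction matches the paper's. The reverse direction, however, takes a genuinely different and more direct route. The paper proceeds by induction on $n_i=|\mathcal{S}_i|$: at the inductive step it singles out a node $h$, redistributes $\pi_{i,h}$ among the remaining nodes via a water-filling construction $\hat{\pi}_{i,j}=\pi_{i,j}+[u-\pi_{i,j}]^+$, and then invokes Farkas twice (once via the induction hypothesis on the smaller system, once to conclude for the larger one). Your argument instead applies Farkas a single time to the full system and verifies the dual condition by recognizing that the feasible marginal polytope $\{\pi\in[0,1]^{n_i}:\sum_j\pi_j=k_i\}$ is the hypersimplex, whose vertices are exactly the indicators of $k_i$-subsets; hence $\langle y,\pi\rangle$ is bounded below by the smallest $k_i$-subset sum of $y$, which is nonnegative by the hypothesis $A^Ty\ge 0$. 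Your approach is shorter and exposes the combinatorial content more transparently (the theorem is essentially the statement that the hypersimplex equals the convex hull of its $0$--$1$ points), while the paper's water-filling step is more constructive in spirit, though it too ultimately appeals to Farkas for existence rather than producing an explicit distribution.
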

\begin{proof}
	We first prove that the conditions $\sum_{j=1}^m \pi_{i,j} = k_i$ $\forall i$ and $\pi_{i,j}\in [0,1]$ are necessary. $\pi_{i,j}\in [0,1]$ for all $i,j$ is obvious due to its definition. Then, it is easy to show that
	\begin{eqnarray}
	& \D \sum_{j=1}^m \pi_{i,j} & = \sum_{j=1}^m \sum_{\mathcal{A}_i\subseteq \mathcal{S}_i} {\bf 1}_{\{ j\in\mathcal{A}_i\}} \mathbb{P}(\mathcal{A}_i) \nonumber \\
	&  & =\sum_{\mathcal{A}_i\subseteq \mathcal{S}_i} \sum_{j\in\mathcal{A}_i}  \mathbb{P}(\mathcal{A}_i) \nonumber \\
	& & =\sum_{\mathcal{A}_i\subseteq \mathcal{S}_i}  k_i \mathbb{P}(\mathcal{A}_i)  = k_i \label{eq:k_1}
	\end{eqnarray}
	where the first step is due to  (\ref{eq:pi}), ${\bf 1}_{\{ j\in\mathcal{A}_i\}}$ is an indicator function, which is 1 if $j\in\mathcal{A}_i$, and 0 otherwise. The second step changes the order of summation, the last step uses the fact that each set $\mathcal{A}_i$ contain exactly $k_i$ nodes and that $\sum_{\mathcal{A}_i\subseteq \mathcal{S}_i} \mathbb{P}(\mathcal{A}_i)=1$.
	
	Next, we prove that for any set of $\pi_{i,1},\ldots,\pi_{i,m}$ (i.e., node selection probabilities of file $i$) satisfying $\sum_{j=1}^m \pi_{i,j} = k_i$ and $\pi_{i,j}\in [0,1]$, there exists a probabilistic scheduling scheme with feasible load balancing probabilities $\mathbb{P}(\mathcal{A}_i)$ $\forall \mathcal{A}_i \subseteq \mathcal{S}_i$ to achieve the same node selection probabilities. We start by constructing $\mathcal{S}_i=\{j: \pi_{i,j} >0\}$, which is a set containing at least $k_i$ nodes, because there must be at least $k_i$ positive probabilities $\pi_{i,j}$ to satisfy $\sum_{j=1}^m \pi_{i,j} = k_i$. Then, we choose erasure code length $n_i=|\mathcal{S}_i|$ and place chunks on nodes in  $\mathcal{S}_i$. From (\ref{eq:pi}), we only need to show that when $\sum_{j\in\mathcal{S}_i} \pi_{i,j} = k_i$ and $\pi_{i,j}\in [0,1]$, the following system of $n_i$ linear equations have a feasible solution $\mathbb{P}(\mathcal{A}_i)$ $\forall \mathcal{A}_i\subseteq \mathcal{S}_i$:
	\begin{eqnarray}
	\sum_{\mathcal{A}_i\subseteq \mathcal{S}_i} {\bf 1}_{\{ j\in\mathcal{A}_i\}} \cdot \mathbb{P}(\mathcal{A}_i)=\pi_{i,j}, \ \forall j\in \mathcal{S}_i \label{eq:pi_1}
	\end{eqnarray}
	We prove the desired result using mathematical induction. It is easy to show that the statement holds for $n_i=k_i$. In this case, we have a unique solution $\mathcal{A}_i = \mathcal{S}_i$ and $\mathbb{P}(\mathcal{A}_i)=\pi_{i,j}=1$ for the system of linear equations (\ref{eq:pi_1}), because all chunks must be selected to recover file $i$. Now assume that the system of linear equations (\ref{eq:pi_1}) has a feasible solution for some $n_i\ge k_i$. Consider the case with arbitrary  $|\mathcal{S}_i+\{h\}|=n_i+1$ and $\pi_{i,h}+\sum_{j\in\mathcal{S}_i} \pi_{i,j}= k_i$. We have a system of linear equations:
	\begin{eqnarray}
	\sum_{\mathcal{A}_i\subseteq \mathcal{S}_i+\{h\}} {\bf 1}_{\{ j\in\mathcal{A}_i\}} \cdot \mathbb{P}(\mathcal{A}_i)=\pi_{i,j}, \ \forall j\in \mathcal{S}_i+\{h\} \label{eq:pi_2}
	\end{eqnarray}
	Using the Farkas-Minkowski Theorem \citep{Angell:02}, a sufficient and necessary condition that (\ref{eq:pi_2}) has a non-negative solution is that, for any $y_1,\ldots, y_m$ and $\sum_{j} y_j \pi_{i,j}<0$, we have
	\begin{eqnarray}
	\sum_{j\in \mathcal{S}_i+\{h\}} y_j {\bf 1}_{\{ j\in\mathcal{A}_i\}}  < 0 \ {\rm for \ some} \ \mathcal{A}_i\subseteq \mathcal{S}_i+\{h\}.\label{eq:y_1}
	\end{eqnarray}
	
	Toward this end, we construct $\hat{\pi}_{i,j} =\pi_{i,j} + [u-\pi_{i,j}]^{+}$ for all $j\in \mathcal{S}_i$. Here $[x]^{+}=\max(x, 0)$ is a truncating function and $u$ is a proper {\em water-filling level} satisfying
	\begin{eqnarray}
	\sum_{j\in \mathcal{S}_i}  [u-\pi_{i,j}]^{+} = \pi_{i,h}.  \label{eq:h_1}
	\end{eqnarray}
	It is easy to show that $\sum_{j\in \mathcal{S}_i} \hat{\pi}_{i,j}= \pi_{i,h} +\sum_{j\in \mathcal{S}_i} \pi_{i,j} =k_i$ and $\hat{\pi}_{i,j}\in[0,1]$, because $\hat{\pi}_{i,j} = \max(u,{\pi}_{i,j} ) \in[0, 1]$. Here we used the fact that $u<1$ since $k_i=\sum_{j\in \mathcal{S}_i} \hat{\pi}_{i,j}\ge\sum_{j\in \mathcal{S}_i} u \ge k_iu  $.  Therefore, the system of linear equations in (\ref{eq:pi_1}) with $\hat{\pi}_{i,j}$ on the right hand side must have a non-negative solution due to our induction assumption for $n_i=|\mathcal{S}_i|$. Furthermore, without loss of generality, we assume that $y_h\ge y_j$ for all $j\in \mathcal{S}_i$ (otherwise a different $h$ can be chosen). It implies that
	\begin{eqnarray}
	& \D \sum_{j\in \mathcal{S}_i} y_j \hat{\pi}_{i,j} & =  \sum_{j\in \mathcal{S}_i} y_j (\pi_{i,j}+[u-\pi_{i,j}]^{+}) \nonumber \\
	& & \stackrel {(a)} {\le} \sum_{j\in \mathcal{S}_i} y_j \pi_{i,j} + \sum_{j\in \mathcal{S}_i} y_h [u-\pi_{i,j}]^{+}  \nonumber \\
	& & \stackrel {(b)} {=}\sum_{j\in \mathcal{S}_i} y_j \pi_{i,j} + y_h \sum_{j\in \mathcal{S}_i} [u-\pi_{i,j}]^{+} \nonumber \\
	& & \stackrel {(c)} {=} \sum_{j\in \mathcal{S}_i} y_j \pi_{i,j} + y_h \pi_{i,h} \stackrel {(d)} {\le} 0, \label{eq:h_2}
	\end{eqnarray}
	where \em{(a)} uses $y_h\ge y_j$, \em{(b)} uses that $y_h$ is independent of $j$, \em{(c)} follows from (\ref{eq:h_1}) and the last step uses $\sum_{j} y_j \pi_{i,j}<0$.
	
	Applying the Farkas-Minkowski Theorem to the system of linear equations in (\ref{eq:pi_1}) with $\hat{\pi}_{i,j}$ on the right hand side, the existence of  a non-negative solution (due to our induction assumption for $n_i$) implies that $\sum_{j\in \mathcal{S}_i} y_j {\bf 1}_{\{ j\in\mathcal{A}_i\}}  < 0$ for some $\hat{\mathcal{A}_i}\subseteq \mathcal{S}_i$. It means that
	\begin{eqnarray}
	\sum_{j\in \mathcal{S}_i+\{h\}} y_j {\bf 1}_{\{ j\in\hat{\mathcal{A}}_i\}}  = y_h {\bf 1}_{\{ h\in\hat{\mathcal{A}}_i\}}+ \sum_{j\in \mathcal{S}_i} y_j {\bf 1}_{\{ j\in\hat{\mathcal{A}}_i\}} <0.
	\end{eqnarray}
	The last step uses ${\bf 1}_{\{ h\in\hat{\mathcal{A}}_i\}}=0$ since $h\notin\mathcal{S}_i$ and $\hat{\mathcal{A}_i}\subseteq \mathcal{S}_i$. This is exactly the desired inequality in (\ref{eq:y_1}). Thus, (\ref{eq:pi_2}) has a non-negative solution due to the Farkas-Minkowski Theorem. The induction statement holds for $n_i+1$. Finally, the solution indeed gives a probability distribution since $\sum_{\mathcal{A}_i\subseteq \mathcal{S}_i+\{h\}}  \mathbb{P}(\mathcal{A}_i) =\sum_j \pi_{i,j} /k_i=1$ due to (\ref{eq:k_1}). This completes the proof.
	\end{proof}
The proof of Theorem~\ref{th:thm_prob} relies on Farkas-Minkowski Theorem \citep{Angell:02}. Intuitively, $\sum_{j=1}^m \pi_{i,j} = k_i$ holds because each batch of requests is dispatched to exact $k_i$ distinct nodes. Moreover, a node does not host file $i$ chunks should not be selected, meaning that $ \pi_{i,j}=0$ if $j\notin \mathcal{S}_i$. Using this result, it is sufficient to study probabilistic scheduling via conditional probabilities $\pi_{i,j}$, which greatly simplifies our analysis. In particular, it is easy to verify that under our model, the arrival of chunk requests at node $j$ form a Poisson Process with rate $\Lambda_j=\sum_i \lambda_i\pi_{i,j}$, which is the superposition of $r$ Poisson processes each with rate $\lambda_i\pi_{i,j}$, $\mu_j$ is the service rate of node $j$. The resulting queuing system under probabilistic scheduling is stable if all local queues are stable.

\begin{corollary} \label{th:stability}
	The queuing system can be stabilized by a probabilistic scheduling policy under request arrival rates $\lambda_1,\lambda_2,\ldots,\lambda_r$ if there exists $\{\pi_{i,j}\in [0,1], \forall i,j\}$ satisfying (\ref{eq:thm_prob}) and
	\vspace{-2mm}
	\begin{eqnarray}
	\Lambda_j = \sum_i \lambda_i\pi_{i,j} < \mu_j, \ \forall j. \label{eq:lambda}
	\end{eqnarray}
\end{corollary}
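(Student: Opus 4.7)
The plan is to combine Theorem~\ref{th:thm_prob} with a per-node M/G/1 stability argument. By Theorem~\ref{th:thm_prob}, the existence of $\{\pi_{i,j}\}$ satisfying \eqref{eq:thm_prob} guarantees a feasible probabilistic scheduling policy with some $\{\mathbb{P}(\mathcal{A}_i)\}$ realizing these marginals, so it suffices to show that, under any such policy, condition \eqref{eq:lambda} implies stability of the joint queuing system.

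First I would characterize the arrival process at each node. Since file $i$ requests arrive as an independent Poisson process of rate $\lambda_i$ and, conditional on an arrival, node $j$ is selected to serve one of the $k_i$ chunk requests with marginal probability $\pi_{i,j}$ (independently across arrivals by the assumption that routing decisions are i.i.d.), standard Poisson thinning gives that the file-$i$ chunk arrivals at node $j$ form a Poisson process of rate $\lambda_i\pi_{i,j}$. Because the per-file arrival streams are mutually independent, superposition yields that the aggregate arrival at node $j$ is Poisson of rate $\Lambda_j = \sum_i \lambda_i \pi_{i,j}$.

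Next, since each node buffers requests in an infinite-capacity FCFS queue and serves them with i.i.d.\ service times of mean $1/\mu_j$ (independent of arrivals by assumption), node $j$ in isolation behaves as an M/G/1 queue with load $\rho_j = \Lambda_j/\mu_j$, which is stable (positive recurrent, finite expected sojourn) exactly when $\rho_j < 1$, i.e.\ precisely \eqref{eq:lambda}. Finally, since a file $i$ request completes once all $k_i$ of its chunk requests have been processed, its sojourn time is bounded by the maximum of the sojourn times of its $k_i$ chunk requests, each of which has finite expectation in the corresponding stable M/G/1 queue. Hence the overall system is stable.

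The mildly delicate step is the Poisson thinning at a single node: within one batch, the indicators $\mathbf{1}_{\{j\in\mathcal{A}_i\}}$ across different $j$ are correlated (exactly $k_i$ of them equal one), so the per-node arrival processes are \emph{not} independent across $j$. This does not affect the present claim, however, because stability of each individual M/G/1 queue depends only on its own marginal arrival rate, and the computation of the rate at node $j$ uses only the marginal $\pi_{i,j}=\sum_{\mathcal{A}_i}\mathbb{P}(\mathcal{A}_i)\mathbf{1}_{\{j\in\mathcal{A}_i\}}$ from \eqref{eq:pi}. Thus per-node stability under \eqref{eq:lambda} is sufficient, which establishes the corollary.
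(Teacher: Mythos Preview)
Your proposal is correct and follows essentially the same approach as the paper. The paper does not give a separate proof of the corollary; it simply notes in the text preceding it that the arrival of chunk requests at node $j$ forms a Poisson process of rate $\Lambda_j=\sum_i \lambda_i\pi_{i,j}$ (as a superposition of thinned Poisson processes) and that the system is stable if all local M/G/1 queues are stable, which is exactly the argument you have spelled out in more detail, including the helpful remark that cross-node correlation of the routing indicators does not affect the per-node stability analysis.
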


We let uniform probabilistic scheduling be defined as the case when $\pi_{i,j} = k_i/n_i$ for $j\in {\mathcal S}_i$. If all $k_i=k$, $n_i=n$, $m=n$, $r=1$, $\lambda_1=\lambda$, $\mu_j=\mu$, we have the stable region as $\lambda < n\mu/k$, which is the same as for the case of Fork-Join Queues, while for general service times. In contrast, this stability region is only valid for Fork-Join queues for only exponential service times. 
\section{Characterization of Mean Latency}\label{mean_ps}
An exact analysis of the queuing latency of probabilistic scheduling is still hard because local queues at different storage nodes are dependent of each other as each batch of chunk requests are dispatched jointly.

Since local queues at different storage nodes are dependent of each
other as each batch of chunk requests are jointly dispatched,
the exact analysis of the queuing latency of probabilistic scheduling
is not tractable. Thus, we will use probabilistic scheduling to bound
the mean latency and since probabilistic scheduling is a feasible
strategy, the obtained bound is an upper bound to the optimal strategy.
We define ${\bf W}_{i,j}$ as the random waiting time (sojourn
time) in which a chunk request (for file $i$) spends in the queue
of node $j$. Typically, the  latency of file $i$, denoted
as ${\bf Q}_{i}$, request is determined by the maximum latency
that $k_{i}$ chunk requests experience on distinct servers. These
servers are probabilistically scheduled with a prior known probabilities,
i.e., $\pi_{i,j}$. Thus, we have 

\begin{equation}
\mathbb{E}[{\bf Q}_{i}]\triangleq\mathbb{E}_{{\bf W}_{i,j}}\left[\mathbb{E}_{\mathcal{A}_{i}}\left[\underset{j\in\mathcal{A}_{i}}{\text{max}}\,{\bf W}_{i,j}\right]\right]\label{eq:Q_j_max}
\end{equation}
where the first expectation $\mathbb{E}_{{\bf W}_{j}}$ is taken
over system queuing dynamics and the second expectation $\mathbb{E}_{\mathcal{A}_{i}}$
is taken over random dispatch decisions $\mathcal{A}_{i}$. Hence,
we derive an upper-bound on the expected latency of a file $i$, i.e.,
$\mathbb{E}[{\bf Q}_{i}]$, as follows. Using Jensen\textquoteright s inequality
\citep{kuczma2009introduction1}, we have for $t_{i}>0$

\begin{equation}
e^{t_{i}\mathbb{E}\left[{\bf Q}_{i}\right]}\leq\mathbb{E}\left[e^{t_{i}{\bf Q}_{i}}\right]\label{eq:jenIneq}
\end{equation}

We notice from (\ref{eq:jenIneq}) that by bounding the moment generating
function of ${\bf Q}_{i}$, we are bounding the mean latency of
file $i$. Then, 

\begin{eqnarray}
\mathbb{E}\left[e^{t_{i}{\bf Q}_{i}}\right] & \overset{(a)}{=} & \mathbb{E}_{\mathcal{A}_{i},{\bf W}_{i,j}}\left[\underset{j\in\mathcal{A}_{i}}{\text{max}}\,e^{t_{i}{\bf W}_{i,j}}\right]\\
& = & \mathbb{E}_{\mathcal{A}_{i}}\left[\mathbb{E}_{{\bf W}_{i,j}}\left[\underset{j\in\mathcal{A}_{i}}{\text{max}}\,e^{t_{i}{\bf W}_{i,j}}\left|\mathcal{A}_{i}\right.\right]\right]\\
& \overset{(b)}{\leq} & \mathbb{E}_{\mathcal{A}_{i}}\left[{\displaystyle \sum_{j\in\mathcal{A}_{i}}}{\displaystyle \mathbb{E}_{{\bf W}_{i,j}}}\left[e^{t_{i}{\bf W}_{i,j}}\right]\right]
\end{eqnarray}
\begin{eqnarray}
& = & \mathbb{E}_{\mathcal{A}_{i}}\left[{\displaystyle \sum_{j}}{\displaystyle \mathbb{E}_{{\bf W}_{i,j}}}\left[e^{t_{i}{\bf W}_{i,j}}\right]\text{\ensuremath{\boldsymbol{1}}}_{(j\in\mathcal{A}_{i})}\right]\\
& = & {\displaystyle \sum_{j}}{\displaystyle \mathbb{E}_{{\bf W}_{i,j}}}\left[e^{t_{i}{\bf W}_{i,j}}\right]\mathbb{E}_{\mathcal{A}_{i}}\left[\text{\ensuremath{\boldsymbol{1}}}_{(j\in\mathcal{A}_{i})}\right]\\
& = & {\displaystyle \sum_{j}}{\displaystyle \mathbb{E}_{{\bf W}_{i,j}}}\left[e^{t_{i}{\bf W}_{i,j}}\right]\mathbb{P}(j\in\mathcal{A}_{i})\\
& \overset{(c)}{=} & {\displaystyle \sum_{j}\pi_{i,j}}{\displaystyle \mathbb{E}_{{\bf W}_{i,j}}}\left[e^{t_{i}{\bf W}_{i,j}}\right]\label{eq:MGF_Q_j}
\end{eqnarray}
where $(a)$ follows from (\ref{eq:Q_j_max}) and (\ref{eq:jenIneq}),
$(b)$ follows by replacing the $\underset{j\in\mathcal{A}_{i}}{\text{max}}$
by ${\displaystyle \sum_{j\in\mathcal{A}_{i}}}$ and $(c)$ follows
by probabilistic scheduling. We note that the only inequality here
is for replacing the maximum by the sum. However, since this term
will be inside the logarithm for the mean latency, the gap between
the term and its bound becomes additive rather than multiplicative.
Since the request pattern is Poisson and the service time is general
distributed, the Laplace-Stieltjes Transform of the waiting time ${\bf W}_{i,j}$
can be characterized using Pollaczek-Khinchine formula for M/G/1 queues
\citep{zwart2000sojourn1} as follows

\begin{equation}
{\displaystyle \mathbb{E}}\left[e^{t_{i}{\bf W}_{i,j}}\right]=\frac{\left(1-\rho_{j}\right)t_{i}\mathbb{Z}_{j}(t_{i})}{t_{i}-\Lambda_{j}\left(\mathbb{Z}_{j}(t_{i})-1\right)}\label{eq:mg1_waitingTime}
\end{equation}
where $\rho_{j}=\Lambda_{j}{\displaystyle \mathbb{E}}\left[\mathbb{X}_{j}\right]=\Lambda_{j}\left[\frac{d}{dt}\mathbb{Z}_{j}(t_{i})\left|_{t_{i}=0}\right.\right]$
and $\mathbb{Z}_{j}(t_{i})$ is the moment generating function of the chunk service time. Plugging
(\ref{eq:mg1_waitingTime}) in (\ref{eq:MGF_Q_j}) and substituting
in (\ref{eq:jenIneq}), we get the following Theorem. 

\begin{theorem}\label{psmeangen}
	The mean latency for file $i$ is bounded by
	
	\begin{equation}
	\mathbb{E}[{\bf Q}_{i}]\leq\frac{1}{t_{i}}\mbox{log}\left(\sum_{j=1}^{m}\pi_{i,j}\frac{(1-\rho_{j})t_{i}\mathbb{Z}_{j}(t_{i})}{t_{i}-\Lambda_{j}\left(\mathbb{Z}_{j}(t_{i})-1\right)}\right)\label{eq:upper_bound_T-1}
	\end{equation}
	
	for any $t_{i}>0$, $\rho_{j}=\Lambda_{j}\left[\frac{d}{dt}\mathbb{Z}_{j}(t_{i})\left|_{t_{i}=0}\right.\right]$, $\rho_j<1$,
	and $\Lambda_{j}\left(\mathbb{Z}_{j}(t_{i})-1\right)<t_{i}$. 
\end{theorem}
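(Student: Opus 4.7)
The plan is to bound the moment generating function (MGF) of the file-$i$ completion time ${\bf Q}_i$ and then pass to a bound on the mean via Jensen's inequality. Writing ${\bf Q}_i = \max_{j\in\mathcal{A}_i} {\bf W}_{i,j}$, where $\mathcal{A}_i$ is the random $k_i$-subset of servers selected by probabilistic scheduling, Jensen's inequality applied to the convex function $e^{t_i x}$ gives $e^{t_i \mathbb{E}[{\bf Q}_i]} \le \mathbb{E}[e^{t_i {\bf Q}_i}]$ for every $t_i>0$. Taking $\tfrac{1}{t_i}\log$ of both sides, after bounding the right-hand side, will yield exactly the inequality claimed in the theorem.

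To bound the MGF, I would first eliminate the maximum using the trivial inequality $e^{t_i\max_{j\in\mathcal{A}_i}{\bf W}_{i,j}} = \max_{j\in\mathcal{A}_i} e^{t_i {\bf W}_{i,j}} \le \sum_{j\in\mathcal{A}_i} e^{t_i {\bf W}_{i,j}}$. This is the only genuine looseness in the argument; because the sum sits inside a logarithm in the final expression, the resulting gap is additive rather than multiplicative. Rewriting $\sum_{j\in\mathcal{A}_i}(\cdot) = \sum_{j=1}^m (\cdot)\,\mathbf{1}_{\{j\in\mathcal{A}_i\}}$ and taking expectation over $\mathcal{A}_i$ (which is independent of the queueing dynamics by the definition of probabilistic scheduling), the indicator averages to $\mathbb{P}(j\in\mathcal{A}_i) = \pi_{i,j}$ by Theorem~\ref{th:thm_prob}. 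Thus $\mathbb{E}[e^{t_i {\bf Q}_i}] \le \sum_{j=1}^m \pi_{i,j}\,\mathbb{E}[e^{t_i {\bf W}_{i,j}}]$.

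Finally, I would plug in the MGF of the sojourn time at each local queue. Under probabilistic scheduling, arrivals at server $j$ form a superposition of independent thinned Poisson streams, hence are Poisson with aggregate rate $\Lambda_j = \sum_i \lambda_i \pi_{i,j}$, while chunk service times have MGF $\mathbb{Z}_j(\cdot)$. The Pollaczek-Khinchine transform formula for the M/G/1 sojourn time then gives $\mathbb{E}[e^{t_i {\bf W}_{i,j}}] = (1-\rho_j)\,t_i\, \mathbb{Z}_j(t_i)/[t_i - \Lambda_j(\mathbb{Z}_j(t_i)-1)]$ with $\rho_j = \Lambda_j\,\mathbb{E}[\mathbb{X}_j]$, valid provided $\rho_j<1$ and $\Lambda_j(\mathbb{Z}_j(t_i)-1)<t_i$ so that the transform is well-defined and positive. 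Substituting into the previous bound and applying $\tfrac{1}{t_i}\log(\cdot)$ completes the derivation and yields the stated conditions on $t_i$.

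The main obstacle, conceptually, is that the waiting times $\{{\bf W}_{i,j}\}_{j\in\mathcal{A}_i}$ are \emph{not} independent across servers, since batches are jointly dispatched and the same file-$i$ batch loads $k_i$ queues simultaneously; a direct computation of the MGF of the maximum is therefore intractable. The device that sidesteps this difficulty is the $\max\le\sum$ bound, which depends only on the \emph{marginal} distribution of each ${\bf W}_{i,j}$. Each marginal is then characterized by a standard M/G/1 analysis, once one confirms (via Poisson thinning and superposition) that the arrival stream at server $j$ is Poisson with rate $\Lambda_j$. The remaining work is bookkeeping on the convergence conditions for the P-K transform, which gives the stated validity range $t_i>0$, $\rho_j<1$, and $\Lambda_j(\mathbb{Z}_j(t_i)-1)<t_i$.
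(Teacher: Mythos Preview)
Your proposal is correct and follows essentially the same approach as the paper: Jensen's inequality to pass from the mean to the MGF, the $\max\le\sum$ bound inside the expectation, averaging the indicator $\mathbf{1}_{\{j\in\mathcal{A}_i\}}$ to produce $\pi_{i,j}$, and then the Pollaczek--Khinchine transform for the M/G/1 sojourn time at each server. Even your remark that the only genuine looseness is the $\max\le\sum$ step, and that it becomes additive inside the logarithm, mirrors the paper's own commentary.
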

Note that the above Theorem holds only in the range of $t_{i}$ when
$t_{i}-\Lambda_{j}\left(\mathbb{Z}_{j}(t)-1\right)>0$.
Further, the server utilization $\rho_{j}$ must be less than $1$
for stability of the system.

We now specialize this result for shifted exponential distribution. Let the service time distribution from server $j$, $\mathbb{X}_{j}$, has probability density function $f_{\mathbb{X}_{j}}(x)$, given as

\begin{equation}
f_{\mathbb{X}_{j}}(x) = \begin{cases}
\alpha_j e^{-\alpha_j(x-\beta_j)} & {\text{ for }} x\ge \beta_j\\
0 & {\text{ for }} x<\beta_j
\end{cases}\label{dist_sexp}
\end{equation}
Exponential distribution is a special case for $\beta_j=0$. The moment generating function, $\mathbb{Z}_{j}(t)$ is given as
\begin{equation}
\mathbb{Z}_{j}(t) = \frac{\alpha_j}{\alpha_j-t} e^{\beta_j t} \text{ for } t<\alpha_j. \label{sexpmgf}
\end{equation}
Using \eqref{sexpmgf} in Theorem \ref{psmeangen}, we have

\begin{corollary}
	The mean latency for file $i$ for Shifted Exponential Service time at each server is bounded by
	
	\begin{equation}
	\mathbb{E}[{\bf Q}_{i}]\leq\frac{1}{t_{i}}\mbox{log}\left(\sum_{j=1}^{m}\pi_{i,j}\frac{(1-\rho_{j})t_{i}\mathbb{Z}_{j}(t_{i})}{t_{i}-\Lambda_{j}\left(\mathbb{Z}_{j}(t_{i})-1\right)}\right)
	\end{equation}
	
	for any $t_{i}>0$, $\rho_{j}=\Lambda_{j}\left(\frac{1}{\alpha_j}+\beta_j\right)$, $\rho_j<1$,
	 $t_i(t_i-\alpha_j+\Lambda_j) +\Lambda_j \alpha_j(e^{\beta_j t_i}-1)<0$, and $\mathbb{Z}_{j}(t) = \frac{\alpha_j}{\alpha_j-t} e^{\beta_j t}$. 
\end{corollary}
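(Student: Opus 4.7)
The plan is to derive the corollary as a direct specialization of Theorem~\ref{psmeangen} to the shifted exponential service distribution ${\rm Sexp}(\beta_j,\alpha_j)$ defined in \eqref{dist_sexp}. Since the theorem already provides a general bound in terms of $\mathbb{Z}_j(t_i)$, $\rho_j$, and $\Lambda_j$, the only work is to compute these quantities for the shifted exponential and to rewrite the validity conditions in the cleaner form stated in the corollary.

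First, I would record the moment generating function for ${\rm Sexp}(\beta_j,\alpha_j)$. By direct integration of $f_{\mathbb{X}_j}(x) = \alpha_j e^{-\alpha_j(x-\beta_j)}$ for $x \ge \beta_j$, one obtains $\mathbb{Z}_j(t) = \frac{\alpha_j}{\alpha_j - t} e^{\beta_j t}$ for $t < \alpha_j$, which matches \eqref{sexpmgf}. Next, to obtain $\rho_j$, I would differentiate $\mathbb{Z}_j(t)$ at $t = 0$; equivalently, since a shifted exponential is a deterministic shift $\beta_j$ plus an $\text{Exp}(\alpha_j)$ random variable, $\mathbb{E}[\mathbb{X}_j] = \beta_j + 1/\alpha_j$, so $\rho_j = \Lambda_j \mathbb{E}[\mathbb{X}_j] = \Lambda_j(\beta_j + 1/\alpha_j)$, which is exactly the form stated.

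The remaining step is to translate the condition $t_i - \Lambda_j(\mathbb{Z}_j(t_i) - 1) > 0$ from the theorem into the polynomial-plus-exponential inequality in the corollary. Substituting the expression for $\mathbb{Z}_j(t_i)$ and multiplying through by $(\alpha_j - t_i)$, which is positive on the admissible range $t_i < \alpha_j$, I would collect terms to get $t_i \alpha_j - t_i^2 - \Lambda_j t_i > \Lambda_j \alpha_j(e^{\beta_j t_i} - 1)$, and then rearrange to the equivalent statement
\begin{equation}
t_i(t_i - \alpha_j + \Lambda_j) + \Lambda_j \alpha_j(e^{\beta_j t_i} - 1) < 0.
\end{equation}
Combined with the stability requirement $\rho_j < 1$ inherited from Theorem~\ref{psmeangen}, this yields precisely the conditions in the corollary, and substituting $\mathbb{Z}_j(t_i) = \frac{\alpha_j}{\alpha_j - t_i} e^{\beta_j t_i}$ into \eqref{eq:upper_bound_T-1} gives the bound.

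I do not foresee a genuine obstacle here, since the proof is a substitution plus routine algebraic manipulation of a rational-exponential inequality; the only care required is to track the sign of $\alpha_j - t_i$ when clearing the denominator and to note that the joint conditions $\rho_j < 1$ and $t_i(t_i - \alpha_j + \Lambda_j) + \Lambda_j \alpha_j(e^{\beta_j t_i} - 1) < 0$ together imply $t_i < \alpha_j$, so that the MGF itself is finite and the manipulations are valid.
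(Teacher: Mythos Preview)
Your proposal is correct and matches the paper's approach exactly: the paper states the corollary follows by ``Using \eqref{sexpmgf} in Theorem \ref{psmeangen}'', i.e., direct substitution of the shifted-exponential MGF into the general bound, with the validity condition rewritten in the polynomial-exponential form you derive. Your write-up is in fact more explicit than the paper's, which does not spell out the algebra for the condition or the observation that $t_i < \alpha_j$ is implied.
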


Further, the exponential distribution has $\beta_j=0$, and the result for exponential follows as a special case. We note that the bound presented here has been shown to outperform that in \citep{Yu_TON,Xiang:2014:Sigmetrics:2014} in \citep{Abubakr_meantail}. The comparison between the two is further illustrated in Figure \ref{lat_vs_arrRate}.  The bound in \citep{Yu_TON,Xiang:2014:Sigmetrics:2014}  is further shown to be better than that in \citep{Joshi:13}. Moreover, replication coding follows as a special case when $k_{i}=1$
and thus the proposed upper bound for file download can be used to
bound the latency of replication based systems by setting $k_{i}=1$. 

\if 0
To this end, we note that the authors
of \citep{Yu_TON,Xiang:2014:Sigmetrics:2014} gave an upper bound for
mean file download time using probabilistic scheduling. However, the
bound in this paper is different and tighter since we use moment generating
function based bound rather than bounding the maximum of $m$ random
variables by their sum as in \citep{Yu_TON,Xiang:2014:Sigmetrics:2014}.
Additionally, the two bounds are compared in Section \ref{sec:Numerical-Results}
and the bound in this paper is shown to outperform that in \citep{Yu_TON,Xiang:2014:Sigmetrics:2014}.

\Lambda_{j}\left(\beta_{j}+\frac{1}{\alpha_{j}}\right)
which reduces
to $t_{i}\left(t_{i}-\alpha_{j}+\Lambda_{j}\right)+\alpha_{j}\Lambda_{j}\left(e^{\beta_{j}t_{i}}-1\right)<0$
\fi

For completeness, we will also present the  latency bound provided in \citep{Yu_TON,Xiang:2014:Sigmetrics:2014}.

\begin{theorem}
	The mean latency of file $i$ is bounded by 
	\begin{equation}
		\mathbb{E}[{\bf Q}_{i}] \le z_i + \frac{1}{2}\underset{j\in\mathcal{A}_{i}}{\sum}\, \pi_{i,j} \left[{\mathbb E}[{\bf W}_{i,j}]-z_i + \sqrt{({\mathbb E}[{\bf W}_{i,j}]-z_i)^2 + {\text Var}[{\bf W}_{i,j}]}\right],
	\end{equation}
where ${\mathbb E}[{\bf W}_{i,j}]$ and ${\text Var}[{\bf W}_{i,j}]$ are given from calculating moments of the moment generating function in  \eqref{eq:mg1_waitingTime}. \label{old_upper_bound}
\end{theorem}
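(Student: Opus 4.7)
The plan is to prove this bound by combining the standard inequality $\max_j x_j \le z + \sum_j (x_j-z)^+$ with a classical second-moment bound on $\mathbb{E}[(W-z)^+]$, and then to take the outer expectation over the random dispatch set $\mathcal{A}_i$ using probabilistic scheduling.

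First, I would condition on the dispatched set $\mathcal{A}_i$. For any real parameter $z_i$, the deterministic pointwise inequality
\begin{equation}
\max_{j\in\mathcal{A}_i} {\bf W}_{i,j} \;\le\; z_i \;+\; \sum_{j\in\mathcal{A}_i} ({\bf W}_{i,j}-z_i)^{+}
\end{equation}
holds because each term on the right is nonnegative and at least one dominates the maximum when $\max > z_i$. Taking expectation over the queueing randomness yields
\begin{equation}
\mathbb{E}_{{\bf W}_{i,j}}\!\Bigl[\max_{j\in\mathcal{A}_i}{\bf W}_{i,j}\,\big|\,\mathcal{A}_i\Bigr] \;\le\; z_i \;+\; \sum_{j\in\mathcal{A}_i} \mathbb{E}\bigl[({\bf W}_{i,j}-z_i)^{+}\bigr].
\end{equation}

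Next, I would apply the identity $(W-z)^{+} = \tfrac{1}{2}\bigl(|W-z|+(W-z)\bigr)$ together with Jensen's inequality on the convex function $x\mapsto x^2$, which gives $\mathbb{E}|W-z| \le \sqrt{\mathbb{E}[(W-z)^2]} = \sqrt{(\mathbb{E}[W]-z)^2 + \mathrm{Var}[W]}$. Substituting this in, we get the single-queue moment bound
\begin{equation}
\mathbb{E}\bigl[({\bf W}_{i,j}-z_i)^{+}\bigr] \;\le\; \tfrac{1}{2}\Bigl[\,\mathbb{E}[{\bf W}_{i,j}]-z_i \;+\; \sqrt{(\mathbb{E}[{\bf W}_{i,j}]-z_i)^2 + \mathrm{Var}[{\bf W}_{i,j}]}\,\Bigr].
\end{equation}

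Finally, I would take the outer expectation over the random choice of $\mathcal{A}_i$ under probabilistic scheduling. Writing $\sum_{j\in\mathcal{A}_i}(\cdot) = \sum_{j=1}^{m} \mathbf{1}_{\{j\in\mathcal{A}_i\}}(\cdot)$ and using $\mathbb{E}_{\mathcal{A}_i}[\mathbf{1}_{\{j\in\mathcal{A}_i\}}] = \pi_{i,j}$ (which is exactly the definition of $\pi_{i,j}$ established earlier in the chapter via Theorem~\ref{th:thm_prob} and equation~(\ref{eq:pi})), the bound combines to produce
\begin{equation}
\mathbb{E}[{\bf Q}_i] \;\le\; z_i \;+\; \tfrac{1}{2}\sum_{j=1}^{m} \pi_{i,j}\Bigl[\,\mathbb{E}[{\bf W}_{i,j}]-z_i + \sqrt{(\mathbb{E}[{\bf W}_{i,j}]-z_i)^2 + \mathrm{Var}[{\bf W}_{i,j}]}\,\Bigr],
\end{equation}
which is the claimed inequality (the summation index $j\in\mathcal{A}_i$ in the theorem statement is read as summing over all $j$ weighted by $\pi_{i,j}$). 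The moments $\mathbb{E}[{\bf W}_{i,j}]$ and $\mathrm{Var}[{\bf W}_{i,j}]$ are then obtained as the first and second derivatives at $t=0$ of the Pollaczek--Khinchine Laplace--Stieltjes transform~(\ref{eq:mg1_waitingTime}), using that each local queue is $M/G/1$ with arrival rate $\Lambda_j$.

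The substantive step is the pointwise max-sum decomposition coupled with the $L^1$-via-$L^2$ Jensen bound; everything else is routine. The only mild subtlety is that the bound is valid for every real $z_i$, so in practice one would subsequently optimize over $z_i$ (by differentiation) to tighten it, but existence of the inequality for arbitrary $z_i$ already yields the stated theorem.
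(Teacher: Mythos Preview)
Your proposal is correct and follows essentially the same approach as the paper: the paper also uses the pointwise inequality $\max_j W_j \le z_i + \sum_j (W_j - z_i)^+$ (written there as $\max \le z_i + [\max - z_i]^+$ followed by $\max \le \sum$), then rewrites $(x)^+ = \tfrac{1}{2}(x + |x|)$ and bounds $\mathbb{E}|W-z_i|$ by $\sqrt{(\mathbb{E}[W]-z_i)^2 + \mathrm{Var}[W]}$ via Jensen, finally averaging over $\mathcal{A}_i$ to produce the $\pi_{i,j}$ weights. Your reading of the summation index in the statement is also the intended one.
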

\begin{proof}
	\begin{eqnarray}
	\mathbb{E}[{\bf Q}_{i}]&=&\mathbb{E}_{{\bf W}_{i,j}}\left[\mathbb{E}_{\mathcal{A}_{i}}\left[\underset{j\in\mathcal{A}_{i}}{\text{max}}\,{\bf W}_{i,j}\right]\right]\nonumber\\
	&\le & \mathbb{E}_{{\bf W}_{i,j}}\left[\mathbb{E}_{\mathcal{A}_{i}}\left[ z_i + \left[\underset{j\in\mathcal{A}_{i}}{\text{max}}\,{\bf W}_{i,j}-z_i\right]^+\right]\right]\nonumber\\
	&=& \mathbb{E}_{{\bf W}_{i,j}}\left[\mathbb{E}_{\mathcal{A}_{i}}\left[ z_i + \underset{j\in\mathcal{A}_{i}}{\text{max}}\,\left[{\bf W}_{i,j}-z_i\right]^+\right]\right]\nonumber\\
	&\le& \mathbb{E}_{{\bf W}_{i,j}}\left[\mathbb{E}_{\mathcal{A}_{i}}\left[ z_i + \underset{j\in\mathcal{A}_{i}}{\sum}\,\left[{\bf W}_{i,j}-z_i\right]^+\right]\right]\nonumber\\
	&=& \mathbb{E}_{{\bf W}_{i,j}}\left[\mathbb{E}_{\mathcal{A}_{i}}\left[ z_i + \frac{1}{2}\underset{j\in\mathcal{A}_{i}}{\sum}\,\left[{\bf W}_{i,j}-z_i + |{\bf W}_{i,j}-z_i|\right]\right]\right]\nonumber\\
		&=& \mathbb{E}_{{\bf W}_{i,j}}\left[ z_i + \frac{1}{2}\underset{j\in\mathcal{A}_{i}}{\sum}\, \pi_{i,j} \left[{\bf W}_{i,j}-z_i + |{\bf W}_{i,j}-z_i|\right]\right]\nonumber
	\end{eqnarray}
We note that ${\mathbb E}[{\bf W}_{i,j}]$ can be found from \eqref{eq:mg1_waitingTime}. Further, ${\mathbb E}[ |{\bf W}_{i,j}-z_i|]$ can be upper bounded as 
\begin{equation}
{\mathbb E}[ |{\bf W}_{i,j}-z_i|]\le \sqrt{({\mathbb E}[{\bf W}_{i,j}]-z_i)^2 + {\text Var}[{\bf W}_{i,j}]},
\end{equation}
where both ${\mathbb E}[{\bf W}_{i,j}]$ and ${\text Var}[{\bf W}_{i,j}]$ can be found using \eqref{eq:mg1_waitingTime}. 
\end{proof}

\section{Characterization of Tail Latency}\label{tail_ps}
Latency tail probability of file $i$ is defined as the probability
that the latency tail is greater than (or equal) to a given number
$\sigma$, i.e., Pr$\left({\bf Q}_{i}\geq\sigma\right)$. Since
evaluating Pr$\left({\bf Q}_{i}\geq\sigma\right)$ in closed-form
is hard \citep{MDS_queue,Yu_TON,Xiang:2014:Sigmetrics:2014,CS14,MG1:12},
we derive a tight upper bound on the latency tail probability using
Probabilistic Scheduling as follows \citep{Jingxian,al2019ttloc}.

\begin{eqnarray}
\text{Pr}\left({\bf Q}_{i}\geq\sigma\right) & \overset{(d)}{=} & \Pr\left(\max_{j\in\mathcal{A}_{i}}{\bf W}_{i,j}\ge\sigma\right)\\
 & = & \mathbb{E}_{\mathcal{A}_{i}}\left[\mathbb{E}_{{\bf W}_{i,j}}\left[\max_{j\in\mathcal{A}_{i}}\,{\bf W}_{i,j}\ge\sigma\left|\mathcal{A}_{i}\right.\right]\right]\\
 & = & \mathbb{E}_{\mathcal{A}_{i},{\bf W}_{i,j}}\left[\max_{j\in\mathcal{A}_{i}}\boldsymbol{1}_{({\bf W}_{i,j}\ge\sigma)}\right]\\
 & \overset{(e)}{\le} & \mathbb{E}_{\mathcal{A}_{i},{\bf W}_{i,j}}\left[\sum_{j\in\mathcal{A}_{i}}\left[1_{({\bf W}_{i,j}\ge\sigma)}\right]\right]\\
 & = & \mathbb{E}_{\mathcal{A}_{i}}\left[\sum_{j\in\mathcal{A}_{i}}\left[\Pr({\bf W}_{i,j}\ge\sigma)\right]\right]\\
 & \overset{(f)}{=} & \sum_{j}\pi_{i,j}\left[\Pr({\bf W}_{i,j}\ge\sigma)\right]\label{eq:P_Qi_sigma}
\end{eqnarray}
where $(d)$ follows from (\ref{eq:Q_j_max})\footnote{As the time to reconstruct the file $i$ is the maximum of the time
of reconstructing all the chunks from the set $\mathcal{A}_{i}.$}, $(e)$ follows by bounding $\max_{j\in\mathcal{A}_{i}}$ by $\sum_{j\in\mathcal{A}_{i}}$
and $(f)$ follows from probabilistic scheduling. To evaluate $\Pr({\bf W}_{i,j}\ge\sigma)$,
we use Markov Lemma, i.e., 

\begin{align}
\ensuremath{\Pr({\bf W}_{i,j}\ge\sigma)\le} & \frac{\mathbb{E}[e^{t_{i,j}{\bf W}_{i,j}}]}{e^{t_{i,j}\sigma}}\nonumber \\
\overset{(g)}{=} & \frac{1}{e^{t_{i,j}\sigma}}\frac{\left(1-\rho_{j}\right)t_{i,j}\mathbb{Z}_{j}(t_{i,j})}{t_{i,j}-\Lambda_{j}\left(\mathbb{Z}_{j}(t_{i,j})-1\right)}\label{eq:MarkovLemma}
\end{align}
where $(g)$ follows from (\ref{eq:mg1_waitingTime}). Plugging (\ref{eq:MarkovLemma})
in (\ref{eq:P_Qi_sigma}), we have the following Lemma. 
\begin{theorem}
Under probabilistic scheduling, the latency tail probability for file
$i$, i.e., $\text{Pr}\left({\bf Q}_{i}\geq\sigma\right)$ is bounded
by

\begin{align}
\text{Pr}\left({\bf Q}_{i}\geq\sigma\right)\le\sum_{j}\frac{\pi_{i,j}}{e^{t_{i,j}\sigma}}\frac{(1-\rho_{j})t_{i,j}\mathbb{Z}_{j}(t_{i,j})}{t_{i,j}-\Lambda_{j}(\mathbb{Z}_{j}(t_{i,j})-1)}\label{eq:bound_tail_1}
\end{align}
for any $t_{i,j}>0$, $\rho_{j}=\Lambda_{j}\left[\frac{d}{dt}\mathbb{Z}_{j}(t_{i,j})\left|_{t_{i,j}=0}\right.\right]$, $\rho_j<1$, 
and $\Lambda_{j}(\mathbb{Z}_{j}(t_{i,j})-1)<t_{i,j}$.
\end{theorem}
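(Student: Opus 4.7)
The plan is to mirror the mean-latency derivation from Theorem~\ref{psmeangen} but to replace the Jensen/MGF step with a direct Chernoff--Markov tail bound applied inside the expectation. Because probabilistic scheduling renders each local queue into an M/G/1 whose sojourn-time Laplace transform is available via Pollaczek--Khinchine, the proof reduces to manipulating the maximum in (\ref{eq:Q_j_max}) into a form compatible with that transform, and then reading off the correct regime of validity.

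First, I would rewrite the event as $\{{\bf Q}_i \geq \sigma\} = \{\max_{j\in\mathcal{A}_i} {\bf W}_{i,j} \geq \sigma\}$ using (\ref{eq:Q_j_max}), and take the joint expectation over the random dispatch set $\mathcal{A}_i$ and the waiting-time process, giving $\Pr({\bf Q}_i \geq \sigma) = \mathbb{E}_{\mathcal{A}_i,{\bf W}_{i,j}}\!\left[\max_{j\in\mathcal{A}_i} \mathbf{1}_{\{{\bf W}_{i,j}\geq \sigma\}}\right]$. Bounding the maximum of $\{0,1\}$-valued indicators by their sum, swapping expectation and summation, and then using $\mathbb{E}_{\mathcal{A}_i}[\mathbf{1}_{\{j\in\mathcal{A}_i\}}] = \pi_{i,j}$ from (\ref{eq:pi}) yields $\Pr({\bf Q}_i \geq \sigma) \leq \sum_j \pi_{i,j}\, \Pr({\bf W}_{i,j} \geq \sigma)$, precisely the analog of steps (d)--(f) in the mean-latency argument.

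Next, for each server $j$ I would apply the Chernoff bound with a free parameter $t_{i,j} > 0$, writing $\Pr({\bf W}_{i,j}\geq \sigma) \leq e^{-t_{i,j}\sigma}\, \mathbb{E}[e^{t_{i,j} {\bf W}_{i,j}}]$. By Theorem~\ref{th:thm_prob} and the superposition property of Poisson arrivals, node $j$ sees a Poisson input of rate $\Lambda_j = \sum_i \lambda_i \pi_{i,j}$ with general chunk service of MGF $\mathbb{Z}_j$; the queue is therefore M/G/1 with utilization $\rho_j$, and the Pollaczek--Khinchine formula (\ref{eq:mg1_waitingTime}) supplies the closed-form sojourn-time MGF. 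Substituting that expression into the Chernoff bound and reinserting into the sum gives exactly the claimed expression.

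The main delicate point is bookkeeping of validity conditions rather than any novel estimate. Stability of each local M/G/1 forces $\rho_j = \Lambda_j \mathbb{E}[\mathbb{X}_j] < 1$; finiteness of the sojourn-time MGF at the chosen $t_{i,j}$ requires both that $\mathbb{Z}_j(t_{i,j})$ exists and that the Pollaczek--Khinchine denominator is strictly positive, which is precisely $\Lambda_j(\mathbb{Z}_j(t_{i,j}) - 1) < t_{i,j}$. A useful feature of this scheme, which the statement already exploits, is that each server carries its own free parameter $t_{i,j}$ (rather than a single $t_i$ shared across servers as in the mean-latency bound), so the resulting sum is separable and each term can later be optimized termwise against the threshold $\sigma$; that optimization, however, lies outside the statement being proved.
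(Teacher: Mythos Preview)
Your proposal is correct and follows essentially the same route as the paper: rewrite the tail event via the maximum over the dispatched set, bound the max of indicators by their sum, use $\mathbb{E}_{\mathcal{A}_i}[\mathbf{1}_{\{j\in\mathcal{A}_i\}}]=\pi_{i,j}$ from probabilistic scheduling, then apply the Chernoff/Markov inequality with per-server parameter $t_{i,j}$ and substitute the Pollaczek--Khinchine sojourn-time MGF. The validity conditions you list match the paper's as well.
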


We now specialize the result to the case where the service times of the servers are given in \eqref{dist_sexp} in the following corollary.

\begin{corollary}
	Under probabilistic scheduling and shifted exponential service times, the latency tail probability for file
	$i$, i.e., $\text{Pr}\left({\bf Q}_{i}\geq\sigma\right)$ is bounded
	by
	
	\begin{align}
	\text{Pr}\left({\bf Q}_{i}\geq\sigma\right)\le\sum_{j}\frac{\pi_{i,j}}{e^{t_{i,j}\sigma}}\frac{(1-\rho_{j})t_{i,j}\mathbb{Z}_{j}(t_{i,j})}{t_{i,j}-\Lambda_{j}(\mathbb{Z}_{j}(t_{i,j})-1)}\label{eq:bound_tail_1}
	\end{align}
	for any $t_{i,j}>0$,  $\rho_{j}=\Lambda_{j}\left(\frac{1}{\alpha_j}+\beta_j\right)$, $\rho_j<1$,
	$t_{i,j}(t_{i,j}-\alpha_j+\Lambda_j) +\Lambda_j \alpha_j(e^{\beta_j t_{i,j}}-1)<0$, and $\mathbb{Z}_{j}(t) = \frac{\alpha_j}{\alpha_j-t} e^{\beta_j t}$. 
\end{corollary}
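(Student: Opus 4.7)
The plan is to specialize the general tail-latency theorem established just above to the shifted exponential service distribution by direct substitution; the bulk of the work is algebraic verification that the three conditions stated in the general theorem reduce exactly to the three conditions listed in the corollary.

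First, I would recall from \eqref{sexpmgf} that the moment generating function of the shifted exponential $\mathrm{Sexp}(\beta_j,\alpha_j)$ defined in \eqref{dist_sexp} is
\begin{equation}
\mathbb{Z}_j(t) = \frac{\alpha_j}{\alpha_j - t}\, e^{\beta_j t}, \qquad t < \alpha_j,
\end{equation}
which immediately justifies the last clause of the corollary. Next I would compute $\rho_j$. Since $\rho_j = \Lambda_j\bigl[\tfrac{d}{dt}\mathbb{Z}_j(t)\big|_{t=0}\bigr]$ in the general theorem, differentiating gives
\begin{equation}
\mathbb{Z}_j'(t) = \frac{\alpha_j}{(\alpha_j - t)^2}\, e^{\beta_j t} + \frac{\alpha_j \beta_j}{\alpha_j - t}\, e^{\beta_j t},
\end{equation}
and evaluating at $t=0$ yields $\mathbb{Z}_j'(0) = 1/\alpha_j + \beta_j$, so $\rho_j = \Lambda_j(1/\alpha_j + \beta_j)$, which matches the stated form and agrees with the known mean $1/\alpha_j + \beta_j$ of a shifted exponential.

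The one step that requires a little care is translating the general condition $\Lambda_j(\mathbb{Z}_j(t_{i,j}) - 1) < t_{i,j}$ into the polynomial-plus-exponential inequality stated in the corollary. My plan here is to substitute the MGF expression, clear denominators using $\alpha_j - t_{i,j} > 0$ (which is part of the domain of the MGF), and simplify. Specifically, the inequality
\begin{equation}
\Lambda_j\!\left(\frac{\alpha_j\, e^{\beta_j t_{i,j}}}{\alpha_j - t_{i,j}} - 1\right) < t_{i,j}
\end{equation}
becomes, after multiplying by $(\alpha_j - t_{i,j})$ and rearranging,
\begin{equation}
t_{i,j}(t_{i,j} - \alpha_j + \Lambda_j) + \Lambda_j \alpha_j \bigl(e^{\beta_j t_{i,j}} - 1\bigr) < 0,
\end{equation}
which is exactly the condition stated. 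This provides the equivalence between the abstract and concrete forms of the constraint on $t_{i,j}$.

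Finally, I would invoke the general theorem with these substitutions: the bound in the corollary is structurally identical to that in the general theorem, only with the explicit MGF and derivative for $\mathrm{Sexp}(\beta_j,\alpha_j)$ plugged in. Since every hypothesis of the general result has been verified under the stated conditions on $t_{i,j}$, $\rho_j$, and the domain of the MGF, the bound on $\mathrm{Pr}({\bf Q}_i \geq \sigma)$ follows immediately. The only potentially subtle point is confirming that the constraint $t_{i,j} < \alpha_j$ is implicitly enforced by the polynomial condition (otherwise the MGF is not defined); but under $\Lambda_j,\alpha_j,\beta_j > 0$, the exponential term grows without bound as $t_{i,j} \to \alpha_j^{-}$, so the feasible region of the polynomial inequality is strictly contained in $(0,\alpha_j)$, and no additional assumption is needed.
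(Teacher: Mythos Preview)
Your proposal is correct and takes essentially the same approach as the paper: the corollary is an immediate specialization of the preceding general theorem by plugging in the shifted-exponential MGF from \eqref{sexpmgf}, computing $\rho_j=\Lambda_j(1/\alpha_j+\beta_j)$, and clearing the denominator $\alpha_j-t_{i,j}$ in the constraint $\Lambda_j(\mathbb{Z}_j(t_{i,j})-1)<t_{i,j}$ to obtain the stated polynomial-plus-exponential inequality. One small imprecision: your final remark that ``the exponential term grows without bound as $t_{i,j}\to\alpha_j^-$'' is not quite the right justification (nothing blows up in the simplified inequality); the cleaner argument is that for $t_{i,j}\ge\alpha_j$ both summands $t_{i,j}(t_{i,j}-\alpha_j+\Lambda_j)$ and $\Lambda_j\alpha_j(e^{\beta_j t_{i,j}}-1)$ are nonnegative with strictly positive sum, so the inequality $<0$ forces $t_{i,j}<\alpha_j$.
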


\if 0
We note that the above Theorem holds only in the range of $t_{i,j}$
when $t_{i,j}-\Lambda_{j}\left(\mathbb{Z}_{j}(t_{i,j})-1\right)>0$ which
simplifies to $t_{i,j}(t_{i,j}-\alpha_{j}+\Lambda_{j})+\Lambda_{j}\alpha_{j}(e^{\beta_{j}t_{i,j}}-1)<0$.
Further, Since $t_{i,j}\ge\alpha_{j}$ will not satisfy $t_{i,j}(t_{i,j}-\alpha_{j}+\Lambda_{j})+\Lambda_{j}\alpha_{j}(e^{\beta_{j}t_{i,j}}-1)<0$,
the conditions in the statement of the Theorem imply $t_{j}<\alpha_{j}$
where the above moment generating function expression is applicable.

\fi 
\section{Characterization of Asymptotic Latency}\label{ps_asymp}
In this section, we consider homogeneous servers, and all files having same size and erasure code $(n,k)$. Further, we assume that the number of servers $m=n$. In order to understand the asymptotic delay characteristics, we also assume that $\pi_{ij}= k/n$, which chooses the $k$ servers uniformly at random. Jobs arrive over time according to a Poisson process with rate $\Lambda\supn$, and each job (file request) consists of $k\supn$ tasks with $k\supn\le n$.  Upon arrival, each job picks $k\supn$ {distinct} servers uniformly at random from the $n$ servers (uniform probabilistic scheduling) and sends one task to each server. We assume that $\Lambda\supn=n\lambda/k\supn$ for a {constant} $\lambda$, where the constant $\lambda$ is the task arrival rate to each individual queue. Since different jobs choose servers independently, the task arrival process to each queue is also a Poisson process, and the rate is $\lambda$. The service times of tasks are i.i.d.\ following a c.d.f. $G$ with expectation $1/\mu$ and a finite second moment. We think of the service time of each task as being generated upon arrival: each task brings a required service time with it, but the length of the required service time is revealed to the system only when the task is completed.  The load of each queue, $\rho=\lambda/\mu$, is then a constant and we assume that $\rho<1$.

As mentioned earlier,  each queue is an M/G/1 queue. Let $W\supn_i(t)$ denote the {workload} of server $i$'s queue at time $t$, i.e., the total remaining service time of all the tasks in the queue, including the partially served task in service.  So the workload of a queue is the waiting time of an incoming task to the queue before the server starts serving it.  Let ${\bf W}\supn(t)=\bigl(W\supn_1(t),W\supn_2(t),\dots,W\supn_n(t)\bigr)$.  Then the workload process, $({\bf W}\supn(t),t\ge 0)$, is Markovian and ergodic.  The ergodicity can be proven using the rather standard Foster-Lyapunov criteria \citep{meyn1993stability}, so we omit it here. Therefore, the workload process has a unique stationary distribution and ${\bf W}\supn(t)\tod {\bf W}\supn(\infty)$ as $t\to\infty$.

Let a random variable $T\supn$ represent this steady-state job delay.  Specifically, the distribution of $T\supn$ is determined by the workload ${\bf W}\supn(\infty)$ in the following way.  When a job comes into the system, its tasks are sent to $k\supn$ queues and experience the delays in these queues.  Since the queueing processes are symmetric over the indices of queues, without loss of generality, we can assume that the tasks are sent to the first $k\supn$ queues for the purpose of computing the distribution of $T\supn$.  The delay of a task is the sum of its waiting time and service time.  So the task delay in queue $i$, denoted by $T\supn_i$, can be written as $T\supn_i=W\supn_i(\infty)+X_i$ with $X_i$ being the service time.  Recall that the $X_i$'s are i.i.d.$\sim G$ and independent of everything else.  Since the job is completed only when all its tasks are completed,
\begin{equation}\label{eq:T}
T\supn = \max\left\{T\supn_1,T\supn_2,\dots,T\supn_{k\supn}\right\}.
\end{equation}

Let $\hat{T}\supn$ be defined as the job delay given by \emph{independent} task delays.  Specifically, $\hat{T}\supn$ can be expressed as:
\begin{equation}\label{eq:That}
\hat{T}\supn=\max\left\{\hat{T}\supn_1,\hat{T}\supn_2,\dots,\hat{T}\supn_{k\supn}\right\},
\end{equation}
where $\hat{T}\supn_1,\hat{T}\supn_2,\dots,\hat{T}\supn_{k\supn}$ are i.i.d.\ and each $\hat{T}\supn_i$ has the same distribution as $T\supn_i$. 
Again, due to symmetry, all the $T\supn_i$'s have the same distribution. Let $F$ denote the c.d.f. of $T\supn_i$, whose form is known from the queueing theory literature.  Then, we have the following explicit form for $\hat{T}\supn$:
\begin{equation}
\Pr\left(\hat{T}\supn\le \tau\right)=\left(F(\tau)\right)^{k\supn},\quad \tau\ge 0.
\end{equation}

\begin{remark}
	We note that even though the authors of	\citep{wang2019delay} related their results to Fork-Join queue, but need $n=k$, while the results naturally hold for uniform probabilistic scheduling rather than Fork-Join queues. 
\end{remark}

We first consider an asymptotic regime where the number of servers, $n$, goes to infinity, and the number of tasks in a job, $k\supn$, is allowed to increase with~$n$.  We establish the asymptotic independence of any $k\supn$ queues under the condition $k\supn = o(n^{1/4})$. This greatly generalizes the asymptotic-independence type of results in the literature where asymptotic independence is shown only for a fixed constant number of queues.  As a consequence of our independence result, the job delay converges to the maximum of independent task delays. More precisely, we show that the distance between the distribution of job delay, $T\supn$, and the distribution of the job delay given by independent task delays, $\hat{T}\supn$, goes to $0$. This result indicates that assuming independence among the delays of a job's tasks gives a good approximation of job delay when the system is large.  Again, due to symmetry, we can focus on the first $k\supn$ queues without loss of generality.

\begin{theorem}[\citep{wang2019delay}]\label{thm:asym-independence}
		Consider an $n$-server system in the uniform probabilistic scheduling with $k\supn=o(n^{1/4})$. Let $\pi^{(n,k\supn)}$ denote the joint distribution of the steady-state workloads $W\supn_1(\infty)$, $W\supn_2(\infty),\dots,W\supn_{k\supn}(\infty)$, and $\hat{\pi}^{(k\supn)}$ denote the product distribution of $k\supn$ i.i.d.\ random variables, each of which follows a distribution that is the same as the distribution of $W\supn_1(\infty)$. Then
		\begin{equation}\label{eq:asym-independence}
		\lim_{n\to\infty}d_{TV}\Bigl(\pi^{(n,k\supn)},\hat{\pi}^{(k\supn)}\Bigr)=0.
		\end{equation}
		Consequently, the steady-state job delay, $T\supn$, and the job delay given by independent task delays as defined in \eqref{eq:That}, $\hat{T}\supn$, satisfy
		\begin{equation}\label{eq:converge-cdf-general}
		\lim_{n\to\infty}\sup_{\tau\ge 0}\left|\Pr\bigl(T\supn\le \tau\bigr)-\Pr\bigl(\hat{T}\supn\le \tau\bigr)\right|=0.
		\end{equation}
\end{theorem}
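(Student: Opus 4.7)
The plan is to establish \eqref{eq:asym-independence} by a coupling argument and then obtain \eqref{eq:converge-cdf-general} from it via the contraction of total-variation distance under a common measurable map. Fix the tagged set to be $\{1,\dots,k\supn\}$, write $\tilde{\bf W}\supn(t)=(W\supn_1(t),\dots,W\supn_{k\supn}(t))$, and introduce in parallel a ``decoupled'' reference process $\hat{\bf W}\supn(t)$ whose $k\supn$ coordinates are independent M/G/1 queues, each driven by its own Poisson$(\lambda)$ arrival stream with i.i.d.\ service times of c.d.f.\ $G$. By construction the stationary law of $\hat{\bf W}\supn$ is exactly $\hat{\pi}^{(k\supn)}$, and the marginal of $\tilde{\bf W}\supn(\infty)$ at each tagged coordinate is the same M/G/1 stationary law, so the two joint distributions differ only through cross-coordinate dependence.

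Over a horizon $[-T,0]$, start both systems from the same initial condition (say, empty) and couple their arrival streams as follows. Call a job in the original system \emph{good} if its uniformly random $k\supn$-element destination set meets the tagged set in at most one queue, and \emph{bad} otherwise. For each good job, inject the corresponding arrival (if any) into the matching decoupled queue; fill in the residual marginal mass needed by the decoupled queues with an independent family of phantom Poisson arrivals, with rates tuned so that each decoupled queue still sees exactly rate $\lambda$. On the event $E_T$ that no bad job occurs on $[-T,0]$, the two trajectories agree at time $0$. If $X$ is the number of tagged queues struck by a single random job, then $X$ is hypergeometric with parameters $(n,k\supn,k\supn)$, so $\mathbb{E}[X(X-1)]=(k\supn)^2(k\supn-1)^2/[n(n-1)]=O((k\supn)^4/n^2)$ and hence $\Pr(X\ge 2)=O((k\supn)^4/n^2)$. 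The bad-job rate is therefore $\Lambda\supn\cdot O((k\supn)^4/n^2)=O((k\supn)^3/n)$, giving $\Pr(E_T^c)=O(T(k\supn)^3/n)$.

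Each tagged M/G/1 queue admits a Foster--Lyapunov drift in the workload with contraction rate $\alpha=\alpha(\rho,G)$ independent of $n$ and $k\supn$, yielding geometric ergodicity $d_{TV}(\mathrm{law}(W_i(0)\mid\text{empty at }-T),\pi^{*})\le Ce^{-\alpha T}$ for each coordinate. Choosing $T=(\log n)/\alpha$ and combining the coupling with a triangle inequality gives
\begin{equation*}
d_{TV}\bigl(\pi^{(n,k\supn)},\hat{\pi}^{(k\supn)}\bigr)\le d_{TV}\bigl(\mathrm{law}(\tilde{\bf W}\supn(0)),\pi^{(n,k\supn)}\bigr)+\Pr(E_T^c)+k\supn\cdot Ce^{-\alpha T},
\end{equation*}
and all three terms tend to $0$ under $k\supn=o(n^{1/4})$, proving \eqref{eq:asym-independence}. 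For \eqref{eq:converge-cdf-general}, observe that $(T\supn_1,\dots,T\supn_{k\supn})$ is produced from $\tilde{\bf W}\supn(\infty)$ by adding an independent $G$-sample to each coordinate and taking $T\supn=\max_i T\supn_i$; the same transformation turns $\hat{\bf W}\supn(\infty)$ into $\hat T\supn$. Since total variation is non-increasing under a common measurable map, $d_{TV}(\mathrm{law}(T\supn),\mathrm{law}(\hat T\supn))\to 0$, which is the uniform CDF convergence asserted.

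The main obstacle, I expect, is making the coupling airtight while preserving exact marginals: bad jobs cannot simply be dropped, because then the decoupled queues would lose the corresponding arrival mass. The cleanest remedy is to construct both arrival streams from a single Poisson$(n\lambda)$ superposition on $[-T,0]$ by explicit thinning, treating good marks deterministically and reserving an independent reservoir for the phantom arrivals. A secondary, milder difficulty is controlling the mixing of the full original system uniformly in $n$; since each tagged queue is individually an M/G/1 with load $\rho<1$ and $G$ of finite second moment, $\alpha$ depends only on $(\rho,G)$, but getting the $n$-dependence of the joint mixing bound to scale gracefully with $k\supn$ may require a union-type argument over coordinates, which is exactly where the extra headroom in the hypothesis $k\supn=o(n^{1/4})$ (rather than the naive $o(n^{1/3})$) is likely to be consumed.
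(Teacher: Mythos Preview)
The paper does not actually give a proof of this theorem; it only cites \citep{wang2019delay}. Your high-level strategy---couple the tagged queues to an independent product system over a finite horizon $[-T,0]$, bound the probability of a ``bad'' job that hits two or more tagged queues, and control the residual by mixing---is the right one and is, as far as I know, the approach used in the cited reference. Your hypergeometric estimate $\Pr(X\ge 2)=O((k\supn)^4/n^2)$ and the resulting bad-job rate $O((k\supn)^3/n)$ are correct.

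There is, however, a genuine gap in the mixing step. You assert that each tagged M/G/1 queue is geometrically ergodic with a rate $\alpha=\alpha(\rho,G)$, and then take $T=(\log n)/\alpha$. Under the paper's hypotheses---$G$ has only a finite \emph{second} moment---this is false in general: geometric ergodicity of the M/G/1 workload requires an exponential moment of the service time. What the second-moment assumption does buy you is a finite second moment for the busy period, and hence a \emph{polynomial} bound of the form
\[
d_{TV}\bigl(\text{law of }W_i(0)\text{ from empty at }-T,\ \pi^{*}\bigr)\;\le\;\frac{C(\rho,G)}{T},
\]
obtained via Loynes' construction and the observation that the two versions coincide once the stationary queue has an idle instant in $[-T,0]$. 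The same Loynes argument handles your first term (mixing of the \emph{original} tagged vector, which you left somewhat vague): a union bound over the $k\supn$ marginals gives $O(k\supn/T)$ for both the first and third terms. Balancing $k\supn/T\to 0$ against $\Pr(E_T^c)=O(T(k\supn)^3/n)\to 0$ then forces $(k\supn)^4=o(n)$, and this---not the coupling construction---is where the $o(n^{1/4})$ hypothesis is actually consumed. Your logarithmic choice of $T$ is far too small; you need $T$ polynomial in $n$, for instance $T\sim \sqrt{n/(k\supn)^2}$.

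On the coupling itself: your instinct that phantom arrivals must be handled with care is right. The clean construction is to observe that \emph{good}-job arrivals to distinct tagged queues are already independent Poisson streams (a single good job hits at most one tagged queue, so this is Poisson splitting), with rates $\lambda-\epsilon_i$; then add independent Poisson$(\epsilon_i)$ phantom streams to obtain the decoupled system. Redefine $E_T$ to be the event that there is neither a bad job nor a phantom arrival in $[-T,0]$; since $\sum_i\epsilon_i\le \Lambda\supn\,\mathbb{E}[X(X-1)]=O((k\supn)^3/n)$, the bound $\Pr(E_T^c)=O(T(k\supn)^3/n)$ survives unchanged, and on $E_T$ the two trajectories genuinely agree.
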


For the special case where the service times are exponentially distributed, the job delay asymptotics have explicit forms presented in Corollary~\ref{cor:exp} below.
\begin{corollary}[\citep{wang2019delay}]\label{cor:exp}
	Consider an $n$-server system in the uniform probabilistic scheduling model with $k\supn=o(n^{1/4})$, job arrival rate $\Lambda\supn=n\lambda/k\supn$, and exponentially distributed service times with mean $1/\mu$.  Then the steady-state job delay, $T\supn$, converges as:
	\begin{equation}\label{eq:converge-cdf}
	\lim_{n\to\infty}\sup_{\tau\ge 0}\left|\Pr\bigl(T\supn\le \tau\bigr)-\left(1-e^{-(\mu-\lambda)\tau}\right)^{k\supn}\right|=0,
	\end{equation}
	Specifically, if $k\supn\to\infty$ as $n\to\infty$, then
	\begin{equation}\label{eq:converge-in-distr}
	\frac{T\supn}{H_{k\supn}/(\mu-\lambda)}\tod 1,\quad\text{as }n\to\infty,
	\end{equation}
	where $H_{k\supn}$ is the $k\supn$-th harmonic number, and further,
	\begin{equation}\label{eq:converge-expectation}
	\lim_{n\to\infty}\frac{{\mathbb{E}}\bigl[T\supn\bigr]}{H_{k\supn}/(\mu-\lambda)}=1.
	\end{equation}
\end{corollary}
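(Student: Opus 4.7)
The plan is to combine Theorem~\ref{thm:asym-independence} with two classical ingredients: the exact distribution of the sojourn time in an M/M/1 queue, and the extreme-value behavior of the maximum of i.i.d.\ exponentials.

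First I would observe that with Poisson arrivals of rate $\lambda$ and i.i.d.\ Exp$(\mu)$ services, each local queue is an M/M/1 with load $\rho = \lambda/\mu < 1$, for which the stationary sojourn time is well known to be exponentially distributed with rate $\mu-\lambda$. Hence each marginal $T\supn_i = W\supn_i(\infty) + X_i$ is Exp$(\mu-\lambda)$, and the auxiliary variable $\hat{T}\supn$ of \eqref{eq:That}, being the maximum of $k\supn$ independent copies, has the explicit c.d.f.\ $\bigl(1-e^{-(\mu-\lambda)\tau}\bigr)^{k\supn}$. Substituting this form into the general uniform-convergence statement \eqref{eq:converge-cdf-general} of Theorem~\ref{thm:asym-independence} yields \eqref{eq:converge-cdf} immediately.

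Next, under the additional hypothesis $k\supn\to\infty$, I would establish \eqref{eq:converge-in-distr} using R\'enyi's representation: if $Y_1,\dots,Y_k$ are i.i.d.\ Exp$(1)$, then $\max_{i\le k} Y_i \stackrel{\mathrm{d}}{=} \sum_{j=1}^{k} E_j/j$ for independent Exp$(1)$ variables $E_j$, so it has mean $H_k$ and variance at most $\pi^2/6$. Chebyshev's inequality then gives $\max_{i\le k}Y_i/H_k\tod 1$, and rescaling by $\mu-\lambda$ shows $\hat{T}\supn/(H_{k\supn}/(\mu-\lambda))\tod 1$. The uniform closeness of c.d.f.s just established transfers this convergence in probability to $T\supn$, since for every $\epsilon>0$,
\begin{equation*}
\Bigl|\Pr\bigl(|T\supn/(H_{k\supn}/(\mu-\lambda))-1|>\epsilon\bigr)-\Pr\bigl(|\hat{T}\supn/(H_{k\supn}/(\mu-\lambda))-1|>\epsilon\bigr)\Bigr| \;\le\; 2\sup_{\tau\ge 0}\bigl|\Pr(T\supn\le\tau)-\Pr(\hat{T}\supn\le\tau)\bigr|,
\end{equation*}
and the right-hand side vanishes by \eqref{eq:converge-cdf}.

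Finally, for the expectation statement \eqref{eq:converge-expectation} I would write $\mathbb{E}[T\supn]=\int_0^\infty\Pr(T\supn>\tau)\,d\tau$, compare with $\mathbb{E}[\hat{T}\supn]=H_{k\supn}/(\mu-\lambda)$, and split the integral at a cutoff $M_n=(c/(\mu-\lambda))\log k\supn$ with a sufficiently large constant $c>1$. On $[0,M_n]$ the integrand difference is bounded by $\epsilon_n:=\sup_\tau|\Pr(T\supn\le\tau)-\Pr(\hat{T}\supn\le\tau)|\to 0$, contributing at most $M_n\epsilon_n = o(H_{k\supn})$. On $(M_n,\infty)$, the crude union bound $\Pr(T\supn>\tau)\le k\supn\Pr(T\supn_1>\tau)=k\supn e^{-(\mu-\lambda)\tau}$ (which remains valid in spite of dependence, because each marginal is exactly Exp$(\mu-\lambda)$) together with the identical bound on $\hat{T}\supn$ makes the remaining contribution $O(k\supn^{1-c})=o(1)$. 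Dividing by $H_{k\supn}/(\mu-\lambda)\to\infty$ then yields \eqref{eq:converge-expectation}. I expect this expectation step to be the main obstacle, since total-variation closeness does not by itself transfer first moments; the saving grace is that the exponential tails of the individual $T\supn_i$ are known exactly, so a simple union bound supplies the uniform-in-$n$ tail control needed to justify the split-integral estimate.
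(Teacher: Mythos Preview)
The paper does not supply its own proof of this corollary; it explicitly omits all proofs in this subsection and refers the reader to \citep{wang2019delay}. So there is no paper proof to compare against. That said, your argument is correct and is the natural way to derive the corollary from Theorem~\ref{thm:asym-independence}: the identification $F(\tau)=1-e^{-(\mu-\lambda)\tau}$ for the M/M/1 sojourn time gives \eqref{eq:converge-cdf} at once from \eqref{eq:converge-cdf-general}; the R\'enyi representation plus Chebyshev handles $\hat{T}\supn/(H_{k\supn}/(\mu-\lambda))\tod 1$, and your two-point bound on the difference of probabilities correctly transfers this to $T\supn$; and for \eqref{eq:converge-expectation} the layer-cake split is sound because the union bound $\Pr(T\supn>\tau)\le k\supn e^{-(\mu-\lambda)\tau}$ needs only the marginal law of each $T\supn_i$, which is exactly Exp$(\mu-\lambda)$ regardless of the joint dependence (each queue is marginally M/M/1 by Poisson thinning). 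The one place to be slightly careful is the claim $M_n\epsilon_n=o(H_{k\supn})$: since $M_n$ is of order $\log k\supn\asymp H_{k\supn}$, this reduces to $\epsilon_n\to 0$, which is precisely what \eqref{eq:converge-cdf} gives, so the step is fine.
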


The results above characterize job delay in the asymptotic regime where $n$ goes to infinity.  In Theorem~\ref{thm:independence-upper} below, we study the non-asymptotic regime for any $n$ and any $k\supn$ with $k\supn=k\le n$, and we establish the independence upper bound on job delay.

\begin{theorem}[\citep{wang2019delay}]\label{thm:independence-upper}
	Consider an $n$-server system in the uniform probabilistic scheduling model with $k\supn=k\le n$. Then the steady-state job delay, $T\supn$, is stochastically upper bounded by the job delay given by independent task delays as defined in \eqref{eq:That}, $\hat{T}\supn$, i.e.,
	\begin{equation}\label{eq:ind-upper}
	T\supn\le_{st} \hat{T}\supn,
	\end{equation}
	where ``$\le_{st}$'' denotes stochastic dominance. Specifically, for any $\tau\ge 0$,
	\begin{align}
	\Pr\bigl(T\supn > \tau\bigr)&\le\Pr\bigl(\hat{T}\supn > \tau\bigr) = 1-\left(F(\tau)\right)^{k\supn}.\label{eq:ind-upper-tail}
	\end{align}
\end{theorem}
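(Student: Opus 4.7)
The plan is to derive the stochastic dominance from positive association of the stationary workload vector $\bigl(W_1\supn(\infty),\ldots,W_n\supn(\infty)\bigr)$. The first step is to re-express the arrival pattern of the uniform-probabilistic-scheduling system as a superposition of mutually independent Poisson streams indexed by $k$-subsets: for each $A\subseteq\{1,\ldots,n\}$ with $|A|=k$, let $P_A$ be an independent Poisson arrival process of rate $\Lambda\supn/\binom{n}{k}$, and to each arrival in $P_A$ and each $j\in A$ attach an independent $G$-distributed service time $X_{A,i,j}$. Each queue $j$ is then fed by the $\binom{n-1}{k-1}$ streams $\{P_A:A\ni j\}$ with total rate $\lambda$, reproducing the original system's joint law.

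Next, I would use Loynes' backward representation
\begin{equation}
W_j\supn(\infty) \;=\; \sup_{u\ge 0}\bigl[\tilde{B}_j(u) - u\bigr],
\end{equation}
where $\tilde{B}_j(u)$ is the cumulative service time of arrivals to queue $j$ during the most recent $u$ units of time, assembled from the arrivals of $\{P_A:A\ni j\}$ and their marks $\{X_{A,i,j}\}$. Then $W_j\supn(\infty)$ is coordinatewise non-decreasing in the point configurations $\{P_A:A\ni j\}$ (point-inclusion order) and in the marks $\{X_{A,i,j}:A\ni j\}$, and it does not depend on marks for other queues. Since all the inputs $\{P_A\}_A$ and $\{X_{A,i,j}\}_{A,i,j}$ are mutually independent, the FKG/Harris inequality for products of Poisson and Lebesgue measures gives that $\bigl(W_1\supn(\infty),\ldots,W_n\supn(\infty)\bigr)$ is positively associated.

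With positive association in hand, the theorem follows quickly. By exchangeability of the queues, we may condition on the tagged job sending its $k$ tasks to queues $1,\ldots,k$; writing $g_\tau(w)=G((\tau-w)^{+})$ and using that the tagged tasks' service times are i.i.d.\ and independent of the workloads,
\begin{equation}
\Pr\bigl(T\supn\le\tau\bigr) \;=\; \mathbb{E}\!\left[\prod_{j=1}^{k}g_\tau\bigl(W_j\supn(\infty)\bigr)\right].
\end{equation}
Each $g_\tau$ is non-negative and non-increasing, so positive association yields $\mathbb{E}[\prod_j g_\tau(W_j)]\ge\prod_j\mathbb{E}[g_\tau(W_j)]=F(\tau)^{k}=\Pr\bigl(\hat T\supn\le\tau\bigr)$, which is exactly \eqref{eq:ind-upper-tail} after taking complements, and hence the stochastic-dominance statement \eqref{eq:ind-upper}.

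The main obstacle will be verifying FKG/Harris in this infinite-dimensional setting---i.e., for monotone functionals of independent Poisson point processes with independent real-valued marks. This is standard once the workload is in the monotone Loynes form above, and the stability assumption $\rho<1$ ensures that the supremum in the Loynes representation is a.s.\ finite, so the functional is well-defined; a safe fallback is to first apply FKG to the transient workload $\bigl(W_j\supn(T)\bigr)_{j=1}^{n}$ starting from empty, for which positive association reduces to a finite product-measure FKG, and then let $T\to\infty$ using ergodicity and that positive association is preserved under distributional limits against bounded monotone continuous test functions.
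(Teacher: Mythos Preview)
Your argument is correct. The monograph itself omits the proof and refers to \citep{wang2019delay}; the proof there is exactly the positive-association route you outline: one shows that the stationary workloads $\bigl(W_1\supn(\infty),\ldots,W_n\supn(\infty)\bigr)$ are associated, and then uses that for associated random variables the joint c.d.f.\ dominates the product of the marginals, which immediately gives $\Pr(T\supn\le\tau)\ge F(\tau)^{k}$.

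A couple of minor remarks on presentation. First, your decomposition of the arrival process into independent Poisson streams indexed by $k$-subsets, together with the Loynes representation, is a clean way to exhibit each $W_j\supn(\infty)$ as a coordinatewise non-decreasing functional of mutually independent random elements; association then follows from the classical Esary--Proschan--Walkup results (independent variables are associated, and non-decreasing functions preserve association), so you need not invoke an infinite-dimensional FKG statement directly. Your fallback via the transient workloads and a limit is also fine and is essentially how the cited paper handles measurability and well-definedness. Second, once $(W_1,\ldots,W_k)$ are associated, it is slightly cleaner to note that $(T_1,\ldots,T_k)=(W_1+X_1,\ldots,W_k+X_k)$ are also associated (the $X_j$'s are independent of everything, hence associated, and sums are non-decreasing), and then apply the joint-c.d.f.\ inequality $\Pr(T_1\le\tau,\ldots,T_k\le\tau)\ge\prod_j\Pr(T_j\le\tau)$ directly; this avoids the separate verification that the product inequality holds for non-increasing test functions, though your version is equally valid.
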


We omit proofs for the results, while refer the reader to \citep{wang2019delay} for the detailed proofs in this subsection.

\section{Proof of Asymptotic Optimality for Heavy Tailed Service Rates}\label{ps_tail_asym}
 In this section, we quantify the tail index of service latency for arbitrary erasure-coded storage systems
for Pareto-distributed file size and exponential service time. First, we derive the distribution of the waiting time from a server. Next, we show that this time is a heavy-tailed with tail-index $\alpha-1$. Then, we prove that the probabilistic scheduling based algorithms achieve  optimal tail index.

\subsection{Assumptions and Chunk Size Distribution}

We assume that the arrival of client requests for each file $i$ of size $k L_i$ Mb is assumed to form an independent Poisson process with a known
rate $\lambda_i$. Further, the chunk size $\widetilde{C}_i$ Mb is assumed to have a heavy tail and follows a Pareto distribution with parameters $(x_m,\alpha)$ with shape parameter $\alpha>2$  (implying finite mean and variance). Thus, the complementary cumulative distribution function (c.c.d.f.) of the chunk size is given as
\begin{equation}
\Pr(\widetilde{C}_i >x)=\begin{cases}
(x_m/x)^\alpha \quad x\ge x_m\\
0 \quad x<x_m
\end{cases}
\end{equation}
For $\alpha>1$, the mean is $E[\widetilde{C}_i] = \alpha x_m/(\alpha-1)$. The service time per Mb at server $j$, $X_j$ is distributed as an exponential distribution the mean service time  $1/\mu_j$. Service time for a chunk of size $C$ Mb is $X_jC$.

We will focus on the tail index of the waiting time to access each file. In order to understand the tail index, let the waiting time for the files $T_W$ has $\Pr(T_W>x)$ of the order of $x^{-d}$ for large $x$, then the tail index is $d$. More formally, the tail index $d$ is defined as $\lim_{x\to\infty}\frac{-\log \Pr(T_W>x)}{\log x}$. This index gives the slope of the tail in the log-log scale of the complementary CDF.

\subsection{Waiting Time Distribution for a Chunk from a Server}

In this Section, we will characterize the Laplace Stieltjes transform of the waiting time distribution from a server, assuming that the arrival of requests at a server is Poisson distributed with mean arrival rate $\Lambda_j$.  We first note that the service time per chunk on server $j$ is given as $B_j = X_j \widetilde{C}_i$, where $\widetilde{C}_i$ is distributed as Pareto Distribution given above, and $X_j$ is exponential with parameter $\mu_j$. Using this definition, we find that
\begin{eqnarray}
&&\Pr(B_j<y) \nonumber\\&=& \Pr(X_j \widetilde{C}_i  <y)\nonumber\\
&=& \int_{x=x_m}^\infty\Pr(X_j<y/x)\alpha x_m^\alpha\frac{1}{x^{\alpha+1}}  dx\nonumber\\
&=& \int_{x=x_m}^\infty\left(1-\exp(-\mu_jy/x)\right)\alpha x_m^\alpha\frac{1}{x^{\alpha+1}}  dx\nonumber\\
&=& 1 - \int_{x=x_m}^\infty \exp(-\mu_jy/x)\alpha x_m^\alpha\frac{1}{x^{\alpha+1}}  dx
\end{eqnarray}
Substitute $t = \mu_j y/x$, and then $dt = -\mu_j y/x^2 dx$. Thus,
\begin{eqnarray}
&&\Pr(B_j>y) \nonumber\\&=&  \int_{x=x_m}^\infty \exp(-\mu_jy/x)\alpha x_m^\alpha\frac{1}{x^{\alpha+1}}  dx\nonumber\\
&=&  \int_{t=0}^{\mu_j y/x_m} \exp(-t)\alpha x_m^\alpha \frac{t^{\alpha-1}}{(\mu_j y)^\alpha} dt\nonumber\\
&=&  \alpha (x_m/\mu_j)^\alpha\frac{1}{ y^\alpha} \int_{t=0}^{\mu_j y/x_m} \exp(-t) {t^{\alpha-1}}  dt\nonumber\\
&=&  \alpha (x_m/\mu_j)^\alpha \gamma(\alpha,\mu_jy/x_m)/y^\alpha,
\end{eqnarray}
 where $\gamma$ denote lower incomplete gamma function, given as $\gamma(a,x)=\int_0^x u^{a-1}\exp(-u) du$.
 
 Since $\Pr(B_j>y) = V(y)/y^\alpha$, where $V(y) = \alpha (x_m/\mu_j)^\alpha \gamma(\alpha,\mu_jy/x_m)$ is a slowly varying function, the asymptotic of the waiting time in heavy-tailed limit can be calculated using the results in \citep{olvera2011transition} as
 
 \begin{equation}
 \Pr(W>x) \approx \frac{\Lambda}{1-\rho} \frac{x^{1-\alpha}}{\alpha-1} V(x).
 \end{equation}

Thus, we note that the waiting time from a server is heavy-tailed with tail-index $\alpha-1$. Thus, we get the following result. 
\begin{theorem}
	Assume that the arrival rate for requests is Poisson distributed, service time distribution is exponential and the chunk size distribution is Pareto with shape parameter $\alpha$. Then,  the tail index for the waiting time of chunk in the queue of a server  is $\alpha-1$.
\end{theorem}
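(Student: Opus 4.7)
The plan is to reduce the tail analysis to a single-server M/G/1 queue at server $j$, identify the tail behavior of its per-chunk service time, and then invoke a classical heavy-tailed waiting-time asymptotic. First I would compute the distribution of $B_j = X_j \widetilde{C}_i$, the per-chunk service time, which is the product of an exponential$(\mu_j)$ random variable with an independent Pareto$(x_m,\alpha)$ variable. Conditioning on $\widetilde{C}_i$ and applying the substitution $t = \mu_j y / x$ (as already sketched in the excerpt immediately preceding the theorem) yields
\[
\Pr(B_j > y) = \alpha (x_m/\mu_j)^\alpha\,\gamma(\alpha,\mu_j y/x_m)\,y^{-\alpha},
\]
so $\bar F_{B_j}(y) = V(y)\,y^{-\alpha}$ where $V(y) := \alpha(x_m/\mu_j)^\alpha\gamma(\alpha,\mu_j y/x_m)$.

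The second step is to verify that $V$ is slowly varying at infinity. Since the lower incomplete gamma function $\gamma(\alpha,\cdot)$ is monotone and converges to $\Gamma(\alpha)$ as its argument tends to infinity, $V(y)$ approaches a positive constant, so $V(ty)/V(y)\to 1$ for every $t>0$. Thus $\bar F_{B_j}$ is regularly varying of index $-\alpha$, and because $\alpha > 2$ the service time $B_j$ has finite mean and finite variance, and (by regular variation with negative index) is subexponential.

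The third step is to apply the classical heavy-tailed Pollaczek--Khinchine asymptotic for an M/G/1 queue with Poisson arrivals at rate $\Lambda$, stability $\rho < 1$, and subexponential service times. This gives the stationary waiting-time tail
\[
\Pr(W > x) \sim \frac{\rho}{(1-\rho)\,E[B_j]}\int_x^\infty \bar F_{B_j}(u)\,du \sim \frac{\Lambda}{1-\rho}\,\frac{V(x)}{\alpha-1}\,x^{1-\alpha}
\]
as $x\to\infty$, matching the expression cited from \citep{olvera2011transition} in the excerpt. Taking logarithms and using $\log V(x)/\log x \to 0$ for slowly varying $V$ yields $\lim_{x\to\infty} -\log\Pr(W>x)/\log x = \alpha - 1$, which is the claim.

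The main obstacle is not computational — the integral expression for $\Pr(B_j > y)$ and the subsequent asymptotic manipulation are routine — but rather the need to invoke and properly justify the heavy-tailed M/G/1 tail formula. Specifically, one must argue that $B_j$ is subexponential (which follows from regular variation with index $-\alpha$) and that the load $\rho = \Lambda E[B_j]$ is strictly less than $1$, so that a stationary waiting time exists in the first place. Once these ingredients are in place, the conclusion is an immediate corollary of standard results on M/G/1 queues with regularly varying service times, and the Pareto shape parameter $\alpha$ is simply transferred into the waiting-time tail index $\alpha - 1$.
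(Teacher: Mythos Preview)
Your proposal is correct and follows essentially the same approach as the paper: compute $\Pr(B_j>y)=V(y)y^{-\alpha}$ via the same substitution, recognize $V$ as slowly varying, and then invoke the heavy-tailed M/G/1 waiting-time asymptotic (the paper cites \citep{olvera2011transition}) to obtain $\Pr(W>x)\sim \frac{\Lambda}{1-\rho}\frac{V(x)}{\alpha-1}x^{1-\alpha}$, from which the tail index $\alpha-1$ is read off. If anything, you are more careful than the paper in explicitly verifying slow variation of $V$, noting the subexponentiality and stability hypotheses, and spelling out the logarithmic limit that yields the tail index.
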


\subsection{Probabilistic Scheduling Achieves Optimal Tail Index}

Having characterized the tail index of a single server with Poisson arrival process and Pareto distributed file size, we will now give the tail index for a general distributed storage system. The first result is that any distributed storage system has a tail index of at most $\alpha-1$.
For Poisson arrivals, Pareto chunk sizes, and exponential chunk service times, the tail index is at most $\alpha-1$.

\begin{theorem}
	The tail index for distributed storage system is at most $\alpha-1$. 
\end{theorem}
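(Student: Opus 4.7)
The plan is to argue that the tail of file latency inherits its heaviness from a single server's sojourn time, for which the previous result already pins down the tail index at $\alpha-1$. The key observation is that regardless of the scheduling policy, a file request for file $i$ is not complete until all $k_i$ of its chunk requests have been served. In particular, if we fix any server $j$ that processes a chunk of file $i$ with positive probability, the overall file latency $T_i$ stochastically dominates the sojourn time $T_{i,j}$ that the corresponding chunk request experiences at server $j$. Hence $\Pr(T_i > x) \ge \Pr(T_{i,j} > x)$ for all $x \ge 0$.

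Next I would invoke the single-server heavy-tail asymptotic derived just before the theorem. Since chunk requests arrive at server $j$ as a Poisson stream (as the superposition of the per-file Poisson arrivals, with rate $\Lambda_j = \sum_i \lambda_i \pi_{i,j}$), and the per-chunk service time $B_j = X_j \widetilde{C}_i$ has complementary tail
\begin{equation}
\Pr(B_j > y) = \alpha (x_m/\mu_j)^\alpha \, \gamma(\alpha, \mu_j y / x_m) \, y^{-\alpha} = y^{-\alpha} V(y),
\end{equation}
with $V$ slowly varying, the transition-to-heavy-tail result of Olvera-Cravioto et al.\ gives
\begin{equation}
\Pr(T_{i,j} > x) \;\sim\; \frac{\Lambda_j}{(1-\rho_j)(\alpha-1)} \, x^{-(\alpha-1)} V(x),
\end{equation}
so the sojourn time $T_{i,j}$ at a stable server has tail index exactly $\alpha-1$.

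Combining the two steps, $-\log \Pr(T_i > x) \le -\log \Pr(T_{i,j} > x)$, and dividing by $\log x$ and passing to the limit $x\to\infty$ yields
\begin{equation}
\lim_{x\to\infty} \frac{-\log \Pr(T_i > x)}{\log x} \;\le\; \alpha-1,
\end{equation}
which is exactly the claim that the tail index of the distributed storage system is at most $\alpha-1$. Because this lower bound on the tail holds for the file latency under any scheduling policy that makes each local queue stable (and each server with a positive chunk-arrival rate qualifies), the upper bound on the tail index is universal over schedulers.

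The main obstacle is confirming that the single-server heavy-tail result remains applicable under an arbitrary scheduling policy, since a general scheduler need not produce Poisson arrivals at each queue. This is handled by the standard observation that, for any work-conserving stable queue fed by any reasonable stationary arrival process, a single Pareto-$\alpha$ service triggers a busy period whose duration has tail no lighter than $x^{-(\alpha-1)}$ (up to a slowly varying factor); thus the $\Pr(T_{i,j}>x) \gtrsim x^{-(\alpha-1)} V(x)$ lower bound survives. Verifying this carefully for non-Poisson aggregate arrivals is the only delicate step; the remainder of the argument is a direct comparison of maxima with a single component.
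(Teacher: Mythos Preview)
Your route differs from the paper's. The paper constructs a \emph{genie} lower bound: pool all $n$ servers into a single super-server with service rate $\sum_j \mu_j$ per Mb, and relax the reconstruction requirement from $k$ chunks down to a single chunk. Any real scheduling policy is stochastically dominated by this genie, and the genie system is precisely an M/G/1 queue fed by the original Poisson file-arrival process, so the single-server heavy-tail result established just before the theorem applies verbatim and gives tail index $\alpha-1$. Since no scheduler can beat the genie, the tail index of any distributed storage system is at most $\alpha-1$.

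Your argument instead lower-bounds the file latency by the sojourn time of one of its chunks at an \emph{actual} server in the system. This is clean under probabilistic scheduling, where each server really is an M/G/1 queue with Poisson arrivals and the preceding lemma applies directly. But the theorem is meant to be scheduler-agnostic, and as you correctly flag, under a general policy the per-server arrival process need not be Poisson (or even stationary), so invoking the M/G/1 asymptotic or the busy-period lower bound you sketch requires an additional argument that the paper's proof avoids entirely. The genie construction sidesteps this by building the lower-bound queue out of the \emph{exogenous} Poisson file arrivals rather than the \emph{endogenous}, scheduler-dependent per-server streams; that is the main payoff of the paper's approach. Your approach can be made to work with the busy-period argument you outline, but it is strictly more laborious for the same conclusion.
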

\begin{proof}
In order to show this result,  consider a genie server which is combination of  all the $n$ servers together. The service rate of this server is $\sum_{j=1}^n \mu_i$ per Mb. As a genie, we also assume that only one chunk is enough to be served. In this case, the problem reduces to the single server problem with Poisson arrival process and the result in Section VI shows that the tail index is $\alpha-1$. Since even in the genie-aided case, the tail index is $\alpha-1$, we cannot get any higher tail index. 
\end{proof}

The next result shows that the probabilistic scheduling achieves the optimal tail index.

\begin{theorem}
The optimal tail index of $\alpha-1$  is achieved by probabilistic scheduling. 
\end{theorem}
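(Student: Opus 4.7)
The plan is to combine two ingredients that are already in place in the excerpt: the probabilistic-scheduling union bound for tail latency, and the per-server heavy-tail asymptotic for $\Pr({\bf W}_{i,j}>\sigma)$. Together these yield a lower bound of $\alpha-1$ on the tail index under probabilistic scheduling, which combined with the matching upper bound from the preceding genie-aided theorem will give equality.

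First I would check that the single-server M/G/1 analysis from the previous subsection remains applicable at every node under probabilistic scheduling. This is immediate: dispatch decisions are made independently across (Poisson) file-request streams, so arrivals into server $j$ form a superposition of independent Poisson processes with aggregate rate $\Lambda_j=\sum_i\lambda_i\pi_{i,j}$, which is itself Poisson; the per-chunk service times remain exponential-times-Pareto; and Corollary \ref{th:stability} guarantees $\rho_j=\Lambda_j/\mu_j<1$ for a properly chosen $\{\pi_{i,j}\}$. Hence the asymptotic derived earlier applies verbatim to each queue:
\begin{equation}
\Pr({\bf W}_{i,j}>\sigma) \sim \frac{\Lambda_j}{(1-\rho_j)(\alpha-1)}\,\alpha\,(x_m/\mu_j)^\alpha\,\gamma(\alpha,\mu_j\sigma/x_m)\,\sigma^{-(\alpha-1)},
\end{equation}
and since $\gamma(\alpha,\mu_j\sigma/x_m)\to\Gamma(\alpha)$ as $\sigma\to\infty$, there exist finite constants $c_j$ such that $\Pr({\bf W}_{i,j}>\sigma)\le c_j\,\sigma^{-(\alpha-1)}$ for all sufficiently large $\sigma$.

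Next I would plug this into the tail-latency bound established in Section \ref{tail_ps}, which states that under probabilistic scheduling
\begin{equation}
\Pr({\bf Q}_i\ge\sigma) \le \sum_{j} \pi_{i,j}\,\Pr({\bf W}_{i,j}\ge\sigma) \le \Bigl(\sum_j \pi_{i,j}\,c_j\Bigr)\sigma^{-(\alpha-1)}
\end{equation}
for large $\sigma$. Taking logs and dividing by $\log\sigma$ gives $\liminf_{\sigma\to\infty}\bigl(-\log\Pr({\bf Q}_i>\sigma)/\log\sigma\bigr)\ge \alpha-1$, so the tail index under probabilistic scheduling is at least $\alpha-1$. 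Combined with the preceding theorem's upper bound of $\alpha-1$ on the tail index of \emph{any} scheduling policy, this gives equality and completes the proof.

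The main obstacle is philosophical rather than technical: one must be comfortable arguing that replacing the maximum by a sum (a union bound) does not erode the tail index. This is a standard property of regularly varying distributions — the tail of a finite maximum inherits the tail index of its marginals, regardless of dependence — so the union bound incurs only a constant-factor loss and is tight on the logarithmic scale used to define the tail index. The rest is bookkeeping, and no new technical machinery beyond what is already developed in Sections \ref{tail_ps} and the single-server analysis is needed.
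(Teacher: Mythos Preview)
Your proposal is correct and follows essentially the same route as the paper: the paper also fixes a feasible probabilistic-scheduling policy (it takes uniform $\pi_{ij}=k/n$), invokes the per-server heavy-tail asymptotic to get tail index $\alpha-1$ for each chunk's sojourn time, and then uses the union bound $\Pr(\max_j A_j\ge x)\le\sum_j\Pr(A_j\ge x)$ to conclude that the file latency inherits the same tail index. The only cosmetic difference is that the paper explicitly separates queue-waiting time (index $\alpha-1$) from service time (index $\alpha$) before summing, whereas you work directly with the sojourn time; since the heavier tail dominates, this makes no difference on the logarithmic scale.
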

\begin{proof}
In order to show that probabilistic scheduling achieves this tail index, we consider the simple case where all the $n$-choose-$k$ sets are chosen equally likely for each file. Using this, we note that each server is accessed with equal probability of $\pi_{ij}=k/n$. Thus, the arrival rate at the server is Poisson and the tail index of the waiting time at the server is $\alpha-1$.

The overall latency  of a file chunk is the sum of the queue waiting time and the service time. Since the service time has tail index of $\alpha$, the overall latency for a chunk is $\alpha-1$.  Probability that latency is greater than $x$ is determined by the $k^{th}$ chunk to be received. The probability is upper bounded by the sum of probability over all servers that waiting time at a server is greater than $x$. This is because $\Pr(\max_j(A_j)\ge x)\le \sum_j \Pr(A_j\ge x)$ even when the random variables $A_j$ are correlated.  Finite sum of terms, each with tail index $\alpha-1$ will still give the term with tail index $\alpha-1$ thus proving that the tail index with probabilistic scheduling is $\alpha-1$. 
\end{proof}

We note that even though we assumed a total of $n$ servers, and the erasure code being the same, the above can be extended to the case when there are more than $n$ servers with uniform placement of files and each file using different erasure code. The upper bound argument does not change as long as number of servers are finite. For the achievability with probabilistic scheduling, we require that the chunks that are serviced follow a Pareto distribution with shape parameter $\alpha$. Thus, as long as placed files on each server are placed with the same distribution and the access pattern does not change the nature of distribution of accessed chunks from a server, the result holds in general.
\section{Simulations}\label{sec:ps_sims}
We define $\boldsymbol{q}=(\pi_{i,j}\forall i=1,\cdots,r\text{ and }j=1,\cdots,m)$,
and $\boldsymbol{t}=\left(\widetilde{t}_{1},\widetilde{t}_{2},\ldots,\right.$ $\left.\widetilde{t}_{r}; \overline{t}_{1},\overline{t}_{2},\ldots,\overline{t}_{r}\right)$.
Note that the values of $t_{i}$'s used for mean latency and tail
latency probability may be different and the parameters $\widetilde{t}$
and $\overline{t}$ indicate these parameters for the two cases, respectively.
Our goal is to minimize the two proposed QoE metrics over the choice
of access decisions and auxiliary bound parameters. The objective
can be modeled as a convex combination of the two QoE metrics since
this is a multi-objective optimization.

To incorporate for weighted fairness and differentiated services,
we assign a positive weight $w_{i}$ for each QoE for file $i$. Without
loss of generality, each file $i$ is weighted by the arrival rate
$\lambda_{i}$ in the objective (so larger arrival rates are weighted
higher). However, any other weights can be incorporated to accommodate
for weighted fairness or differentiated services. Let $\overline{\lambda}=\sum_{i}\lambda_{i}$
be the total arrival rate. Hence, $w_{i}=\lambda_{i}/\overline{\lambda}$
is the ratio of file $i$ requests. The first objective is the minimization
of the mean latency, averaged over all the file requests, and is given
as $\sum_{i}\frac{\lambda_{i}}{\overline{\lambda}}\,\mathbb{Q}_{i}$.
The second objective is the minimization of latency tail probability,
averaged over all the file requests, and is given as $\sum_{i}\frac{\lambda_{i}}{\overline{\lambda}}\,\text{Pr}\left(\mathbb{Q}_{i}\geq\sigma\right)$. 

By using a special case of the expressions for the mean latency and the latency tail
probability in Sections  \ref{mean_ps} and \ref{tail_ps}, optimization of a convex combination of the two QoE
metrics can be formulated as follows.

\begin{align}
\text{\textbf{min}\,\,\,\,\,} & \sum_{i=1}^{r}\frac{\lambda_{i}}{\overline{\lambda}}\left[\theta\,\frac{1}{\widetilde{t}_{i}}\text{log}\left(\sum_{j=1}^{m}q_{i,j}\frac{(1-\rho_{j})\widetilde{t}_{i}\mathbb{Z}_{j}(\widetilde{t}_{i})}{\widetilde{t}_{i}-\Lambda_{j}\left(\mathbb{Z}_{j}(\widetilde{t}_{i})-1\right)}\right)\right.\nonumber \\
& \left.+(1-\theta)\sum_{j=1}^{m}\frac{q_{i,j}}{e^{\overline{t}_{i}\sigma}}\frac{(1-\rho_{j})\overline{t}_{i}\mathbb{Z}_{j}(\overline{t}_{i})}{\overline{t}_{i}-\Lambda_{j}(\mathbb{Z}_{j}(\overline{t}_{i})-1)}\right]\label{eq:optFun-2}\\
\boldsymbol{\text{s.t.}}\nonumber 
\end{align}

\begin{align}
& \mathbb{Z}_{j}(t_{i})=\frac{\alpha_{j}}{\alpha_{j}-t_{i}}e^{\beta_{j}t_{i}}\,\,,\,\forall j\label{eq:mgf_obj}\\
& \rho_{j}=\frac{\Lambda_{j}}{\alpha_{j}}+\Lambda_{j}\beta_{j}<1\,\,,\,\forall j\label{eq:rho_obj}\\
& \Lambda_{j}=\sum_{i}\lambda_{i}q_{i,j}\,\,,\,\forall j\label{eq:rate_obj}\\
& \sum_{j}q_{i,j}=k_{i}\,\,,\,\forall i\label{eq:qij_obj}\\
& q_{i,j}=0,\,\,j\notin\mathcal{G}_{i}\,\,,\forall i,j\label{eq:qij_obj_2}\\
& q_{i,j}\in[0,1]\,\,,\,\forall i,j\label{eq:qij_obj3}\\
& \widetilde{t}_{i}>0\,\,,\,\forall i\label{eq:t1}\\
& \overline{t}_{i}>0\,\,,\,\forall i\label{eq:t2}\\
& \widetilde{t}_{i}(\widetilde{t}_{i}-\alpha_{j}+\Lambda_{j})+\Lambda_{j}\alpha_{j}(e^{\beta_{j}\widetilde{t}_{i}}-1)<0\label{eq:t3}\\
& \overline{t}_{i}(\overline{t}_{i}-\alpha_{j}+\Lambda_{j})+\Lambda_{j}\alpha_{j}(e^{\beta_{j}\overline{t}_{i}}-1)<0\label{eq:t4}\\
\nonumber \\
\boldsymbol{\text{var}}\text{\,\,\,\,\,} & \boldsymbol{q}\,,\boldsymbol{t},\nonumber 
\end{align}
where $\theta\in[0,1]$ is a trade-off factor that determines the
relative significance of mean latency and latency tail probability
in the objective function. By changing $\theta$ from $\theta=1$
to $\theta=0$, the solution for (\ref{eq:optFun-2}) spans the solutions
that minimize the mean latency to ones that minimize the tail latency
probability. Note that constraint (\ref{eq:rho_obj}) gives the load
intensity of server $j$. Constraint (\ref{eq:rate_obj}) gives the
aggregate arrival rate $\Lambda_{j}$ for each node for the given
probabilistic scheduling probabilities $q_{i,j}$ and arrival rates
$\lambda_{i}$. Constraints (\ref{eq:qij_obj})-(\ref{eq:qij_obj3})
guarantee that the scheduling probabilities are feasible. Also, Constraints
(\ref{eq:t1})-(\ref{eq:t4}) ensure that the moment generating function
given in (\ref{eq:mg1_waitingTime}) exists. Note that the optimization
over $\boldsymbol{q}$ helps decrease the overall latency which gives
significant flexibility over choosing the lowest-queue servers for
accessing the files. We further note that the optimization problem
in (\ref{eq:optFun-2}) is non-convex as, for instance, Constraint
(\ref{eq:t3}) is non-convex in $(\boldsymbol{q},\,\boldsymbol{t})$
jointly. In order to solve the problem, we can use an alternating optimization that divides the problem into two subproblems 
that optimize one variable while fixing the another. In order to solve each subproblem, we use the iNner cOnVex Approximation
(NOVA) algorithm proposed in \citep{scutNOVA}, which guarantees convergence to a stationary point. Based on this, it can be shown that the alternating optimization converges to a stationary point. 

To validate our proposed algorithm for joint mean-tail latency and
evaluate its performance, we simulate our algorithm in a distributed
storage system of $m=12$ distributed nodes, $r=1000$ files, all
of size $200$ MB and using $(7,4)$. However, our model can be used
for any given number of storage servers, any number of files, and
for any erasure coding setting. We consider a shifted-exponential
distribution for the chunk service times as it has been shown in real
system measurements on Tahoe and Amazon S3 servers \citep{Makowski:89,CS14,AmazonS3}.
The service time parameters $\alpha_{j}$ and $\beta_{j}$ are shown
in Table \ref{tab:Storage-Nodes-Parameters1}. Unless otherwise explicitly
stated, the arrival rate for the first $500$ files is $0.002s^{-1}$
while for the next $500$ files is set to be $0.003s^{-1}$.
\begin{table}[b]	
	\vspace{-1em}
	\vspace{-0.5em}	
	{\footnotesize{}}%
	\begin{tabular}{|c|c|c|c|c|c|c|}
		\multicolumn{1}{c}{} & \multicolumn{1}{c}{\textbf{\footnotesize{}Node 1}} & \multicolumn{1}{c}{\textbf{\footnotesize{}Node 2}} & \multicolumn{1}{c}{\textbf{\footnotesize{}Node 3}} & \multicolumn{1}{c}{\textbf{\footnotesize{}Node 4}} & \multicolumn{1}{c}{\textbf{\footnotesize{}Node 5}} & \multicolumn{1}{c}{\textbf{\footnotesize{}Node 6}}\tabularnewline
		\hline 
		{\footnotesize{}$\alpha_{j}$ } & {\footnotesize{}$18.23$ } & {\footnotesize{}$24.06$ } & {\footnotesize{}$11.88$ } & {\footnotesize{}$17.06$ } & {\footnotesize{}$20.19$ } & {\footnotesize{}$23.91$}\tabularnewline
		\hline 
	\end{tabular}{\footnotesize \par}
	
	{\footnotesize{}}%
	\begin{tabular}{|c|c|c|c|c|c|c|}
		\multicolumn{1}{c}{} & \multicolumn{1}{c}{\textbf{\footnotesize{}Node 7}} & \multicolumn{1}{c}{\textbf{\footnotesize{}Node 8}} & \multicolumn{1}{c}{\textbf{\footnotesize{}Node 9}} & \multicolumn{1}{c}{\textbf{\footnotesize{}Node 10}} & \multicolumn{1}{c}{\textbf{\footnotesize{}Node 11}} & \multicolumn{1}{c}{\textbf{\footnotesize{}Node 12}}\tabularnewline
		\hline 
		{\footnotesize{}$\alpha_{j}$ } & {\footnotesize{}$27.01$ } & {\footnotesize{}$21.39$ } & {\footnotesize{}$9.92$ } & {\footnotesize{}$24.96$ } & {\footnotesize{}$26.53$ } & {\footnotesize{}$21.80$}\tabularnewline
		\hline 
	\end{tabular}{\footnotesize \par}
{\caption{Storage Node Parameters Used in our Simulation (Shift $\beta_{j}=10\,msec,\,\forall j$
		and rate $\alpha$ in 1/s).  }\label{tab:Storage-Nodes-Parameters1}
}

\end{table}

In order to initialize our algorithm, we use a random placement of
each file on $7$ out of the $12$ servers. Further, we set $q_{i,j}=k/n$
on the placed servers with $t_{i}=0.01$ $\forall i$ and $j\in\mathcal{G}_{i}$.
However, these choices of $q_{i,j}$ and $t_{i}$ may not be feasible.
Thus, we modify the initialization to be closest norm feasible solution.

We compare the proposed approach with two baselines. 
\begin{enumerate}
\item PSP (\textit{Projected Service-Rate Proportional }Access) Policy:
The access probabilities $\boldsymbol{q}$ are assigned proportional
to the service rates of the storage nodes, i.e., $q_{i,j}=k_{i}\frac{\mu_{j}}{\sum_{j}\mu_{j}}$,
where $\mu_{j}=1\left/(\frac{1}{\alpha_{j}}+\beta_{j})\right.$. This
policy assigns servers proportional to their service rates. These
access probabilities are projected toward feasible region in (\ref{eq:optFun-2})
to ensure stability of the storage system. With these fixed access
probabilities, the QoE metrics are optimized over the auxiliary variables
$\mathbf{t}$ using NOVA. 
\item PEA (Projected Equal Access) Policy: In this strategy, we set $q_{i,j}=k/n$
on the placed servers with $t_{i}=0.01$ $\forall i$ and $j\in\mathcal{G}_{i}$.
We then modify the initialization of $\boldsymbol{q}$ to be closest
norm feasible solution given above values of $\boldsymbol{t}$. Finally,
an optimization over $\boldsymbol{t}$ is performed to the objective
using NOVA.
\end{enumerate}

\begin{figure}[t]
\begin{center}	\includegraphics[trim=0in 0in 5in 0in, clip,scale=0.4]{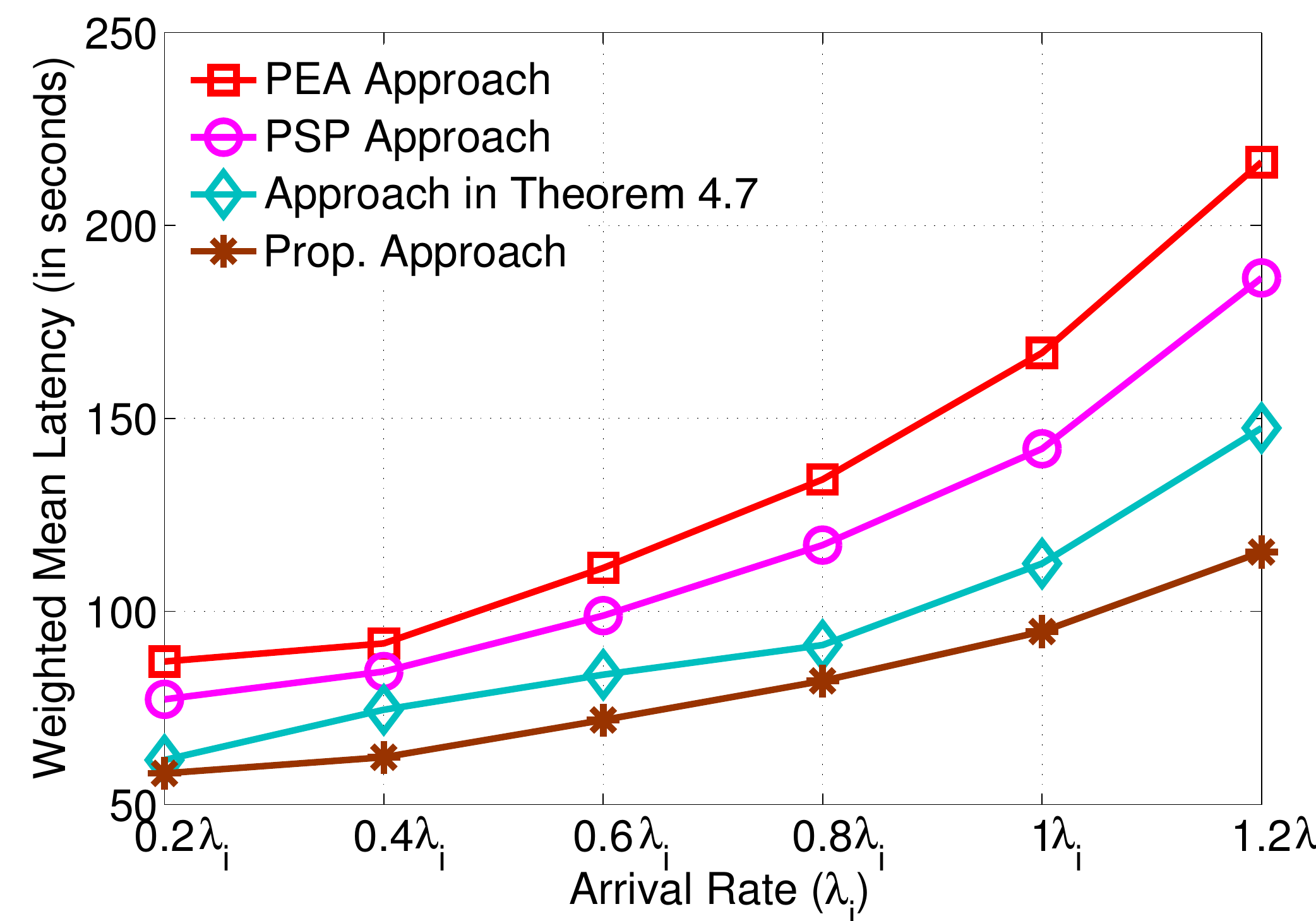}
	\end{center}
	\caption{Weighted mean latency for different file arrival rates. We vary the
		arrival rate of file $i$ from $0.2\times\lambda_{i}$ to 1.$2\times\lambda_{i}$,
		where $\lambda_{i}$ is the base arrival rate.\label{lat_vs_arrRate}}
	
\end{figure}
{\bf Mean Latency: }  We first let $\theta=1$. We also compare with the third policy, that is based on optimizing the mean latency upper bound in Theorem \ref{old_upper_bound}. Figure \ref{lat_vs_arrRate} plots the effect of different arrival
rates on the upper bound of the mean latency where we compare our proposed algorithm
with three different policies. Here, the arrival rate of each file
$\lambda_{i}$ is varied from $0.2\times\lambda_{i}$ to $1.2\times\lambda_{i}$,
where $\lambda_{i}$ is the base arrival rate. We note that our proposed
algorithm outperforms all these strategies for the QoE metric of mean
latency. Thus, both access and file-based auxiliary variables of files
are both important for the reduction of mean latency. We also note
that uniformly accessing servers (PEA) and simple service-rate-based
scheduling (PSP) are unable to optimize the request based on different
factors like arrival rates, different latency weights, thus leading
to much higher latency. As expected, the mean latency increases with
arrival rates. However, at high arrival rates, we see significant
reduction in mean latency for our proposed approach. For example,
we see, at the highest arrival rate, approximately $25\%$ reduction
in weighted mean latency as compared to the proposed approach in \citep{Yu_TON,Xiang:2014:Sigmetrics:2014}, given in Theorem \ref{old_upper_bound}.

\begin{figure}[t]
	\centering\includegraphics[trim=0in 0in 4in 0in,clip,scale=0.4]{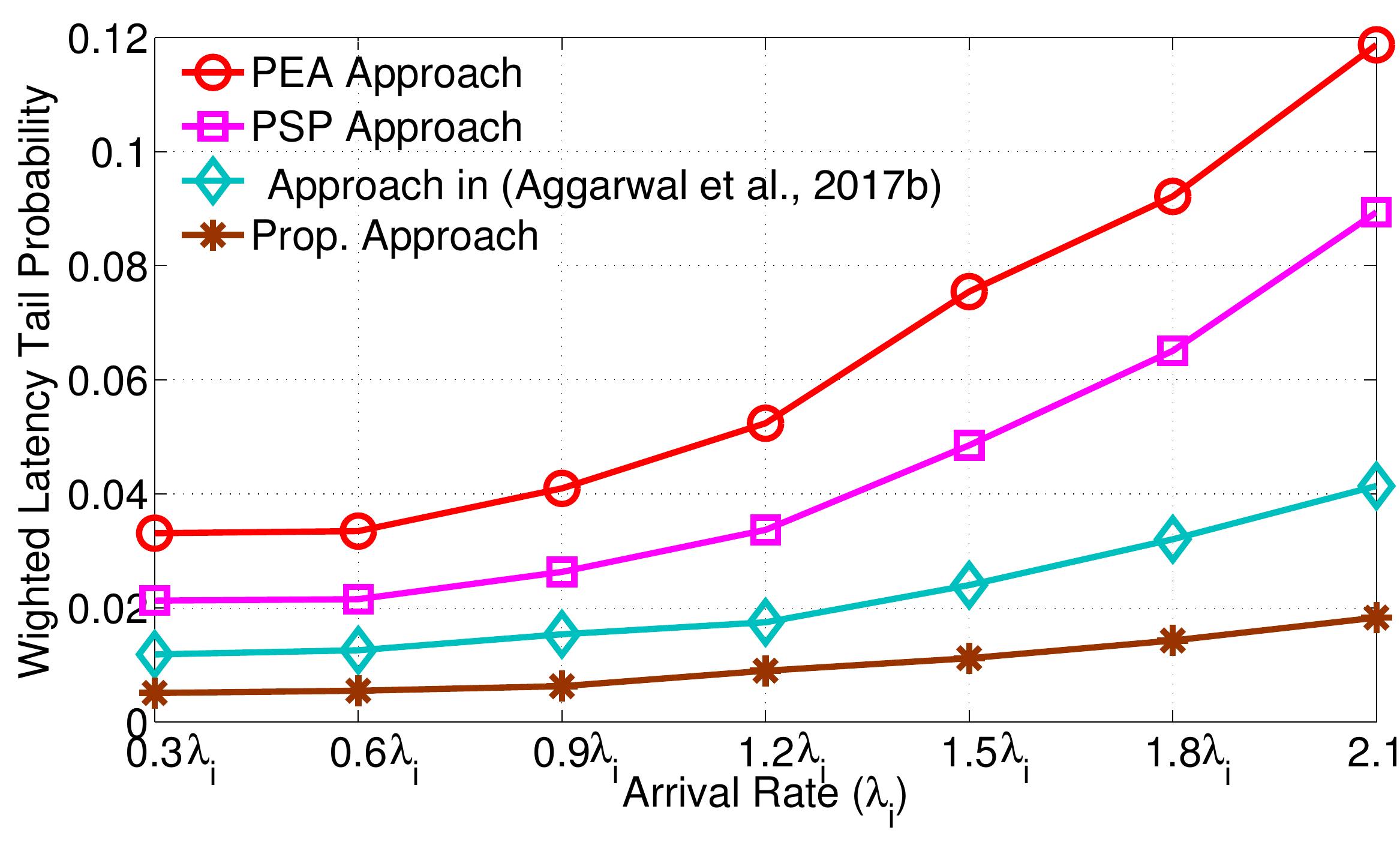}
	
	\caption{Weighted latency tail probability for different file arrival rates.
		We vary the arrival rate of file $i$ from $0.3\times\lambda_{i}$
		to $2.1\times\lambda_{i}$, where $\lambda_{i}$ is the base arrival
		rate.\label{WLTP_vs_arrRate}}
	
\end{figure}

{\bf Tail Latency: } For $\lambda_{i}$'s as the base arrival rates and $\sigma=80$ seconds,
we increase the arrival rate of all files from $0.3\lambda$ to $2.1\lambda$
and plot the weighted latency tail probability in Figure \ref{WLTP_vs_arrRate}.
We note that our algorithm assigns differentiated latency for different
files to keep low weighted latency tail probability. We also observe
that our proposed algorithm outperforms all strategies for all arrival
rates. For example, at the highest arrival rate, the proposed approach
performs much better compared to \citep{Jingxian,al2019ttloc}, i.e.,
a significant reduction in tail probability from $0.04$ to $0.02$.
Hence, reducing the latency of the high arrival rate files and exploiting
the role of auxiliary variables result in reducing the overall weighted
latency tail probability.

\begin{figure}[t]
	\centering\includegraphics[trim=0in 0in 4in 0in,clip,scale=0.4]{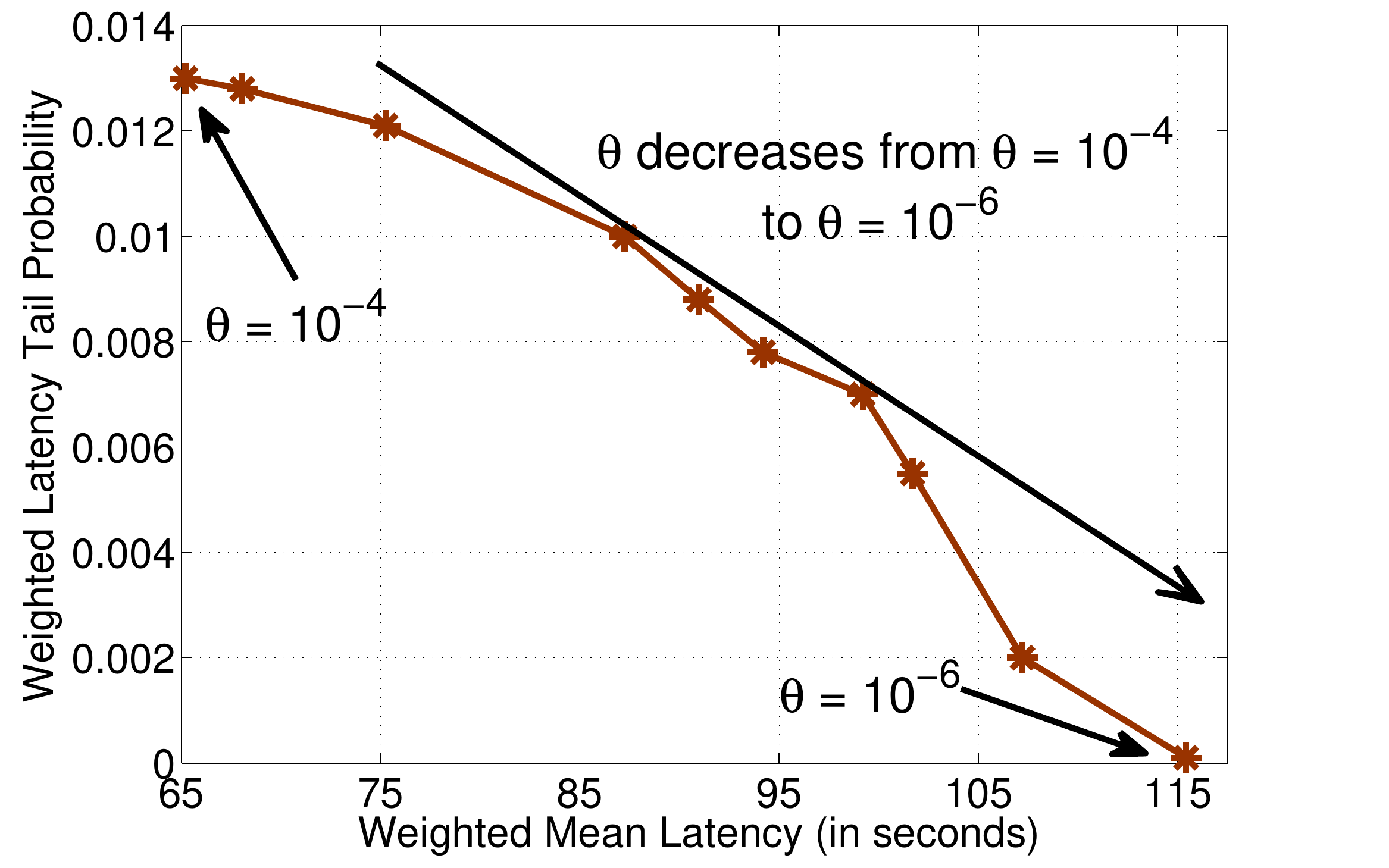}
	
	\caption{Tradeoff between weighted mean latency and weighted latency tail probability
		obtained by varying $\theta$ in the objective function given by (\ref{eq:optFun-2}).
		We vary $\theta$ (coefficient of weighted mean latency) from $\theta=10^{-4}$
		to $\theta=10^{-6}$. These values are chosen carefully to bring the
		two QoE metrics to a comparable scale, since weighted mean latency
		is orders of magnitude higher than weighted latency tail probability.\label{tradeoff}}
	
\end{figure}

{\bf Tradeoff: } We investigate the tradeoff between weighted mean latency and weighted
latency tail probability in Figure \ref{tradeoff}. Intuitively, if
the mean latency decreases, the latency tail probability also reduces.
Thus, one may wonder whether the optimal point for decreasing the
mean latency and the latency tail probability is the same? From Figure
\ref{tradeoff}, we answer this question negatively since for $r=1000$
and $m=12$, we find out that the optimal mean latency is approximately
$43\%$ lower as compared to the mean latency at the value of ($\boldsymbol{q}$,$\boldsymbol{t}$)
that optimizes the weighted latency tail probability. Hence, an efficient
tradeoff point between the two QoE metrics can be chosen based on
the point on the curve that is appropriate for the clients.

\section{Notes and Open Problems}\label{sec:ps_notes}

Probablistic scheduling for erasure-coded storage system was first proposed in \citep{Yu_TON,Xiang:2014:Sigmetrics:2014}, where the mean latency was characterized. The theoretical analysis on  joint latency-plus-cost optimization is evaluated in Tahoe \citep{Tahoe}, which is an open-source, distributed
file system based on the zfec erasure coding library for fault tolerance. The mean latency expressions are further extended in \citep{8406958}. Differentiated latency in erasure-coded storage by investigating weighted queue and priority queue policies was considered in \citep{Yu-ICDCS,Yu-TNSM16}. The problem of erasure-coded storage in a  data center network needs to account for the  limited bandwidth available at both top-of-the-rack and aggregation switches, and differentiated service requirements of the tenants. This is accounted via efficient splitting of network bandwidth among different intra- and inter-rack data flows for different service classes in line with their traffic statistics \citep{Yu-CCGRID,Yu-TCC16}. Erasure coding can lead to new caching designs, for which the latency has been characterized \citep{Sprout,Yu-TON16}. The proposed approach is prototyped using
Ceph, an open-source erasure-coded storage system \citep{Ceph} and
tested on a real-world storage testbed with an emulation of
real storage workload, as will de detailed in Chapter \ref{chpt:impl}. %

The evaluation of tail latency in erasure-coded storage systems using probalistic scheduling was first considered in \citep{Jingxian}. This was further extended in \citep{al2019ttloc}, where the probabilistic scheduling-based algorithms were shown to be (asymptotically) optimal since they are able to achieve the exact tail index. The analysis in this chapter further extends these works, and is shown to outperform these previous analysis in this chapter. These extended results appear for the first time in this monograph.  The authors of \citep{wang2019delay} considered asymptotic regime for $n$ in the case of uniform probabilistic scheduling. 

The results for mean and tail latency in this Chapter have been extended from the works above, with an aim of giving a concise representation for general service process. The approach in this chapter has also been used in \citep{Abubakr_meantail}, where TTL based caching is also considered, and \citep{al2018video}, where the results are extended to stall duration (and will be covered in detail in Chapter \ref{sec:video}).

The approach could be further extended on the following directions:
\begin{enumerate}
	\item {\bf Placement of multiple chunks on the same node}: This case arises when a group of storage nodes share a single bottleneck (e.g., outgoing bandwidth at a regional datacenter) and must be modeled by a single queue, or in small clusters the number storage node is less than that of created file chunks (i.e., $n_i>m$). As a result, multiple chunk requests corresponding to the same file request can be submitted to the same queue, which processes the requests sequentially and results in dependent chunk service times.  The analysis in this chapter can be extended on the lines of \citep{Yu_TON}. 
	\item {\bf File can be retrieved from more than $k$ nodes}: We first note that file can be retrieved by obtaining $F_i/d_i$ amount of data from $d_i\ge k_i$ nodes with the same placement and the same $(n_i,k_i)$ MDS code. To see this, consider that the content at each node is subdivided into $B=\binom{d_i}{k_i}$ sub-chunks (We assume that each chunk can be perfectly divided and ignore the effect of non-perfect division). Let ${\cal L} = \{{\cal L}_1, \cdots, {\cal L}_{ B}\}$ be the list of all ${B}$ combinations of $d_i$ servers  such that each combination is of size $k_i$. In order to access the data, we get $m^{\text{th}}$ sub-chunks from all the servers in ${\cal L}_m$ for all $m = 1, 2, \cdots {B}$. Thus, the total size of data retrieved is of size $F_i$, which is evenly accessed from all the $d_i$ nodes. In order to obtain the data, we have enough data to decode since $k_i$ sub-chunks are available for each $m$ and we assume a linear MDS code. %
	
In this case, smaller amount of data can be
	obtained from more nodes. Obtaining data from more nodes
	has an effect of considering worst ordered statistics having
	an effect on increasing latency, while the smaller file size
	from each of the node facilitating higher parallelization, and thus
	decreasing latency. The optimal value of the number of disks
	to access can then be optimized. However, the analysis of the mean and tail latency can be easily extended following the approach in \citep{Yu_TON}.
	
	\item {\bf Asymptotic Independence Results for Heterogenous Files and Servers: } We note that the result in Section \ref{ps_asymp} states that the steady state job delay is upper bounded by the delay given by independent task delays. Further, the steady-state job delay holds was characterized for $n$ large. However, these results hold for all files of same size and same erasure code. Further, the servers are assumed to be homogenous. Finally, the probabilistic scheduling probabilities are assumed to be equal. Extension of these results when such assumptions do not hold are open. 
	
	\item {\bf  Efficient Caching Mechanisms: } Efficient caching can help reduce both mean and tail latency. Different caching mechanisms based on Least-Resenly-Used strategy and its adaptations have been studied \citep{Abubakr_meantail,friedlander2019generalization}. Erasure-coded mechanisms of caching have also been explored \citep{Yu-TON16}. In an independent line of work, coded caching strategies have been proposed which use a single central server \citep{pedarsani2015online}, with extensions to distributed storage \citep{luo2019coded}. Integrating efficient caching mechanisms with distributed storage and evaluating them in terms of latency is an interesting problem. 
\end{enumerate}

\chapter{Delayed-Relaunch Scheduling Approach}
\label{chp:relauch}

In this Chapter, we introduce the model of Delayed Relaunch Scheduling in Section~\ref{drs_defn}, which is first proposed for distributed storage in this monograph. This model generalizes the models of Fork-Join Scheduling in Chapter \ref{sec:fj_sch} and the Probabilistic Scheduling in Chapter \ref{sec:prob_sch}, and thus the guarantees in those chapters hold for the relevant parameters. Such a model is generalized from the earlier works on speculative execution for cloud computing \citep{aktacs2019straggler,Straggler2020Inf}. Even though queueing analysis is not available in general for this strategy, the analysis is provided for a single job. The inter-service time of different chunks is provided in Section \ref{drs_inter}, which is used to characetrize two metrics - Mean Service Completion Time and Mean Server Utilization Cost in Section \ref{dsr_metrics}, for shifted exponential service times of the homogenous servers. Sections~\ref{sec:dsr_sims} and \ref{sec:dsr_notes} contain simulation results and notes on future directions, respectively.

\section{Delayed-Relaunch Scheduling}\label{drs_defn}

We consider the system model introduced in Chapter~\ref{model}. In Fork-Join scheduling, the request was sent to all $n_i$ servers and the job completed when $k_i$ servers finish execution. In Probabilistic scheduling, the request was sent to $k_i$ servers using a probabilistic approach. The delayed relaunch scheduling sends the requests to the servers in stages. In stage $d$, the request is sent to $n_{i,d}$ servers. The job is complete when $k_i$ servers have finished execution. The time between the stages can either be deterministic, random variable independent of server completion times, or a random variable based on the different task completion times. 

Since in Fork-Join scheduling, all $n_i$ servers may be busy processing the file wasting time at all of $n_i-k_i$ servers that will eventually not be used, the delayed scheduling aims to reduce the additional time spent at $n_i-k_i$ servers by launching them at a later time.

In order to see the concept of delayed relaunch scheduling, see Figure \ref{Fig:timeStampsOnRealLine} (where file index $i$ is supressed). For a file request, $n_{i,0}$ tasks are requested at time $t_0=0$, $n_{i,1}$ tasks are requested at time $t_1$, and so on. Based on these requests, the overall job is complete when $k_i$ servers have finished execution.

\begin{figure}[hhh]
	\centering
	\scalebox{0.8}{\begin{tikzpicture}
font=\small,
\draw (0,0) -- (8,0);
\foreach \i in {0,1,...,8} %
\draw (\i,0.1) -- + (0,-0.2) node[below] {$\i$};
\fill[blue] (0,0) circle (0.6 mm) node[above] {$t_0$};
\fill[blue] (2,0) circle (0.6 mm) node[above] {$t_1$};
\fill[blue] (4,0) circle (0.6 mm) node[above] {$t_2$};

\draw [->] (0,-1) -- (7.3,-1) ;
\draw [->] (0,-1.2) -- (6.75,-1.2) ;
\draw [->] (0,-1.4) -- (6.4,-1.4) ;
\draw [->] (0,-1.6) -- (5.4,-1.6) ;

\draw [->] (2,-2) -- (6.6,-2) ;
\draw [->] (2,-2.2) -- (6.25,-2.2) ;
\draw [->] (2,-2.4) -- (7,-2.4) ;
\draw [->] (2,-2.6) -- (6,-2.6) ;
\draw [->] (2,-2.8) -- (5.25,-2.8) ;

\draw [->] (4,-3.2) -- (5,-3.2) ;
\draw [->] (4,-3.6) -- (7.5,-3.6) ;
\draw [->] (4,-3.4) -- (6.5,-3.4) ;

\draw [blue, dotted] (0,0) -- (0,-5);
\draw [blue, dotted] (2,0) -- (2,-5);
\draw [blue, dotted] (4,0) -- (4,-5);
\draw [green, densely dashed] (5,0) -- (5,-5);

\draw [red, dashed, ->] (5,-1) -- (7.3,-1);
\draw [red, dashed, ->] (5,-1.2) -- (6.75,-1.2);
\draw [red, dashed, ->] (5,-1.4) -- (6.4,-1.4);
\draw [red, dashed, ->] (5,-1.6) -- (5.4,-1.6);

\draw [red, dashed, ->] (5,-2) -- (6.6,-2);
\draw [red, dashed, ->] (5,-2.2) -- (6.25,-2.2);
\draw [red, dashed, ->] (5,-2.4) -- (7,-2.4);
\draw [red, dashed, ->] (5,-2.6) -- (6,-2.6);
\draw [red, dashed, ->] (5,-2.8) -- (5.25,-2.8);

\draw [red, dashed, ->] (5,-3.4) -- (6.5,-3.4);
\draw [red, dashed, ->] (5,-3.6) -- (7.5,-3.6);

\draw[|<->|, loosely dashed] (0,-4.2) -- (5,-4.2) node [above,text width=3cm,align=center,midway] {\textbf{$s_1$}};

\draw [violet, decorate,decoration={brace,amplitude=1pt, mirror},xshift=-2pt,yshift=0pt]
(0,-1) -- (0,-1.6) node [black,midway,xshift=-0.4cm] {\footnotesize $n_0$};
\draw [violet, decorate,decoration={brace,amplitude=1pt, mirror},xshift=-2pt,yshift=0pt]
(2,-2) -- (2,-2.8) node [black,midway,xshift=-0.4cm] {\footnotesize $n_1$};
\draw [violet, decorate,decoration={brace,amplitude=1pt, mirror},xshift=-2pt,yshift=0pt]
(4,-3.2) -- (4,-3.6) node [black,midway,xshift=-0.4cm] {\footnotesize $n_2$};

\end{tikzpicture}}
	\caption{
		This figure illustrates  two-forking, 
		by plotting the different completion times on the real line, 
		with  the forked servers $n_0=4$, $n_1=5$, $n_2=3$ at forking points $t_0=0$, $t_1=2$, $t_2=4$. 
		The first task completes at $s_1$. 
	}
	\label{Fig:timeStampsOnRealLine}
\end{figure}
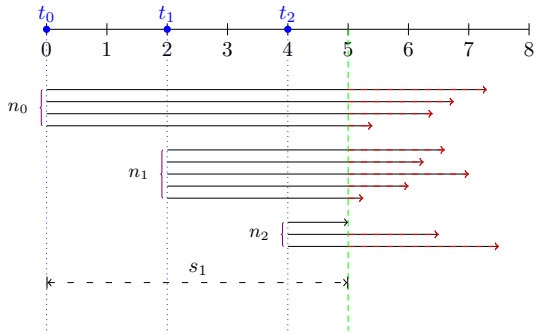

In this Chapter, we assume $n_i=n$  and $k_i=k$ for all $i$, and thus index $i$ will be supressed. We assume homogeneous servers with shifted exponential service times, ${\rm Sexp}(c,\mu)$, where ${\rm Sexp}(c,\mu)$ is as defined in \eqref{eq:Sexp}. Analysis of delayed relaunch scheduling is not as straightforward due to the added complexity in choosing $t_{d}$. For Fork-Join scheduling, $n_0=n$ and others are zero, while in the case of probabilistic scheduling $n_0=k$. In addition, the choice of servers has similar challenges as in Probabilistic Scheduling. The cancellation of remaining tasks after $k$ finished execution is akin to the Fork-Join scheduling. Another challenge in the scheduling is as to where in the queue should the tasks requested at time $t_1$ be placed - at the tail or based on the job request times or some other approach? These challenges make the problem hard to analyze. We note that since the approach is a generalization of Fork-Join scheduling and Probabilistic Scheduling, the latency optimized over the different parameters of Delayed Relaunch scheduling is lower than that for Fork-Join scheduling and Probabilistic Scheduling. Thus, the tail index optimality holds also for delayed relaunch scheduling.

In order to make progress, we do not consider the scheduling approach in this chapter while focus on a single job. The analysis for a stream of jobs and general service time distributions is left as a future work.  We assume a single-fork scheduling, 
where a file request starts at $n_0$ parallel servers at time $t_0 = 0$, 
and adds $n_1 = n - n_0$ servers at a random time instant $t_1$ corresponding to service completion time of the $\ell_0$th {coded sub-task} out of $n_0$ initial servers. 
The total service completion time is given by $t_2$ when the remaining {coded sub-tasks} at $\ell_1 = k-\ell_0$ servers are completed. 
Since we can't have more service completions than the number of servers in service, 
we have $\ell_0 \le n_0$ and $\ell_0 + \ell_1 = k \le n$. The overall scheduling approach choses the first $n_0$ servers using probabilistic scheduling among the $n$ servers having the corresponding file. Further, the choice of $n_1$ servers is from the rest of $n-n_0$ servers  that were not chosen for the first phase. The request is cancelled from $n_0+n_1-k$ servers that still have the file chunk in queue/service when $k$ chunks have been received.

We note that the latency analysis of the delayed relaunch scheduling is open. In this chapter, we will analyse the delayed relaunch scheduling for a single job. In this case, there is no queue and earlier jobs in the system. We will consider two metrics - the service completion time and the service utilization cost. The first indicate the latency in extremely lightly loaded system while the second indicate the amount of load created by the job for other jobs in the system.

The service completion time for $k$ requested chunks is denoted by $t_{2}$ 
and the server utilization cost by $W$. 
We denote the service completion time of $r$th {coded sub-task} in $i$th stage $[t_i , t_{i+1})$ by $t_{i,r}$ where $i \in \{0,1\}$. 
Since each stage consists of $\ell_i$ service completions, we have $r \in \{0, \dots, \ell_i\}$ such that $t_{i,0} = t_i$ and $t_{i, \ell_{i}} = t_{i+1,0} = t_{i+1}$. 

Assuming that a server is discarded after its chunk completion, 
we can write the utilization cost in this case as the time-integral of number of servers that are ON during the service completion $[0, t_2]$, multiplied by the server utilization cost per unit time
\EQN{ 
	\label{eqn:ServerUtilizationCost}
	W = \lambda \sum_{i=0}^{1} \Big[\sum_{r = 0}^{\ell_{i}-1}(t_{i, r+1} - t_{i, r}) \Big(\sum_{j=0}^{i}(n_j - \ell_j)+ \ell_i - r \Big)\Big].
}
The total service completion time $S = t_2$ can be written as the following telescopic sum
\EQN{
	\label{eqn:ServiceCompletionTime}
	S = \sum_{i=0}^{1} \left[\sum_{r = 1}^{\ell_{i}}(t_{i, r} - t_{i, r-1}) \right].
}

Thus, both the metrics rely on inter-service times $t_{i, r} - t_{i, r-1}$, which will be characterized in the next section, followed by the  result on the two metrics.

\section{Characterization of  Inter-Service Times of Different Chunks for Single Job}\label{drs_inter}

Before we start the analysis, we describe some preliminary results and definitions. Let $T_i$, $i\in \{1, \cdots, n\}$,  be a shifted exponential with rate $\mu$ and shift $c$, 
such that the complementary distribution function $\bar{F} = 1 - F$ can be written 
\EQN{
	\label{eqn:ShiftedExp}
	\bar{F}(x) \triangleq P\{T_i > x\} = {\bf 1}_{x \in [0, c]} + e^{-\mu(x-c)}{\bf 1}_{ x \ge c}.
}
We see that $T'_i \triangleq T_i-c$ are \emph{i.i.d.} random variables distributed exponentially with rate $\mu$. 
We denote the $j$th order statistic of $(T'_1, \dots, T'_n)$ by $X^n_j$. The $j$th order statistic of  $(T_1, \dots, T_n)$ is $c + X^n_j$.  The distribution of ordered statistics is given as follows. 
\begin{lemma}
	\label{rem:DistOS}
	Let $(X_1, \dots, X_n)$ be $n$ \emph{i.i.d.} random variables with common distribution function $F$, 
	and we denote the  $j$th order statistics of this collection by $X_j^n$. 
	Then the distribution of $X_j^n$ is given by 
	$
	P\set{X_j^n \le x} = \sum_{i=j}^n\binom{n}{i}F(x)^i\bar{F}(x)^{n-i}. 
	$
\end{lemma}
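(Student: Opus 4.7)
The plan is to translate the event $\{X_j^n \le x\}$ into a counting event about how many of the unordered samples fall below $x$, and then use independence to evaluate the resulting probability via a binomial sum. The key observation is that the $j$th order statistic $X_j^n$ is at most $x$ \emph{if and only if} at least $j$ of the $n$ random variables $X_1, \ldots, X_n$ are at most $x$; indeed, if fewer than $j$ samples lie in $(-\infty, x]$, then the $j$th smallest must exceed $x$, and conversely.

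Next, I would define the indicator variables $Y_i = \mathbf{1}_{\{X_i \le x\}}$ for $i = 1, \ldots, n$. By the i.i.d.\ assumption on the $X_i$'s, the $Y_i$'s are i.i.d.\ Bernoulli random variables with common success probability $P\{X_i \le x\} = F(x)$. Consequently the count $N(x) \triangleq \sum_{i=1}^n Y_i$ is Binomial$(n, F(x))$. Combining this with the order-statistic identity above gives
\begin{equation}
P\{X_j^n \le x\} = P\{N(x) \ge j\} = \sum_{i=j}^n \binom{n}{i} F(x)^i (1 - F(x))^{n-i},
\end{equation}
which is the claimed formula after substituting $\bar{F}(x) = 1 - F(x)$.

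Since each of the two steps (the set-theoretic equivalence and the binomial tail expression) is elementary and the i.i.d.\ hypothesis gives independence for free, I do not anticipate a genuine obstacle; the only thing requiring minor care is the inequality in ``at least $j$,'' which must be a non-strict inequality on both sides to avoid an off-by-one error in the summation index. No distributional assumptions on $F$ (such as continuity) are needed, since the argument only uses $P\{X_i \le x\} = F(x)$.
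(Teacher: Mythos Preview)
Your proof is correct and is the standard textbook argument for the distribution of an order statistic. The paper itself does not supply a proof of this lemma; it is stated as a preliminary fact without justification, so there is nothing to compare against beyond noting that your argument is exactly the classical one.
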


The next result provides the mean gap between two inter-arrivals. 
\begin{lemma}
	\label{rem:DiffOS}
	Denoting $X_0^n = 0$, 
	from the memoryless property of $T'_i$, we observe the following equality in joint distribution of two vectors
\begin{equation}
		(X_{j}^n-X_{j-1}^n:  j \in [n]) = \left(\frac{T'_{j}}{n-j+1}: j \in [n]\right). 
	\end{equation}
\end{lemma}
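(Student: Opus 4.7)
The plan is to establish this as the classical R\'enyi representation of exponential order statistics, proved by induction on $n$ from two standard facts about the $\mathrm{Exp}(\mu)$ distribution: (a) the minimum of $n$ i.i.d.\ $\mathrm{Exp}(\mu)$ variables is $\mathrm{Exp}(n\mu)$, equivalently distributed as $T'_1/n$; and (b) by the memoryless property, conditional on $X_1^n$ and on the identity $I_1$ of the argmin, the shifted residuals $\{T'_i - X_1^n : i \neq I_1\}$ form a collection of $n-1$ i.i.d.\ $\mathrm{Exp}(\mu)$ variables, independent of $X_1^n$. Together these facts deliver both the correct marginal of the first spacing and the self-similar structure required for the induction, so no heavier machinery (e.g.\ direct computation of the joint density via Lemma~\ref{rem:DistOS}) should be needed.

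First I would dispatch the base case $n=1$, which is immediate since $X_1^1 - X_0^1 = T'_1$. For the inductive step, fact~(a) yields $X_1^n - X_0^n = X_1^n \stackrel{d}{=} T'_1/n$, matching the $j=1$ coordinate of the target vector. Fact~(b) then produces a collection of $n-1$ i.i.d.\ $\mathrm{Exp}(\mu)$ residuals whose ordered values are precisely $X_{j+1}^n - X_1^n$ for $j \in [n-1]$. Applying the induction hypothesis to this $(n-1)$-sample subsystem gives, jointly and independently of $X_1^n$,
\begin{equation}
\left(X_{j+1}^n - X_j^n : j \in [n-1]\right) \stackrel{d}{=} \left(\frac{\tilde T'_j}{n-j} : j \in [n-1]\right),
\end{equation}
for an i.i.d.\ $\mathrm{Exp}(\mu)$ family $(\tilde T'_j)$. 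Relabeling $\tilde T'_j$ as $T'_{j+1}$ and concatenating with the first spacing produces the claimed joint equality in distribution over all $n$ coordinates.

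The only delicate point is the ``independence from $I_1$'' assertion underlying fact~(b): since $I_1$ is uniform on $[n]$ and independent of $X_1^n$, and since the subsequent step involves only the \emph{unordered} collection of residuals (we re-sort them to obtain the order statistics), a measurable reindexing of the $n-1$ surviving variables does not disturb the joint law. Once this point is made precise, the rest of the argument is just book-keeping: no calculations beyond those implicit in the two basic exponential facts (a)--(b) are required, and the shift $c$ plays no role because the statement concerns only the centered variables $T'_i = T_i - c$.
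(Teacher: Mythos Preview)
Your proposal is correct and follows the standard R\'enyi representation argument. The paper itself does not supply a proof of this lemma: it states the result as a known fact, with the only justification being the phrase ``from the memoryless property of $T'_i$'' embedded in the lemma statement, and then uses it freely as a building block in later computations (e.g., Lemmas~\ref{lem:InterArrivalStage0} and~\ref{lem:conditionalInterServiceStage1}). Your write-up simply fleshes out that one-line hint via the natural induction, so there is nothing to compare beyond noting that you have provided the details the paper chose to omit.
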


We next introduce a definition of the Pochhammer function, which will be used in the analysis. 
\begin{definition}
	\label{def:hypergeometricSeries}
	We denote the Pochhammer function $(a)_n \triangleq  \frac{\Gamma(a+n)}{\Gamma(a)}$ 
	to define the $z$-transform of hypergeometric series as 
	\EQN{
		\label{eqn:HGF}
		_p F_q(z) \triangleq {_p F_q\Big[{{a_1, \dots, a_p} \atop {b_1,\dots, b_q}};z \Big]} = \sum_{n=0}^{\infty} \frac{\prod_{i=1}^{p} (a_i)_n z^n}{\prod_{j=1}^{q} (b_j)_n (n)!}.
	}
	Because generalizations of the above series also exist~\citep{gasper2004basic}, 
	this series is referred to here as the hypergeometric series rather than as the generalized hypergeometric series.
\end{definition}
\begin{remark}
	\label{rem:integral1}
	For positive integers $p,q$ and positive reals $c,\mu$, we have the following identity in terms of the hypergeometric series $_p F_q$ defined in Definition~\ref{def:hypergeometricSeries}
	\begin{eqnarray}
	& \int_{0}^{c} x e^{-\mu x} \frac{(1-e^{-\mu x})^q (e^{-\mu x}-e^{-\mu c})^{p-q}}{(1-e^{-\mu c})^{p+2}} dx \nonumber \\
	&= \left(\frac{1}{(p+2)\mu^2\binom{p+1}{q+1}}\right) {_3 F_2\Big[{{1, 1, q+2} \atop {2, p+3}};1-e^{-\mu c} \Big]}.
	\end{eqnarray}
\end{remark}
Using the definition of hypergeometric series, it can be verified that expression in Remark~\ref{rem:integral1} simplifies to the  expression below for $p = q = m-1$.
\begin{corollary}
		\label{rem:integral2}
		For positive integer $m$ and positive reals $c,\mu$, 
		we have the following identity 
		$m \mu\int_{0}^{c}  x  e^{-\mu x} (1-e^{-\mu x})^{m-1} dx \\
		= c (1-e^{-\mu c})^{m}   - c + \sum_{i=1}^{m} \frac{(1-e^{-\mu c})^i}{i\mu}.$

	\end{corollary}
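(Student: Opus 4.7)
The plan is to prove the identity by direct evaluation, rather than by specializing Remark~\ref{rem:integral1} at $p = q = m-1$ and simplifying the resulting ${_3F_2}$ at $z = 1 - e^{-\mu c}$. The direct route avoids hypergeometric machinery and produces the closed-form right-hand side explicitly, so it is both shorter and self-contained.

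First, I would apply integration by parts to $I = \int_0^c x\, e^{-\mu x}(1-e^{-\mu x})^{m-1}\,dx$ with $u = x$ and $dv = e^{-\mu x}(1-e^{-\mu x})^{m-1}\,dx$. The antiderivative of $dv$ is handled by the substitution $w = 1 - e^{-\mu x}$, yielding $v = (1-e^{-\mu x})^m/(m\mu)$. This reduces $I$ to a boundary term $c(1-e^{-\mu c})^m/(m\mu)$ minus $(m\mu)^{-1}\int_0^c (1-e^{-\mu x})^m\,dx$, so the task becomes the evaluation of the latter integral.

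Next, to evaluate $J = \int_0^c (1-e^{-\mu x})^m\,dx$, I would substitute $u = 1 - e^{-\mu x}$ (so $dx = du/[\mu(1-u)]$ and $x = -\mu^{-1}\ln(1-u)$), reducing the problem to $-\mu^{-2}\int_0^z u^{m-1}\ln(1-u)\,du$ with $z = 1 - e^{-\mu c}$. Expanding $\ln(1-u) = -\sum_{k\ge 1} u^k/k$ and integrating term-by-term produces a double sum in $z^{m+k}/[k(m+k)]$; the partial-fraction identity $\frac{1}{k(m+k)} = \frac{1}{m}\bigl(\frac{1}{k} - \frac{1}{m+k}\bigr)$ splits it into two pieces that sum to $-\frac{z^m}{m}\ln(1-z) + \frac{1}{m}\ln(1-z) + \frac{1}{m}\sum_{j=1}^m z^j/j$, up to the overall prefactor.

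Finally, using $\ln(1-z) = -\mu c$ collapses the logarithmic terms into multiples of $c$, and plugging the resulting $J$ back into the expression for $I$, then multiplying through by $m\mu$, is expected to produce exactly $c(1-e^{-\mu c})^m - c + \sum_{i=1}^m (1-e^{-\mu c})^i/(i\mu)$, matching the claim. The main obstacle is purely bookkeeping in the second step, namely splitting the series via partial fractions and recognizing the residual as the partial sum of the Taylor expansion of $-\ln(1-z)$; once this recognition is made, the substitution $\ln(1-z) = -\mu c$ yields the stated form with no leftover terms.
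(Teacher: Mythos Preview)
Your approach is sound and genuinely different from the paper's. The paper derives the corollary by specializing Remark~\ref{rem:integral1} at $p=q=m-1$ and asserting that the resulting ${_3F_2}$ collapses to the stated closed form; no explicit simplification is shown. Your direct-calculus route is self-contained and more elementary, avoiding hypergeometric machinery entirely. The tradeoff is that the paper's approach unifies this corollary with the general $p\neq q$ identity needed in Lemma~\ref{lem:meanConditionalCompletion}, whereas yours is special-purpose but transparent.

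One bookkeeping slip to fix: in your second step, the substitution $u=1-e^{-\mu x}$ applied to $J=\int_0^c(1-e^{-\mu x})^m\,dx$ gives $\mu^{-1}\int_0^z u^m(1-u)^{-1}\,du$, not the logarithmic integral $-\mu^{-2}\int_0^z u^{m-1}\ln(1-u)\,du$. The latter is what arises if you apply the same substitution directly to $I$ itself, in which case your step~1 (integration by parts) becomes redundant. Either route closes correctly; in fact, evaluating $J$ via the finite identity $u^m/(1-u)=(1-u)^{-1}-\sum_{j=0}^{m-1}u^j$ is simpler than the log-series manipulation and immediately gives $J=c-\sum_{i=1}^m(1-e^{-\mu c})^i/(i\mu)$, after which $m\mu I = c(1-e^{-\mu c})^m - J$ yields the claim in one line.
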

\if 0
\begin{remark}
	\label{rem:integral2}
	For positive integer $m$ and positive reals $c,\mu$, 
	we have the following identity 
	$m \mu\int_{0}^{c}  x  e^{-\mu x} (1-e^{-\mu x})^{m-1} dx 
	= c (1-e^{-\mu c})^{m}   - c + \sum_{i=1}^{m} \frac{(1-e^{-\mu c})^i}{i\mu}.$
	Using the definition of hypergeometric series in~\eqref{eqn:HGF}, it can be verified that expression in Remark~\ref{rem:integral1} simplifies to the above expression for $p = q = m-1$. 
\end{remark}

\fi 
Having provided some definitions and the basic results, we analyze the time between two service completions of the chunks. 
Recall that we have two contiguous stages. 
The time interval $[t_0, t_1)$ corresponds to the stage~$0$, and 
the interval $[t_1, t_2]$ corresponds to the stage~$1$. In stage~$0$, we switch on $n_0$ initial servers at instant $t_0 = 0$. 
This stage is completed at the single forking point denoted by the instant $t_1$, 
when $\ell_0$ chunks out of $n_0$ are completed. 
At the beginning of stage~$1$, additional $n_1 = n-n_0$ servers are switched on, each working on a unique chunk. 
The job is completed at the end of this second stage, when remaining $k-\ell_0$ chunks are completed. 
The $k$th service completion time is denoted by $t_2$. We will separately analyze these two stages in the following.

\if 0
The server utilization cost can be written as sum of the server utilization cost in each of the two stages as 
\EQN{
	W = W_0 + W_1.
}

\fi 
We will first compute the mean of the interval $[t_{0,r-1}, t_{0,r})$ for each $r \in [\ell_0]$. 

\begin{lemma}
\label{lem:InterArrivalStage0}
The mean time between two coded sub-task completions in the single forking scheme for \emph{i.i.d.} shifted exponential coded sub-task completion times in stage~$0$ is  
\EQN{ 
\label{eqn:MeanInterServiceStage0}
\E{t_{0,r}-t_{0,r-1}} = 
\begin{cases}
c + \frac{1}{\mu n_0}, & r = 1,\\
\frac{1}{\mu (n_0-r+1)}, & r \in \{2, \dots, \ell_0\}.  
\end{cases}
}
\end{lemma}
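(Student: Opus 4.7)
The plan is to recognize that the completion times $t_{0,1}, \dots, t_{0,\ell_0}$ in stage~$0$ are precisely the first $\ell_0$ order statistics of $n_0$ i.i.d.\ shifted exponential service times, and then reduce everything to known facts about order statistics of ordinary exponentials that are already stated in the excerpt.

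First I would set up notation. Since stage~$0$ begins at time $t_{0,0} = t_0 = 0$ and the $n_0$ servers work in parallel with i.i.d.\ service times $T_i = c + T'_i$, where $T'_i \sim \mathrm{Exp}(\mu)$ as in \eqref{eqn:ShiftedExp}, the $r$th completion occurs at $t_{0,r} = c + X_r^{n_0}$, where $X_r^{n_0}$ is the $r$th order statistic of $T'_1,\dots,T'_{n_0}$. This representation lets me handle the boundary case ($r=1$) and the generic case ($r\ge 2$) separately, because only the first gap carries the shift $c$.

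For $r=1$, I would write $t_{0,1}-t_{0,0} = c + X_1^{n_0}$, observe that $X_1^{n_0} = \min_i T'_i$ is exponential with rate $n_0\mu$, and conclude that its mean is $\tfrac{1}{\mu n_0}$, which added to the shift yields $c + \tfrac{1}{\mu n_0}$. For $r \in \{2,\dots,\ell_0\}$ the shift cancels in the difference, giving $t_{0,r}-t_{0,r-1} = X_r^{n_0}-X_{r-1}^{n_0}$; I would then invoke Lemma~\ref{rem:DiffOS}, which asserts the joint-distribution identity $(X_j^n - X_{j-1}^n) \stackrel{d}{=} \bigl(T'_j/(n-j+1)\bigr)$ arising from the memoryless property of the exponential. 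Taking expectations in the $j=r$, $n=n_0$ coordinate delivers $\E{t_{0,r}-t_{0,r-1}} = \tfrac{1}{\mu(n_0-r+1)}$.

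I do not anticipate a real obstacle: the whole argument is a direct application of Lemma~\ref{rem:DiffOS} together with the decomposition $T_i = c + T'_i$. The only care needed is in the $r=1$ case, where the shift $c$ is not annihilated by a difference and must be carried through explicitly; this is exactly what distinguishes the two cases of the piecewise formula.
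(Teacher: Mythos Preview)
Your proposal is correct and follows essentially the same approach as the paper: represent $t_{0,r}=c+X_r^{n_0}$, note that the shift $c$ survives only in the first gap, and apply Lemma~\ref{rem:DiffOS} (the memoryless spacing identity for exponential order statistics) to compute the means. The only cosmetic difference is that for $r=1$ you argue directly that $X_1^{n_0}=\min_i T'_i$ is $\mathrm{Exp}(n_0\mu)$, whereas the paper folds this into the same invocation of Lemma~\ref{rem:DiffOS}.
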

\begin{proof}
	Since $t_{0,r}$ is the completion time of first $r$ coded sub-tasks out of $n_0$ parallel coded sub-tasks, 
	we have $t_{0,r} = c + X_{r}^{n_0}$. 
	Hence, for each $r \in [\ell_0]$, we have 
\begin{equation}
		t_{0,r}-t_{0,r-1} = (c + X_{r}^{n_0})-(c+X_{r-1}^{n_0}). 
	\end{equation}
	The chunk requests are initiated at time $t_{0,0} = t_0 = 0$ and hence the first {chunk} is completed at $t_{0,1} -t_{0,0} = c + X_1^{n_0}$. 
	
	From Lemma~\ref{rem:DiffOS}, 
	we can write the following equality in distribution 
\begin{equation}
		t_{0,r} - t_{0,r-1} = \begin{cases}
			c + \frac{T'_1}{n_0}, &r = 1,\\
			\frac{T'_{r}}{(n_0-r+1)}, & r \in \set{2, \dots, \ell_0},
		\end{cases}
	\end{equation}
	where $(T'_1, \dots, T'_n)$ are \emph{i.i.d.} exponentially distributed random variables with rate $\mu$. 
	Taking expectations on both sides, we get the result.

\end{proof}

Having analyzed the Stage 0, we now  compute the mean of the interval $[t_{1,r-1}, t_{1,r})$ for each $r \in [\ell_1]$. 
The difficulty in this computation is that additional $n_1$ servers that start working on {chunk requests} at the single forking-time $t_1$, 
have an initial start-up time of $c$ due to the shifted exponential service distribution. 
Hence, none of these additional $n_1$ servers can complete service before time $t_1+c$. 
Whereas, some of the $n_0-\ell_0$ servers with unfinished {chunk requests} from stage~$0$
can finish their {chunks} in this time-interval $(t_1, t_1+c]$. 
In general, the number of chunk completions in the interval $(t_1, t_1+c]$ is a random variable, 
which we denote by $N(t_1, t_1+c) \in \set{0, \dots, n_0-\ell_0}$.

We first compute the probability mass function of this discrete valued random variable $N(t_1, t_1+c)$. 
We denote the event of $j-\ell_0$ {chunk} completions in this interval $(t_1, t_1+c]$ for any $\ell_0 \le j \le n_0$ by 
\EQN{
E_{j-\ell_0} \triangleq	\set{N(t_1, t_1+c) = j-\ell_0, t_1 =  c+X_{\ell_0}^{n_0}}.
}

\begin{lemma}
\label{lem:pdfOfNumCompletions}
The probability distribution of the number of {chunk} completions $N(t_1, t_1+c)$ in the interval $(t_1, t_1+c]$  for $\ell_0 \le j \le n_0$ is given by $p_{j-\ell_0} \triangleq P(E_{j-\ell_0})$ where 
\EQN{
\label{eqn:ProbSCStart}
p_{j-\ell_0} =  \binom{n_0-\ell_0}{j-\ell_0}(1-e^{-\mu c})^{j-\ell_0}e^{-(n_0-j)\mu c}. 
}
\end{lemma}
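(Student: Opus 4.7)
The plan is to reduce the count of completions in $(t_1, t_1+c]$ to a simple Binomial random variable by exploiting two facts: (i) the $n_1$ newly activated servers contribute zero completions in this window, and (ii) the residual service times of the unfinished stage-$0$ servers at time $t_1$ are i.i.d.\ $\mathrm{Exp}(\mu)$.

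First, I would argue that the $n_1$ servers that are switched on at $t_1$ cannot complete any chunk during $(t_1, t_1+c]$. Indeed, each of them has service time distributed as $c + T'$ with $T' \sim \mathrm{Exp}(\mu)$, so the minimum time from $t_1$ to its completion is at least $c$. Hence $N(t_1, t_1+c)$ equals the number of completions contributed solely by the $n_0 - \ell_0$ stage-$0$ servers whose chunks were still in service at time $t_1$.

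Next, for each such unfinished stage-$0$ server, the (unshifted) exponential portion of its service has already begun, because $t_1 = c + X^{n_0}_{\ell_0} \ge c$. Consequently, writing its total service time as $c + T'$ with $T' \sim \mathrm{Exp}(\mu)$, the event ``unfinished at $t_1$'' is the event $\{T' > t_1 - c\}$, and by memorylessness of $T'$ the residual service time after $t_1$ is again $\mathrm{Exp}(\mu)$, independently across the $n_0 - \ell_0$ servers. Therefore the probability that any one such server completes within the next $c$ time units is $P(\mathrm{Exp}(\mu) \le c) = 1 - e^{-\mu c}$, and these Bernoulli trials are independent. This independence and the value $1 - e^{-\mu c}$ hold conditionally on $t_1$, and since they do not depend on $t_1$ itself, they hold unconditionally as well.

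Combining the two observations, the random variable $N(t_1, t_1+c)$ is distributed as $\mathrm{Binomial}(n_0 - \ell_0, 1 - e^{-\mu c})$, so for $\ell_0 \le j \le n_0$,
\begin{equation*}
p_{j-\ell_0} = \binom{n_0-\ell_0}{j-\ell_0}(1-e^{-\mu c})^{j-\ell_0}(e^{-\mu c})^{n_0-j},
\end{equation*}
which is exactly the stated expression. The only subtle point, and the step I would write most carefully, is the justification that the residual exponential rate applies \emph{simultaneously and independently} to all $n_0 - \ell_0$ unfinished servers given $t_1$; this relies on the joint memoryless property of independent exponentials conditioned on all exceeding the threshold $t_1 - c$, which is standard but worth stating explicitly.
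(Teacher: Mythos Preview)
Your proof is correct and rests on the same key insight as the paper---the memorylessness of the exponential phase at the forking instant $t_1$---so that the $n_0-\ell_0$ unfinished stage-$0$ servers restart as i.i.d.\ $\mathrm{Exp}(\mu)$ variables. The paper's presentation differs slightly: it expresses the event $E_{j-\ell_0}$ through order statistics as $\{X_j^{n_0}-X_{\ell_0}^{n_0}\le c\}\setminus\{X_{j+1}^{n_0}-X_{\ell_0}^{n_0}\le c\}$, invokes memorylessness to reduce to $P\{X_{j-\ell_0}^{n_0-\ell_0}\le c\}-P\{X_{j+1-\ell_0}^{n_0-\ell_0}\le c\}$, and then applies the order-statistics CDF formula. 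Your route is more direct---you recognize the Binomial$(n_0-\ell_0,\,1-e^{-\mu c})$ structure immediately rather than computing and differencing CDFs---but the two arguments are equivalent in content.
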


\begin{proof}
Let the number of service completions until time $t_1+c$ be $j \in \set{\ell_0, \dots, n_0}$. 
We can write the event of $j-\ell_0$ service completions in the interval $(t_1, t_1+c]$ as 
\EQN{
\set{X_j^{n_0} - X_{\ell_0}^{n_0} \le c}\cap\set{X_{j+1}^{n_0} - X_{\ell_0}^{n_0} \le c}^c.
}
From the definition of order statistics for continuous random variables, we have $X_j^{n_0} < X_{j+1}^{n_0}$. 
This implies that the intersection of events $\set{X_j^{n_0} - X_{\ell_0}^{n_0} \le c}$ and $\set{X_{j+1}^{n_0} - X_{\ell_0}^{n_0} \le c}$ is $\set{X_{j+1}^{n_0} - X_{\ell_0}^{n_0} \le c}$.

Therefore, from the disjointness of complementary events and probability axiom for summation of disjoint events, it follows %
\EQN{
p_{j-\ell_0}  = P \set{X_j^{n_0} - X_{\ell_0}^{n_0} \le c} - P \set{X_{j+1}^{n_0} - X_{\ell_0}^{n_0} \le c}. 
}
Due to memoryless property, we can write the above as
\EQN{
p_{j-\ell_0}  = P \set{X_{j-\ell_0}^{n_0-\ell_0} \le c} - P \set{X_{j+1-\ell_0}^{n_0-\ell_0} \le c}. 
}
From ordered statistics for exponentially distributed random variables with rate $\mu$, 
we get the required form of $p_{j-\ell_0}$ (Let $(X_1, \dots, X_n)$ be $n$ \emph{i.i.d.} random variables with common distribution function $F$, 
and we denote the  $j$th order statistics of this collection by $X_j^n$. 
Then the distribution of $X_j^n$ is given by 
$
P\set{X_j^n \le x} = \sum_{i=j}^n\binom{n}{i}F(x)^i\bar{F}(x)^{n-i}. 
$). 
\end{proof}

\if0
\begin{lem}
Let $(X_1, \dots, X_{n+1})$ be independent random variables distributed exponentially with rates $(\mu_1, \dots, \mu_{n+1})$, then 
\EQ{
\E{X_k \vert \sum_{i=1}^nX_i \le c,  \sum_{i=1}^{n+1}X_i > c} = \red{??}
}
\end{lem}
\ieeeproof{
We can write
\eq{
& \E{X_k\SetIn{\sum_{i=1}^nX_i \le c, \sum_{i=1}^{n+1}X_i > c}} + \E{X_k\SetIn{\sum_{i=1}^{n+1}X_i \le c}}\\
&= \E{X_k\SetIn{\sum_{i=1}^nX_i \le c}}.
}
We can write the first term as 
\EQ{
\E{X_k\SetIn{\sum_{i=1}^nX_i \le c}} = \E{X_k \vert \sum_{i=1}^nX_i \le c}P\set{ \sum_{i=1}^nX_i \le c}.
}
The second term is written as 
\EQ{
\E{X_k\SetIn{\sum_{i=1}^nX_i > c}} = \E{X_k \vert \sum_{i=1}^nX_i > c}P\set{ \sum_{i=1}^nX_i > c}.
}
}
\fi

Let $\{s_1, s_2, \dots, s_{n_0-\ell_0}\}$ be the {chunk} completion times in stage~1 after the forking time $t_1$, which in definition correspond to $\{t_{1,1}= t_1+s_1, t_{1,2}= t_1+s_2, \dots, t_{1,n_0-\ell_0}= t_1+s_{n_0-\ell_0}\}$.
In stage~1, the {chunk} completions  numbered $r \in [j-\ell_0]$ are finished only by the $n_0-\ell_0$ servers within time $t_1+c$, since none of the $n_1$ servers started at forking point $t_1$ are able to finish even a single {chunk} with in the time $t_1+c$, whereas the {chunk} completions numbered $r \in \{j-\ell_0+1, \dots, k-\ell_0\}$ are finished by $n-j$ servers which include subset of combination of both left over initial servers and all forked servers.

We next find mean of $r$th completion time in stage~1 conditioned on the event $E_{j-\ell_0}$ .%

\begin{lemma} 
\label{lem:meanConditionalCompletion}
For any $r \in [j-\ell_0]$ and  $\alpha = 1-e^{-c\mu}$, we have
\EQN{
\E{s_r| E_{j-\ell_0}} = \begin{cases}
_3 F_2 \Big( {{1,1,r+1} \atop {2,j-\ell_0+2}} ; \alpha \Big)\frac{r \alpha}{\mu (j-\ell_0+1)},& r < j-\ell_0,\\
c \Big[1 - \alpha^{-r} + \sum_{i=1}^{r} \frac{\alpha^{i-r}}{ic\mu} \Big],& r = j - \ell_0.
\end{cases}
}
\end{lemma}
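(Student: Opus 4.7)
The plan is to reduce the computation to order statistics of truncated exponentials and then match the resulting integrals to the two tabulated identities (Remark~\ref{rem:integral1} and Corollary~\ref{rem:integral2}).

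First, I would describe the conditional distribution of the relevant completion times. By the memoryless property, the residual service times after $t_1$ of the $n_0-\ell_0$ servers that were already past their shift phase in stage~0 are i.i.d.\ $\mathrm{Exp}(\mu)$; call these $Y_1,\dots,Y_{n_0-\ell_0}$. The $n_1$ servers activated at $t_1$ cannot complete before $t_1+c$ because of the shift $c$. Hence on the event $E_{j-\ell_0}$ the completions $s_1,\dots,s_{j-\ell_0}$ in the interval $(t_1,t_1+c]$ coincide with the $r$-th order statistics of those $Y_i$ that fall in $[0,c]$. Moreover, conditional on $E_{j-\ell_0}$, the $j-\ell_0$ values of $Y_i$ that lie in $[0,c]$ are i.i.d.\ truncated exponentials on $[0,c]$ with density
\[
f(y)=\frac{\mu e^{-\mu y}}{\alpha},\qquad F(y)=\frac{1-e^{-\mu y}}{\alpha},\qquad 1-F(y)=\frac{e^{-\mu y}-e^{-\mu c}}{\alpha}.
\]
Setting $m=j-\ell_0$, the conditional law of $s_r$ is therefore the $r$-th order statistic of $m$ i.i.d.\ samples from $F$.

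Second, I would write the order-statistic density and insert it into $\E{s_r\mid E_{j-\ell_0}}=\int_0^c y\,f_{(r)}(y)\,dy$, which collapses to
\[
\E{s_r\mid E_{j-\ell_0}}=\frac{m!\,\mu}{(r-1)!(m-r)!\,\alpha^{m}}\int_0^c y\,(1-e^{-\mu y})^{r-1}(e^{-\mu y}-e^{-\mu c})^{m-r}e^{-\mu y}\,dy.
\]
For $r<m$, I would invoke Remark~\ref{rem:integral1} with the identification $q=r-1$ and $p=m-1$ (so that $p+2=m+1$ and $\binom{p+1}{q+1}=\binom{m}{r}$). After multiplying by $\alpha^{m+1}$ to match the normalization in the remark and simplifying $\tfrac{m!}{(r-1)!(m-r)!\binom{m}{r}}=r$, the prefactors collapse to $r\alpha/((j-\ell_0+1)\mu)$ times the stated ${}_3F_2$.

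For the boundary case $r=m=j-\ell_0$, the factor $(e^{-\mu y}-e^{-\mu c})^{m-r}$ becomes $1$, so the integral reduces to $\int_0^c y(1-e^{-\mu y})^{m-1}e^{-\mu y}\,dy$, which is exactly the left-hand side of Corollary~\ref{rem:integral2}. Substituting its evaluation and dividing by $\alpha^{m}=(1-e^{-\mu c})^{m}$ gives the second branch of the formula. The only real obstacle is the bookkeeping at step two: lining up the indices $(p,q)$ in Remark~\ref{rem:integral1} with $(m,r)$ and verifying that all prefactors and powers of $\alpha$ cancel to leave the clean ${}_3F_2$ form stated in the lemma.
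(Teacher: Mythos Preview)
Your proposal is correct and takes essentially the same approach as the paper: derive the conditional marginal density of $s_r$ given $E_{j-\ell_0}$ and then evaluate the expectation via Remark~\ref{rem:integral1} (with $q=r-1$, $p=m-1$) for $r<m$ and via Corollary~\ref{rem:integral2} for $r=m$. The only cosmetic difference is that the paper first writes the joint density $\prod_{i=1}^m \frac{i\mu e^{-\mu s_i}}{1-e^{-c\mu}}$ on $\{0<s_1<\cdots<s_m<c\}$ (obtained from the independent exponential inter-completion times) and then integrates out the other coordinates, whereas you identify the conditional law directly as the $r$-th order statistic of $m$ i.i.d.\ truncated exponentials; both routes yield the identical marginal $f_{s_r\mid E_m}=\frac{m\mu e^{-\mu s_r}}{\alpha^m}\binom{m-1}{r-1}(1-e^{-\mu s_r})^{r-1}(e^{-\mu s_r}-e^{-\mu c})^{m-r}$.
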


\begin{proof}
We denote $m = j -\ell_0$ for convenience.
Let $N(t_1, t_1+c) = m$, then $t_1+s_1, \dots, t_1+s_{m}$ are the chunk completion times of the first $m$ servers out of $n_0-\ell_0$ parallel servers in their memoryless phase in the duration $[t_1, t_1+c)$. 
In the duration $[t_{1,r-1}, t_{1,r})$ for $r \in [m]$, there are $n_0-\ell_0-r+1$ parallel servers in their memoryless phase, 
and hence the inter-service completion times $(t_{1,r}-t_{1,r-1}: r \in [m])$ are independent and distributed exponentially with parameter $\mu_r \triangleq (n_0-\ell_0 -r+1)\mu$.
Denoting $s_0=0$,  we have $s_r - s_{r-1} = t_{1,r}-t_{1,r-1}$ for each $r \in [m]$. 
From the definition of $\mu_r$'s and $p_{m}$, the independence of $s_r-s_{r-1}$, and rearrangement of terms we can write the conditional joint density of vector $s = (s_1, s_2, \dots, s_{m})$ given event $E_{m}$ as
\EQN{
	\label{eqn:conditionalpdf}
	f_{s_1, \dots, s_m|E_m} 
	= \prod_{i=1}^m\frac{i\mu e^{-\mu s_i}}{1-e^{-c\mu}}.
}
From the definition of the task completion times, 
the possible values of the vector $s = (s_1, \dots, s_{m})$ 
satisfy the constraint $0 < s_1 < \dots < s_{m} < c$. 
That is, we can write the set of possible values for vector $s$ as $A_{m}$, 
where $A_m$ is a vector of increasing co-ordinates bounded between $(0,c)$, 
and can be written as 
\EQN{
	A_m \triangleq \set{s \in \mathbb{R}^m: 0 < s_1 < \dots < s_m < c}.
}
This constraint couples the set of achievable values for the vector $s$, 
and hence even though the conditional density has a product form, 
the random variables $(s_1, \dots, s_{m})$ are not conditionally independent given the event $E_{m}$. 

To compute the conditional expectation $\E{s_r|E_m}$, we find the conditional marginal density of $s_r$ given the event $E_{m}$. 
To this end,  we integrate the conditional joint density of vector $s$ over variables without $s_r$. 
In terms of $s_r \in (0, c)$, we can write the region of integration  
as the following intersection of regions, 
\EQN{
	A_m^{-r} = \cap_{i < r}\set{0< s_i < s_{i+1}}\cap_{i >r}\set{s_{i-1} <  s_i < c}. %
}
Using the conditional density of vector $s$ defined in~\eqref{eqn:conditionalpdf} in the above equation, 
and denoting $\alpha \triangleq 1-e^{-c\mu}$ and $\alpha_r \triangleq 1- e^{-\mu s_r}$ for clarity of presentation,  
we can compute the conditional marginal density function~\citep{Ross2019}
\EQN{
	\label{eqn:ConditionalMarginalDensity}
	f_{s_r|E_{m}} = \frac{m\mu(1-\alpha_r)}{\alpha^m}\binom{m-1}{r-1}(\alpha_r)^{r-1}(\alpha-\alpha_r)^{m-r}. 
}
The conditional mean $\E{s_r|E_{m}} = \int_{0}^cf_{s_r|E_{m}}ds_r$ is obtained by integrating the conditional marginal density in~\eqref{eqn:ConditionalMarginalDensity}, over $s_r \in (0, c)$. 
For $r \in [m-1]$, the result follows from the integral identity of Remark~\ref{rem:integral1} for $x = s_r, q = r-1, p = m-1$ and $\alpha = 1-e^{-\mu c}$. 
Similarly, the result for $r = j-\ell_0$ follows from Corollary~\ref{rem:integral2}  for $x = s_m$ and $m = j-\ell_0$.
\end{proof}

In stage~1, for $1 \le r \le j-\ell_0$, we have 
\EQN{
t_{1,r}-t_{1,r-1} = (X^{n_0-\ell_0}_r-  X^{n_0-\ell_0}_{r-1}){\bf 1}_{E_{j-\ell_0}}.%
}
For $j-\ell_0+2 \le r \le k-\ell_0$, the difference $t_{1,r}-t_{1,r-1}$ is equal to
\EQN{
\label{eqn:PostShift}
(X^{n-j}_{r-j+\ell_0} - X^{n-j}_{r-j+ \ell_0-1}){\bf 1}_{E_{j-\ell_0}}.%
}
When $r = j-\ell_0+1$, we write the time difference between $r$th and $(r-1)$th {chunk} completion instants as 
\EQN{
t_{1,r}-t_{1,r-1}  =  t_{1,r}-(t_1 + c) + (t_1 + c) - t_{1,r-1}.
}
For $r = j-\ell_0+1$, we have $t_{1,r-1} \le t_1 + c < t_{1,r}$. 
In the disjoint intervals $[t_{1,r-1}, t_1+c)$ and $[t_1+c,  t_{1,r})$, there are $n_0-j$ and $n-j$ \emph{i.i.d.} exponentially distributed parallel servers respectively. 
Since the age and excess service times of exponential random variables are independent at any constant time, 
we have independence of $t_{1,r}-(t_1 + c)$ and $(t_1 + c) - t_{1,r-1}$ for $r=j-\ell_0+1$.

Conditioned on the event $E_{j-\ell_0}$ of $j-\ell_0$ {chunk} completions in the interval $(t_1, t_1+c]$, 
the conditional mean of inter-{chunk} completion time in stage~1 is 
\begin{eqnarray}
&\E {(t_{1,r}-t_{1,r-1}) | E_{j-\ell_0}} = \mathbb{E}[(s_r-s_{r-1})({\bf 1}_\{j-\ell_0 > r -1\}\\
&+ {\bf 1}_\{j-\ell_0 = r -1\} + {\bf 1}_\{j-\ell_0 < r -1\})|E_{j-\ell_0}].
\end{eqnarray}

\begin{lemma}
\label{lem:conditionalInterServiceStage1}
For any $r \in [k-\ell_0]$, and  $\alpha = 1-e^{-c\mu}$
the conditional mean $\E {(t_{1,r}-t_{1,r-1}) | E_{j-\ell_0} } $ equals
\EQN{
\label{eqn:ConditionalMeanInterServiceStage1_c1}
\begin{cases}
 {_2 F_1} \Big( {{1,r} \atop {j-\ell_0+2}} ; \alpha \Big) \frac{r \alpha}{\mu (j-\ell_0+1)}, & r < j-\ell_0+1,\\
c \Big[\frac{1}{\alpha^{(r-1)}} - \sum\limits_{i=1}^{r-1} \frac{\alpha^{i-r+1}}{ic\mu} \Big]+\frac{1}{\mu(n-j)}, & r = j -\ell_0+1,\\
\frac{1}{\mu(n-\ell_0-r+1)}, & r > j -\ell_0+ 1.
\end{cases}
}
\end{lemma}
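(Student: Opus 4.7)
The plan is to handle the three cases separately based on where the $r$th and $(r-1)$th completions in stage~$1$ lie relative to the end of the shift period at $t_1 + c$. The key structural observations are: (i) on the event $E_{j-\ell_0}$, the first $j - \ell_0$ stage-$1$ completions occur strictly within $(t_1, t_1 + c]$ and correspond to the variables $s_1, \ldots, s_{j-\ell_0}$ analyzed in Lemma~\ref{lem:meanConditionalCompletion}; (ii) at time $t_1 + c$, exactly $n - j$ servers remain active, and all of them are in their memoryless exponential phase---the $n_0 - j$ surviving initial servers trivially so, and the $n_1 = n - n_0$ forked servers because their deterministic shift $c$ has just elapsed; (iii) this memorylessness decouples the post-$c$ dynamics from the pre-$c$ conditioning and permits a clean application of Lemma~\ref{rem:DiffOS} for completions beyond $t_1 + c$.

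For Case~$1$ ($r \le j - \ell_0$) I would write $t_{1,r} - t_{1,r-1} = s_r - s_{r-1}$ on $E_{j-\ell_0}$ and apply Lemma~\ref{lem:meanConditionalCompletion} term by term, obtaining a difference of two ${}_3F_2$ series for generic $r$, and then invoke a contiguous-function identity to collapse this into the single ${}_2F_1$ stated in the claim. For Case~$2$ ($r = j - \ell_0 + 1$) I would decompose about $t_1 + c$ as
\begin{equation*}
t_{1,r} - t_{1,r-1} = \bigl[t_{1,r} - (t_1 + c)\bigr] + \bigl[(t_1 + c) - t_{1,r-1}\bigr].
\end{equation*}
The second summand equals $c - s_{j-\ell_0}$, whose conditional expectation follows by subtracting the boundary branch of Lemma~\ref{lem:meanConditionalCompletion} from $c$. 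The first summand is the hitting time from $t_1 + c$ to the next completion among $n - j$ i.i.d.\ Exp$(\mu)$ clocks, which by the memoryless/independence argument above is itself Exp$(\mu(n-j))$ with mean $1/[\mu(n-j)]$. For Case~$3$ ($r > j - \ell_0 + 1$), both endpoints lie beyond $t_1 + c$, so \eqref{eqn:PostShift} reduces the increment to $X^{n-j}_{r-j+\ell_0} - X^{n-j}_{r-j+\ell_0-1}$, and Lemma~\ref{rem:DiffOS} immediately delivers the mean $1/[\mu(n - \ell_0 - r + 1)]$, independently of $j$.

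The hard part will be the hypergeometric collapse in Case~$1$: the identity needed to reduce the difference of two ${}_3F_2$ series from Lemma~\ref{lem:meanConditionalCompletion} into the single ${}_2F_1$ appearing in the claim is not a classical contiguous relation, and I expect the Pochhammer-symbol manipulations to be the main bookkeeping burden. A cleaner alternative I would attempt in parallel is to bypass the marginal-expectation subtraction altogether by integrating $s_r - s_{r-1}$ directly against the joint conditional density~\eqref{eqn:conditionalpdf}, after the change of variable $u_i = s_i - s_{i-1}$; this should produce the target ${}_2F_1$ via a single application of Remark~\ref{rem:integral1}, avoiding the need to reconcile two distinct hypergeometric closed forms. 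A secondary subtlety is the boundary sub-case $r = j - \ell_0$ inside Case~$1$, where $\mathbb{E}[s_r \mid E_{j-\ell_0}]$ must be drawn from the closed-form branch rather than the ${}_3F_2$ branch of Lemma~\ref{lem:meanConditionalCompletion}; verifying consistency of the two branches at this matching point is a small but necessary sanity check before the general Case~$1$ formula can be claimed uniformly in $r$.
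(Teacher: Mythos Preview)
Your plan matches the paper's proof almost step for step: the same three-case split, the same decomposition about $t_1 + c$ in Case~2 with the first summand handled by memorylessness of the $n-j$ servers and the second by $c - \E{s_{r-1}\mid E_{j-\ell_0}}$ from Lemma~\ref{lem:meanConditionalCompletion}, and the same appeal to Lemma~\ref{rem:DiffOS} together with independence of the post-$c$ increments from the conditioning event for Case~3. In Case~1 the paper is in fact terser than your proposal: it simply asserts that the result ``follows from Lemma~\ref{lem:meanConditionalCompletion} and the fact $t_{1,r} = t_1 + s_r$'' without spelling out the ${}_3F_2$-to-${}_2F_1$ collapse you anticipate. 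That collapse is less painful than you fear. Using the series expansion
\[
{}_3F_2\Bigl[{1,1,q \atop 2,p};\alpha\Bigr] = \sum_{n\ge 0}\frac{(q)_n}{(n+1)(p)_n}\alpha^n
\]
and the Pochhammer identity $r(r+1)_n = (r)_n(r+n)$, the coefficient of $\alpha^n$ in $r\,{}_3F_2[{1,1,r+1\atop 2,m+2};\alpha] - (r-1)\,{}_3F_2[{1,1,r\atop 2,m+2};\alpha]$ is exactly $(r)_n/(m+2)_n$, which is the coefficient of $\alpha^n$ in ${}_2F_1[{1,r\atop m+2};\alpha]$. So no heavy contiguous-relation machinery is needed; your alternative of integrating $s_r - s_{r-1}$ directly against the joint density would also work and is equally clean.
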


\begin{proof}
Recall that, we have $n_0-\ell_0$ parallel servers in their memoryless phase working on individual {chunks} in the interval $(t_1, t_1+c]$. 
In this duration, $N(t_1, t_1+c)$ {chunks} are completed and additional $n_1$ parallel servers start their memoryless phase at time $t_1+c$. 

We first consider the case when $r -1 > N(t_1, t_1+c) = j -\ell_0$.  
This implies that $t_{1,r-1} > t_1 + c$ and there are $n-\ell_0-r+1$ parallel servers in their memoryless phase working on remaining {chunks}. 
From Lemma~\ref{rem:DiffOS}, the following equality holds in distribution  
\EQN{
t_{1,r}-t_{1,r-1} = \frac{T'_r}{n-\ell_0-r+1}.
}
Recall that %
$E_{j-\ell_0} \in \sigma(T'_1, \dots, T'_{j-\ell_0+1})$, 
and since $(T'_i: i \in {\mathbb{N}})$ is an \emph{i.i.d.} sequence, 
it follows that $t_{1,r}-t_{1,r-1}$ is independent of the event $E_{j-\ell_0} $ for $r > j-\ell_0+1$
and hence $\E{t_{1,r}-t_{1,r-1}| E_{j-\ell_0} } =\E{t_{1,r}-t_{1,r-1}}$. 
The result follows from the fact that $\E{T'_i} = \frac{1}{\mu}$. 

We next consider the case when $r -1 = N(t_1,t_1+c) = j - \ell_0$. 
By definition of $N(t_1,t_1+c)$, we have $t_{1,r-1} \le t_1+c < t_{1,r}$. 
In the disjoint intervals $(t_{1,r-1}, t_1+c]$ and $(t_{1}+c, t_{1,r}]$, there are $n_0-j$ and $n-j$ \emph{i.i.d.} exponentially distributed parallel servers respectively. 
Therefore, writing $t_{1,r} - t_{1,r-1}$ as $(t_{1,r} - (t_1+c) ) + ((t_1+c) - t_{1,r-1})$,  
and using Lemma~\ref{rem:DiffOS}, 
we compute the conditional mean of the first part as %
\EQN{
\E{ t_{1, r}-(t_1+c)\vert E_{j-\ell_0} } = \E{\frac{T'_{r+1}}{n-j}} = \frac{1}{\mu (n-j)}.
}
By using the fact $t_{1,r-1} = t_1+s_{r-1}$, we can write the conditional mean of the second part as %
$\E{t_1+c-t_{1,r-1}\vert E_{j-\ell_0} } %
= c - \E{s_{r-1} \vert E_{j-\ell_0} },$
where $\E{s_{r-1} \vert E_{j-\ell_0} }$ is given by Lemma~\ref{lem:meanConditionalCompletion}.
Summing these two parts, we get the conditional expectation for $r=j-\ell_0+1$. 

For the case when $r \in [j-\ell_0]$, the result follows from Lemma~\ref{lem:meanConditionalCompletion} 
and the fact $t_{1,r} = t_1+s_{r}$.
\end{proof}

We next compute the unconditional mean of inter-{chunk} completion time $\E{(t_{1,r}-t_{1,r-1})}$ by averaging out the conditional mean $\E {(t_{1,r}-t_{1,r-1}) | E_{j-\ell_0}}$ over all possible values of $j$. 
We denote $m=j-\ell_0$ for convenience.
\begin{corollary}
\label{cor:StageLength1r}
For each $r \in [k-\ell_0]$, by considering all possible values of $m$ from the set $\{0,1, \dots, n-\ell_0\}$, the mean inter-service completion time in stage~1, is
\begin{eqnarray}
& \E{t_{1,r}-t_{1,r-1}} = \sum_{m:m+1<r}^{}  p_{m} \frac{1}{\mu(n-\ell_0-r+1)}\\
&+ \sum_{m:m+1=r}^{}  p_{m} \Big[c \Big[\frac{1}{\alpha^{(r-1)}} - \sum\limits_{i=1}^{r-1} \frac{\alpha^{i-r+1}}{ic\mu} \Big]+\frac{1}{\mu(n-j)}\Big] \\
&+ \sum_{m:m+1>r}^{} {_2 F_1} \Big( {{1,r} \atop {m+2}} ; \alpha \Big) \frac{r \alpha}{\mu (m+1)} p_{m}.
\end{eqnarray}
\end{corollary}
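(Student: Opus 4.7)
The plan is to obtain this corollary as a direct application of the law of total expectation, conditioning on the discrete random variable $N(t_1, t_1+c)$ whose distribution was established in Lemma~\ref{lem:pdfOfNumCompletions}. Since the events $\{E_{m}: m \in \{0,1,\dots,n_0-\ell_0\}\}$ form a partition of the sample space (where $m$ counts the number of chunk completions in $(t_1,t_1+c]$ under the shift-exponential dynamics), the tower property gives
\begin{equation}
\E{t_{1,r}-t_{1,r-1}} = \sum_{m=0}^{n_0-\ell_0} p_{m} \cdot \E{t_{1,r}-t_{1,r-1} \,\big|\, E_{m}}.
\end{equation}
I would first note that although the outer sum in the statement formally runs over $m \in \{0,1,\dots,n-\ell_0\}$, the probabilities $p_m$ vanish for $m > n_0-\ell_0$ (since at most $n_0-\ell_0$ of the initial-phase servers are still running at time $t_1$), so the two summation ranges are interchangeable.

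Next, I would substitute the three-case conditional expectation from Lemma~\ref{lem:conditionalInterServiceStage1}, keyed by the comparison between $r$ and $m+1 = j-\ell_0+1$. Writing out the partition
\begin{equation}
\{0,\dots,n_0-\ell_0\} = \{m: m+1 < r\} \sqcup \{m: m+1 = r\} \sqcup \{m: m+1 > r\},
\end{equation}
and plugging the corresponding branch of Lemma~\ref{lem:conditionalInterServiceStage1} into each sub-sum produces exactly the three-part expression in the statement. The first sub-sum picks up the memoryless contribution $1/[\mu(n-\ell_0-r+1)]$ (using independence of $t_{1,r}-t_{1,r-1}$ from $E_m$ when $r > m+1$); the third sub-sum picks up the hypergeometric contribution valid for $r < m+1$; and the boundary term $m+1=r$ gives the mixed expression involving both the conditional marginal density and the fresh exponential clock on $n-j$ servers.

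This is essentially a bookkeeping step, so I do not anticipate any real obstacle. The only care needed is to verify that the index substitution $j = m+\ell_0$ is consistent across the three branches (in particular in the boundary term $m+1=r$, where the factor $1/[\mu(n-j)]$ is written in terms of $j$ rather than $m$), and to confirm that the range of the outer sum can be harmlessly extended from $\{0,\dots,n_0-\ell_0\}$ to $\{0,\dots,n-\ell_0\}$ via $p_m = 0$ for $m > n_0-\ell_0$. No further computation of the hypergeometric integrals is required, since all such identities were already discharged in Lemmas~\ref{lem:meanConditionalCompletion} and \ref{lem:conditionalInterServiceStage1} via Remark~\ref{rem:integral1} and Corollary~\ref{rem:integral2}.
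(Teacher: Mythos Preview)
Your proposal is correct and matches the paper's own proof essentially line for line: the paper also invokes the tower property $\E{t_{1,r}-t_{1,r-1}} = \E{\E{(t_{1,r}-t_{1,r-1})\mid E_{j-\ell_0}}}$, appeals to Lemma~\ref{lem:conditionalInterServiceStage1} for the conditional means, and notes $N(t_1,t_1+c)\in\{0,\dots,n_0-\ell_0\}$ with probabilities $p_m$. Your extra remarks on extending the summation range and the $j=m+\ell_0$ bookkeeping are consistent with (and slightly more explicit than) the paper's terse argument.
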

\begin{proof}
The result follows by using Lemma~\ref{lem:conditionalInterServiceStage1} and from the tower property of nested expectations
\EQN{
\E{(t_{1,r}-t_{1,r-1})} = \E{\E{(t_{1,r}-t_{1,r-1})|E_{j-\ell_0} }},
}
and the fact that $N(t_1, t_1+c) \in \set{0, \dots, n_0-\ell_0}$ and $p_{m}$ is defined in~\eqref{eqn:ProbSCStart}, as the probability of the number of service completions $N(t_1, t_1+c)$ in the interval $(t_1,t_1+c]$ being $m=j-\ell_0$ where $t_1$ is the time of $\ell_0$ completions of initial $n_0$ chunks. 
\end{proof}

\section{Characterization of Mean Service Completion Time and Mean Server Utilization Cost  for Single Job}\label{dsr_metrics}
We are now ready to compute the means of service completion time and server utilization cost. We first consider the metrics in the Stage 0, based on Lemma \ref{lem:InterArrivalStage0}.

\begin{lemma}
\label{cor:Stage1Cost}
Consider single-forking with \emph{i.i.d.} shifted exponential {coded sub-task} completion times and initial number of servers $n_0$ in stage~$0$.  
The mean forking time is given by 
 \EQN{ 
\label{eqn:MeanForkingPoint}
\E{t_1} = c + \sum_{r=1}^{\ell_0} {\frac{1}{\mu(n_0-r+1)}}.
}
The mean server utilization cost in stage~$0$ is given by 
\EQN{
\label{eqn:MeanServerUtilizationCostStage0}
\E{W_0} = \frac{\lambda}{\mu}(\ell_0 + \mu n_0 c).
}
\end{lemma}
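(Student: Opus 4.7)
The plan is to derive both identities directly from the telescopic sum representation and the inter-service time formula already established in Lemma~\ref{lem:InterArrivalStage0}. Since the forking time $t_1$ equals the $\ell_0$th chunk completion instant $t_{0,\ell_0}$ in stage~$0$, and since stage~$0$ has exactly $n_0-r$ servers active during each sub-interval $[t_{0,r},t_{0,r+1})$, both quantities reduce to weighted sums of the inter-service gaps $(t_{0,r}-t_{0,r-1})$ whose means are already known.

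For the mean forking time, I would write
\begin{equation}
t_1 \;=\; t_{0,\ell_0} \;=\; \sum_{r=1}^{\ell_0} \bigl(t_{0,r}-t_{0,r-1}\bigr),
\end{equation}
take expectations by linearity, and substitute the two cases of equation~\eqref{eqn:MeanInterServiceStage0}. The $r=1$ term contributes $c+\tfrac{1}{\mu n_0}$ and the remaining $r\in\{2,\dots,\ell_0\}$ terms contribute $\tfrac{1}{\mu(n_0-r+1)}$, which reassemble into the single summation $c+\sum_{r=1}^{\ell_0}\tfrac{1}{\mu(n_0-r+1)}$.

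For the mean server utilization cost in stage~$0$, I would specialize equation~\eqref{eqn:ServerUtilizationCost} to $i=0$. Because $\sum_{j=0}^{0}(n_j-\ell_j)+\ell_0-r = n_0-r$, this yields
\begin{equation}
W_0 \;=\; \lambda\sum_{r=0}^{\ell_0-1} (t_{0,r+1}-t_{0,r})(n_0-r).
\end{equation}
Taking expectations and applying Lemma~\ref{lem:InterArrivalStage0}, the $r=0$ term contributes $\lambda n_0\bigl(c+\tfrac{1}{\mu n_0}\bigr)=\lambda c n_0+\tfrac{\lambda}{\mu}$, while each $r\in\{1,\dots,\ell_0-1\}$ contributes $\lambda(n_0-r)\cdot\tfrac{1}{\mu(n_0-r)}=\tfrac{\lambda}{\mu}$ (the weight $n_0-r$ cancels the corresponding denominator, which is the key simplification). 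Summing these $\ell_0-1$ copies of $\tfrac{\lambda}{\mu}$ with the $r=0$ term gives $\lambda c n_0 + \tfrac{\ell_0\lambda}{\mu} = \tfrac{\lambda}{\mu}(\ell_0+\mu n_0 c)$.

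There is essentially no obstacle beyond bookkeeping: the only care needed is the boundary $r=1$ case in Lemma~\ref{lem:InterArrivalStage0}, where the shift $c$ appears in the first inter-service time, and the off-by-one indexing between the gap count $r$ in the sum and the number of still-active servers $n_0-r$ in the cost formula. The cancellation $(n_0-r)\cdot\tfrac{1}{\mu(n_0-r)}=\tfrac{1}{\mu}$ is what makes the utilization cost collapse into the clean closed form.
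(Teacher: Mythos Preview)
Your proposal is correct and follows essentially the same approach as the paper: write $t_1$ and $W_0$ as telescopic sums via equations~\eqref{eqn:ServiceCompletionTime} and~\eqref{eqn:ServerUtilizationCost}, take expectations by linearity, and substitute the inter-service means from Lemma~\ref{lem:InterArrivalStage0}. In fact you supply more detail than the paper, which simply asserts that both results ``follow from'' these ingredients without displaying the cancellation $(n_0-r)\cdot\tfrac{1}{\mu(n_0-r)}=\tfrac{1}{\mu}$ that you correctly identify as the key simplification.
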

\begin{proof}
We can write the completion time $t_1$ of $\ell_0$th coded sub-task out of $n_0$ in parallel, 
as a telescopic sum of length of coded sub-task completions given in~\eqref{eqn:ServiceCompletionTime}.
Taking expectations on both sides, 
the mean forking point $\E{t_1}$ follows from the the linearity of expectations and 
the mean length of each coded sub-task completion~\eqref{eqn:MeanInterServiceStage0}.  

Taking expectation of the server utilization cost in~\eqref{eqn:ServerUtilizationCost}, 
the mean server utilization cost $\E{W_0}$ in stage~$0$ follows from the linearity of expectations 
and the mean length of each coded sub-task completion~\eqref{eqn:MeanInterServiceStage0}. 
\end{proof}

Next, we consider two possibilities for the initial number of servers $n_0$: when $n_0 < k$ and otherwise.

Note that when $n_0 < k$, then $t_2 > t_1 + c$ almost surely, since $k$ {coded sub-tasks} can never be finished by initial $n_0$ servers. The next result computes the mean service completion time and mean server utilization cost for $n_0 < k$ case.

\if0
{\color{red}
From Remarks~\ref{rem:ShiftOS} and~\ref{rem:DiffOS}, we get the following equalities in distribution 
\meqn{2}{
\label{eqn:InShiftLeft}
&t_{1,r} -t_1 = \sum_{i=0}^{r-1}(t_{1,i+1}-t_{1,i}) = \sum_{i=0}^{j-\ell_0-1}\frac{T'_{i+1}}{n_0-\ell_0-i},\\
\label{eqn:InShiftRight}
&t_{1,r+1}-(t_1 + c) = \frac{T'_{r+1}}{n-j}. 
}
By the choice of $T'_i$s, we have constructed independent representation of independent random variables $t_1 + c - t_{1,r}$ and $t_{1,r+1}-(t_1 + c)$. 
}
\fi
\begin{theorem}
\label{thm:AnalyticalResults_case1_ShiftedExp}
For the single forking case with $n$ total servers for $k$ {sub-tasks} and initial number of servers $n_0 < k$, 
the mean server utilization cost is 
\EQN{ 
\label{eqn:MeanServerUtilizationCost_fp1_c1_Exp1}
\E W = \lambda n c + \frac{\lambda k}{\mu},
}
and the mean service completion time is
\EQN{  
\label{eqn:MeanCompletionTime_fp1_c1_1}
\E{t_2} = c + \E{t_1} + \frac{1}{\mu}\sum_{j=\ell_0}^{n_0}  p_{j-\ell_0}\sum_{i=j}^{k-1}\frac{1}{(n-i)},
}
where $\E{t_1}$ is given in~\eqref{eqn:MeanForkingPoint} and $p_{j-\ell_0}$ is given in~\eqref{eqn:ProbSCStart}.
\end{theorem}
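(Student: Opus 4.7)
The plan is to prove the two assertions separately, exploiting the fact that when $n_0 < k$ every server is guaranteed to enter and complete its shift phase before the job terminates: the initial $n_0$ servers alone cannot produce $k$ completions, so the fork must occur, and the forked servers can produce no completion within their shift of length $c$, forcing $t_2 > t_1 + c$ almost surely.

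For $\E{W}$, I would avoid the direct (but heavy) route of summing the inter-completion means of Lemma~\ref{lem:InterArrivalStage0} and Corollary~\ref{cor:StageLength1r}, and instead write the total server occupancy as an integral $\int_0^{t_2} A(u)\,du$, where $A(u)$ counts active servers at time $u$. Decompose $A(u) = S(u) + M(u)$ where $S(u)$ counts servers currently in their shift period and $M(u)$ counts those in the memoryless period. Because every server finishes its shift before $t_2$, one has $\int_0^{t_2} S(u)\,du = nc$ deterministically. For the memoryless part, observe that the counting process of completions $N(u)$ has stochastic intensity $\mu M(u)$, so $N(u) - \int_0^{u} \mu M(s)\,ds$ is a martingale; applying the optional stopping theorem at the a.s.\ finite stopping time $t_2$ (at which $N(t_2) = k$) yields $\mu\,\E{\int_0^{t_2} M(s)\,ds} = k$. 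Summing the two contributions and multiplying by $\lambda$ gives $\E{W} = \lambda(nc + k/\mu)$.

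For $\E{t_2}$, I would write $t_2 = t_1 + c + (t_2 - t_1 - c)$ and condition on the event $E_{j-\ell_0}$ of Lemma~\ref{lem:pdfOfNumCompletions}. The memoryless property of the exponential portion of the shifted-exponential service distribution implies two useful facts: first, the residual service times of the $n_0 - \ell_0$ still-running initial servers at the forking instant are i.i.d.\ $\mathrm{Exp}(\mu)$ and independent of $t_1$, so $t_1$ and $E_{j-\ell_0}$ are independent and $\E{t_1 \mid E_{j-\ell_0}} = \E{t_1}$ as given in Lemma~\ref{cor:Stage1Cost}; second, given $E_{j-\ell_0}$, at time $t_1+c$ there are exactly $n - j$ servers all in the memoryless phase with i.i.d.\ $\mathrm{Exp}(\mu)$ residuals, and we need $k - j$ further completions. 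By standard exponential order-statistic identities this conditional mean equals $\sum_{i=0}^{k-j-1} \frac{1}{\mu(n-j-i)} = \frac{1}{\mu}\sum_{i=j}^{k-1}\frac{1}{n-i}$. Averaging over $j$ with weights $p_{j-\ell_0}$ produces the stated expression.

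The main obstacle I anticipate is a clean justification that $t_1$ is independent of $E_{j-\ell_0}$ and that the state at $t_1 + c$ is fully described by ``$n-j$ i.i.d.\ memoryless servers''; once this is written carefully using the strong Markov property applied at $t_1$ and at $t_1+c$, the rest of the argument is essentially bookkeeping. The optional-stopping step also requires a brief verification that $t_2$ has finite mean (which follows from $t_2 \le t_1 + c + \sum_{i=0}^{k-j-1} \mathrm{Exp}(\mu(n-j-i))^{-1}$-type bounds), but this is standard.
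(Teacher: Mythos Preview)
Your proposal is correct and takes a genuinely different route from the paper.

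The paper's proof (only sketched there, with details deferred to~\citep{Straggler2020Inf}) proceeds mechanically: it substitutes the mean inter-completion times of Lemma~\ref{lem:InterArrivalStage0} and Corollary~\ref{cor:StageLength1r} into the telescopic representations~\eqref{eqn:ServerUtilizationCost} and~\eqref{eqn:ServiceCompletionTime} and then algebraically collapses the resulting sums (which involve the hypergeometric expressions from Lemma~\ref{lem:conditionalInterServiceStage1}). Your argument bypasses that machinery entirely. For $\E{W}$, the shift/memoryless decomposition $A(u)=S(u)+M(u)$ together with the compensator identity $\E{N(t_2)}=\mu\,\E{\int_0^{t_2}M(s)\,ds}$ yields the answer in one line, with no reference to Corollary~\ref{cor:StageLength1r}; for $\E{t_2}$, conditioning directly at $t_1+c$ reduces the problem to the mean of $X_{k-j}^{n-j}$, again avoiding the case-split of Lemma~\ref{lem:conditionalInterServiceStage1}. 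The independence of $t_1$ and $E_{j-\ell_0}$ that you flag is exactly the strong Markov property at $t_1$ applied to the residual exponentials, as you say.

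What each approach buys: your argument is shorter and more conceptual, and it makes transparent \emph{why} the answer is so clean in the $n_0<k$ regime---namely, every one of the $n$ servers is guaranteed to exhaust its shift before $t_2$, so the shift contribution is deterministic and the exponential contribution is captured by one optional-stopping identity. The paper's plug-and-simplify route, while heavier here, is uniform: it applies without change to the $n_0\ge k$ case of Theorem~\ref{thm:AnalyticalResults_case2_ShiftedExp}, where $t_2\le t_1+c$ is possible and both of your shortcuts (the deterministic $\int S=nc$ and the decomposition $t_2=t_1+c+(\cdot)$) fail. So your method exploits the special structure of this case, whereas the paper's treats both cases with the same template at the cost of more algebra.
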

\begin{proof}
The proof follows by substituting inter-chunk times in \eqref{eqn:ServerUtilizationCost} and \eqref{eqn:ServiceCompletionTime} and simplifications. The details are omitted, while can be seen in \citep{Straggler2020Inf}. 
\end{proof}

From  Theorem~\ref{thm:AnalyticalResults_case1_ShiftedExp}, we observe that the mean server utilization cost remains same for all values of initial number of servers $n_0 < k$ and forking threshold $\ell_0$.  The mean service completion time decreases as we increase the number of initial servers, and thus at $n_0=n$ will have lower mean service completion time. Further, the mean service utilization cost for $n_0=n$ can be easily shown to be $ \lambda n c + \frac{\lambda k}{\mu}$ which is the same as that for all $n_0<k$. Thus, as compared to no forking ($n_0=n$), the single forking with $n_0<k$ has the same mean server utilization cost while it has higher mean service completion time. 
Thus, this regime doesn't provide any tradeoff point between service completion time and server utilization cost which is worse than no-forking, and hence the only region of interest for a system designer is $n_0\ge k$, which is studied in the following.

\if 0
We further observe that the mean service completion time decreases as we increase the number of initial servers $n_0 < k$. 
Hence, it follows that for the case when $n_0 < k$, the optimal number of initial servers is $n^\ast_0=k-1$ at time $t=0$. 
Further, since increasing $\ell_0$ increases the mean service completion time for any $n_0$ and the mean service utilization cost does not depend on $\ell_0$, 
it follows that $\ell^\ast_0=1$ is the best choice for $n_0<k$. 
Thus, the joint best choices for $(n_0^\ast,\ell_0^\ast)$ in this regime are $(k-1,1)$.

In addition, we note that if $n_0=n$ and all the $n$ {coded sub-tasks} are started at $t=0$, the mean service utilization cost can be easily shown to be $ \lambda n c + \frac{\lambda k}{\mu}$ which is the same as that for all $n_0<k$. 
\fi 
For $n_0\ge k$,  the number of completed {chunks} $\ell_0$ at the forking point $t_1$ are in $\{0, 1, \dots, k\}$. 
There are three different possibilities for completing $k$ {chunks}. 
First possibility is $\ell_0=k$, when all the required $k$ {chunks} are finished on initial $n_0$ servers without any forking. 
In this case, $t_2 = t_1$. 
For the next two possibilities, $\ell_0 < k$ and hence forking is needed. 

Second possibility is $\ell_0 < k$ and $\ell_0 + N(t_1, t_1+c) = j \le k-1$, where $j-\ell_0$ service completions occur in the duration $[t_1, t_1+c)$ and $\ell_0 \le j \le k-1$. 
This implies that even though $n_0 > k$, the total {chunks} finished until instant $t_1+c$ are still less than $k$ and remaining $k-j > 0$ {chunks} among the required $k$ are completed only after $t_1+c$, 
when $n-j$ parallel servers are in their memoryless phase.  
In this case, $t_2 = t_1+ c + X_{k-j}^{n-j}$ for $N(t_1, t_1+c) = j-\ell_0 \in \set{0, \dots, k-\ell_0-1}$. 

Third possibility is when $\ell_0< k$ and $\ell_0 + N(t_1, t_1+c) \ge k$. %
That is, even though the {chunks} are forked on additional $n_1$ servers at time $t_1$, 
the {job} is completed at $k$ out of $n_0$ initial servers before the constant start-up time of these additional $n_1$servers is finished. 
This happens when $s_{k-\ell_0} \le c$ and in this case, $t_2 = t_1 + s_{k-\ell_0}$ for $N(t_1, t_1+c) \ge k-\ell_0$. Recall that $s_{k-\ell_0}$ is the $(k-\ell_0)$th service completion in stage~1 after $t_1$.
Summarizing all the results, we write the service completion time in the case $n_0 \ge k$ and $N(t_1, t_1+c) = j - \ell_0$ as 
\EQN{
t_2 = t_1 + s_{k-\ell_0}{\bf 1}_{\{\ell_0 < k \le j\}} + (c + X^{n-j}_{k-j}){\bf 1}_{\{\ell_0 \le j < k\}}.
}
For $n_0 \ge k$, the mean service completion time and the mean server utilization cost are given in the following theorem.

\begin{theorem}
\label{thm:AnalyticalResults_case2_ShiftedExp}
In single forking scheme, for $n_0 \ge k$ case, the mean service completion time $\E{t_2}$ is
\EQN{ 
\label{eqn:MeanCompletionTime_fp1_c2_Exp2}
\E{t_1} + \Big[\sum\limits_{r=1}^{k-\ell_0} \E{t_{1,r}-t_{1,r-1}}\Big] {\bf 1}_{\{\ell_0 < k\}}
}
and the mean server utilization cost $\E{W}$ is
\EQN{  
\label{eqn:MeanServerUtilizationCost_fp1_c2_Exp1}
\E{W_0} + \lambda \sum\limits_{r=1}^{k-\ell_0} (n-\ell_0-r+1) \E{t_{1,r} - t_{1,r-1}} {\bf 1}_{\{\ell_0 < k\}}.
}
Where $\E{t_{1,r}-t_{1,r-1}}$ in the above expressions is given by Corollary~\ref{cor:StageLength1r}.
\end{theorem}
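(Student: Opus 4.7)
The plan is to split both quantities by stage and then apply the inter-service-time result from Corollary~\ref{cor:StageLength1r} that the theorem is allowed to quote. Concretely, I would begin by writing $t_2 = t_1 + (t_2 - t_1)$. When $\ell_0 = k$ the $k$th chunk completes before the fork, so $t_2 = t_1$ and the stage-1 contribution is absent; this produces the indicator $\mathbf{1}_{\{\ell_0 < k\}}$. When $\ell_0 < k$, stage~1 must contribute exactly $k - \ell_0 = \ell_1$ further chunk completions, so the telescoping identity \eqref{eqn:ServiceCompletionTime} gives $t_2 - t_1 = \sum_{r=1}^{k-\ell_0}(t_{1,r} - t_{1,r-1})$ almost surely. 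Taking expectations and applying linearity, together with $\E{t_1}$ from Lemma~\ref{cor:Stage1Cost}, immediately yields \eqref{eqn:MeanCompletionTime_fp1_c2_Exp2}.

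For the server utilization cost I would decompose $W = W_0 + W_1$ using the general formula \eqref{eqn:ServerUtilizationCost}. The stage-0 piece is handled directly by Lemma~\ref{cor:Stage1Cost}. For stage~1 (assuming $\ell_0 < k$), I would count the active servers on each sub-interval $[t_{1,r-1}, t_{1,r})$: at $t_1$ the $n_1 = n - n_0$ forked servers come online while $n_0 - \ell_0$ initial servers are still running, so exactly $n - \ell_0$ servers are active at the start of stage~1; after $r - 1$ completions within stage~1, this drops to $n - \ell_0 - r + 1$. Substituting into \eqref{eqn:ServerUtilizationCost} and reindexing gives
\[
W_1 = \lambda \sum_{r=1}^{k-\ell_0} (n - \ell_0 - r + 1)\,(t_{1,r} - t_{1,r-1}),
\]
and taking expectations produces \eqref{eqn:MeanServerUtilizationCost_fp1_c2_Exp1}, again with the indicator accounting for the $\ell_0 = k$ case.

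The only subtle step is justifying that the server-count in each sub-interval of stage~1 really is deterministic given the ordering of completion instants, because in that interval the population of servers still working is a mixture of leftover initial servers (possibly still in their shifted phase or in memoryless phase) and newly forked servers (in their shift or memoryless phase). The point I would emphasize is that the utilization cost per \eqref{eqn:ServerUtilizationCost} only depends on how many servers are \emph{switched on}, not on which phase they are in, and each completion in stage~1 decrements this count by exactly one. With that observation the counting is unambiguous.

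The main obstacle is really bookkeeping rather than any probabilistic difficulty: one must be careful that the $\E{t_{1,r} - t_{1,r-1}}$ appearing in the sum already averages over the random number $N(t_1, t_1 + c)$ of completions before $t_1 + c$, so no additional conditioning on $j$ is needed at this stage; the intricate case split was already absorbed into Corollary~\ref{cor:StageLength1r}. Thus the proof of Theorem~\ref{thm:AnalyticalResults_case2_ShiftedExp} reduces to the telescoping decomposition, the server-count argument, and a direct appeal to the preceding lemma and corollary.
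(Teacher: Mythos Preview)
Your proposal is correct and follows essentially the same argument as the paper: split into the cases $\ell_0=k$ and $\ell_0<k$, telescope $t_2-t_1$ and $W_1$ over the stage-1 inter-completion intervals, count $n-\ell_0-r+1$ active servers on $[t_{1,r-1},t_{1,r})$, and take expectations using Lemma~\ref{cor:Stage1Cost} and Corollary~\ref{cor:StageLength1r}. Your additional remarks---that the server count depends only on how many are switched on and that the conditioning on $j$ is already absorbed into Corollary~\ref{cor:StageLength1r}---are accurate and make the bookkeeping slightly more explicit than the paper's version.
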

\begin{proof}
In Lemma~\ref{cor:Stage1Cost}, we have already computed the mean completion time $\E{t_1}$ of stage~0, 
and the mean server utilization cost $\E{W_0}$ in stage~0. 
Recall that since completion of any $k$ {chunks} suffice for the job completion, the forking threshold $\ell_0 \le k$. 

We first consider the case when $\ell_0=k$. 
In this case, we do not need to add any further servers because all the required tasks are already finished in stage~0 itself. 
Hence, there is no need of forking in this case, and the mean service completion time is given by $\E{t_1}$ and the mean server utilization cost is given by $\E{W_0}$. 

We next consider the case when $\ell_0 < k$. 
In this case, the job completion occurs necessarily in stage~1. 
Thus, we need to compute $\E{t_2-t_1}$ and $\E{W_1}$ in order to evaluate the mean service completion time $\E{t_2}$ and the mean server utilization cost $\E{W_0+W_1}$.  
The duration of stage~1 can be written as a telescopic sum of inter service times 
\EQN{
t_2 - t_1 = \sum_{r=1}^{k-\ell_0-1}(t_{1,r}-t_{1,r-1}). 
}
Further for $\ell_0 < k$, the number of servers that are active in stage~1 after $(r-1)$th service completions are $n-\ell_0-r+1$ and the associated cost incurred in the interval $[t_{1,r-1},t_{1,r})$ is $\lambda(t_{1,r}-t_{1,r-1})(n-\ell_0-r+1)$.
Therefore, we can write the server utilization cost in stage~1 as 
\EQN{
W_1 = \lambda\sum_{r=1}^{k-\ell_0-1}(n-\ell_0-r+1)(t_{1,r}-t_{1,r-1}). 
}
The result follows from taking mean of the duration $t_2-t_1$ and server utilization cost $W_1$, 
from the linearity of expectations, 
and considering both possible cases. 
\end{proof}

We observe that when $n_0\ge k$, the mean service utilization cost depends on the initial number of servers $n_0$ as well as the total number of servers $n$, 
unlike the case $n_0<k$ where this cost depends only on the total number of servers $n$. 

\section{Simulations}\label{sec:dsr_sims}
For numerical evaluation of mean service completion time and mean server cost utilization for single forking systems, 
we choose the following system parameters. 
We select the sub-task fragmentation of a single job as $k=12$, 
and a maximum redundancy factor of $n/k = 2$.
That is, we choose the total number of servers $n=24$. 
We take the server utilization cost rate to be $\lambda = 1$. Coded-task completion time at each server was chosen to be an \emph{i.i.d.} random variable having a shifted exponential distribution. 
For numerical studies in this section, 
we choose the shift parameter $c=1$ and the exponential rate $\mu = 0.5$. Since it was already shown that $n_0<k$ is not a useful regime, we consider the case where $n_0\ge k$. 

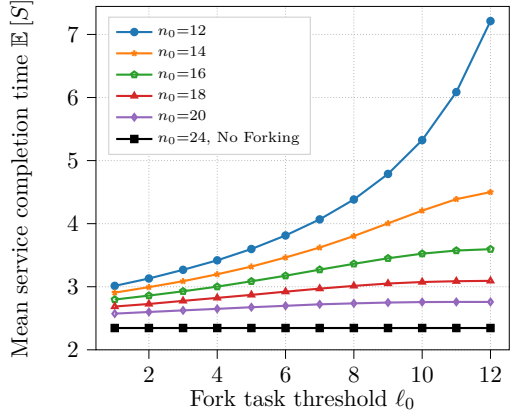
\begin{figure}[hbt]
	\centering
	\scalebox{0.8}{\begin{tikzpicture}

\definecolor{color0}{rgb}{0.12156862745098,0.466666666666667,0.705882352941177}
\definecolor{color1}{rgb}{1,0.498039215686275,0.0549019607843137}
\definecolor{color2}{rgb}{0.172549019607843,0.627450980392157,0.172549019607843}
\definecolor{color3}{rgb}{0.83921568627451,0.152941176470588,0.156862745098039}
\definecolor{color4}{rgb}{0.580392156862745,0.403921568627451,0.741176470588235}

\begin{axis}[
legend cell align={left},
legend style={at={(0.03,0.97)}, anchor=north west, draw=white!80.0!black, font=\scriptsize},
tick align=outside,
tick pos=left,
x grid style={white!69.01960784313725!black, densely dotted},
xlabel={Fork task threshold $\ell_0$},
xmajorgrids,
xmin=0.45, xmax=12.55,
xminorgrids,
xtick style={color=black},
y grid style={white!69.01960784313725!black, densely dotted},
ylabel={Mean service completion time $\E S$},
ymajorgrids,
ymin=2, ymax=7.44410480988649,
yminorgrids,
ytick style={color=black}
]
\addplot [semithick, color0, mark=*, mark size=1.5, mark options={solid}, line width=1pt]
table {%
1 3.01369871303481
2 3.13064058306519
3 3.26805703470636
4 3.41868219759926
5 3.59808696609605
6 3.81288393707227
7 4.06765553390932
8 4.38295352681885
9 4.78823797768967
10 5.32399156499496
11 6.08611240297693
12 7.2121738197126
};
\addlegendentry{$n_0$=12}
\addplot [semithick, color1, mark=star, mark size=1.5, mark options={solid}, line width=1pt]
table {%
1 2.90639937570915
2 2.99354020673869
3 3.08749928139773
4 3.19752864384908
5 3.31955575335801
6 3.46390508335684
7 3.62149319412232
8 3.80377886932076
9 4.00495511413662
10 4.20686095143819
11 4.38824476115839
12 4.50055674864267
};
\addlegendentry{$n_0$=14}
\addplot [semithick, color2, mark=pentagon, mark size=1.5, mark options={solid}, line width=1pt]
table {%
1 2.79689765752927
2 2.8596294889118
3 2.92768797465149
4 3.00153575524783
5 3.08639154494779
6 3.174278416553
7 3.27083044763372
8 3.36299870808061
9 3.45116419999778
10 3.52439540064159
11 3.57378672343268
12 3.5953827491408
};
\addlegendentry{$n_0$=16}
\addplot [semithick, color3, mark=triangle, mark size=1.5, mark options={solid}, line width=1pt]
table {%
1 2.68632562785407
2 2.726749567034
3 2.77583348917678
4 2.82321398611927
5 2.87082918988658
6 2.92188779575518
7 2.97000413507778
8 3.01330382855873
9 3.05089786250224
10 3.07478749509631
11 3.08780035066057
12 3.09344398527792
};
\addlegendentry{$n_0$=18}
\addplot [semithick, color4, mark=diamond, mark size=1.5, mark options={solid}, line width=1pt]
table {%
1 2.57355401623466
2 2.59974378281649
3 2.62502477923003
4 2.64937159650102
5 2.67509959477866
6 2.69767030139652
7 2.72127641146894
8 2.73632231644662
9 2.74926272369104
10 2.75686473292883
11 2.75914419091818
12 2.75888603645032
};
\addlegendentry{$n_0$=20}
\addplot [semithick, black, mark=square*, mark size=1.5, mark options={solid}, line width=1pt]
table {%
1 2.3454949990856573
2 2.3454949990856573
3 2.3454949990856573
4 2.3454949990856573
5 2.3454949990856573
6 2.3454949990856573
7 2.3454949990856573
8 2.3454949990856573
9 2.3454949990856573
10 2.3454949990856573
11 2.3454949990856573
12 2.3454949990856573
};
\addlegendentry{$n_0$=24, No Forking}
\end{axis}

\end{tikzpicture}}
	\caption{
		For the setting $n_0 \ge k$, this graph displays the mean service completion time $\E S$ as a function of fork task threshold $\ell_0$ for single forking with the total number of servers $n=24$,  the total needed {coded sub-tasks} $k=12$, and different numbers of initial servers $n_0 \in \{12,14,16,18,20\}$. 
		The single {coded sub-task} execution time at servers are assumed to be \emph{i.i.d.} shifted exponential distribution with shift $c=1$ and rate $\mu=0.5$.
	}
	\label{Fig:MeanServiceSingleFork_c2}
\end{figure}

\begin{figure}[hhh]
	\centering
	\scalebox{0.8}{\begin{tikzpicture}

\definecolor{color0}{rgb}{0.12156862745098,0.466666666666667,0.705882352941177}
\definecolor{color1}{rgb}{1,0.498039215686275,0.0549019607843137}
\definecolor{color2}{rgb}{0.172549019607843,0.627450980392157,0.172549019607843}
\definecolor{color3}{rgb}{0.83921568627451,0.152941176470588,0.156862745098039}
\definecolor{color4}{rgb}{0.580392156862745,0.403921568627451,0.741176470588235}

\begin{axis}[
legend cell align={left},
legend style={at={(0.03,0.03)}, anchor=south west, draw=white!80.0!black, font=\scriptsize},
tick align=outside,
tick pos=left,
x grid style={white!69.01960784313725!black, densely dotted},
xlabel={Fork task threshold $\ell_0$},
xmajorgrids,
xmin=0.45, xmax=12.55,
xminorgrids,
xtick style={color=black},
y grid style={white!69.01960784313725!black, densely dotted},
ylabel={Mean server utilization cost $\E W$},
ymajorgrids,
yminorgrids,
ytick style={color=black}
]
\addplot [semithick, color0, mark=*, mark size=1.5, mark options={solid}, line width=1pt]
table {%
1 47.9883301079489
2 47.9717837439924
3 48.0156764441777
4 47.9950859658202
5 47.9935730379279
6 48.0099177994271
7 47.9869548034981
8 47.9177052769452
9 47.7644096898755
10 47.2776875894062
11 45.4310247350799
12 36.0110720716353
};
\addlegendentry{$n_0$=12}
\addplot [semithick, color1, mark=star, mark size=1.5, mark options={solid}, line width=1pt]
table {%
1 48.0282116991027
2 48.0259046783065
3 47.9661336534825
4 47.9832382446899
5 47.9367421361971
6 47.9213532131114
7 47.7825353994387
8 47.5337477045851
9 47.0161177852256
10 45.7502350813319
11 43.1677119308395
12 37.9970002044163
};
\addlegendentry{$n_0$=14}
\addplot [semithick, color2, mark=pentagon, mark size=1.5, mark options={solid}, line width=1pt]
table {%
1 47.998631803841
2 47.9990804395517
3 47.9688589210967
4 47.909261338186
5 47.8757490708588
6 47.7331798818781
7 47.5178293191262
8 47.0359515478747
9 46.2311062961805
10 44.9487057812259
11 42.9353848313557
12 40.0082132848927
};
\addlegendentry{$n_0$=16}
\addplot [semithick, color3, mark=triangle, mark size=1.5, mark options={solid}, line width=1pt]
table {%
1 47.9754382437841
2 47.9207753366371
3 47.9407928854363
4 47.8756616593541
5 47.714143186272
6 47.5371545032379
7 47.2114987886577
8 46.7242893986762
9 46.0244489784556
10 45.0135621590285
11 43.673570816701
12 42.0332857675115
};
\addlegendentry{$n_0$=18}
\addplot [semithick, color4, mark=diamond, mark size=1.5, mark options={solid}, line width=1pt]
table {%
1 47.9471447741816
2 47.9329419593561
3 47.8790269239654
4 47.7613287460646
5 47.6407180297185
6 47.416848866953
7 47.1815051215408
8 46.7701851044489
9 46.2809790346811
10 45.6531294502228
11 44.8803534534508
12 43.9863979002725
};
\addlegendentry{$n_0$=20}
\addplot [semithick, black, mark=square*, mark size=1.5, mark options={solid}, line width=1pt]
table {%
1 48
2 48
3 48
4 48
5 48
6 48
7 48
8 48
9 48
10 48
11 48
12 48
};
\addlegendentry{$n_0$=24, No Forking}
\end{axis}

\end{tikzpicture}}
	\caption{
		For the setting $n_0 \ge k$, this graph displays the mean server utilization cost $\E W$ as a function of fork task threshold $\ell_0$ for single forking with the total number of servers $n=24$, the total needed {coded sub-tasks} $k=12$, and different numbers of initial servers $n_0 \in \{12,14,16,18,20\}$. 
		The single {coded sub-task} execution time at servers are assumed to be \emph{i.i.d.} shifted exponential distribution with shift $c=1$ and rate $\mu=0.5$.
	}
	\label{Fig:MeanCostSingleFork_c2}
\end{figure}
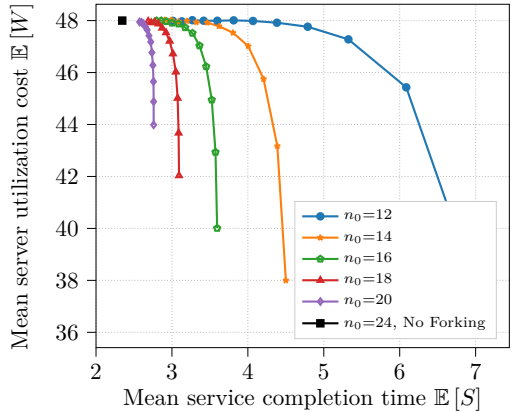
\begin{figure}[hhh]
	\centering
	\scalebox{0.8}{\begin{tikzpicture}

\definecolor{color0}{rgb}{0.12156862745098,0.466666666666667,0.705882352941177}
\definecolor{color1}{rgb}{1,0.498039215686275,0.0549019607843137}
\definecolor{color2}{rgb}{0.172549019607843,0.627450980392157,0.172549019607843}
\definecolor{color3}{rgb}{0.83921568627451,0.152941176470588,0.156862745098039}
\definecolor{color4}{rgb}{0.580392156862745,0.403921568627451,0.741176470588235}

\begin{axis}[
legend cell align={left},
legend style={at={(0.48,0.025)}, anchor=south west, draw=white!80.0!black, font=\scriptsize},
tick align=outside,
tick pos=left,
x grid style={white!69.01960784313725!black, densely dotted},
xlabel={Mean service completion time $\E S$},
xmajorgrids,
xmin=2, xmax=7.44410480988649,
xminorgrids,
xtick style={color=black},
y grid style={white!69.01960784313725!black, densely dotted},
ylabel={Mean server utilization cost $\E W$},
ymajorgrids,
ymin=35.4102150902619, ymax=48.6290686804761,
yminorgrids,
ytick style={color=black}
]
\addplot [semithick, color0, mark=*, mark size=1.5, mark options={solid}, line width=1pt]
table {%
3.01369871303481 47.9883301079489
3.13064058306519 47.9717837439924
3.26805703470636 48.0156764441777
3.41868219759926 47.9950859658202
3.59808696609605 47.9935730379279
3.81288393707227 48.0099177994271
4.06765553390932 47.9869548034981
4.38295352681885 47.9177052769452
4.78823797768967 47.7644096898755
5.32399156499496 47.2776875894062
6.08611240297693 45.4310247350799
7.2121738197126 36.0110720716353
};
\addlegendentry{$n_0$=12}
\addplot [semithick, color1, mark=star, mark size=1.5, mark options={solid}, line width=1pt]
table {%
2.90639937570915 48.0282116991027
2.99354020673869 48.0259046783065
3.08749928139773 47.9661336534825
3.19752864384908 47.9832382446899
3.31955575335801 47.9367421361971
3.46390508335684 47.9213532131114
3.62149319412232 47.7825353994387
3.80377886932076 47.5337477045851
4.00495511413662 47.0161177852256
4.20686095143819 45.7502350813319
4.38824476115839 43.1677119308395
4.50055674864267 37.9970002044163
};
\addlegendentry{$n_0$=14}
\addplot [semithick, color2, mark=pentagon, mark size=1.5, mark options={solid}, line width=1pt]
table {%
2.79689765752927 47.998631803841
2.8596294889118 47.9990804395517
2.92768797465149 47.9688589210967
3.00153575524783 47.909261338186
3.08639154494779 47.8757490708588
3.174278416553 47.7331798818781
3.27083044763372 47.5178293191262
3.36299870808061 47.0359515478747
3.45116419999778 46.2311062961805
3.52439540064159 44.9487057812259
3.57378672343268 42.9353848313557
3.5953827491408 40.0082132848927
};
\addlegendentry{$n_0$=16}
\addplot [semithick, color3, mark=triangle, mark size=1.5, mark options={solid}, line width=1pt]
table {%
2.68632562785407 47.9754382437841
2.726749567034 47.9207753366371
2.77583348917678 47.9407928854363
2.82321398611927 47.8756616593541
2.87082918988658 47.714143186272
2.92188779575518 47.5371545032379
2.97000413507778 47.2114987886577
3.01330382855873 46.7242893986762
3.05089786250224 46.0244489784556
3.07478749509631 45.0135621590285
3.08780035066057 43.673570816701
3.09344398527792 42.0332857675115
};
\addlegendentry{$n_0$=18}
\addplot [semithick, color4, mark=diamond, mark size=1.5, mark options={solid}, line width=1pt]
table {%
2.57355401623466 47.9471447741816
2.59974378281649 47.9329419593561
2.62502477923003 47.8790269239654
2.64937159650102 47.7613287460646
2.67509959477866 47.6407180297185
2.69767030139652 47.416848866953
2.72127641146894 47.1815051215408
2.73632231644662 46.7701851044489
2.74926272369104 46.2809790346811
2.75686473292883 45.6531294502228
2.75914419091818 44.8803534534508
2.75888603645032 43.9863979002725
};
\addlegendentry{$n_0$=20}
\addplot [semithick, black, mark=square*, mark size=1.5, mark options={solid}, line width=1pt]
table {%
2.3454949990856573 48
};
\addlegendentry{$n_0$=24, No Forking}

\end{axis}

\end{tikzpicture}}
	\caption{
		For the setting $n_0 \ge k$, we have plotted the mean server utilization cost $\E W$ as a function of the mean service completion time $\E S$ by varying fork task threshold {$\ell_0 \in [n_0]$} in single forking. 
		The total number of servers considered are $n=24$, the total {coded sub-task} needed are $k=12$. 
		The single {coded sub-task} execution time at servers are assumed to be \emph{i.i.d.} shifted exponential distribution with shift $c=1$ and rate $\mu=0.5$.
		We have plotted the same curve for different values of initial servers $n_0 \in \{12,14,16,18,20\}$. 
		For each curve, $\ell_0$ increasing from left to right. 
	}
	\label{Fig:TradeoffSingleFork_c2}
\end{figure}

To this end, we plot the  mean service completion time in Figure~\ref{Fig:MeanServiceSingleFork_c2} and mean server utilization in Figure~\ref{Fig:MeanCostSingleFork_c2},
both as a function of fork-task threshold $\ell_0 \in [k]$, 
for different values of initial servers $n_0 \in \{12,14,16,18,20\}$. 
The analytical results in Theorem~\ref{thm:AnalyticalResults_case2_ShiftedExp} are substantiated by observing that the mean service completion time $\E{S}$ increases with increase in fork-task threshold $\ell_0$ and decreases with increase in initial number of servers $n_0$. 
Further, the mean server utilization cost $\E{W}$ decreases with increase in fork-task threshold $\ell_0$. 
Thus, there is a tradeoff between the two performance measures as a function of fork-task threshold $\ell_0$. 
The tradeoff between the two performance metrics of interest is plotted in Figure~\ref{Fig:TradeoffSingleFork_c2}, which suggests that the number of initial servers $n_0$ and the forking threshold $\ell_0$ affords a true tradeoff between these metrics.

It is interesting to observe the behavior of mean server utilization cost as a function of initial number of servers $n_0$ in Figure~\ref{Fig:MeanCostSingleFork_c2}.
We note that for each fork-task threshold $\ell_0$, there exists an optimal number of initial servers $n_0$ that minimizes the server utilization cost. 
We further observe in Figure~\ref{Fig:TradeoffSingleFork_c2} that for $n_0=20$, the mean service completion time increases only $17.635\%$ while the mean server utilization cost can be decreased $8.3617\%$ by an appropriate choice of $\ell_0$ as compared to choosing no forking case of $n_0=n$.  
However, a value of $\ell_0$ cannot be chosen for $n_0=20$ that reduces the mean server utilization cost beyond $8.3617\%$. 
In order to have further reduction in mean server utilization cost, we can choose $n_0$ to $18$ which helps to decrease mean server completion time by $12.43\%$ at an expense of $31.888\%$ increase in mean service completion time as compared to the no forking case $n_0=n$. 
The intermediate points on the curve of $n_0=18$ further provide tradeoff points that can be chosen based on the desired combination of the two measures as required by the system designer. 
The choice of $n_0=12$ further helps decrease the mean server utilization cost by $24.976\%$ by having $207.49\%$ times increase in the mean service completion time as compared to the no forking case $n_0=n$. 
Thus, we see that appropriate choice of $n_0$ and $\ell_0$ provide tradeoff points that help minimizing the mean server utilization cost at the expense of the mean service completion time.

\section{Notes and Open Problems}\label{sec:dsr_notes}
This problem has been studied in the context of straggler mitigation problem, where some tasks have run-time variability. Existing solution techniques for straggler mitigation 
fall into two categories: i) Squashing runtime variability via
preventive actions such as blacklisting faulty machines that
frequently exhibit high variability \citep{dean2008mapreduce,dean2012achieving} or learning the
characteristics of task-to-node assignments that lead to high
variability and avoiding such problematic task-node pairings
\citep{yadwadkar2012proactive}, ii) Speculative execution by launching the tasks together with replicas and waiting only for the fastest copy to complete \cite{ananthanarayanan2013effective}
\citep{ananthanarayanan2010reining,zaharia2008improving,melnik2010dremel}. Because runtime variability is caused by intrinsically complex reasons, preventive measures for stragglers
could not fully solve the problem and runtime variability continued plaguing the compute workloads \citep{dean2012achieving,ananthanarayanan2013effective}. Speculative
task execution on the other hand has proved to be an effective
remedy, and indeed the most widely deployed solution for
stragglers \citep{dean2013tail,ren2015hopper}.

Even though the technique of delayed relaunch with erasure coding of tasks was originally proposed for straggler mitigation, it is also applicable to accessing erasure coded chunks from distributed storage. The authors of \citep{aktacs2019straggler} provided a single fork analysis with coding, where $k$ chunk requests are started at $t=0$. Further, after a fixed deterministic time $\Delta$, additional $n-k$
chunk requests are started. While this lays an important problem, this chapter considers the following differences to the approach: (i) we allow for general number of starting chunk requests,
(ii) the start time of new chunk requests is random and
based on the completion time of certain number of chunk requests rather than a fixed constant, and
(iii) our framework allows for an optimization of different parameters
to provide a tradeoff between service utilization
cost and service completion time. As shown in the evaluation results, the choice of $n_0=k$ is not always optimal, which additionally motivates such setup. This analysis has been considered in \citep{Straggler2020Inf,Straggler2020TON}. 

The approach could be further extended on the following directions:
\begin{enumerate}
	\item {\bf Queueing Analysis}: The proposed framework in this chapter considers a non-queueing system with a single job. The analysis with multiple arrivals is open. 
	\item {\bf General Service Distribution}: The analysis in this chapter is limited to shifted-exponential service times. Even though Parteo distribution has been considered in \citep{aktacs2019straggler}, considering general service time distribution is important. 
	\item {\bf Multiple Forking Points}: We only considered single forking. Additional benefits to more forking points is an open problem. 
	\item {\bf Use for Distributed Gradient Descent}: The approach can be used for straggler mitigation with gradient codes \citep{sasi2019straggler}, where it has been shown that the delay allows for lower amount of computation per node. Thus, the results of delayed relaunch scheduling can be used for distributed gradient computations in addition to that for erasure-coded storage.
\end{enumerate}

\chapter{Analyzing Latency for Video Content}\label{sec:video}

In this Chapter, we extend the setup to assume that the servers store video content. Rather than downloading the content, the users are streaming the content, which makes the notion of stall duration more important. We explain the system model in Section \ref{vs_model}. The downlaod and play times of different segments in a video is characterized in Section \ref{sec:dtpt_vs}. This is further used to characterize upper bounds on mean stall duration and tail stall duration in Sections \ref{sec:msd} and \ref{sec:tsd}, respectively. Sections~\ref{sec:vsr_sims} and \ref{sec:vsr_notes} contain simulation results and notes on future directions, respectively.

\section{Modeling Stall Duration for Video Requests}\label{vs_model}
We consider a distributed storage system consisting of $m$ heterogeneous servers
(also called storage nodes), denoted by $\mathcal{M}=1,2,...,m$. Each video file $i$, where $i=1,2,...r,$ is divided into $L_{i}$ equal segments, $G_{i,1}, \cdots, G_{i,L_i}$,  each of length $\tau$ sec. Then, each segment $G_{i,j}$ for $j\in\left\{ 1,2,\ldots,L_{i}\right\} $ is  partitioned  into $k_i$ fixed-size chunks  and then
 encoded  using an $(n_i, k_i)$ Maximum Distance Separable (MDS) erasure code to generate $n_i$  distinct chunks for each segment $G_{i,j}$. These coded chunks are denoted as $C_{i,j}^{(1)}, \cdots, C_{i,j}^{(n_i)}$. The encoding setup is illustrated in Figure~\ref{fig:videoEncoding}.

The encoded chunks are stored on the disks of $n_i$ distinct storage nodes. These storage nodes are represented by a set $\mathcal{S}_{i}$, such that 
$\mathcal{S}_{i}\subseteq\mathcal{M}$ and $n_{i}=\left|\mathcal{S}_{i}\right|$. Each server $z\in \mathcal{S}_{i}$ stores all the chunks $C_{i,j}^{(g_z)}$ for all $j$ and for some $g_z\in \{1, \cdots, n_i\}$. In other words, each of the $n_i$ storage nodes  stores one of the coded chunks for the entire duration of the video. The placement on the servers is illustrated in Figure \ref{fig:plcOnServ}, where  server $1$ is shown to store first coded chunks of file $i$, third coded chunks of file $u$ and first coded chunks for file $v$.  

The use of $\left(n_{i},k_{i}\right)$ of MDS erasure code introduces a redundancy factor of $n_{i}/k_{i}$ which allows the video to be reconstructed from the video chunks from any subset  of $k_{i}$-out-of-$n_{i}$ servers.  We note that the erasure-code can also help in recovery of the content $i$  as long as $k_i$ of the servers containing file $i$ are available \citep{dimakis2010network}.  Note that replication along $n$ servers is equivalent to choosing $(n,1)$ erasure code.  Hence, when a video $i$ is requested, the request goes to a set $\mathcal{A}_{i}$ of the storage nodes, where  $\mathcal{A}_{i}\subseteq\mathcal{S}_{i}$ and $k_{i}=\left|\mathcal{A}_{i}\right|$. From each server $z \in \mathcal{A}_{i}$, all chunks $C_{i,j}^{(g_z)}$ for all $j$ and the value of $g_z$ corresponding to that placed on server $z$ are requested. The request is illustrated in  Figure \ref{fig:plcOnServ}. In order to play a segment $q$ of video $i$, $C_{i,q}^{(g_z)}$ should have been downloaded from all $z\in \mathcal{A}_{i}$.  We assume that an edge router which is a combination of multiple users is requesting the files. Thus, the connections between the servers and the edge router is considered as the bottleneck. Since the service provider only has control over this part of the network and the last hop may not be under the control of the provider, the service provider can only guarantee the quality-of-service till the edge router. %

\begin{figure}
\centering\includegraphics[trim=0in 0in 0in .8in, clip, scale=0.35]{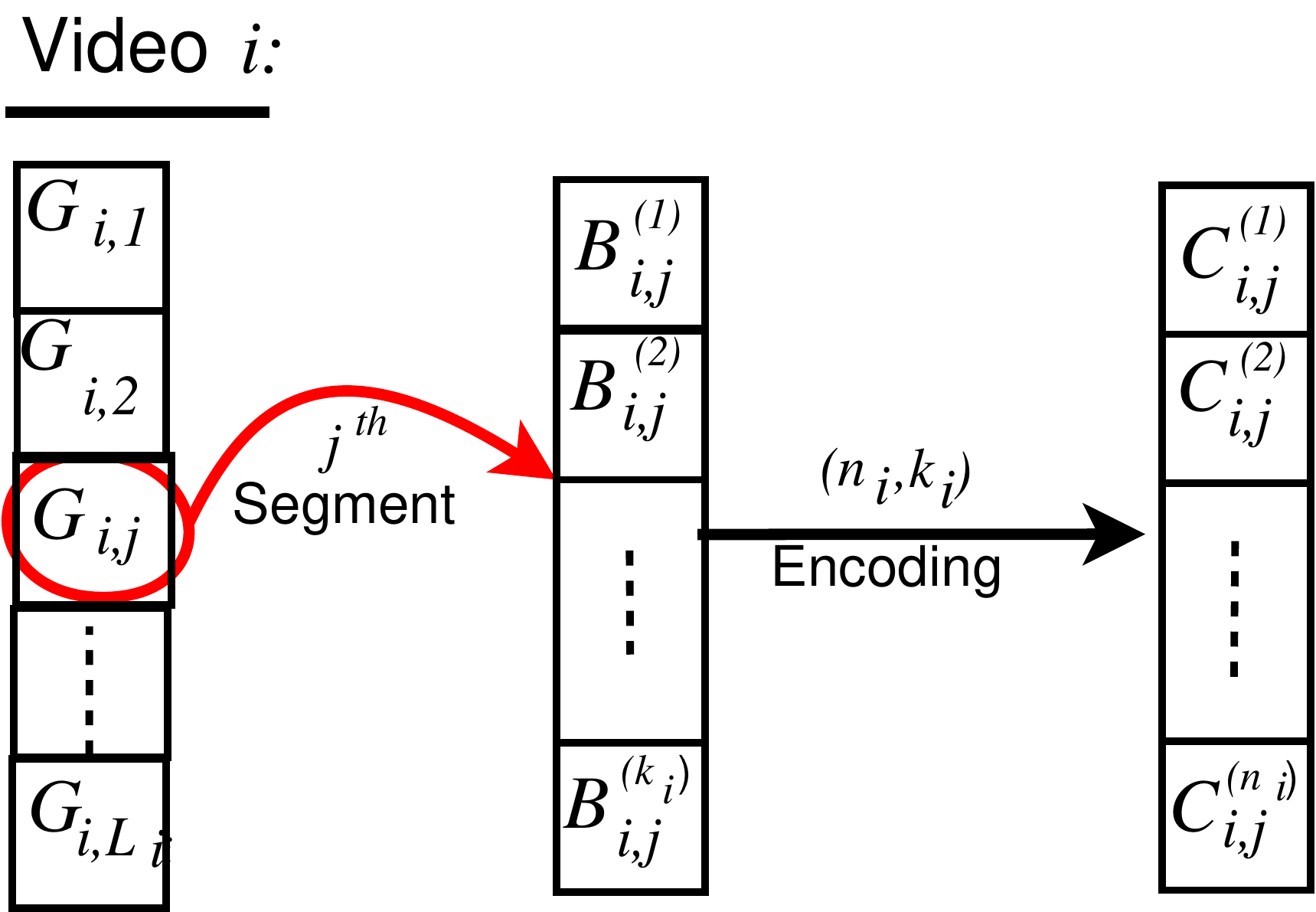}
\caption{A schematic illustrates video fragmentation and erasure-coding processes. Video $i$ is composed of $L_{i}$ segments. Each segments is partitioned into $k_{i}$ chunks and then encoded using an $(n_{i},k_{i})$ MDS code.\label{fig:videoEncoding}}
\end{figure}

\begin{figure}
\centering\includegraphics[scale=0.28,angle=-90]{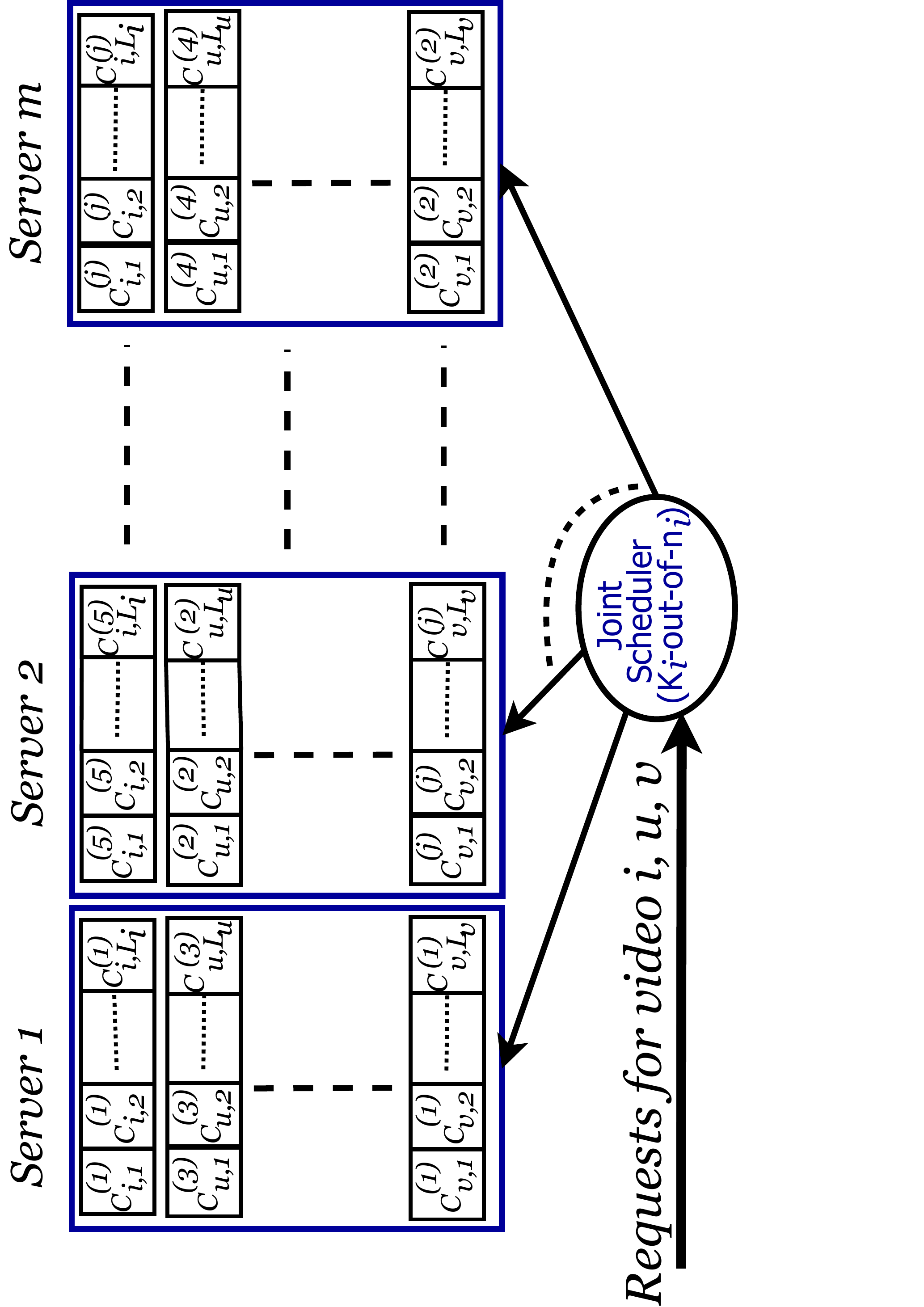}
\vspace{-.3in}
\caption{An Illustration of a distributed storage system
equipped with $m$ nodes and storing $3$ video files assuming $(n_{i},k_{i})$ erasure codes.\label{fig:plcOnServ}}
\end{figure}

\begin{figure}
\centering\includegraphics[trim=0.2in .5in 0in .3in, clip, width=.45\textwidth]{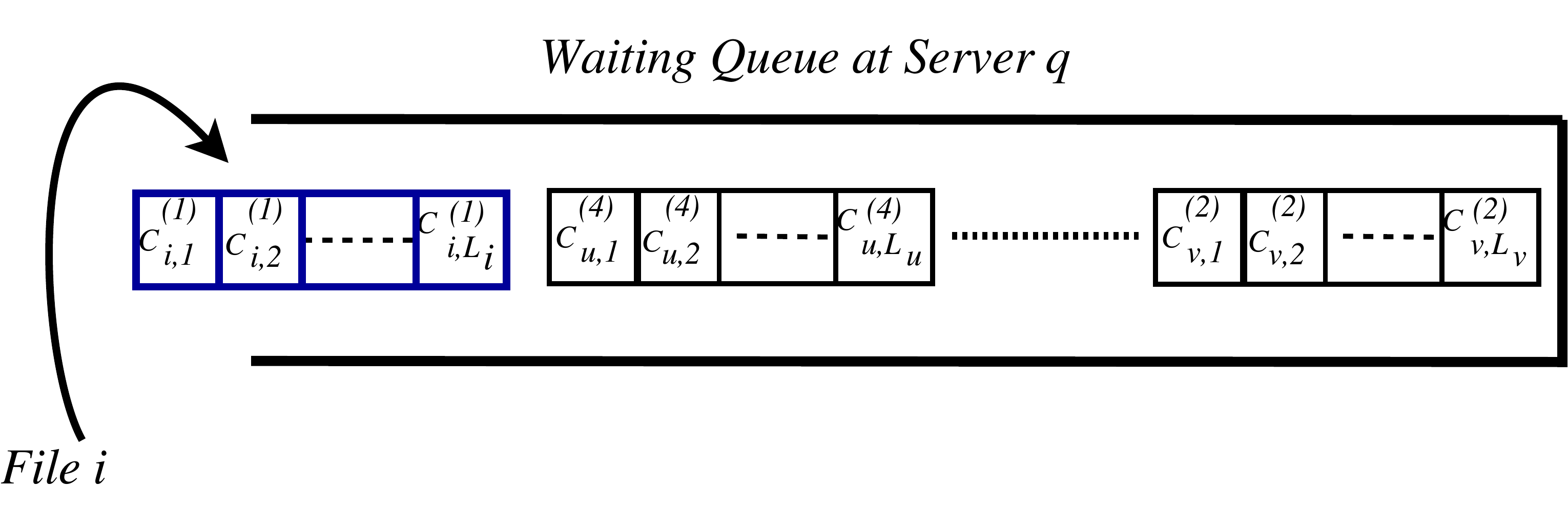}
\caption{An Example of the instantaneous queue status at
server $q$, where $q\in{1,2,...,m}$.\label{fig:sysModel}}
\end{figure}

We assume that the files at each server are served in order of the request in a first-in-first-out (FIFO) policy. Further, the different chunks are processed in order of the duration. This is depicted in  Figure \ref{fig:sysModel}, where for a server $q$, when a file $i$ is requested, all the chunks are placed in the queue where other video requests before this that have not yet been served are waiting. 

In order to schedule the requests for video file $i$ to the $k_i$ servers, the choice of $k_i$-out-of-$n_i$ servers is important. Finding the optimal choice of these servers to compute the latency expressions is an open problem to the best of our knowledge. Thus, this paper uses a policy, called Probabilistic Scheduling, given in Chapter \ref{sec:prob_sch}. %
This policy  allows choice of every possible subset of $k_i$ nodes with certain probability. Upon the arrival of a video file $i$,  we randomly dispatch the batch of $k_{i}$ chunk requests  to appropriate a set of nodes (denoted by set $\mathcal{A}_{i}$ of servers for file $i$) with predetermined probabilities
($P\left(\mathcal{A}_{i}\right)$ for set $\mathcal{A}_{i}$ and file $i$). Then, each node buffers requests in a local queue and processes in order and independently as explained before. From Chapter \ref{sec:prob_sch}, we note that probabilistic scheduling policy with feasible probabilities
$\left\{ P\left(\mathcal{A}_{i}\right):\,\forall_{i},\,\mathcal{A}_{i}\right\} $ exists
if and only if there exists conditional probabilities $\pi_{ij}\in\left[0,1\right]$
$\forall i,j$ satisfying

\[
\sum_{j=1}^{m}\pi_{ij}=k_{i}\,\,\,\,\forall i\,\,\,\,\,\,\,\,\,\,\,\,\,\,\,\,\,\,\mbox{and}\,\,\,\,\,\,\pi_{ij}=0\,\,\,\,\,\mbox{if\,\,\,\ensuremath{j\notin \mathcal{S}_{i}}}.
\]

In other words, selecting each node $j$ with probability $\pi_{ij}$ would yield a feasible choice of $\left\{ P\left(\mathcal{A}_{i}\right):\,\forall_{i},\,\mathcal{A}_{i}\right\} $. Thus, we consider the request probabilities $\pi_{ij}$ as the probability that the request for video file $i$ uses server $j$. While the probabilistic scheduling have been used to give bounds on latency of file download, this paper uses the scheduling to give bounds on the QoE for video streaming.

We note that it may not be ideal in practice  for a server to finish one video request before starting another since that increases delay for the future requests. However, this can be easily alleviated by considering that each server has multiple queues (streams) to the edge router which can all be considered as separate servers. These multiple streams can allow multiple parallel videos from the server.  The probabilistic scheduling can choose $k_i$ of the overall queues to access the content. This extension can be seen in \citep{al2018video}. 

We now describe a queuing model of the distributed storage system.
We assume that the arrival of client requests for each video $i$ form
an independent Poisson process with a known rate $\lambda_{i}$.  The arrival of file requests at node $j$ forms a Poisson Process
with rate $\varLambda_{j}=\sum_{i}\lambda_{i}\pi_{i,j}$ which is
the superposition of $r$ Poisson processes each with rate $\lambda_{i}\pi_{i,j}$.

We assume that the chunk service time for each coded chunk $C_{i,l}^{(g_j)}$ at server $j$,  $X_{j}$,  follows a shifted exponential distribution as has been demonstrated in realistic systems \citep{Yu_TON,CS14}. The service time distribution for the chunk service time at server $j$, $X_{j}$, is given by
the probability distribution function $f_{j}(x)$, which is
\begin{equation}
f_{j}(x)=\begin{cases}
\begin{array}{cc}
\alpha_{j}e^{-\alpha_{j}\left(x-\beta_{j}\right)}\,, & \,\,\,\,\,x\geq\beta_{j}\\
0\,, & \,\,\,\,\,\,x<\beta_{j}
\end{array}\end{cases}.
\end{equation}

We note that exponential distribution is a special case with $\beta_{j}=0$.  We note that the constant delays like the networking delay, and the decoding time can be easily factored into the shift of the shifted exponential distribution. Let $M_{j}(t)=\mathbb{E}\left[e^{tX_{j}}\right]$ be the moment generating function of $X_{j}$.  Then, $M_{j}(t)$ is given as

\begin{equation}
M_{j}(t)=\frac{\alpha_{j}}{\alpha_{j}-t}\,e^{\beta_{j}t}\,\,\,\,\,\,\,\,\,\,t<\alpha_{j} 
\label{M_j_t_1}
\end{equation}

We note that the arrival rates are given in terms of the video files, and the service rate above is provided in terms of the coded chunks at each server. The client plays the video segment after all the $k_i$ chunks for the segment have been downloaded and the previous segment has been played. We also assume that there is a start-up delay of $d_{s}$ (in seconds) for the video which is the duration in which the content can be buffered but not played.  This paper will characterize the stall duration and  stall duration tail probability for this setting.
\section{Modeling Download and Play Times}\label{sec:dtpt_vs}
In order to understand the stall duration, we need to see the download time of different coded chunks and the play time of the different segments of the video. %

\subsection{Download Times of the Chunks from each Server}
In this subsection, we will quantify the download time of chunk for video file $i$ from server $j$ which has  chunks $C_{i,q}^{(g_j)}$ for all $q = 1, \cdots L_i$. We consider download of $q^{\text{th}}$  chunk $C_{i,q}^{(g_j)}$. As seen in Figure \ref{fig:sysModel}, the download of $C_{i,q}^{(g_j)}$ consists of two components - the waiting time of all the video files in queue before file $i$ request and the service time of all chunks of video file $i$ up to the $q^{\text{th}}$ chunk. Let  $\ensuremath{W_{j}}$ be the random variable corresponding to the waiting time of all the video files in queue before file $i$ request and $Y_{j}^{(q)}$ be the (random) service time of coded chunk $q$ for file $i$ from server $j$. Then, the (random) download time for coded chunk $q\in \{1, \cdots, L_i\}$ for file $i$ at server $j\in \mathcal{A}_{i}$, $D_{i,j}^{(q)}$, is given as 
\begin{equation}
D_{i,j}^{(q)} = W_j + \sum_{v=1}^q Y_{j}^{(v)}. \label{dije}
\end{equation}

We will now find the distribution of $W_j$. We note that this is the waiting time for the video files whose arrival rate is given as $\varLambda_{j}=\sum_{i}\lambda_{i}\pi_{i,j}$. Since the arrival rate of video files is Poisson, the waiting time for the start of video download from a server $j$, $W_j$,  is given by an M/G/1 process.  In order to find the waiting time, we would need to find the service time statistics of the video files. Note that $f_{j}(x)$ gives the service time distribution of only a chunk and not of the video files. 

Video file $i$ consists of $L_{i}$ coded chunks at server $j$ ($j\in \mathcal{S}_{i}$). The total service time for video file $i$ at server $j$ if requested from server $j$, $ST_{i,j} $, is given as  
\begin{equation}
ST_{i,j} = \sum_{v=1}^{L_i} Y_{j}^{(v)}.
\end{equation}

The service time of the video files is given as 
\begin{equation}
R_{j} = \begin{cases}
ST_{i,j}  \quad \text{ with probability } \frac{\pi_{ij}\lambda_{i}}{\Lambda_j} \quad \forall i,
\end{cases}
\end{equation}
since the service time is $ST_{i,j} $ when file $i$ is requested from server $j$. Let $\overline{R}_{j}(s) = {\mathbb E}[e^{-sR_{j} }]$ be the Laplace-Stieltjes Transform of $R_{j}$. 

\begin{lemma}\label{ljlemma}
	The Laplace-Stieltjes Transform of  $R_{j}$, $\overline{R}_{j}(s)=\mathbb{E}\left[e^{-s\overline{R}_{j}}\right]$ is given as
	\begin{equation}
\overline{R}_{j}(s)  = \sum_{i=1}^r \frac{\pi_{ij}\lambda_i}{\Lambda_j}	\left(\frac{\alpha_{j}e^{-\beta_{j}s}}{\alpha_{j}+s}\right)^{L_{i}}\label{eq:servTimeofFile}
	\end{equation}
	\end{lemma}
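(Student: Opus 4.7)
The plan is to condition on which file is being served at server $j$ and then exploit the i.i.d.\ structure of the per-chunk service times. Concretely, by the total probability decomposition already written down in the statement,
\[
\overline{R}_j(s) = \mathbb{E}[e^{-sR_j}] = \sum_{i=1}^r \frac{\pi_{ij}\lambda_i}{\Lambda_j}\,\mathbb{E}[e^{-s\,ST_{i,j}}],
\]
so it suffices to compute the LST of $ST_{i,j}=\sum_{v=1}^{L_i} Y_j^{(v)}$ for each fixed $i$.

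Next, I would observe that the chunk service times $Y_j^{(v)}$, $v=1,\dots,L_i$, are i.i.d.\ with the shifted exponential density $f_j$ of parameters $(\alpha_j,\beta_j)$, and compute the LST of a single chunk directly:
\[
\mathbb{E}[e^{-sY_j^{(v)}}] = \int_{\beta_j}^{\infty} e^{-sx}\,\alpha_j e^{-\alpha_j(x-\beta_j)}\,dx = \frac{\alpha_j\,e^{-\beta_j s}}{\alpha_j + s},
\]
valid for $s > -\alpha_j$. By independence, the LST of $ST_{i,j}$ factorizes as the $L_i$-th power of the single-chunk LST, giving
\[
\mathbb{E}[e^{-s\,ST_{i,j}}] = \left(\frac{\alpha_j\,e^{-\beta_j s}}{\alpha_j + s}\right)^{L_i}.
\]

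Substituting this back into the mixture expression yields \eqref{eq:servTimeofFile} exactly. There is essentially no obstacle here: the result is a direct calculation combining (i) conditioning on the identity of the file served (which is governed by the probabilistic scheduling probabilities together with the splitting property of Poisson arrivals so that the fraction of arrivals at node $j$ that are of type $i$ is $\pi_{ij}\lambda_i/\Lambda_j$) and (ii) the multiplicative property of LSTs under independent sums. The only minor care-point to mention is noting the region of validity $s>-\alpha_j$ so that all integrals converge, and implicitly that the per-chunk service times within a given file's service are assumed independent (which is the standing modeling assumption stated in Section~\ref{vs_model}).
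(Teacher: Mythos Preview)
Your proposal is correct and follows essentially the same approach as the paper's proof: condition on the file type to obtain the mixture, expand $ST_{i,j}$ as a sum of i.i.d.\ chunk service times, factorize the LST by independence, and plug in the shifted-exponential LST. Your version is slightly more explicit (you carry out the integral for the single-chunk LST and note the region of validity $s>-\alpha_j$), but the argument is identical in substance.
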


\begin{proof}
\begin{align}
	\overline{R}_{j}(s) & =\sum_{i=1}^r \frac{\pi_{ij}\lambda_{i}}{\Lambda_{j}}\mathbb{E}\left[e^{-s\left(ST_{i,j}\right)}\right]\nonumber \\
	& \overset{}{=}\sum_{i=1}^r \frac{\pi_{ij}\lambda_{i}}{\Lambda_{j}}\mathbb{E}\left[e^{-s\left(\sum_{\nu=1}^{L_{i}}Y_{j}^{(\nu)}\right)}\right]\nonumber \\
	& =\sum_{i=1}^r \frac{\pi_{ij}\lambda_{i}}{\Lambda_{j}}\left(\mathbb{E}\left[e^{-s\left(Y_{j}^{(1)}\right)}\right]\right)^{L_{i}}\nonumber \\
	& =\sum_{i=1}^r \frac{\pi_{ij}\lambda_{i}}{\Lambda_{j}}\left(\frac{\alpha_{j}e^{-\beta_{j}s}}{\alpha_{j}+s}\right)^{L_{i}}
\end{align}

\end{proof}

\begin{corollary}
	The moment generating function for the service time of video files when requested from server $j$, $B_{j}(t)$, is given by
	\begin{equation}
B_{j}(t)  = \sum_{i=1}^r \frac{\pi_{ij}\lambda_i}{\Lambda_j}	\left(\frac{\alpha_{j}e^{\beta_{j}t}}{\alpha_{j}-t}\right)^{L_{i}}\label{eq:servTimeofFileB_j_i}
	\end{equation}
	for any $t>0$, and $t< \alpha_j$.
	\end{corollary}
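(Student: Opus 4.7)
The plan is to obtain this corollary as a direct consequence of Lemma \ref{ljlemma} by means of the elementary identity $B_j(t) = \mathbb{E}[e^{tR_j}] = \mathbb{E}[e^{-(-t)R_j}] = \overline{R}_j(-t)$. Thus my first step is simply to substitute $s = -t$ into the closed form \eqref{eq:servTimeofFile}. Doing so transforms the inner factor as
\begin{equation}
\left(\frac{\alpha_{j} e^{-\beta_{j}s}}{\alpha_{j}+s}\right)^{L_{i}} \;\longmapsto\; \left(\frac{\alpha_{j} e^{\beta_{j}t}}{\alpha_{j}-t}\right)^{L_{i}},
\end{equation}
and preserves the mixing weights $\pi_{ij}\lambda_i/\Lambda_j$, yielding exactly the claimed expression for $B_j(t)$.

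The second and only subtle step is to justify the range of validity $0 < t < \alpha_j$. Since $R_j$ equals $ST_{i,j} = \sum_{v=1}^{L_i} Y_j^{(v)}$ with probability $\pi_{ij}\lambda_i/\Lambda_j$, and the $Y_j^{(v)}$ are i.i.d.\ ${\rm Sexp}(\beta_j,\alpha_j)$, the per-chunk MGF $\mathbb{E}[e^{tY_j^{(v)}}]=\alpha_j e^{\beta_j t}/(\alpha_j-t)$ is finite precisely when $t<\alpha_j$, as already recorded in \eqref{M_j_t_1}. By independence of the chunk service times the MGF of $ST_{i,j}$ is the product of the $L_i$ per-chunk MGFs, and by the law of total expectation the mixture over files produces the weighted sum. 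Consequently the formal Laplace-to-MGF substitution is rigorously valid on the common domain $t<\alpha_j$, and because we additionally restrict to $t>0$ the resulting expression is the genuine moment generating function rather than just its analytic continuation.

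There is no real obstacle: the corollary is a one-line rewriting of the lemma once the relationship between the Laplace-Stieltjes transform and the moment generating function is invoked, and the domain condition is inherited directly from the per-chunk exponential factor.
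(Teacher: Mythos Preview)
Your proof is correct and takes essentially the same approach as the paper: the paper's entire proof is the one line ``This corollary follows from (\ref{eq:servTimeofFile}) by setting $t=-s$,'' which is precisely your substitution $s=-t$ in Lemma~\ref{ljlemma}. Your additional paragraph justifying the domain $0<t<\alpha_j$ via the per-chunk MGF in \eqref{M_j_t_1} is more careful than the paper, which simply asserts the range without comment.
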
	
\begin{proof}
This corollary follows from (\ref{eq:servTimeofFile})
 by setting $t=-s$.
\end{proof}

The server utilization for the video files at server $j$ is given as $\rho_{j}=\varLambda_{j}\mathbb{E}\left[R_j\right]$. Since $\mathbb{E}\left[R_j\right] = B_j'(0)$, using Lemma \ref{eq:servTimeofFile}, we have

\begin{equation}
\rho_{j}=\sum_{i}\pi_{ij}\lambda_{i}L_{i}\left(\beta_{j}+\frac{1}{\alpha_{j}}\right)\label{eq:rho_j}.
\end{equation}

Having characterized the service time distribution of the video files via a Laplace-Stieltjes Transform $\overline{R}_{j}(s) $, the Laplace-Stieltjes Transform of the waiting time $W_{j}$ can be characterized using Pollaczek-Khinchine formula for M/G/1 queues \citep{zwart2000sojourn1}, since the request pattern is Poisson and the service time is general distributed. Thus, the Laplace-Stieltjes Transform of the waiting time $W_{j}$ is given as 

\begin{equation}
\mathbb{E}\left[e^{-sW_{j}}\right]=\frac{\left(1-\rho_{j}\right)s}{s-\Lambda_{j}\left(1-\overline{R}_{j}(s)\right)}\label{eq:E_W_j_laplace}
\end{equation}

Having characterized the Laplace-Stieltjes Transform of the waiting time $W_{j}$ and knowing the distribution of  $Y_{j}^{(v)}$, the Laplace-Stieltjes Transform of the download time $D_{i,j}^{(q)}$ is given as

\begin{equation}
{\mathbb E}[e^{-sD_{i,j}^{(q)}}] = \frac{\left(1-\rho_{j}\right)s}{s-\Lambda_{j}\left(1-\overline{R}_{j}(s)\right)}\left( \frac{\alpha_{j}}{\alpha_{j}+s}\,e^{-\beta_{j}s}\right)^q.\label{LapOfE_D_ij}
\end{equation}

We note that the expression above holds only in the range of $s$ when  $s-\Lambda_{j}\left(1-\overline{R}_{j}(s)\right)>0$ and $\alpha_{j}+s>0$. Further, the server utilization $\rho_j $ must be less than $1$. The overall download time of all the chunks for the segment $G_{i,q}$ at the client,  $D_{i}^{(q)}$, is given by 

\begin{equation}
D_{i}^{(q)} = \max_{j\in \mathcal{A}_{i}}  D_{i,j}^{(q)}. \label{deq}
\end{equation}

\subsection{Play Time of Each Video Segment}
 Let $T_{i}^{\left(q\right)}$ be the  time at which the segment $G_{i,q}$ is played (started) at the client. The startup delay of the video is $d_s$. Then, the first segment can be played at the maximum of the time the first segment can be downloaded and the startup delay. Thus, 

\begin{eqnarray}
T_{i}^{(1)} & = & \mbox{max }\left(d_{s},\,D_{i}^{(1)}\right).
\label{eq:T_i_qi1}
\end{eqnarray}

For $1<q\le L_i$, the  play time of  segment $q$ of file $i$ is given by the maximum of the time it takes to download the segment and the time at which the previous segment is played plus the time to play a segment ($\tau$ seconds). Thus, the play time of segment $q$ of file $i$, $T_{i}^{(q)}$  can be expressed as 
\begin{eqnarray}
T_{i}^{(q)} & = & \mbox{max }\left(T_{i}^{(q-1)}+\tau,\,D_{i}^{(q)}\right).
\label{eq:T_i_qi}
\end{eqnarray}

Equation \eqref{eq:T_i_qi} gives a recursive equation, which can yield

\begin{eqnarray}
T_{i}^{(L_i)}
 & = &  \mbox{max }\left(T_{i}^{(L_i-1)}+\tau,\,D_{i}^{(L_i)}\right)\nonumber\\
& = &  \mbox{max }\left(T_{i}^{(L_i-2)}+2\tau,\, D_{i}^{(L_i-1)}+\tau,\,D_{i}^{(L_i)}\right)\nonumber\\
& = & \! \mbox{max} \left(d_s+(L_{i}-1)\tau,\right.\nonumber\\
&&\left. \max_{z=2}^{L_i+1}D_{i}^{(z-1)} + (L_i-z+1)\tau \right)\
\label{eq:T_i_qi2}
\end{eqnarray}

Since $D_{i}^{(q)} = \max_{j\in \mathcal{A}_{i}}  D_{i,j}^{(q)}$ from \eqref{deq}, $T_{i}^{(L_{i})}$ can  be written as

\begin{eqnarray}
T_{i}^{(L_{i})}= \max_{z=1}^{L_i+1}\max_{j\in \mathcal{A}_i}\left(p_{i,j,z}\right)\label{eq:T_i_L_i}, \label{eq:pjstart}
\end{eqnarray}where
\begin{eqnarray}
p_{i,j,z}=\begin{cases}
d_{s}+\left(L_{i}-1\right)\tau &,\,\, z=1\\
\\
D_{i,j}^{(z-1)} + (L_i-z+1)\tau \ &,\,\, 2\leq z\leq (L_{i}+1)
\end{cases}\label{eq:pjz}
\end{eqnarray}

We next give the moment generating function of $p_{i,j,z}$ that will be used in the calculations of the QoE metrics in the next sections. Hence, we define the following lemma. 

\begin{lemma}\label{lemma_pijz}
	The moment generating function for $p_{i,j,z}$, is given as
	\begin{equation}
\mathbb{E}\left[e^{tp_{i,j,z}}\right]=\begin{cases}
e^{t\left(d_{s}+\left(L_{i}-1\right)\tau\right)} & ,\,z=1\\
e^{t\left(L_{i}+1-z\right)\tau}Z_{i,j}^{(z-1)}\left(t\right) & ,2\leq z\leq L_{i}+1
\end{cases}
\label{eq:momntPjz}
	\end{equation}
where  

\begin{equation}
Z_{i,j}^{(\ell)}\left(t\right) = {\mathbb E}[e^{tD_{i,j}^{(\ell)}}]  = \frac{\left(1-\rho_{j}\right)t\left(M_{j}(t)\right)^{\ell}}{t-\Lambda_{j}\left(B_{j}(t)-1\right)}\label{eq:M_D_ij} 
\end{equation}
	\end{lemma}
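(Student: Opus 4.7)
The plan is to treat the two cases of the piecewise definition of $p_{i,j,z}$ separately, since the first case is deterministic and the second reduces to already-derived Laplace–Stieltjes transforms.

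For $z=1$, $p_{i,j,1} = d_s + (L_i-1)\tau$ is a constant, so $\mathbb{E}[e^{t p_{i,j,1}}] = e^{t(d_s + (L_i-1)\tau)}$, matching the first branch. The only content here is recognizing that the first segment's play time lower bound is deterministic given the startup delay.

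For $2 \le z \le L_i+1$, I would write $p_{i,j,z} = D_{i,j}^{(z-1)} + (L_i - z + 1)\tau$, so that
\begin{equation}
\mathbb{E}[e^{t p_{i,j,z}}] = e^{t(L_i-z+1)\tau} \cdot \mathbb{E}[e^{t D_{i,j}^{(z-1)}}] = e^{t(L_i-z+1)\tau} \cdot Z_{i,j}^{(z-1)}(t),
\end{equation}
where the deterministic shift pulls out of the expectation. It then remains to verify the claimed closed form for $Z_{i,j}^{(\ell)}(t) = \mathbb{E}[e^{t D_{i,j}^{(\ell)}}]$. For this I would invoke equation \eqref{LapOfE_D_ij}, namely
\begin{equation}
\mathbb{E}[e^{-s D_{i,j}^{(q)}}] = \frac{(1-\rho_j) s}{s - \Lambda_j(1-\overline{R}_j(s))}\left(\frac{\alpha_j}{\alpha_j + s} e^{-\beta_j s}\right)^q,
\end{equation}
and apply the substitution $s = -t$. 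Using $\overline{R}_j(-t) = B_j(t)$ (direct from the definitions in Lemma~\ref{ljlemma} and \eqref{eq:servTimeofFileB_j_i}) and $\frac{\alpha_j}{\alpha_j - t}e^{\beta_j t} = M_j(t)$ from \eqref{M_j_t_1}, routine algebraic simplification gives
\begin{equation}
Z_{i,j}^{(\ell)}(t) = \frac{(1-\rho_j) t \, (M_j(t))^{\ell}}{t - \Lambda_j(B_j(t)-1)},
\end{equation}
as required, with the result valid in the range of $t$ for which the previously stated conditions ($t<\alpha_j$, $\rho_j<1$, and $t - \Lambda_j(B_j(t)-1) > 0$) hold so that the transform exists.

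There is no real obstacle here: the lemma is essentially a bookkeeping corollary of equation \eqref{LapOfE_D_ij} combined with the independence-free fact that a deterministic offset contributes multiplicatively to an MGF. The only subtlety worth stating explicitly is the analyticity region, i.e., ensuring the substitution $s \mapsto -t$ is justified inside the strip where both $\overline{R}_j(s)$ and the denominator $s - \Lambda_j(1-\overline{R}_j(s))$ remain well-defined, which is exactly the range used throughout the chapter.
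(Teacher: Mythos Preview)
Your proposal is correct and matches the paper's approach essentially verbatim: the paper's proof simply says the result follows by substituting $t=-s$ in \eqref{LapOfE_D_ij} and identifying $B_j(t)$ and $M_j(t)$ via \eqref{eq:servTimeofFileB_j_i} and \eqref{M_j_t_1}, together with the validity conditions on $t$. Your write-up is just a more explicit version of the same argument.
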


\begin{proof}
This follows by substituting
$t=-s$ in (\ref{LapOfE_D_ij}) and $B_{j}(t)$ is given by  (\ref{eq:servTimeofFileB_j_i}) and $M_j(t)$ is given by (\ref{M_j_t_1}). This expressions holds when $t-\Lambda_{j}\left(B_{j}(t)-1\right)>0$ and $t<0  \,\forall j$,  since the moment generating function does not exist if the above does not hold.
\end{proof}

Ideally, the last segment should be completed by time $d_s + L_i \tau$. The difference between $T_{i}^{(L_i)}$ and $d_s + (L_i-1) \tau$ gives the stall duration. Note that the stalls may occur before any segment. This difference will give the sum of durations of all the stall  periods before any segment. Thus, the stall duration for the request of file 
$\delta^{(i)}$ is given as
\begin{equation}
\Gamma^{(i)} = T_{i}^{(L_i)} - d_s - (L_i-1) \tau. \label{eq:base}
\end{equation}
In the next two sections, we will use this stall time to determine the bounds on the mean stall duration and the stall duration tail probability.

\section{Characterization of Mean Stall Duration}\label{sec:msd}
In this section, we will provide a bound for the first QoE metric, which is the mean stall duration for a file $i$. We will find the bound through probabilistic scheduling and since probabilistic scheduling is one feasible  strategy, the obtained bound is an upper bound to the optimal strategy. 

Using \eqref{eq:base}, the expected stall time for file $i$ is given as follows
\begin{eqnarray}
\mathbb{E}\left[\Gamma^{(i)}\right] & = & \mathbb{E}\left[T_{i}^{(L_{i})}-d_{s}-\left(L_{i}-1\right)\tau\right]\nonumber \\
\nonumber \\
& = & \mathbb{E}\left[T_{i}^{(L_{i})}\right]-d_{s}-\left(L_{i}-1\right)\tau\label{eq:E_T_s_2}
\end{eqnarray}

 An exact evaluation for the play time of segment $L_{i}$ is hard due to the dependencies between  $p_{jz}$ random variables for different values of $j$ and $z$, where $z\in{(1,2,...,L_{i}+1)}$ and $j\in \mathcal{A}_{i}$. Hence, we derive an upper-bound on the playtime of the segment $L_{i}$ as follows. Using Jensen's inequality \citep{kuczma2009introduction}, we have for $t_i>0$, 
 
 \begin{equation}
e^{t_{i}\mathbb{E}\left[T_{i}^{\left(L_{i}\right)}\right]} \leq \mathbb{E}\left[e^{t_{i}T_{i}^{\left(L_{i}\right)}}\right]. \label{eq:jensen}
 \end{equation}
 
 Thus, finding an upper bound on the moment generating function for $T_i^{(L_i)}$ can lead to an upper bound on the mean stall duration. Thus, we will now bound the moment generating function for $T_i^{(L_i)}$.

\begin{eqnarray}
  \mathbb{E}\left[e^{t_{i}T_{i}^{\left(L_{i}\right)}}\right] & \overset{(a)}{=} & \mathbb{E}\left[\underset{z}{\mbox{max}}  \,\underset{j\in\mathcal{A}_{i}}{\mbox{max}}\,e^{t_{i}p_{ijz}}\right]\nonumber\\
 & = & \mathbb{E}_{\mathcal{A}_{i}}\left[\mathbb{E}\left[\underset{z}{\mbox{max}}\,\underset{j\in{\mathcal{A}_{i}}}{\mbox{max}}\,e^{t_{i}p_{ijz}}|\,\mathcal{A}_{i}\right]\right]\nonumber
\\
 &\overset{(b)}{\leq} & \mathbb{E}_{\mathcal{A}_{i}}\left[\sum_{j\in \mathcal{A}_{i}}\mathbb{E}\left[\underset{z}{\mbox{max}}\,e^{t_{i}p_{ijz}}\right]\right]\nonumber
\\
 & = & \mathbb{E}_{\mathcal{A}_{i}}\left[\sum_{j}F_{ij}\mathbf{1}_{\left\{ j\in\mathcal{A}_{i}\right\} }\right]\nonumber
\\
 & = & \sum_{j}F_{ij}\,\mathbb{E}_{\mathcal{A}_{i}}\left[\mathbf{1}_{\left\{ j\in\mathcal{A}_{i}\right\} }\right]\nonumber
\\
 & = & \sum_{j}F_{ij}\,\mathbb{P}\left(j\in\mathcal{A}_{i}\right)\nonumber
\\
 & \overset{(c)}{=} & \sum_{j}F_{ij}\pi_{ij}
\label{eq:mgf_bound}
\end{eqnarray}
where (a) follows from \eqref{eq:pjstart}, (b) follows by upper bounding $\max_{j\in \mathcal{A}_{i}} $ by $\sum_{j\in \mathcal{A}_{i}} $, (c) follows by probabilistic scheduling where $\mathbb{P}\left(j\in\mathcal{A}_{i}\right) = \pi_{ij}$,   and  $F_{ij}=\mathbb{E}\left[\underset{z}{\mbox{max}}  \, e^{t_{i}p_{ijz}}\right]$. We note that the only inequality here is for replacing the maximum by the sum. Since this term will be inside the logarithm for the mean stall latency, the gap between the term and its bound becomes additive rather than multiplicative.

Substituting \eqref{eq:mgf_bound} in \eqref{eq:jensen}, we have 

\begin{equation}
\mathbb{E}\left[T_{i}^{(L_{i})}\right]\leq\frac{1}{t_{i}}\text{log}\left(\sum_{j=1}^{m}\pi_{ij}F_{ij}\right).\label{eq:ET_i}
\end{equation}

 Let $H_{ij}=\sum_{\ell=1}^{L_{i}}e^{-t_{i}\left(d_{s}+\left(\ell-1\right)\tau\right)}Z_{{i,j}}^{(\ell)}(t_{i})$, where $Z_{{i,j}}^{(\ell)}(t)$ is defined in equation \eqref{eq:M_D_ij}. We  note that $H_{ij}$ can be simplified using the geometric series formula as follows.
 
 \begin{lemma}\label{hijlem}
 	\begin{equation}
 	H_{ij} =  \frac{e^{-t_{i}\left(d_{s}-\tau\right)}\left(1-\rho_{j}\right)t_{i}\,\widetilde{M}_{j}(t_{i})}{t_{i}-\Lambda_{j}\left(B_{j}(t_{i})-1\right)}\frac{1-\left(\widetilde{M}_{j}(t_{i})\right)^{L_{i}}}{\left(1-\widetilde{M}_{j}(t_{i})\right)},
 	\label{eq:H}
 	\end{equation}
 	where $\widetilde{M}_{j}(t_{i})=M_{j}(t_{\text{i}})e^{-t_{i}\tau}$, $M_j(t_i)$ is given in (\ref{M_j_t_1}), and $B_j(t_i)$ is given in (\ref{eq:servTimeofFileB_j_i}). 
 \end{lemma}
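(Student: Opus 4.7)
The plan is a direct calculation using the definition of $Z_{i,j}^{(\ell)}(t_i)$ and the geometric series formula; nothing in this lemma appears to require a nontrivial probabilistic argument.

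First I would substitute the closed form $Z_{i,j}^{(\ell)}(t_i) = \frac{(1-\rho_j)t_i (M_j(t_i))^\ell}{t_i - \Lambda_j(B_j(t_i)-1)}$ from \eqref{eq:M_D_ij} into the definition of $H_{ij}$ and pull the $\ell$-independent prefactor $\frac{(1-\rho_j)t_i}{t_i - \Lambda_j(B_j(t_i)-1)}$, together with $e^{-t_i d_s}$, outside the sum. This reduces the problem to evaluating $\sum_{\ell=1}^{L_i} e^{-t_i(\ell-1)\tau}(M_j(t_i))^\ell$.

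Next I would rewrite each summand as $M_j(t_i)\cdot\bigl(M_j(t_i)e^{-t_i\tau}\bigr)^{\ell-1} = M_j(t_i)\,\widetilde{M}_j(t_i)^{\ell-1}$, which is a finite geometric progression in $\widetilde{M}_j(t_i)$. Summing gives $M_j(t_i)\cdot\frac{1-\widetilde{M}_j(t_i)^{L_i}}{1-\widetilde{M}_j(t_i)}$. Using the identity $M_j(t_i) = e^{t_i\tau}\widetilde{M}_j(t_i)$ (by definition of $\widetilde M_j$) converts the leading $M_j(t_i)$ into $e^{t_i\tau}\widetilde M_j(t_i)$, and combining this $e^{t_i\tau}$ with the previously extracted $e^{-t_i d_s}$ yields the claimed $e^{-t_i(d_s-\tau)}$ factor in \eqref{eq:H}.

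There is essentially no obstacle here; the only things to watch are (i) that the geometric formula is valid, which requires $\widetilde M_j(t_i)\neq 1$, and (ii) that $t_i$ lies in the regime where $Z_{i,j}^{(\ell)}(t_i)$ exists, i.e.\ $t_i < \alpha_j$ and $t_i - \Lambda_j(B_j(t_i)-1) > 0$, as noted after \eqref{eq:M_D_ij}. Both are implicitly assumed in the statement of the lemma, so the calculation above completes the proof.
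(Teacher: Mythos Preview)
Your proposal is correct and follows essentially the same approach as the paper: substitute the explicit form of $Z_{i,j}^{(\ell)}(t_i)$, factor out the $\ell$-independent terms, and sum the remaining finite geometric series in $\widetilde{M}_j(t_i)$. The only cosmetic difference is that the paper absorbs the extra $e^{t_i\tau}$ into the prefactor before summing (writing the sum as $\sum_\ell \widetilde{M}_j(t_i)^\ell$) whereas you do so afterward via $M_j(t_i)=e^{t_i\tau}\widetilde{M}_j(t_i)$; the computations are identical.
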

 \begin{proof}
\begin{eqnarray}
	H_{ij} 
	& = & \sum_{\ell=1}^{L_{i}}\left(\frac{e^{-t_{i}\left(d_{s}+\left(\ell-1\right)\tau\right)}\left(1-\rho_{j}\right)t_{i}}{t_{i}-\Lambda_{j}\left(B_{j}(t_{i})-1\right)}\left(\frac{\alpha_{j}e^{t_{i}\beta_{j}}}{\alpha_{j}-t_{i}}\right)^{\ell}\right)\nonumber\\
	& = & \frac{e^{-t_{i}d_{s}}\left(1-\rho_{j}\right)t_{i}}{t_{i}-\Lambda_{j}\left(B_{j}(t_{i})-1\right)}\sum_{\ell=1}^{L_{i}}\left(e^{-t_{i}\left(\ell-1\right)\tau}\left(\frac{\alpha_{j}e^{t_{i}\beta_{j}}}{\alpha_{j}-t_{i}}\right)^{\ell}\right)\nonumber\\
	& = & \frac{e^{-t_{i}\left(d_{s}-\tau\right)}\left(1-\rho_{j}\right)t_{i}}{t_{i}-\Lambda_{j}\left(B_{j}(t_{i})-1\right)}\sum_{\ell=1}^{L_{i}}\left(e^{-t_{i}\tau}\frac{\alpha_{j}e^{t_{i}\beta_{j}}}{\alpha_{j}-t_{i}}\right)^{\ell}\nonumber\\
	& = & \frac{e^{-t_{i}\left(d_{s}-\tau\right)}\left(1-\rho_{j}\right)t_{i}}{t_{i}-\Lambda_{j}\left(B_{j}(t_{i})-1\right)}\sum_{\ell=1}^{L_{i}}\left(\frac{\alpha_{j}e^{t_{i}\beta_{j}-t_{i}\tau}}{\alpha_{j}-t_{i}}\right)^{\ell}\nonumber\\
	& = & \frac{e^{-t_{i}\left(d_{s}-\tau\right)}\left(1-\rho_{j}\right)t_{i}}{t_{i}-\Lambda_{j}\left(B_{j}(t_{i})-1\right)}\times\nonumber\\
	&  & \left(M_{j}(t_{i})e^{-t_{i}\tau}\frac{1-\left(M_{j}(t_{i})\right)^{Li}e^{-t_{i}L_{i}\tau}}{1-M_{j}(t_{i})e^{-t_{i}\tau}}\right)\nonumber\\
	& = & \frac{e^{-t_{i}\left(d_{s}-\tau\right)}\left(1-\rho_{j}\right)t_{i}\widetilde{M}_{j}(t_{i})}{t_{i}-\Lambda_{j}\left(B_{j}(t_{i})-1\right)}\frac{1-\left(\widetilde{M}_{j}(t_{i})\right)^{L_{i}}}{\left(1-\widetilde{M}_{j}(t_{i})\right)}
\end{eqnarray}

 \end{proof}

 Substituting \eqref{eq:ET_i} in \eqref{eq:E_T_s_2} and some manipulations, the  mean stall duration is bounded as follows.

\begin{theorem}
	The mean stall duration time for file $i$ is bounded by 
	
	\begin{equation}
	\mathbb{E}\left[\Gamma^{(i)}\right]\leq\frac{1}{t_{i}}\text{log}\left(\sum_{j=1}^{m}\pi_{ij}\left(1+H_{ij}\right)\right)\label{eq:T_s_main_stall}
	\end{equation}
	for any $t_{i}>0$, $\rho_{j}=\sum_{i}\pi_{ij}\lambda_{i}L_{i}\left(\beta_{j}+\frac{1}{\alpha_{j}}\right)$, $\rho_{j}<1,\,\text{and }$ \\
	$\,\sum_{f=1}^r\pi_{fj}\lambda_{f}\left(\frac{\alpha_{j}e^{-\beta_{j}t_{i}}}{\alpha_{j}-t_{i}}\right)^{L_{f}}-\left(\Lambda_{j}+t_{i}\right)<0,\,\forall j$.\label{meanthm}
\end{theorem}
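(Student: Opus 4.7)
The plan is to assemble the bound directly from the ingredients that are already set up in the preceding subsections; no new probabilistic machinery is needed, only careful bookkeeping. I would start from equation~\eqref{eq:E_T_s_2}, which reduces the problem to bounding $\mathbb{E}[T_i^{(L_i)}]$, and from equation~\eqref{eq:ET_i}, which already provides
\begin{equation}
\mathbb{E}[T_i^{(L_i)}] \le \frac{1}{t_i}\log\!\left(\sum_{j=1}^m \pi_{ij} F_{ij}\right), \qquad F_{ij}=\mathbb{E}\!\left[\max_z e^{t_i p_{i,j,z}}\right].
\end{equation}
So the entire task is to control $F_{ij}$, pull the factor $e^{t_i(d_s+(L_i-1)\tau)}$ outside the logarithm, and cancel it against the deterministic part $d_s+(L_i-1)\tau$ subtracted in~\eqref{eq:E_T_s_2}.

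Next I would upper-bound the maximum over $z\in\{1,\ldots,L_i+1\}$ by the corresponding sum, which is the only nontrivial inequality introduced at this step:
\begin{equation}
F_{ij} \;\le\; \sum_{z=1}^{L_i+1} \mathbb{E}\!\left[e^{t_i p_{i,j,z}}\right].
\end{equation}
Substituting the two cases from Lemma~\ref{lemma_pijz}, the $z=1$ term contributes $e^{t_i(d_s+(L_i-1)\tau)}$, while the terms with $z\in\{2,\ldots,L_i+1\}$ contribute $\sum_{\ell=1}^{L_i} e^{t_i(L_i-\ell)\tau} Z_{i,j}^{(\ell)}(t_i)$ after the reindexing $\ell=z-1$. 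Factoring $e^{t_i(d_s+(L_i-1)\tau)}$ out of the whole expression gives
\begin{equation}
F_{ij} \;\le\; e^{t_i(d_s+(L_i-1)\tau)}\Bigl(1 + \sum_{\ell=1}^{L_i} e^{-t_i(d_s+(\ell-1)\tau)} Z_{i,j}^{(\ell)}(t_i)\Bigr) \;=\; e^{t_i(d_s+(L_i-1)\tau)}(1+H_{ij}),
\end{equation}
where the bracketed sum is exactly the $H_{ij}$ defined before Lemma~\ref{hijlem}, whose closed form in~\eqref{eq:H} is obtained by the finite geometric series in $\widetilde{M}_j(t_i)=M_j(t_i)e^{-t_i\tau}$.

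Plugging this into~\eqref{eq:ET_i} and using the fact that the logarithm of a product is a sum gives
\begin{equation}
\mathbb{E}[T_i^{(L_i)}] \;\le\; d_s + (L_i-1)\tau + \frac{1}{t_i}\log\!\left(\sum_{j=1}^m \pi_{ij}(1+H_{ij})\right),
\end{equation}
and subtracting $d_s+(L_i-1)\tau$ as in~\eqref{eq:E_T_s_2} yields the bound in the theorem. Finally I would verify the admissibility of $t_i$: the expression for $Z_{i,j}^{(\ell)}(t_i)$ in~\eqref{eq:M_D_ij} is finite precisely when $t_i<\alpha_j$ and $t_i - \Lambda_j(B_j(t_i)-1)>0$, which after substituting $B_j$ from~\eqref{eq:servTimeofFileB_j_i} is exactly $\sum_f \pi_{fj}\lambda_f(\alpha_j e^{-\beta_j t_i}/(\alpha_j-t_i))^{L_f} - (\Lambda_j+t_i) < 0$, and $\rho_j<1$ is the standard M/G/1 stability requirement. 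There is no real obstacle here; the only mild subtlety is keeping the reindexing $\ell\leftrightarrow z-1$ consistent so that the telescoping deterministic prefactor $e^{t_i(d_s+(L_i-1)\tau)}$ cancels cleanly, which is what makes the final bound depend only on $H_{ij}$ and not on $d_s$ or $L_i$ outside the logarithm.
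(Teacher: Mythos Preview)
Your proposal is correct and follows essentially the same approach as the paper: bound the maximum over $z$ by the sum, plug in Lemma~\ref{lemma_pijz}, reindex $\ell=z-1$, factor out $e^{t_i(d_s+(L_i-1)\tau)}$ to expose $1+H_{ij}$, and cancel that prefactor against the deterministic subtraction in~\eqref{eq:E_T_s_2}. The paper's proof is identical in structure, differing only in presentation.
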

\begin{proof}
We first find an upper bound on $F_{ij}$   as follows.
\begin{align}
	F_{ij} & =\mathbb{E}\left[\underset{z}{\text{max}}\, e^{t_{i}pijz}\right]\nonumber \\
	& \overset{(d)}{\leq}\sum_{z}\mathbb{E}\left[e^{t_{i}pijz}\right]\nonumber \\
	& \overset{(e)}{=}e^{t_{i}(d_{s}+(L_{i}-1)\tau)}+ \nonumber  \\
	& \sum_{z=2}^{L_{i}+1}\frac{e^{t_{i}\left(L_{i}-z+1\right)\tau}\left(1-\rho_{j}\right)t_{i}}{t_{i}-\Lambda_{j}\left(B_{j}(t_{i})-1\right)}\left(\frac{\alpha_{j}e^{t_{i}\beta_{j}}}{\alpha_{j}-t_{i}}\right)^{z-1}\nonumber \\
	& \overset{(f)}{=}e^{t_{i}(d_{s}+(L_{i}-1)\tau)}+ \nonumber  \\
	& \sum_{\ell=1}^{L_{i}}\frac{e^{t_{i}\left(L_{i}-\ell\right)\tau}\left(1-\rho_{j}\right)t_{i}}{t_{i}-\Lambda_{j}\left(B_{j}(t_{i})-1\right)}\left(\frac{\alpha_{j}e^{t_{i}\beta_{j}}}{\alpha_{j}-t_{i}}\right)^{\ell}\label{eq:F_ij}
\end{align}
where (d) follows by bounding the maximum by the sum, (e) follows from (\ref{eq:momntPjz}), and (f) follows by substituting $\ell=z-1$.

Further, substituting the bounds \eqref{eq:F_ij} and \eqref{eq:ET_i} in \eqref{eq:E_T_s_2}, the  mean stall duration is bounded as follows.  

\begin{eqnarray}
	&&\mathbb{E}\left[\Gamma^{(i)}\right] \nonumber\\
	&\leq& \frac{1}{t_{i}}\text{log}\left(\sum_{j=1}^{m}\pi_{ij}\left(e^{t_{i}(d_{s}+(L_{i}-1)\tau)}\right.\right.\nonumber \\
	&& \left.\left.+\sum_{\ell=1}^{L_{i}}e^{t_{i}\left(L_{i}-\ell\right)\tau}Z_{{i,j}}^{(\ell)}(t_{i})\right)\right)-\left(d_{s}+\left(L_{i}-1\right)\tau\right)\nonumber \\
	&=& \frac{1}{t_{i}}\text{log}\left(\sum_{j=1}^{m}\pi_{ij}\left(e^{t_{i}(d_{s}+(L_{i}-1)\tau)}\right.\right.\nonumber \\
	&& \left.\left.+\sum_{\ell=1}^{L_{i}}e^{t_{i}\left(L_{i}-\ell\right)\tau}Z_{{i,j}}^{(\ell)}(t_{i})\right)\right)-\frac{1}{t_{i}}\text{log}\left(e^{t_{i}\left(d_{s}+\left(L_{i}-1\right)\tau\right)}\right)\nonumber \\
	&=& \frac{1}{t_{i}}\text{log}\left(\sum_{j=1}^{m}\pi_{ij}\left(1+\sum_{\ell=1}^{L_{i}}e^{-t_{i}\left(d_{s}+\left(\ell-1\right)\tau\right)}Z_{{i,j}}^{(\ell)}(t_{i})\right)\right)\label{eq:ET_s_i_ap}
\end{eqnarray}
\end{proof}

Note that Theorem \ref{meanthm} above holds only in the range of $t_i$ when  $t_i-\Lambda_{j}\left(B_{j}(t_i)-1\right)>0$ which reduces to 
$\,\sum_{f=1}^r\pi_{fj}\lambda_{f}\left(\frac{\alpha_{j}e^{-\beta_{j}t_{i}}}{\alpha_{j}-t_{i}}\right)^{L_{f}}-\left(\Lambda_{j}+t_{i}\right)<0,\,\forall i, j$,
and $\alpha_{j}-t_i>0$. Further, the server utilization $\rho_j $ must be less than $1$ for stability of the system.

We note that for the scenario, where the files are downloaded rather than streamed, a metric of interest is the mean download time. This is a special case of our approach when the number of segments of each video is one, or $L_i=1$. Thus, the mean download time of the file follows as a special case of Theorem \ref{meanthm}. This special case was discussed in detail in Section \ref{mean_ps}.

\section{Characterization of Tail Stall Duration}\label{sec:tsd}
The stall duration tail probability of a file $i$ is defined as the probability that the stall duration tail $\Gamma^{(i)}$ is greater than (or equal) to $x$. Since evaluating  $\text{\text{Pr}}\left(\Gamma^{(i)}\geq x\right)$ in closed-form is hard \citep{MG1:12,Joshi:13,MDS_queue,Xiang:2014:Sigmetrics:2014,Yu_TON,CS14}, we derive an upper bound on the stall duration tail probability considering Probabilistic Scheduling as follows. 

\begin{align}
\text{\text{Pr}}\left(\Gamma^{(i)}\geq x\right) & \overset{(a)}{=}\text{\text{Pr}}\left(T_{i}^{(L_{i})}\geq x+d_{s}+\left(L_{i}-1\right)\tau\right)\nonumber \\
 & =\text{\text{Pr}}\left(T_{i}^{(L_{i})}\geq\overline{x}\right)\label{eq:Pr_T_s,i}
\end{align}
where $(a)$ follows from (\ref{eq:E_T_s_2}) and $\overline{x}=x+d_{s}+\left(L_{i}-1\right)\tau$. Then,

\begin{eqnarray}
\text{\text{Pr}}\left(T_{i}^{(L_{i})}\geq\overline{x}\right) & \overset{(b)}{=} & \text{\text{Pr}}\left(\underset{z}{\text{max}\,\,}\text{\ensuremath{\underset{j\in\mathcal{A}_{i}}{\text{max}}p_{ijz}}}\geq\overline{x}\right)\nonumber\\
 & = & \mathbb{E}_{\mathcal{A}_{i},p_{ijz}}\left[\text{\ensuremath{\boldsymbol{1}_{\left(\underset{z}{\text{max}}\,\,\text{\ensuremath{\underset{j\in\mathcal{A}_{i}}{\text{max}}p_{ijz}}}\geq\overline{x}\right)}}}\right]
\end{eqnarray}
\begin{eqnarray}
 & \overset{(c)}{=} & \mathbb{E}_{\mathcal{A}_{i},p_{ijz}}\left[\text{\ensuremath{\underset{j\in\mathcal{A}_{i}}{\text{max}}\,\,\boldsymbol{1}_{\left(\underset{z}{\text{max}}\,p_{ijz}\geq\overline{x}\right)}}}\right]\nonumber\\
 & \overset{(d)}{\leq} & \mathbb{E}_{\mathcal{A}_{i},p_{ijz}}\sum_{j\in\mathcal{A}_{i}}\,\boldsymbol{1}_{\left(\underset{z}{\text{max}}p_{ijz}\geq\overline{x}\right)}\nonumber\\
 & \overset{(e)}{=} & \sum_{j}\pi_{ij}\mathbb{\mathbb{E}}_{pijz}\left[\,\boldsymbol{1}_{\left(\underset{z}{\text{max}}p_{ijz}\geq\overline{x}\right)}\right]\nonumber\\
 & = & \sum_{j}\pi_{ij}\mathbb{P}\left(\underset{z}{\text{max}\,\,}p_{ijz}\geq\overline{x}\right)
\label{eq:Pr_T_i_L_i_x_bar}
\end{eqnarray}
where $(b)$ follows from (\ref{eq:T_i_L_i}), (c) follows as both max over $z$ and max over $\mathcal{A}_{j}$ are discrete indicies (quantities) and do not depend on other so they can be exchanged, (d) follows by replacing the max by $\sum_{\mathcal{A}_{i}}$, (e) follows from probabilistic scheduling. Using Markov Lemma, we get 
\begin{equation}
\mathbb{P}\left(\underset{z}{\text{max}\,\,}p_{ijz}\geq\overline{x}\right)\leq\frac{\mathbb{E}\left[e^{t_{i}\left(\underset{z}{\text{max}\,\,}p_{ijz}\right)}\right]}{e^{t_{i}\overline{x}}}\label{eq:Mark_lem}
\end{equation}
We further simplify to get 
\begin{align}
\mathbb{P}\left(\underset{z}{\text{max}\,\,}p_{ijz}\geq\overline{x}\right) & \leq\frac{\mathbb{E}\left[e^{t_{i}\left(\underset{z}{\text{max}\,\,}p_{ijz}\right)}\right]}{e^{t_{i}\overline{x}}}\nonumber \\
 & =\frac{\mathbb{E}\left[\underset{z}{\text{max}\,\,}e^{t_{i}p_{ijz}}\right]}{e^{t_{i}\overline{x}}}\nonumber \\
 & \overset{\left(f\right)}{=}\frac{F_{ij}}{e^{t_{i}\overline{x}}}\label{eq:max_z_pjz}
\end{align}
where (f) follows from (\ref{eq:F_ij}). Substituting (\ref{eq:max_z_pjz}) in (\ref{eq:Pr_T_i_L_i_x_bar}), we get the stall duration tail  probability as described in the following theorem.

\begin{theorem}
The stall distribution tail probability  for video file $i$ is bounded by 

\begin{equation}
\sum_{j}\frac{\pi_{ij}}{e^{t_{i}x}}\left(1+e^{-t_{i}\left(d_{s}+(L_{i}-1)\tau\right)}\,H_{ij}\right)\
\label{eq:T_stall}
\end{equation}
for any $t_{i}>0$, $\rho_{j}=\sum_{i}\pi_{ij}\lambda_{i}L_{i}\left(\beta_{j}+\frac{1}{\alpha_{j}}\right)$, $\rho_{j}\leq1,$ \\
$\,\sum_{f=1}^r\pi_{fj}\lambda_{f}\left(\frac{\alpha_{j}e^{-\beta_{j}t_{i}}}{\alpha_{j}-t_{i}}\right)^{L_{f}}-\left(\Lambda_{j}+t_{i}\right)<0,\,\forall i,j$, \text{and} $H_{ij}$ is given by (\ref{eq:H}).\label{tailthm}
\end{theorem}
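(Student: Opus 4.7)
The plan is to follow almost exactly the same decomposition used for the mean stall bound in Theorem~\ref{meanthm}, but replace Jensen's inequality with a Markov/Chernoff-type tail bound. The starting point is the identity $\{\Gamma^{(i)}\ge x\}=\{T_{i}^{(L_i)}\ge\bar x\}$ with $\bar x:=x+d_s+(L_i-1)\tau$, coming directly from \eqref{eq:base}. Using the representation $T_{i}^{(L_i)}=\max_{z}\max_{j\in\mathcal{A}_i}p_{ijz}$ from \eqref{eq:T_i_L_i}, I would push the probability inside both maxima by writing the event as an indicator, observe that the inner maximum commutes with the indicator since max and $\mathbf 1\{\cdot\ge\bar x\}$ are both monotone in the argument (both $z$ and $\mathcal{A}_i$ are discrete finite sets), and then use the elementary bound $\max_{j\in\mathcal{A}_i}\mathbf{1}_{(\cdot\ge\bar x)}\le \sum_{j\in\mathcal{A}_i}\mathbf{1}_{(\cdot\ge\bar x)}$. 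Taking expectation over $\mathcal{A}_i$ under probabilistic scheduling replaces $\mathbb{E}\bigl[\mathbf{1}_{\{j\in\mathcal{A}_i\}}\bigr]$ by $\pi_{ij}$, yielding the single-server tail probabilities
\begin{equation}
\Pr\bigl(\Gamma^{(i)}\ge x\bigr)\ \le\ \sum_{j}\pi_{ij}\,\Pr\!\left(\max_z p_{ijz}\ge\bar x\right).
\end{equation}

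Next, I would bound each single-server tail by a Chernoff-Markov argument: for any $t_i>0$,
\begin{equation}
\Pr\!\left(\max_z p_{ijz}\ge\bar x\right)\ \le\ e^{-t_i\bar x}\,\mathbb{E}\!\left[e^{t_i \max_z p_{ijz}}\right]\ =\ e^{-t_i\bar x}\,\mathbb{E}\!\left[\max_z e^{t_i p_{ijz}}\right]\ =\ e^{-t_i\bar x}\,F_{ij},
\end{equation}
where $F_{ij}$ is precisely the quantity already computed in the mean-stall derivation. I would then quote the closed-form upper bound on $F_{ij}$ from \eqref{eq:F_ij}, which uses Lemma~\ref{lemma_pijz} together with the shifted-exponential moment generating function $M_j$ and the file-level mgf $B_j$ from \eqref{eq:servTimeofFileB_j_i}. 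Plugging this into the previous display and factoring out $e^{-t_i(d_s+(L_i-1)\tau)}$ from $e^{-t_i\bar x}=e^{-t_i x}\,e^{-t_i(d_s+(L_i-1)\tau)}$ collapses the sum $\sum_{\ell=1}^{L_i}e^{t_i(L_i-\ell)\tau}Z_{i,j}^{(\ell)}(t_i)$ into $H_{ij}$ via the geometric-series simplification in Lemma~\ref{hijlem}, producing the bound in the theorem statement.

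Finally, I would record the validity conditions on $t_i$. These come from three places: $t_i<\alpha_j$ for the chunk-level mgf $M_j(t_i)$ in \eqref{M_j_t_1} to exist, $t_i-\Lambda_j(B_j(t_i)-1)>0$ for the Pollaczek--Khinchine transform in \eqref{LapOfE_D_ij} (equivalently \eqref{eq:M_D_ij}) to be finite, and $\rho_j<1$ for queue stability. Expanding $B_j(t_i)$ via \eqref{eq:servTimeofFileB_j_i} turns the middle condition into $\sum_{f}\pi_{fj}\lambda_f\bigl(\alpha_j e^{-\beta_j t_i}/(\alpha_j-t_i)\bigr)^{L_f}-(\Lambda_j+t_i)<0$, matching the stated hypothesis. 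The main obstacle here is not analytic—the chain above is essentially algebraic—but conceptual: justifying that the max/indicator interchange together with the union-style bound $\max\le\sum$ preserves validity simultaneously in $z$ and $j$, and that applying Markov's inequality \emph{after} probabilistic scheduling is tighter than applying it before. Beyond that, the derivation reuses $F_{ij}$ from Theorem~\ref{meanthm}, so no new transform computations are needed.
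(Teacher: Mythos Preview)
Your proposal is correct and follows essentially the same route as the paper: both start from $\{\Gamma^{(i)}\ge x\}=\{T_i^{(L_i)}\ge\bar x\}$, push through the double maximum via indicators, replace $\max_{j\in\mathcal{A}_i}$ by $\sum_{j\in\mathcal{A}_i}$, invoke probabilistic scheduling to obtain the $\pi_{ij}$ weights, apply the Chernoff--Markov bound to reach $F_{ij}/e^{t_i\bar x}$, and then reuse the bound on $F_{ij}$ from \eqref{eq:F_ij} together with Lemma~\ref{hijlem} to collapse the sum into $H_{ij}$. The validity conditions you record are exactly those in the paper.
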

\begin{proof}
	Substituting (\ref{eq:max_z_pjz}) in (\ref{eq:Pr_T_i_L_i_x_bar}), we get 
	
	\begin{eqnarray}
	&&\text{\text{Pr}}\left(T_{i}^{(L_{i})}\geq\overline{x}\right) \nonumber\\
	& \leq & \sum_{j}\pi_{ij}\mathbb{P}\left(\underset{z}{\text{max}\,\,}p_{ijz}\geq\overline{x}\right)\nonumber\\
	& \leq & \sum_{j}\pi_{ij}\frac{F_{ij}}{e^{t_{i}\overline{x}}}\nonumber\\
	& \overset{(g)}{\leq} & \sum_{j}\frac{\pi_{ij}}{e^{t_{i}\overline{x}}}\left(e^{t_{i}(d_{s}+(L_{i}-1)\tau)}+H_{ij}\right)\nonumber\\
	& = & \sum_{j}\frac{\pi_{ij}}{e^{t_{i}\left(x+d_{s}+(L_{i}-1)\tau\right)}}\left(e^{t_{i}(d_{s}+(L_{i}-1)\tau)}+H_{ij}\right)\nonumber\\
	& = & \sum_{j}\frac{\pi_{ij}}{e^{t_{i}x}}\left(1+e^{-t_{i}\left(d_{s}+(L_{i}-1)\tau\right)}\,H_{ij}\right)
	\label{eq:Pr_T_i_L_i_x_bar_final}
	\end{eqnarray}
	where (g) follows from (\ref{eq:F_ij}) and $H_{ij}$ is given by (\ref{eq:H}). 
\end{proof}

We note that for the scenario, where the files are downloaded rather than streamed, a metric of interest is the latency tail probability which is the probability that the file download latency is greater than $x$.  This is a special case of our approach when the number of segments of each video is one, or $L_i=1$. Thus, the latency tail probability of the file follows as a special case of Theorem \ref{tailthm}. In this special case, the result reduces to that in \citep{Jingxian}.
\section{Simulations}\label{sec:vsr_sims}
Let $\boldsymbol{\pi} = (\pi_{ij} \forall i=1, \cdots, r \text{ and } j=1, \cdots, m)$, $\boldsymbol{\mathcal{S}}=\left(\mathcal{S}_{1},\mathcal{S}_{2},\ldots,\mathcal{S}_{r}\right)$, and $\boldsymbol{t}=\left(\widetilde{t}_{1},\widetilde{t}_{2},\ldots,\widetilde{t}_{r}; \overline{t}_{1},\overline{t}_{2},\ldots,\overline{t}_{r}\right)$. Note that the values of $t_i$'s used for mean stall duration and the stall duration tail probability can be different and the parameters $\widetilde{t}$ and $\overline{t}$ indicate these parameters for the two cases, respectively. We wish to minimize the two proposed QoE metrics over the choice of scheduling and access decisions. Since this is a multi-objective optimization, the objective can be modeled as a convex combination of the two QoE metrics.

Let $\overline{\lambda}=\sum_{i}\lambda_{i}$ be the total arrival rate. Then, $\lambda_{i}/\overline{\lambda}$ is the ratio of video $i$ requests. The first objective is the minimization of the mean stall duration, averaged over all the file requests, and is given as $\sum_{i}\frac{\lambda_{i}}{\overline{\lambda}}\,\mathbb{E}\left[\Gamma^{\left(i\right)}\right]$. The second objective is the minimization of stall duration tail probability, averaged over all the file requests, and is given as $\sum_{i}\frac{\lambda_{i}}{\overline{\lambda}}\,{\text{Pr}}\left(\Gamma^{(i)}\geq x\right)$. Using the expressions for the mean stall duration and the stall duration tail probability in Chapters \ref{sec:msd} and \ref{sec:tsd}, respectively, optimization of a convex combination of the two QoE metrics can be formulated as follows. 

\begin{eqnarray}
\text{min\,\,\,\,\,}&&\sum_{i}\frac{\lambda_{i}}{\overline{\lambda}}\left[\theta\,\frac{1}{\widetilde{t}_{i}}\text{log}\left(\sum_{j=1}^{m}\pi_{ij}\left(1+\widetilde{H}_{ij}\right)\right)\right.\nonumber \\
&&+\left.\left(1-\theta\right)\sum_{j}\frac{\pi_{ij}}{e^{\overline{t}_{i}x}}\left(1+e^{-\overline{t}_{i}\left(d_{s}+(L_{i}-1)\tau\right)}\,\overline{H}_{ij}\right)\right]\label{eq:joint_otp_prob}\\
\mbox{s.t.}\,\,\,\,\, &&\widetilde{H}_{ij}=\frac{e^{-\widetilde{t}_{i}\left(d_{s}-\tau\right)}\left(1-\rho_{j}\right)\widetilde{t}_{i}}{\widetilde{t}_{i}-\Lambda_{j}\left(B_{j}(\widetilde{t}_{i})-1\right)}\widetilde{Q}_{ij}\,\, ,
\label{eq:H_ij}\\
&&\overline{H}_{ij}=\frac{e^{-\overline{t}_{i}\left(d_{s}-\tau\right)}\left(1-\rho_{j}\right)\overline{t}_{i}}{\overline{t}_{i}-\Lambda_{j}\left(B_{j}(\overline{t}_{i})-1\right)}\overline{Q}_{ij}\,\, ,
\label{eq:H_ij2}\\
&&\widetilde{Q}_{ij} =\left[\frac{\widetilde{M}_{j}(\widetilde{t}_{i})\left(1-\left(\widetilde{M}_{j}(\widetilde{t}_{i})\right)^{L_{i}}\right)}{1-\widetilde{M}_{j}(\widetilde{t}_{i})}\right],\,\,
\label{eq:Q_ij}\\
&& \overline{Q}_{ij} =\left[\frac{\widetilde{M}_{j}(\overline{t}_{i})\left(1-\left(\widetilde{M}_{j}(\overline{t}_{i})\right)^{L_{i}}\right)}{1-\widetilde{M}_{j}(\overline{t}_{i})}\right],\,\,
\label{eq:Q_ij2}\\
&  & \widetilde{M}_{j}({t})=\frac{\alpha_{j}e^{\left(\beta_{j}-\tau\right){t}}}{\alpha_{j}-{t}},\,\, \label{eq:M_telda_opt2}
\end{eqnarray}

\begin{eqnarray}
&  &  {B}_{j}(t)=\sum_{f=1}^r\frac{\lambda_{f}\pi_{fj}}{\Lambda_{j}}\left(\frac{\alpha_{j}e^{\beta_{j}{t}}}{\alpha_{j}-{t}}\right)^{L_f}\, , \label{eq:Bj_const}\\
&  & \widetilde{M}_{j}({t})=\frac{\alpha_{j}e^{\left(\beta_{j}-\tau\right){t}}}{\alpha_{j}-{t}},\,\, \label{eq:M_telda_opt2}\\ 
&  & {B}_{j}(t)=\sum_{f=1}^r\frac{\lambda_{f}\pi_{fj}}{\Lambda_{j}}\left(\frac{\alpha_{j}e^{\beta_{j}{t}}}{\alpha_{j}-{t}}\right)^{L_f}\, , \label{eq:Bj_const}\\ 
&  & \rho_{j}=\sum_{f=1}^r\pi_{fj}\lambda_{f}L_{f}\left(\beta_{j}+\frac{1}{\alpha_{j}}\right)<1\,\,\,\,\,\,\forall j\label{eq:rho_j}\\
&  & \varLambda_{j}=\sum_{f=1}^r\lambda_{f}\pi_{f,j}\,\,\,\,\,\forall j\label{eq:Lambda_j}\\
&  & \sum_{j=1}^{m}\pi_{i,j}=k_{i}\,\,\,\,\label{eq:sum_ij}\\
&  & \mbox{ \ensuremath{\pi_{i,j}}=0}\,\,\,\mbox{if \ensuremath{j\notin S_{i}}}\,,\ensuremath{\pi_{i,j}}\in\left[0,1\right]\label{eq:pij}\\
&  & \left|\mathcal{S}_{i}\right|=n_{i},\,\,\forall i\label{eq:S_i_and_ni}\\
&  & 0<\widetilde{t}_{i}<\alpha_j,\,\forall j \label{eq:t_i_alpha_j}\\
&  & 0<\overline{t}_{i}<\alpha_j,\,\forall j \label{eq:t_i_alpha_j2}\\
&  & \alpha_{j}\left(e^{(\beta_j-\tau)\widetilde{t}_i}-1\right)+\widetilde{t}_i<0\,,\forall j \label{M_telda_less_1_2} \\
&  & \alpha_{j}\left(e^{(\beta_j-\tau)\overline{t}_i}-1\right)+\overline{t}_i<0\,,\forall j \label{M_telda_less_1} \\
&  & \sum_{f=1}^r\pi_{fj}\lambda_{f}\left(\frac{\alpha_{j}e^{\beta_{j}\widetilde{t}_{i}}}{\alpha_{j}-\widetilde{t}_{i}}\right)^{L_{f}}-\left(\Lambda_{j}+\widetilde{t}_{i}\right)<0,\,\forall i, j\label{eq:don_pos_cond}\\
&  & \sum_{f=1}^r\pi_{fj}\lambda_{f}\left(\frac{\alpha_{j}e^{\beta_{j}\overline{t}_{i}}}{\alpha_{j}-\overline{t}_{i}}\right)^{L_{f}}-\left(\Lambda_{j}+\overline{t}_{i}\right)<0,\,\forall i, j\label{eq:don_pos_cond2}\\
&  & \mbox{var.} \ \ \ \  \    \boldsymbol{\pi},\boldsymbol{t}, \mathcal{\boldsymbol{S}}
\label{eq:vars}
\end{eqnarray}

Here, $\theta\in [0,1]$ is a trade-off factor that determines the relative significance of mean and tail probability of the stall durations in the minimization problem. Varying $\theta=0$ to $\theta=1$, the solution for
(\ref{eq:joint_otp_prob}) spans the solutions that  minimize the mean stall duration to ones that minimize the stall duration tail probability. Note that constraint (\ref{eq:rho_j}) gives the load intensity of server $j$. Constraint (\ref{eq:Lambda_j}) gives the aggregate arrival rate $\Lambda_j$ for each node for the  given probabilistic scheduling probabilities $\pi_{ij}$ and arrival rates $\lambda_i$. Constraints \eqref{eq:pij}-\eqref{eq:S_i_and_ni} guarantees that the scheduling probabilities are feasible. Constraints (\ref{eq:t_i_alpha_j})-(\ref{M_telda_less_1})  ensure that $\widetilde{M}_{j}({t})$ exist for each $\widetilde{t}_i$ and $\overline{t}_i$.  Finally, Constraints (\ref{eq:don_pos_cond})-(\ref{eq:don_pos_cond2}) ensure that the moment generating function given in (\ref{eq:M_D_ij}) exists. We note that the the optimization over $\boldsymbol{\pi}$ helps decrease the objective function and gives significant flexibility over choosing the lowest-queue servers for accessing the files. The placement of the video files $\mathcal{\boldsymbol{S}}$ helps separate the highly accessed files on different servers thus reducing the objective. Finally, the optimization over the auxiliary variables  $\boldsymbol{t}$  gives a tighter bound on the objective function. We note that the QoE for file $i$ is weighed by the arrival rate $\lambda_i$ in the formulation. However, general weights can be easily incorporated for weighted fairness or differentiated services. 

Note that the proposed  optimization problem is a mixed integer non-convex  optimization
as we have the placement over $n$ servers and the constraints \eqref{eq:don_pos_cond} and \eqref{eq:don_pos_cond2} are non-convex in $(\boldsymbol{\pi},\boldsymbol{t})$. The problem can be solved using an optimization algorithm described in \citep{al2018video}, which in part uses NOVA  algorithm proposed in \citep{scutNOVA}. 

\if 0
\begin{table}[b]
	\vspace{-.1in}
	{\caption{Storage Node Parameters Used in our Simulation (Shift $\beta=10msec$
			and rate $\alpha$ in 1/s)}
	}

	\resizebox{.9\textwidth}{!}{\begin{tabular}{|c|c|c|c|c|c|c|}
			\multicolumn{1}{c}{} & \multicolumn{1}{c}{\textbf{Node 1}} & \multicolumn{1}{c}{\textbf{Node 2}} & \multicolumn{1}{c}{\textbf{Node 3}} & \multicolumn{1}{c}{\textbf{Node 4}} & \multicolumn{1}{c}{\textbf{Node 5}} & \multicolumn{1}{c}{\textbf{Node 6}}\tabularnewline
			\hline 
			{$\alpha_{j}$} & {$18.2298$} & {$24.0552$} & {$11.8750$} & {$17.0526$} & {$26.1912$} & {$23.9059$}\tabularnewline
			\hline 
	\end{tabular}}
	
	\resizebox{.9\textwidth}{!}{\begin{tabular}{|c|c|c|c|c|c|c|}
			\multicolumn{1}{c}{} & \multicolumn{1}{c}{\textbf{Node 7}} & \multicolumn{1}{c}{\textbf{Node 8}} & \multicolumn{1}{c}{\textbf{Node 9}} & \multicolumn{1}{c}{\textbf{Node 10}} & \multicolumn{1}{c}{\textbf{Node 11}} & \multicolumn{1}{c}{\textbf{Node 12}}\tabularnewline
			\hline 
			{$\alpha_{j}$} & {$27.006$} & {$21.3812$} & {$9.9106$} & {$24.9589$} & {$26.5288$} & {$21.8067$}\tabularnewline
			\hline 
		\end{tabular}\label{tab:Storage-Nodes-Parameters}
	}
\end{table}

\fi 

 We simulate our algorithm in a distributed storage system of $m=12$ distributed nodes, where each video file uses an $(10,4)$ erasure code. The parameters for storage servers are chosen as in Table \ref{tab:Storage-Nodes-Parameters1}, which were chosen in \citep{Yu_TON} in the experiments using Tahoe testbed.  Further,
$(10,4)$ erasure code is used in HDFS-RAID in Facebook \citep{HDFS_ec} and Microsoft \citep{Asure14}. Unless otherwise explicitly stated, we consider $r=1000$ files, whose sizes are generated based on Pareto distribution \citep{arnold2015pareto} with shape factor of $2$ and scale of $300$, respectively.  We note that the Pareto distribution is considered as it has been widely used in existing
literature  \citep{Vaphase} to
model video files, and file-size distribution over networks.  We also assume that the chunk service time  follows a shifted-exponential distribution with rate $\alpha_{j}$ and shift $\beta_{j}$, whose values are shown in Table I, which are generated at random and kept fixed for the experiments. Unless explicitly stated, the arrival rate for the first $500$ files is $0.002s^{-1}$ while for the next $500$ files is set to be $0.003s^{-1}$. Chunk size $\tau$ is set to be equal to $4$ s. When generating video files, the sizes of the video file sizes are rounded up to the multiple of $4$ sec. We note that a high load scenario is considered for the numerical results. In order to initialize our algorithm, we use a random placement of files on all the servers. Further, we set $\pi_{ij}=k/n$ on the placed servers with $t_{i}=0.01$ $\forall i$ and $j \in \mathcal{S}_{i}$. However, these choices of $\pi_{ij}$ and $t_{i}$ may not be feasible. Thus, we modify the initialization of  $\boldsymbol{\pi}$  to be closest norm feasible solution  given above values of $\boldsymbol{\mathcal{S}}$ and $\boldsymbol{t}$. We compare the proposed approach with some baselines: 
 \begin{enumerate}[leftmargin=0cm,itemindent=.5cm,labelwidth=\itemindent,labelsep=0cm,align=left]
	\item {\em Random Placement, Optimized Access (RP-OA): }  In this strategy, the placement is chosen at random where any $n$  out of $m$ servers are chosen for each file, where each choice is equally likely. Given the random placement, the variables $\boldsymbol{t}$ and $\boldsymbol{\pi}$ are optimized using the proposed algorithm, where $\boldsymbol{\mathcal{S}}$-optimization is not performed.
	
	\item {\em Optimized Placement, Projected Equal Access (OP-PEA): }
	The strategy utilizes $\boldsymbol{\pi}$, $\boldsymbol{t}$ and $\boldsymbol{\mathcal{S}}$ as mentioned in the setup. Then, alternating optimization over placement and $\boldsymbol{t}$ are performed using the proposed algorithm.

	\item {\em Random Placement, Projected Equal Access (RP-PEA): } In this strategy, the placement is chosen at random where any $n$  out of $m$ servers are chosen for each file, where each choice is equally likely. Further, we set $\pi_{ij}=k/n$ on the placed servers with $t_{i}=0.01$ $\forall i$ and $j \in \mathcal{S}_{i}$. We then modify the initialization of $\boldsymbol{\pi}$ to be closest norm feasible solution  given above values of $\boldsymbol{\mathcal{S}}$ and $\boldsymbol{t}$. Finally, an optimization over $\boldsymbol{t}$ is performed with respect to the objective using the proposed algorithm.
	
	\item  OP-PSP ({\em  Optimized Placement-Projected Service-Rate Proportional Allocation}) Policy: The joint request scheduler  chooses the access probabilities to be proportional to the service rates of the storage nodes, i.e., $\pi_{ij}=k_{i}\frac{\mu_{j}}{\sum_{j}\mu_{j}}$.  This policy assigns servers proportional to their service rates. These access probabilities are projected toward feasible region  for a uniformly random placed files to ensure stability of the storage system.  With these fixed access probabilities,  the weighted  mean stall duration and stall duration tail probability are optimized over the $\mathbf{t}$, and placement $\boldsymbol{\mathcal{S}}$.
	
	\item  RP-PSP   ({\em  Random Placement-PSP}) Policy: As compared to the OP-PSP Policy, the chunks are placed uniformly at random. The weighted  mean stall duration and stall duration tail probability are optimized over  the choice of auxiliary variables ${\mathbf t}$.
\end{enumerate}

	\begin{figure}
	\centering
	\includegraphics[trim=0.0in 0in 4.3in 0in, clip,width=0.7\textwidth]{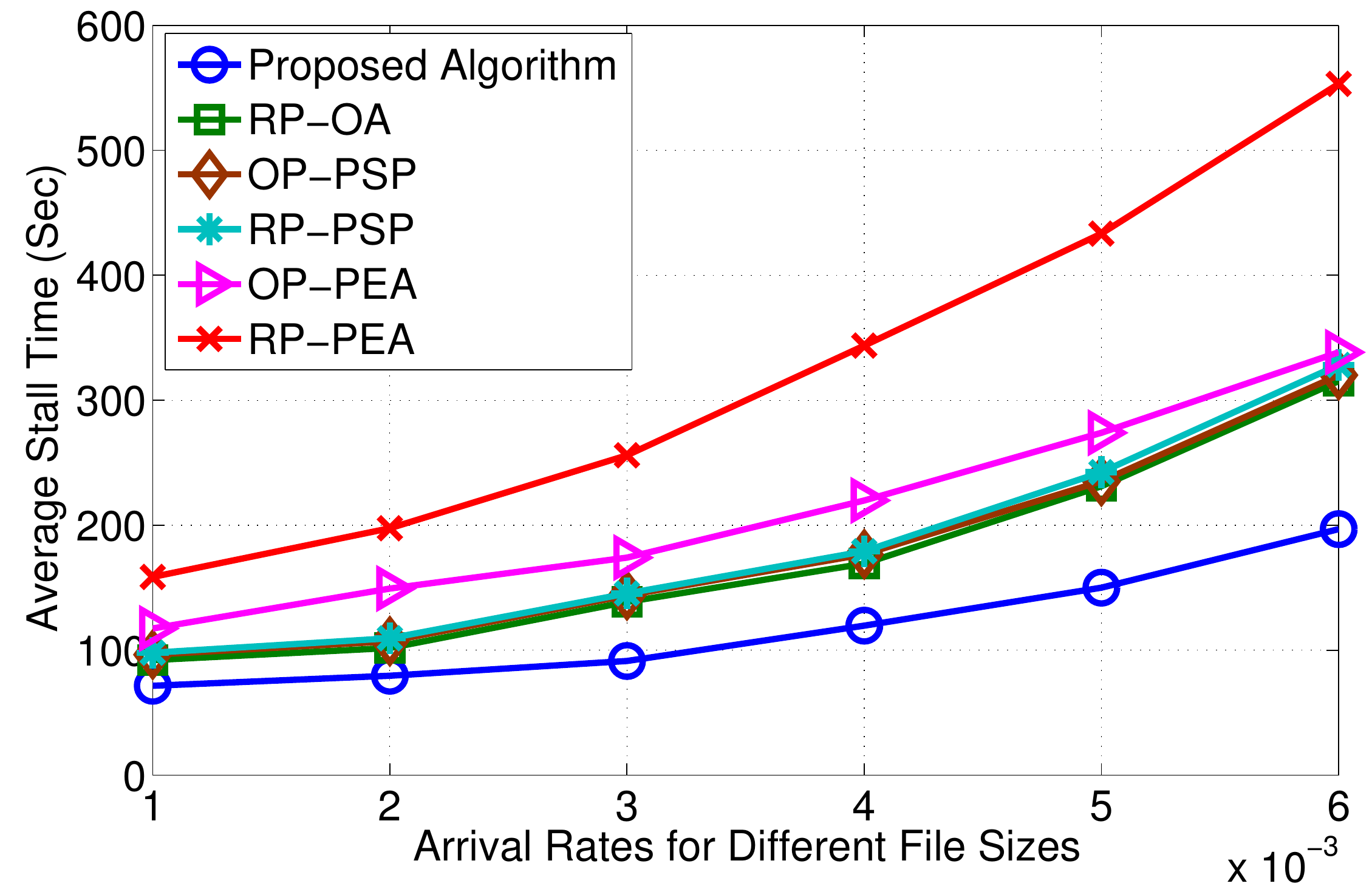}
	\caption{Mean stall duration for different video arrival rates with different  video lengths.  }
	\label{fig:meanArrRateDiffSize}
\end{figure}

Figure \ref{fig:meanArrRateDiffSize} shows the effect of different video arrival rates on the mean stall duration for different-size video length. The different size uses the Pareto-distributed lengths described above.  We compare our proposed algorithm with the five baseline  policies and we see that the proposed algorithm outperforms all baseline strategies for the QoE metric of mean stall duration. Thus, both access and placement of files are both important for the reduction of mean stall duration. Further, we see that the mean stall duration increases with arrival rates, as expected. Since the mean stall duration is more significant at high arrival rates, we notice a significant improvement in mean stall duration by about 60\% ( approximately 700s to about 250s) at the highest arrival rate in Figure \ref{fig:meanArrRateDiffSize} as compared to the random placement and projected equal access policy.

	\begin{figure}
	\centering
	\includegraphics[trim=0in 0in 4.1in 0in, clip,width=0.7\textwidth]{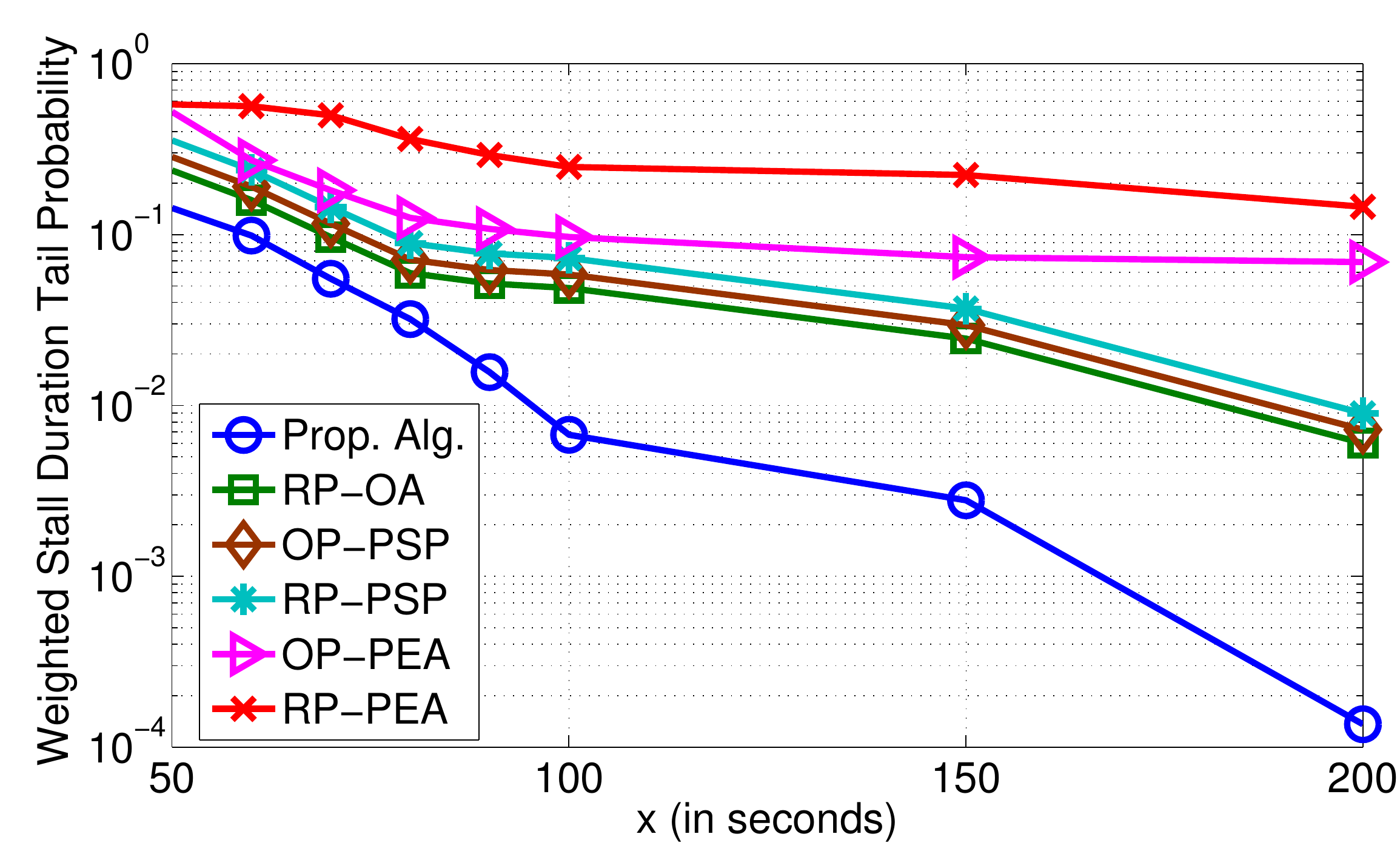}
	\caption{Stall duration tail probability for different values of $x$ (in seconds).  }
	\label{fig:tailProbDiffX}
\end{figure}

Figure \ref{fig:tailProbDiffX} shows the decay of weighted stall duration tail probability with respect to $x$ (in seconds) for the proposed and the baseline strategies. In order to signify (magnify) the small differences, we plot y-axis in logarithmic scale. We observe that the proposed algorithm gives orders improvement in the stall duration tail probabilities as compared to the baseline strategies. 

		\begin{figure}
	\centering
	\includegraphics[trim=0in 0in 0.0in 0in, clip,width=0.7\textwidth]{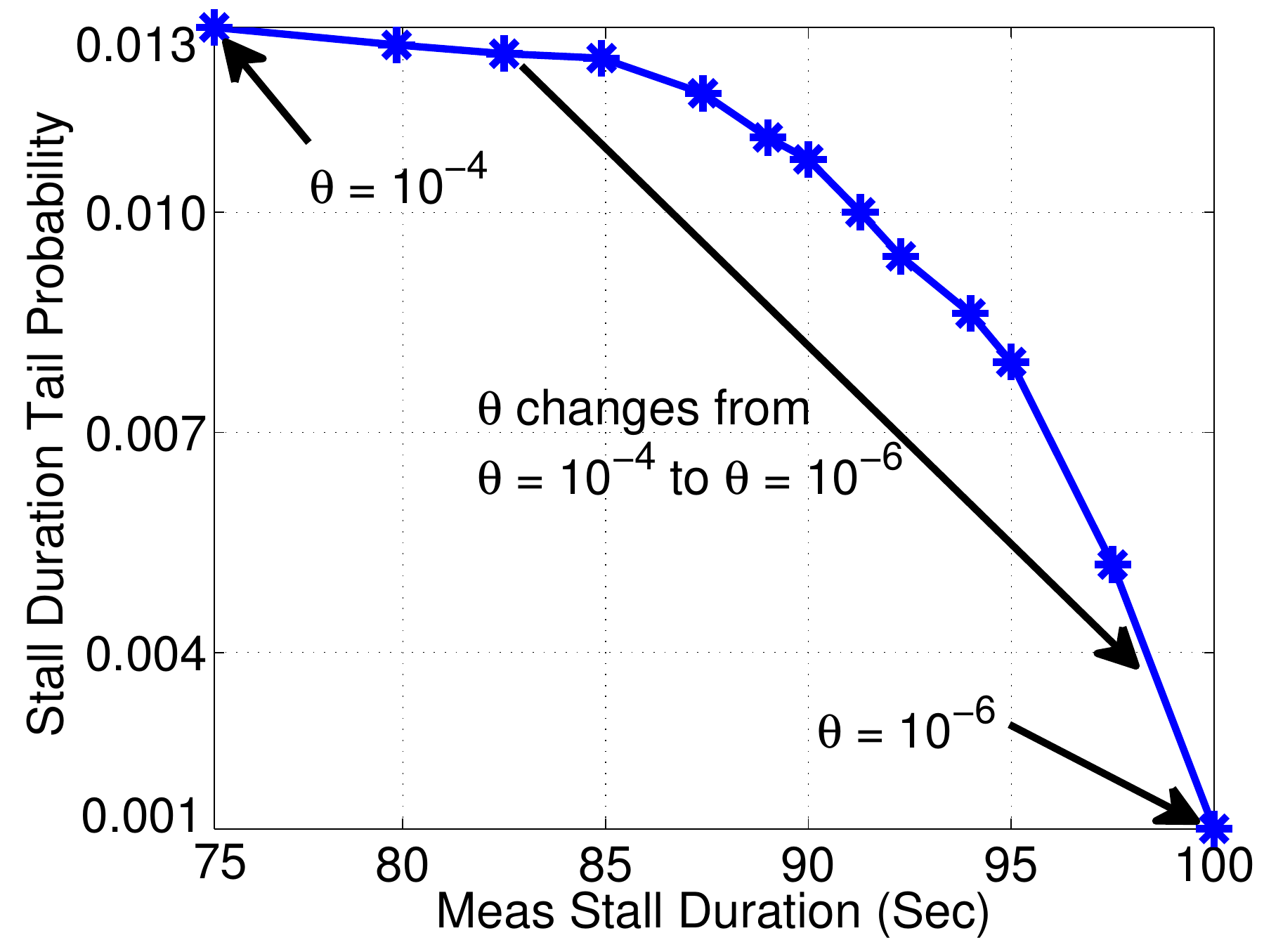}
	\caption{Tradeoff between mean stall duration and stall duration tail probability obtained by varying $\theta$.  }
	\label{fig:tradeoff}
\end{figure}

If the mean stall duration decreases, intuitively the stall duration tail probability also reduces. Thus, a question arises whether the optimal point for decreasing the mean stall duration and the stall duration tail probability is the same. We answer the question in negative since for $r=1000$ of equal sizes of length 300 sec, we find that at the values of  ($\boldsymbol{\pi}$, $\boldsymbol{\mathcal{S}}$) that optimize the mean stall duration, the stall duration tail probability is 12 times higher as compared to the optimal stall duration tail probability. Similarly, the optimal mean stall duration is 30\% lower as compared to the mean stall duration at the value of ($\boldsymbol{\pi}$, $\boldsymbol{\mathcal{S}}$) that optimizes the stall duration tail probability. Thus, an efficient tradeoff point between the QoE metrics can be chosen based on the point on the curve that is appropriate for the clients.

\section{Notes and Open Problems}\label{sec:vsr_notes}
Servicing Video on Demand and Live TV Content from cloud servers have been studied widely \citep{lee2013vod,huang2011cloudstream,he2014cost,chang2016novel,oza2016implementation}.  The reliability of content over the cloud servers have  been first considered for video streaming applications in \citep{al2018video}. In this work, the mean and tail stall duration are characterized. Based on this analysis of the metrics, joint placement of content and resource optimization over the cloud servers have been considered. Even though we consider single steam  from each server to the edge router, we can extend the approach to multiple parallel streams.  Multiple streams can help obtain parallel video files thus helping one file not wait behind the other. This extension has also been studied in \citep{al2018video}. The results have been further extended to a Virtualized Content Distribution Network (vCDN)
architecture in \citep{al2019fasttrack}, which consists of a remote datacenter that stores
complete original video data and multiple CDN sites (i.e.,
local cache servers) that only have part of those data and are
equipped with solid state drives (SSDs) for higher throughput.
A user request for video content not satisfied in the local
cache is directed to, and processed by, the remote datacenter. If the required video content/chunk is
not stored in cache servers, multiple parallel connections are
established between a cache server and the edge router, as
well as between the cache servers and the origin server, to
support multiple video streams simultaneously. This work uses cache at SDN servers, while not at the edge routers. Using caching at edge routers based on an adaptation of Least-Recently-Used (LRU) strategy, the stall duration metrics have been characterized in \citep{AbubakrInfW20181,al2019multi}.  These work assume a single quality video. An approach to select one of the different quality levels to have an efficient tradeoff between video quality and the stall duration metrics has been considered in  \citep{8724450,Abubakr_Stream}. 

These works lay the foundations for many important future problems, including
\begin{enumerate}
	\item {\bf Adaptive Streaming}: Adaptive streaming algorithms have  been considered for video streaming \citep{chen2012amvsc,wang2013ames,elgabli2018lbp,elgabli2019fastscan,elgabli2019smartstreamer,elgabli2018optimized,elgabli2018giantclient,elgabli2019groupcast}. However, the above works consider entire video streaming at the same quality. Considering the aspects of adavptive video streaming to compute the stall duration metrics and the video quality metrics jointly is an important problem. 
	\item {\bf Efficient Caching Algorithms}: In the vCDN environment, we consider optimized caching at the CDN servers and LRU based caching at the edge router. However, with different file sizes, different caching algorithms have been studied \citep{berger2017adaptsize,halalai2017agar,friedlander2019generalization}. Considering efficient caching strategies for video streaming is an important problem for the future.
\end{enumerate}

\chapter{Lessons from prototype implementation}\label{chpt:impl}

Various models and theories proposed in Chapters~\ref{chp:MDS} to \ref{chp:relauch} provide mathematical crystallization of erasure coded storage systems, by quantifying different performance metrics, revealing important control knobs and tradeoffs, illuminating opportunities for novel optimization algorithms, and thus enabling us to rethink the design of erasure coded storage systems in practice. On the other hand, implementing these algorithms and designs in practical storage systems allows us to validate/falsify different modeling assumptions and provides crucial feedback to bridge the divide between theory and practice. In this chapter, we introduce a practical implementation to demonstrate the path for realizing erasure-coded storage systems. Then, we provide numerical examples to illuminate key design tradeoffs in this system. Finally, another application of the models to distributed caching and content distribution is described, and numerical examples discussed. We highlight key messages learned from these experiments as remarks throughout this chapter.

\section{Exemplary implementation of erasure-coded storage}
\label{sec:impl_tahoe}

Distributed systems such as Hadoop, AT$\&$T Cloud Storage, Google File System and Windows Azure have evolved to support different types of erasure codes, in order to achieve the benefits of improved storage efficiency while providing the same reliability as replication-based schemes \citep{balaji2018erasure}. In particular, Reed-Solomon (RS) codes have been implemented in the Azure production cluster and resulted in the savings of millions of dollars for Microsoft \citep{Asure14,B180}. Later, Locally Recoverable (LR) codes were implemented in HDFS-RAID  carried out in Amazon EC2 and a cluster at Facebook in \citep{B2}. Various erasure code plug-ins and libraries have been developed in storage systems like Ceph \citep{Ceph,Yu-TON16}, Tahoe \citep{Yu_TON},  Quantcast (QFS) \citep{ovsiannikov2013quantcast}, and Hadoop (HDFS) \citep{B182}. In a separate line of work, efficient repair schemes and traffic engineering in erasure-coded storage systems are discussed in \citep{plank2009performance,dimakis2010network,zhou2020fast,code_repair3}.

In this chapter, we report an implementation of erasure coded storage in {\em Tahoe} \citep{Tahoe}, which is an open-source, distributed filesystem based on the {\em zfec} erasure coding library. It provides three special instances of a generic {\em node}: (a)  {\em Tahoe Introducer}: it keeps track of a collection of storage servers and clients and introduces them to each other.   (b) {\em Tahoe Storage Server}: it exposes attached storage to external clients and stores erasure-coded shares.  (c) {\em Tahoe Client}: it processes upload/download requests and connects to storage servers through a Web-based REST API and the Tahoe-LAFS (Least-Authority File System) storage protocol over SSL.

\begin{figure}[!thbp]
\begin{center}
\scalebox{0.3}{\includegraphics[draft=false]{./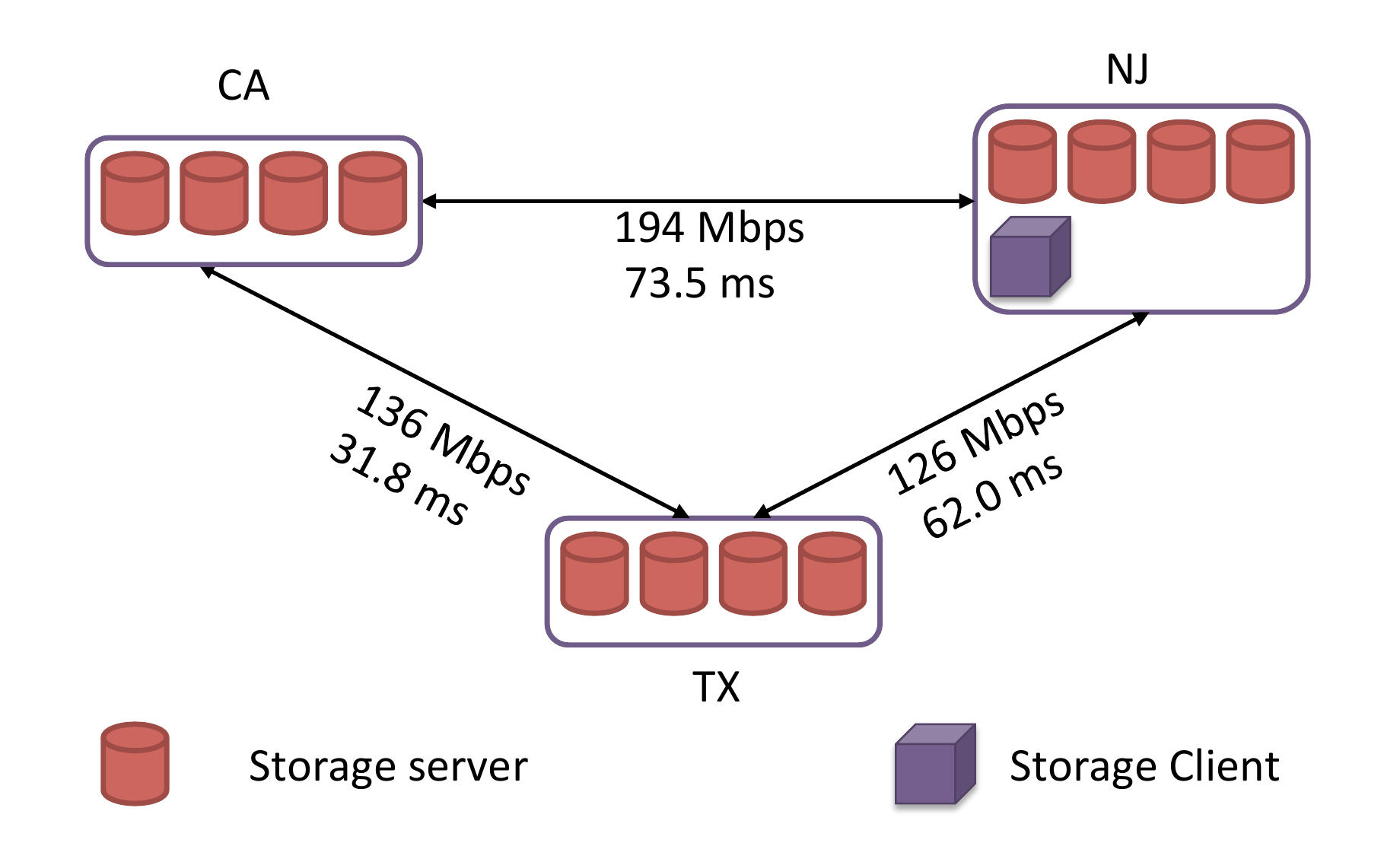}}
\caption{Our Tahoe testbed with average ping (RTT) and bandwidth measurements among three data centers in New Jersey, Texas, and California}
\label{fig:sl-testbed}
\end{center}
\end{figure}

While Tahoe uses a default $(10,3)$ erasure code, it supports arbitrary erasure code specification statically through a configuration file. In Tahoe, each file is
 encrypted, and is then broken into a set of segments, where each segment consists of $k$ blocks.  Each segment is then erasure-coded to produce $n$ blocks (using a $(n,k)$ encoding scheme) and then distributed to (ideally) $n$ distinct storage servers. The set of blocks on each storage server constitute a chunk. Thus, the file equivalently consists of $k$ chunks which are encoded into $n$ chunks and each chunk consists of multiple blocks\footnote{If there are not enough servers, Tahoe will store multiple chunks on one sever. Also, the term ``chunk'' we used in this chapter is equivalent to the term ``share'' in Tahoe terminology. The number of blocks in each chunk is equivalent to the number of segments in each file.}. For chunk placement, the Tahoe client randomly selects a set of available storage servers with enough storage space to store $n$ chunks. For server selection during file retrievals, the client first asks all known servers for the storage chunks they might have. Once it knows where to find the needed k chunks (from the k servers that responds the fastest), it downloads at least the first segment from those servers. This means that it tends to download chunks from the ``fastest'' servers purely based on round-trip times (RTT). %

To implement different scheduling algorithms in Tahoe, we would need to modify the upload and download modules in the Tahoe storage server and client to allow for customized and explicit server selection, which is specified in the configuration file that is read by the client when it starts.  In addition, Tahoe performance suffers from its single-threaded design on the client side for which we had to use multiple clients with separate ports to improve parallelism and bandwidth usage during experiments

We deployed 12 Tahoe storage servers as virtual machines in an OpenStack-based data center environment distributed in New Jersey (NJ), Texas (TX), and California (CA).  Each site has four storage servers.  One additional storage client was deployed in the NJ data center to issue storage requests.  The deployment is shown in Figure~\ref{fig:sl-testbed} with average ping (round-trip time) and bandwidth measurements listed among the three data centers.  We note that while the distance between CA and NJ is greater than that of TX and NJ, the maximum bandwidth is higher in the former case. The RTT time measured by ping does not necessarily correlate to the bandwidth number. Further, the current implementation of Tahoe does not use up the maximum available bandwidth, even with our multi-port revision.

\begin{figure}[!thbp]
\begin{center}
\scalebox{0.3}{\includegraphics[draft=false, trim = 0in 0in 0in 0in, clip]{./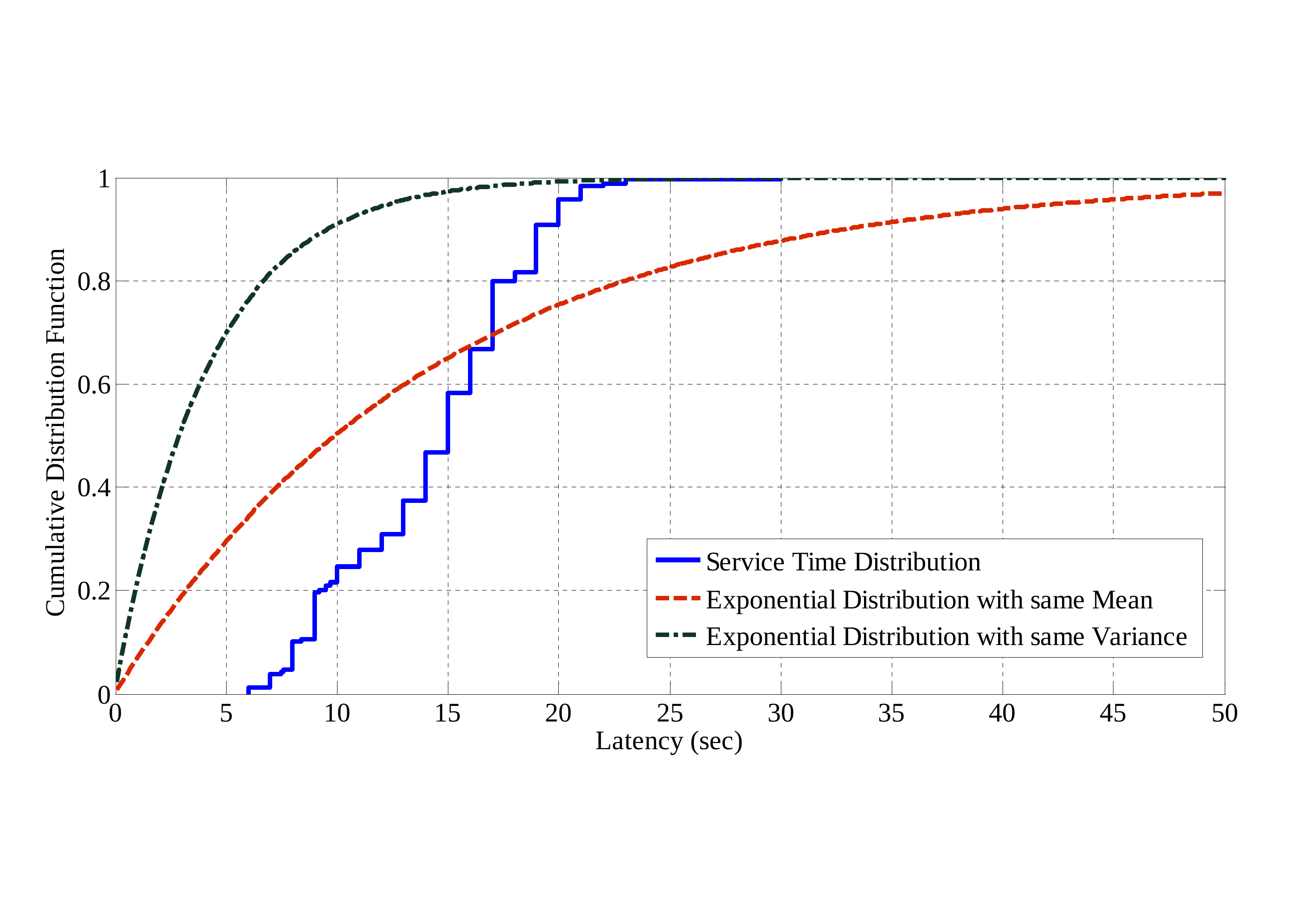}}
\caption{Comparison of actual service time distribution and an exponential distribution with the same mean. It verifies that actual service time does not follow an exponential distribution, falsifying the assumption in previous work \citep{MG1:12}. }
\label{fig:service_dist}
\end{center}
\end{figure}

\vspace{0.07in}
{\noindent \bf Remark 1}: Actual service time can be approximated well by a shifted exponential distribution. Using this testbed, we can run experiments to understand actual service time distribution on our testbed. We upload a 50MB file using a $(7,4)$ erasure code and measure the chunk service time. Figure \ref{fig:service_dist} depicts the Cumulative Distribution Function (CDF) of the chunk service time. Using the measured results, we get the mean service time of $13.9$ seconds with a standard deviation of 4.3 seconds, second moment of 211.8 $s^2$ and the third moment of 3476.8 $s^3$. We compare the distribution to the exponential distribution( with the same mean and the same variance, respectively) and note that the two do not match. It verifies that actual service time does not follow an exponential distribution, and therefore, the assumption of exponential service time in \citep{MG1:12} is falsified by empirical data. The observation is also evident because a distribution never has positive probability for very small service time. Further, the mean and the standard deviation are very different from each other and cannot be matched by any exponential distribution.

\section{Illuminating key design tradeoffs}

We leverage the implementation prototype in Section \ref{sec:impl_tahoe} to illustrate a number of crucial design tradeoffs in erasure coded storage systems. While the exact tradeoff curves could vary significantly in different systems/environments, the numerical examples presented in this section nevertheless provide a visualization of the various design space available in erasure-coded storage systems.

\begin{figure}[!thbp]
\begin{center}
\includegraphics[scale=.43]{./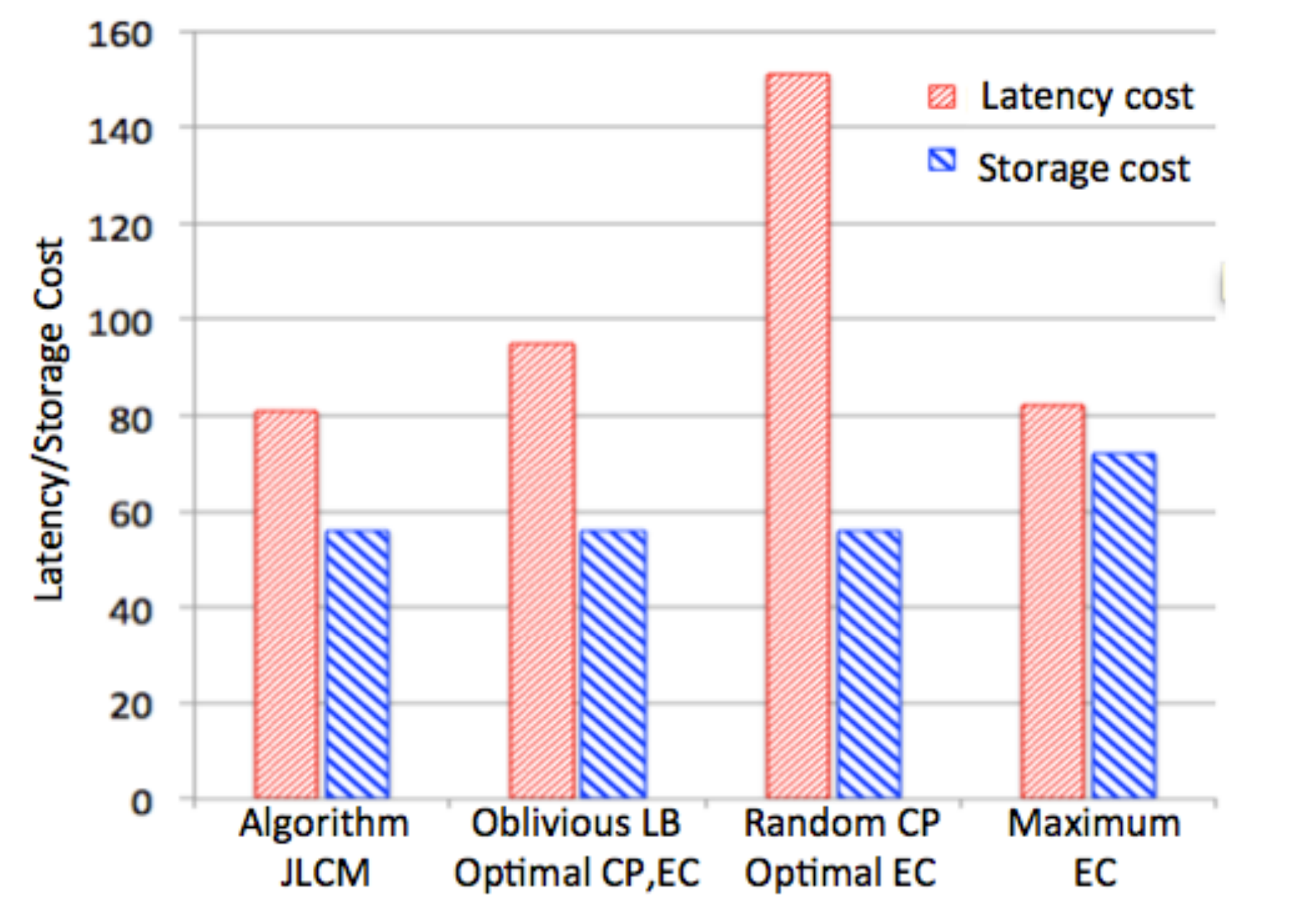}
\caption{Comparison of joint latency and cost minimization with some oblivious approaches. Algorithm JLCM minimizes latency-plus-cost over 3 dimensions: load-balancing (LB), chunk placement (CP), and erasure code (EC), while any optimizations over a subset of the dimensions is non-optimal.}\label{fig:optimality}
\end{center}
\end{figure}

\vspace{0.07in}
{\noindent \bf Remark 2}: Latency and storage cost tradeoff. The use of $(n_i,k_i)$ MDS erasure code allows the content to be reconstructed from any subset of $k_i$-out-of-$n_i$ chunks, while it also introduces a redundancy factor of $n_i/k_i$. To model storage cost, we assume that each storage node $j\in\mathcal{M}$ charges a constant cost $V_j$ per chunk. Since $k_i$ is determined by content size and the choice of chunk size, we need to choose an appropriate $n_i$ which not only introduces sufficient redundancy for improving chunk availability, but also achieves a cost-effective solution. We consider RTT plus expected queuing delay and transfer delay as a measure of latency. To find the optimal parameters for scheduling, we use the optimization using the latency upper bound in Theorem \ref{old_upper_bound}.  To demonstrate this tradeoff, we use the theoretical models to develop an algorithm for Joint Latency and Cost Minimization (JLCM) and compare its performance with three oblivious schemes, each of which minimize latency-plus-cost over only a subset of the 3 dimensions: load-balancing (LB), chunk placement (CP), and erasure code (EC). We run Algorithm JLCM for $r=3$ files of size $(150,150,100)MB$ on our testbed, with $V_j=\$1$ for every $25MB$ storage and a tradeoff factor of $\theta=2$ sec/dollar. The result is shown in Figure.~\ref{fig:optimality}. First, even with the optimal erasure code and chunk placement (which means the same storage cost as the optimal solution from Algorithm JLCM), higher latency is observed in {\em Oblivious LB}, which schedules chunk requests according to a load-balancing heuristic that selects storage nodes with probabilities proportional to their service rates. Second, we keep optimal erasure codes and employ a random chunk placement algorithm, referred to as {\em Random CP}, which adopts the best outcome of 10 random runs. Large latency increment resulted by Random CP highlights the importance of joint chunk placement and load balancing in reducing service latency. Finally, {\em Maximum EC} uses maximum possible erasure code $n=m$ and selects all nodes for chunk placement. Although its latency is comparable to the optimal solution from Algorithm JLCM, higher storage cost is observed. Minimum latency-plus-cost can only be achieved by jointly optimizing over all 3 dimensions.

\begin{figure}[!thbp]
\begin{center}
\scalebox{0.30}{\includegraphics[draft=false]{./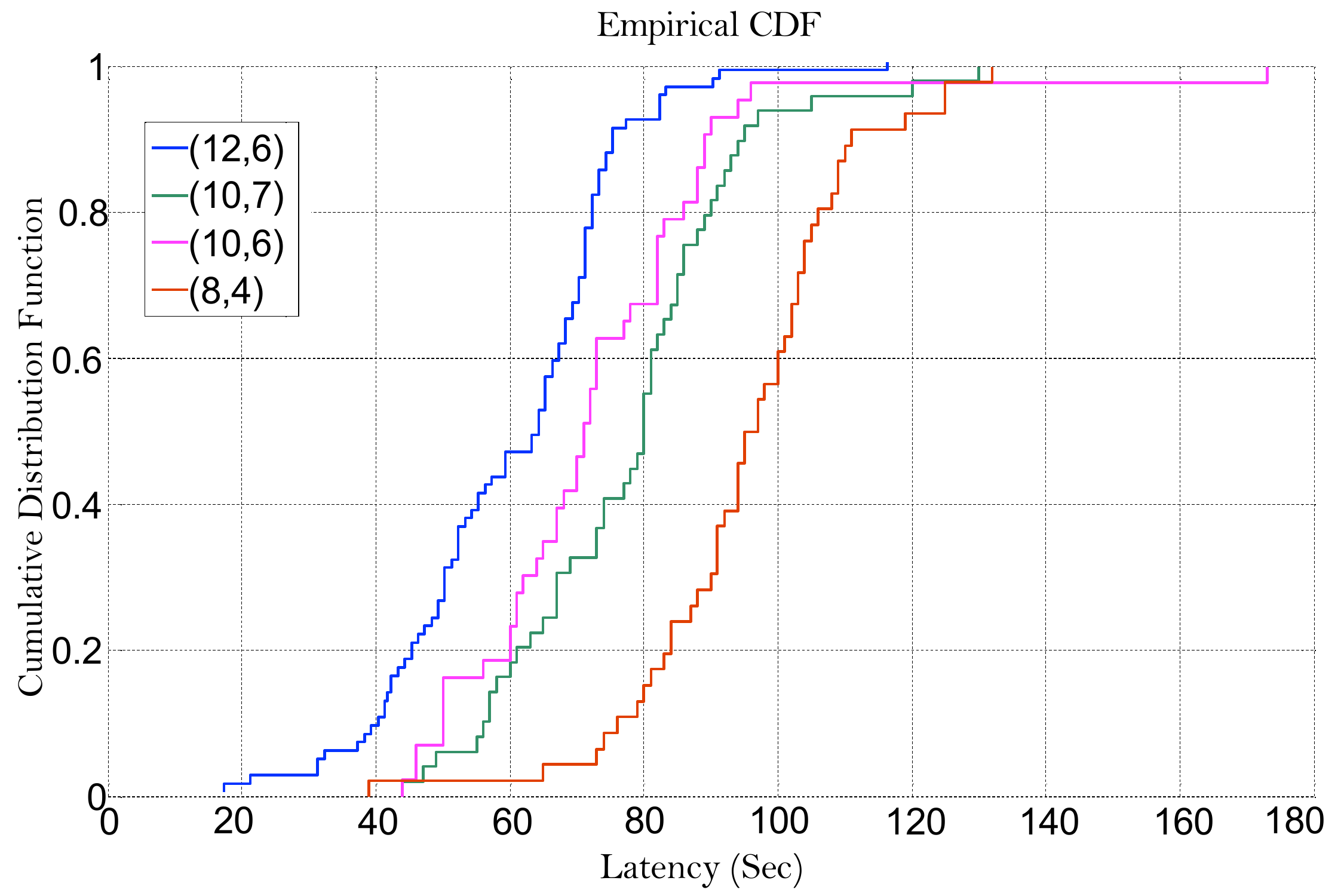}}
\caption{Actual service latency distribution for 1000 files of size 150 MB using erasure codes (12, 6), (10, 7), (10, 6), and (8, 4) for each quarter with aggregate request arrival rates set to $\lambda=0.118/s$. }
\label{fig:latency_distribution}
\end{center}
\end{figure}

\vspace{0.07in}
{\noindent \bf Remark 3}: Latency distribution and coding strategy tradeoff. To demonstrate the impact of coding strategies on latency distribution, we choose files of size 150 MB and the same storage cost and tradeoff factor as in the previous experiment. The files are divided into four classes (each class has 250 files) with different erasure code parameters, respectively (class-1 files using $(n,k)=(12, 6)$ , class-2 files using $(n,k)=(10, 7)$, class 3 using $(n,k)=(10, 6)$, and class 4 has $(n,k)=(8, 4)$). Aggregate request arrival rate for each file class are set to $\lambda_1=\lambda_4=0.0354/s$ and $\lambda_2=\lambda_3=0.0236/s$, which leads to an aggregate file request arrival rate of $\lambda_i=0.118/s$. We are choosing the values of erasure codes for a proper chunk size for our experiments so that the file sizes are widely used for today's data center storage users, and setting different request arrival rates for the two classes using the same value to see the latency distribution under different coding strategies. We retrieve the 1000 files at the designated request arrival rate and plot the CDF of download latency for each file in Fig. 10. We note that 95\% of download requests for files with erasure code (10, 7) complete within 100s, while the same percentage of requests for files using (12, 6) erasure code complete within 32s due to higher level of redundancy. In this experiment, erasure code (12, 6) outperforms (8, 4) in latency though they have the same level of redundancy because the latter has larger chunk size when file size are set to be the same.

\begin{figure}[!thbp]
\begin{center}
\scalebox{0.3}{\includegraphics[draft=false]{./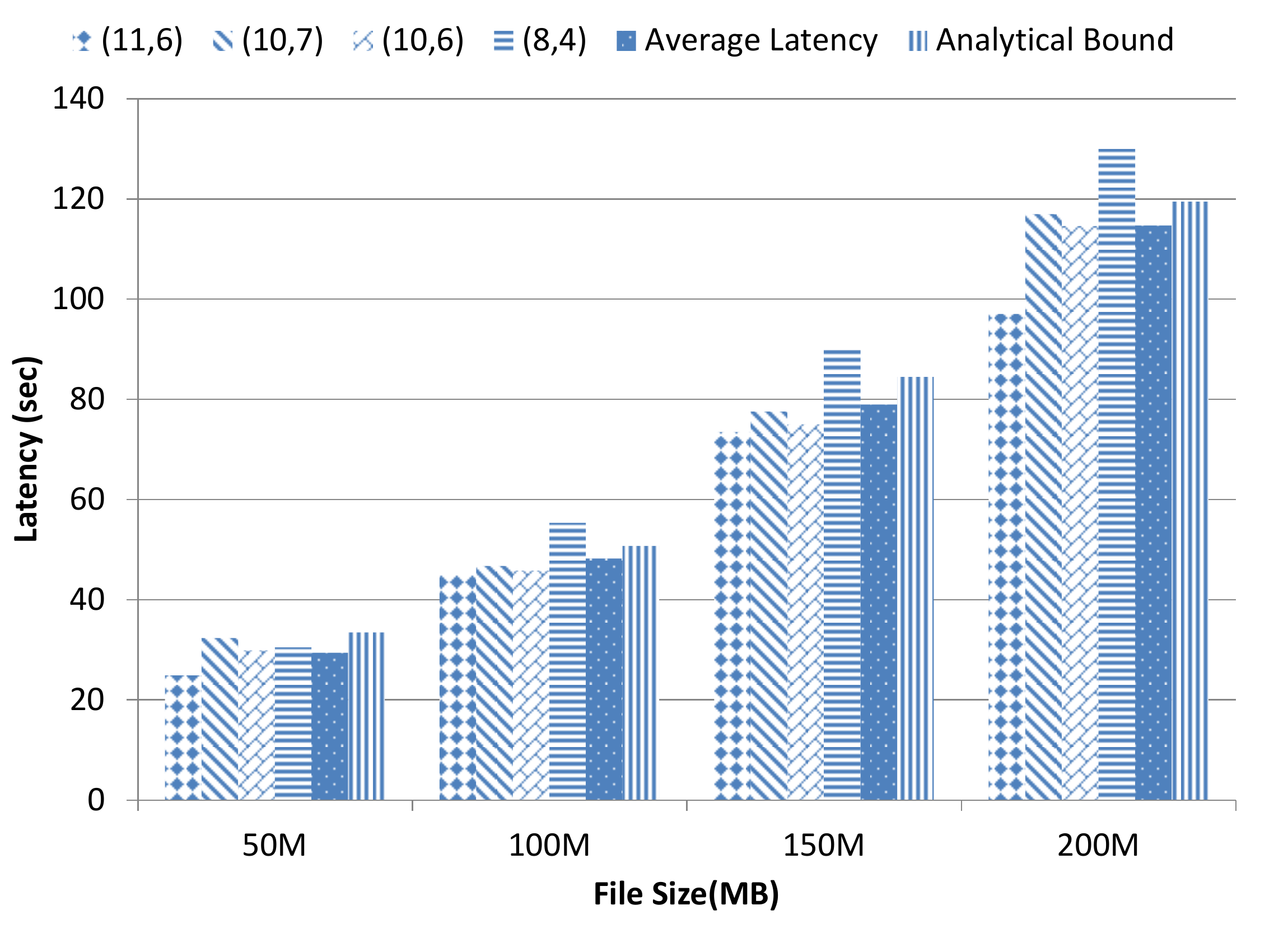}}
\caption{Evaluation of different chunk sizes. Latency increases super-linearly as file size grows due to queuing delay.}
\label{fig:varying_size}
\end{center}
\end{figure}

\vspace{0.07in}
{\noindent \bf Remark 4}: Latency and file size tradeoff. Increasing file size clearly generates high load on the storage system, thus resulting in higher latency. To illustrate this tradeoff, we vary file size in the experiment from (30, 20)MB to (150, 100)MB and plot download latency of individual files 1, 2, 3, average latency, and the analytical latency upper bound~\citep{Yu_TON} in Figure~\ref{fig:varying_size}. We see that latency increases super-linearly as file size grows, since it generates higher load on the storage system, causing larger queuing latency (which is super-linear according to our analysis). Further, smaller files always have lower latency because it is less costly to achieve higher redundancy for these files. We also observe that analytical latency bound in~\citep{Yu_TON} tightly follows actual service latency. In one case, service latency exceeds the analytical bound by 0.5 seconds. This is because theoretical bound quantifying network and queuing delay does not take into account Tahoe protocol overhead, which is indeed small compared to network and queuing delay.

\begin{figure}[!thbp]
\begin{center}
\scalebox{0.3}{\includegraphics[draft=false]{./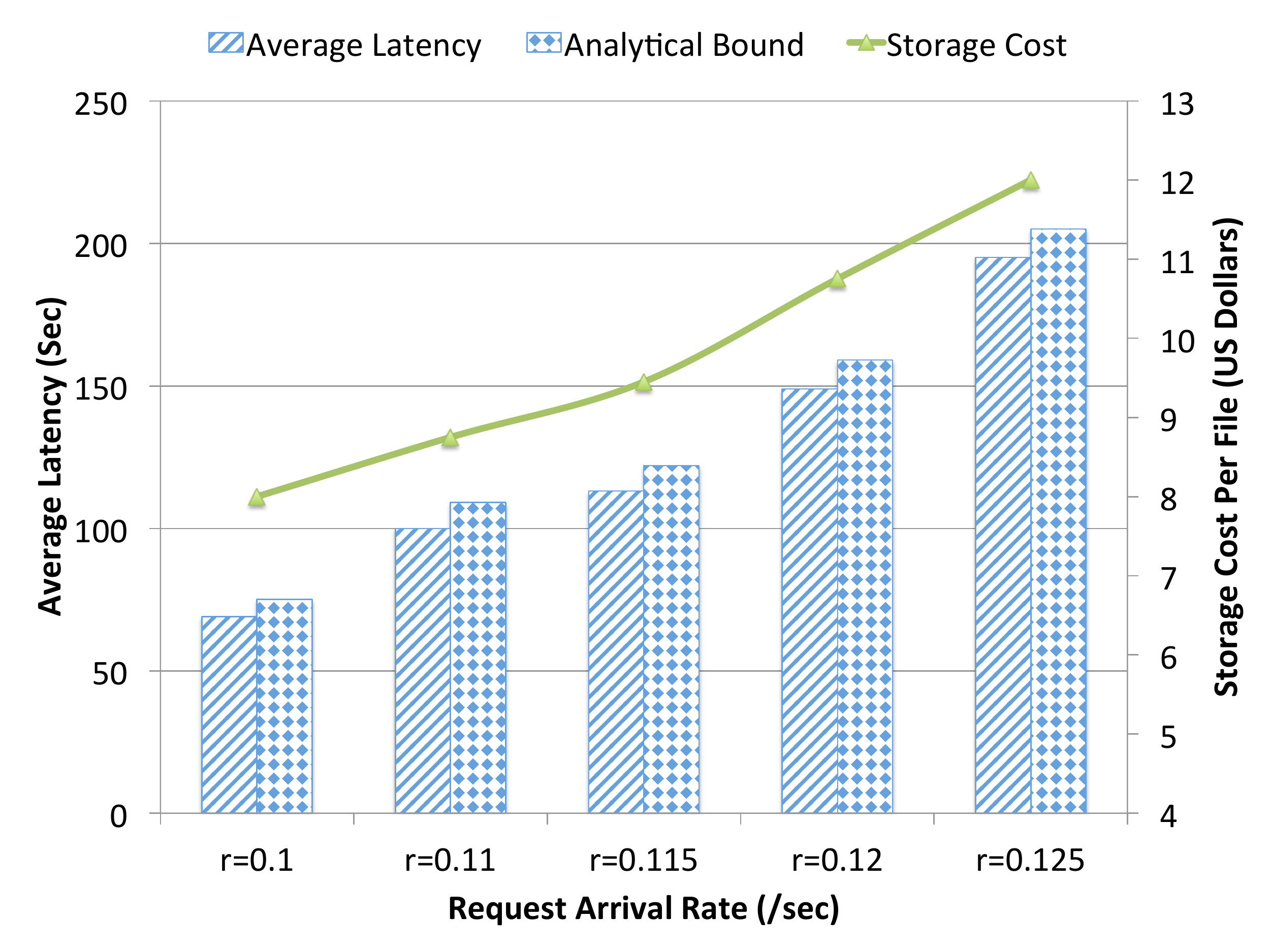}}
\caption{Evaluation of different request arrival rates. As arrival rates increase, latency increases and becomes more dominating in the latency-plus-cost objective than storage cost. }
\label{fig:workload}
\end{center}
\end{figure}

\vspace{0.07in}
{\noindent \bf Remark 5}: Latency and arrival rate tradeoff. We increase the maximum file request arrival rate from $\lambda_i=$1/(60sec) to $\lambda_i=$1/(30sec) (and other arrival rates also increase accordingly), while keeping file size at $(150,150,100)MB$. Actual service delay and the analytical bound~\citep{Yu_TON} for each scenario is shown by a bar plot in Figure~\ref{fig:workload} and associated storage cost by a curve plot. As arrival rates increase, latency increases and becomes more dominating in the latency-plus-cost objective than storage cost. Thus, the marginal benefit of adding more chunks (i.e., redundancy) eventually outweighs higher storage cost introduced at the same time. Figure~\ref{fig:workload} also shows that to achieve a minimization of the latency-plus-cost objective, an optimal solution from theoretical models allows higher storage cost for larger arrival rates, resulting in a nearly-linear growth of average latency as the request arrival rates increase. For instance, Algorithm JLCM chooses (10,6), (11,6), and (10,4) erasure codes at the largest arrival rates, while (8,6), (9,6), and (7,4) codes are selected at the smallest arrival rates in this experiment. We believe that this ability to autonomously manage latency and storage cost for latency-plus-cost minimization under different workload is crucial for practical distributed storage systems relying on erasure coding.

\begin{figure}[!thbp]
\begin{center}
\scalebox{0.57}{\includegraphics[draft=false]{./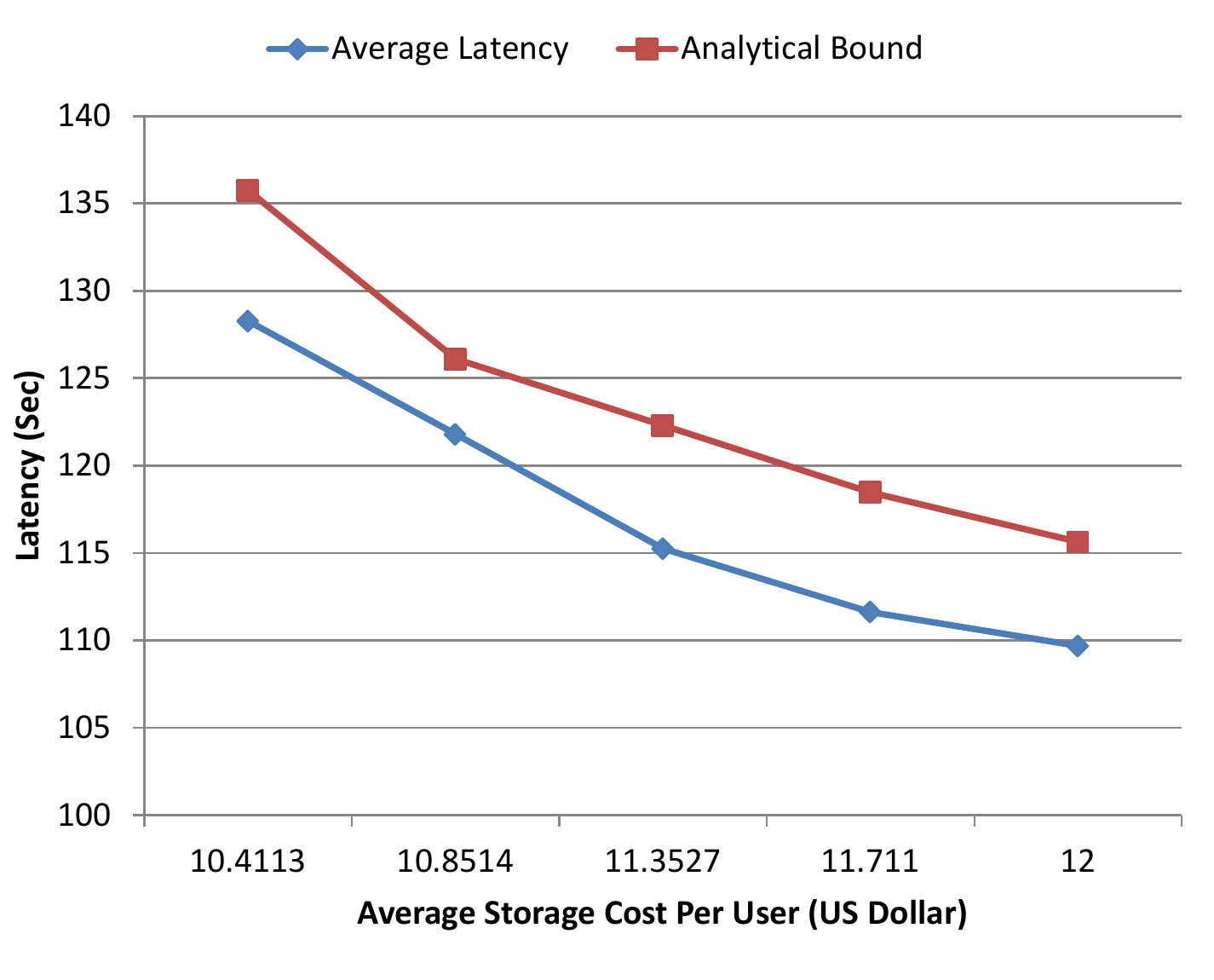}}
\caption{Visualization of latency and cost tradeoff for varying $\theta=0.5$ second/dollar to $\theta=100$ second/dollar. As $\theta$ increases, higher weight is placed on the storage cost component of the latency-plus-cost objective, leading to less file chunks and higher latency. }
\label{fig:tradeoff}
\end{center}
\end{figure}

\vspace{0.07in}
{\noindent \bf Remark 6}: Visualizing tradeoff curve. We demonstrate the tradeoff curve between latency and storage cost. Varying the tradeoff factor in Algorithm JLCM from $\theta=0.5$ sec/dollar to $\theta=100$ sec/dollar for fixed file size of $(150,150,100)MB$ and arrival rates $\lambda_i=$1/(30 sec), 1/(30 sec),  1/(40 sec), we obtain a sequence of solutions, minimizing different latency-plus-cost objectives. As $\theta$ increases, higher weight is placed on the storage cost component of the  latency-plus-cost objective, leading to less file chunks in the storage system and higher latency. This tradeoff is visualized in Figure~\ref{fig:tradeoff}. When $\theta=0.5$, the optimal solution chooses (12,6), (11,6), and (9,4) erasure codes, which is nearly the maximum erasure code length allowable in our experiment and leads to highest storage cost (i.e., 32 dollars), yet lowest latency (i.e., 47 sec). On the other hand, $\theta=0.5$ results in the choice of (6,6),	(7,6), and (4,4) erasure code, which is almost the minimum possible cost for storing the three file, with the highest latency of $65$ seconds. Further, the theoretical tradeoff calculated by analytical bound~\citep{Yu_TON} is very close to the actual measurement on our testbed. These results allow operators to exploit the latency and cost tradeoff in an erasure-coded storage system by selecting the best operating point.

\section{Applications in Caching and Content Distribution}

The application of erasure codes can go beyond data storage. In this section, we briefly introduce a novel caching framework leveraging erasure codes, known as functional caching \citep{Sprout,Yu-TON16}. Historically, caching is a key solution to relieve traffic burden on networks \citep{td_cache}. By storing large chunks of popular data at different locations closer to end-users, caching can greatly reduce congestion in the network and improve service delay for processing file requests. It is very common for 20\% of the video content to be accessed 80\% of the time, so caching popular content at proxies significantly reduces the overall latency on the client side.

However, caching with erasure codes has not been well studied. The current results for caching systems cannot automatically be carried over to caches in erasure coded storage systems. First, using an $(n,k)$ maximum-distance-separable (MDS) erasure code, a file is encoded into $n$ chunks and can be recovered from any subset of $k$ distinct chunks. Thus, file access latency in such a system is determined by the delay to access file chunks on hot storage nodes with slowest performance. Significant latency reduction can be achieved by caching a few hot chunks (and therefore alleviating system performance bottlenecks), whereas caching additional chunks only has diminishing benefits. Second, caching the most popular data chunks is often optimal because the cache-miss rate and the resulting network load are proportional to each other. However, this may not be true for an erasure-coded storage, where cached chunks need not be identical to the transferred chunks. More precisely, a function of the data chunks can be computed and cached, so that the constructed new chunks, along with the existing chunks, also satisfy the property of being an MDS code. There have been caching schemes that cache the entire file~\citep{Nadgowda2014,Chang:2008:BDS:1365815.1365816,Zhu:2004:PSP:1006209.1006221}, while we can cache partial file  for an eraure-coded system (practically proposed for replicated storage systems in \citep{naik2015read}) which gives extra flexibility and the evaluation results depict the advantage of caching partial files.

A new {\em functional caching} approach called \textit{Sprout} that can efficiently capitalize on existing file coding in erasure-coded storage systems has been proposed in~\citep{Sprout,Yu-TON16}. In contrast to exact caching that stores $d$ chunks identical to original copies, our functional caching approach forms $d$ new data chunks, which together with the existing $n$ chunks satisfy the property of being an $(n+d,k)$ MDS code. Thus, the file can now be recovered from any $k$ out of $n+d$ chunks (rather than $k$ out of $n$ under exact caching), effectively extending coding redundancy, as well as system diversity for scheduling file access requests. The proposed functional caching approach saves latency due to more flexibility to obtain $k-d$ chunks from the storage system at a very minimal additional computational cost of creating the coded cached chunks.

\begin{figure}[!thbp]
\vspace{-2mm}
\begin{center}
\scalebox{0.2}{\fbox{\includegraphics[draft=false]{./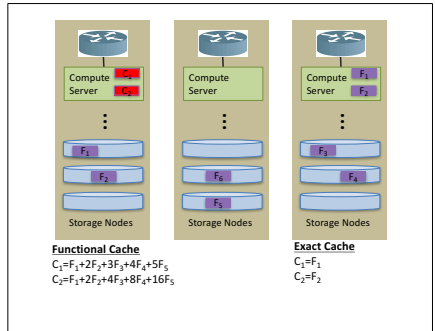}}}
\vspace{-3mm}
\caption{An illustration of functional caching and exact caching in an erasure-coded storage system with one file using a $(5,4)$ erasure code.}
\label{fig:conv}
\end{center}
\vspace{-.2in}
\end{figure}

\vspace{0.07in}
{\noindent \bf Example.} Consider a datacenter storing a single file using a $(5,4)$ MDS code. The file is split into $k_i=4$ chunks, denoted by $A_1,A_2,A_3,A_4$, and then linearly encoded to generate $n_i=5$ coded chunks $F_1=A_1$, $F_2=A_2$, $F_3=A_3$, $F_4=A_4$, and $F_5=A_1+A_2+A_3+A_4$ in a  finite field of order at-least 5. Two compute servers in the datacenter access this file and each is equipped with a cache of size $C=2$ chunks as depicted in Figure~\ref{fig:conv}. The compute server on the right employs an exact caching scheme and stores chunks $F_1,F_2$ in the cache memory. Thus, 2 out of 3 remaining chunks (i.e., $F_3$, $F_4$ or $F_5$) must be retrieved to access the file, whereas chunks $F_1,F_2$ and their host nodes will not be selected for scheduling requests. Under functional caching, the compute server on the left generates
$d_i=2$ new coded chunks, i.e., $C_1=A_1+2A_2+3A_3+4A_4$ and $C_2=4A_1+3A_2+2A_3+1A_4$, and saves them in its cache memory. It is easy to see that chunks $F_1,\ldots,F_5$ and $C_1,C_2$ now form a $(7,4)$ erasure code. Thus, the file can be retrieved by accessing $C_1,C_2$ in the cache together with any 2 out of 5 chunks from $F_1,\ldots,F_5$. This allows an optimal request scheduling mechanism to select the least busy chunks/nodes among all 5 possible candidates in the system, so that the service latency is determined by the best 2 storage node with minimum queuing delay. In contrast, service latency in exact caching is limited by the latency of accessing a smaller subset of chunks $F_3,F_4,$ and $F_5$. In order to have a $(n,k)$ coded file in the storage server, we can construct chunks by using an $(n+k,k)$ MDS code, where $n$ chunks are stored in the storage server. The remaining $k$ out of the $n+k$ coded chunks are assigned to be in part in cache based on the contents of the file in the cache. Thus, irrespective of the value of $d\le k$, we ascertain that $(n+d,k)$ code, formed with $n$ coded chunks in the storage server and $k$ coded chunks in the cache, will be MDS.

\begin{figure}[!thbp]
\vspace{-2mm}
\begin{center}
{\includegraphics[width=0.5\textwidth,draft=false]{./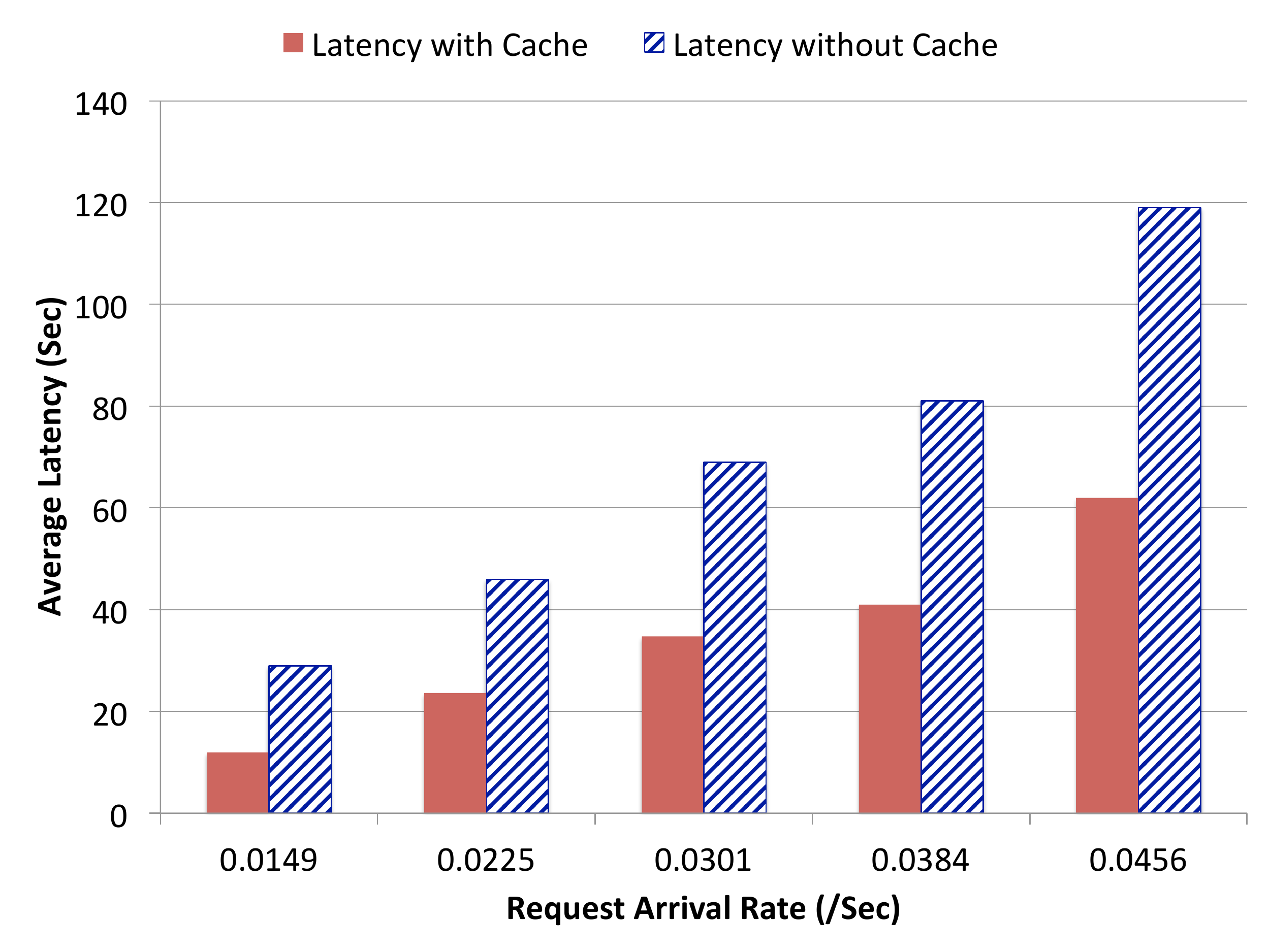}}%
\caption{Comparison of average latency of functional caching and Tahoe's native storage system without caching, with varying average arrival rates for $r=1000$ files of 200MB, where the cache size fixed at 2500.}
\label{fig:workload2}
\end{center}
\vspace{-.2in}
\end{figure}

Utilizing the theoretical models developed for quantifying service latency in erasure-coded storage systems such as \citep{Yu_TON}, we can obtain latency bound for functional caching and use it to formulate and solve a cache-content optimization problem. Due to space limitations, we skip the technical details and refer readers to~\citep{Yu-TON16}, where extension of Theorem \ref{old_upper_bound} is provided with functional caching. Implementing functional caching on the Tahoe testbed, we fix file size to be 200MB and vary the file request arrival rates with average request arrival rate of the $r=1000$ files in \{0.0149/sec, 0.0225/sec, 0.0301/sec, 0.0384/sec, 0.0456/sec\}. The  cache size is fixed, and set to be 2500. Actual average service latency of files for each request arrival rate is shown by a bar plot in Figure~\ref{fig:workload}. In this experiment we also compare the performance of functional caching with Tahoe’s built-in native storage system without caching. Fig \ref{fig:workload} shows that our algorithm with caching outperforms Tahoe native storage in terms of average latency for all the average arrival rates. Functional caching gives an average  49\% reduction in latency. Future work includes designing various cache replacement policies with respect to erasure codes, as well as developing a theoretical framework to quantify and optimize the resulting latency.

\backmatter 
\chapter*{Acknowledgements}

The authors would like to thank their collaborators for contributions to this line of work: Yih-Farn Robin Chen and Yu Xiang at AT\&T Labs-Research, Moo-Ryong Ra at Amazon, Vinay Vaishampayan at City University of NY, Ajay Badita and Parimal Parag at IISc Bangalore,  Abubakr Alabassi, Jingxian Fan, and Ciyuan Zhang at Purdue University, and Chao Tian at Texas A\&M University.

The authors would  like to thank Alexander  Barg at the University of Maryland for the many suggestions on the manuscript. The authors are also grateful to anonymous reviewers for valuable comments that have significantly improved the manuscript.
\printbibliography

\end{document}